\newcommand{\xxx}[1]{}
\newcommand{\tr}{\ensuremath{\mathrm t}} %
\newcommand{\rootof}{\ensuremath{\mathop{\uparrow}}} %
\newcommand{\p}{\ensuremath{\mathrm p}} %
\newcommand{\treeat}[1]{B\langle{#1}\rangle} %
\newcommand{\treeatt}[2]{B{#1}\langle{#2}\rangle} %
\theoremstyle{plain}
\newtheorem{theorem}{Theorem}
\newtheorem{lemma}[theorem]{Lemma}
\newtheorem{proposition}[theorem]{Proposition}
\newtheorem{observation}[theorem]{Observation}
\newcommand{\HRule}{\rule{\linewidth}{0.5mm}}
\newenvironment{proofof}[1]{%
  \par\medskip\noindent\textbf{\sffamily Proof of #1.}~}{\qed\par\medskip}
\newcommand{\subsubparagraph}[1]{\paragraph{#1}}
\newcommand{\case}[1]{\par\vspace{.5\baselineskip}\noindent\textbf{\sffamily Case~#1}}
\newcommand{\EB}{\mathrm{E}(B)}
\title{The Planar Tree Packing Theorem ({R\svnInfoRevision})}
\begin{document}

\begin{titlepage}
  \begin{center}
    \HRule \\[0.4cm]
    {\huge The Planar Tree Packing Theorem}\\[0cm]
    \HRule \\[1cm]

    \begin{minipage}{0.45\textwidth}
      \begin{center} \large
        \textsc{Markus Geyer}\\
        \small
        Universit\"{a}t T\"{u}bingen, Germany\\
        \verb|geyer@informatik.uni-tuebingen.de|\\[0.4cm]
        \large
        \textsc{Michael Kaufmann}\\
        \small
        Universit\"{a}t T\"{u}bingen, Germany\\
        \verb|mk@informatik.uni-tuebingen.de|\\[0.4cm]
      \end{center}
    \end{minipage}
    \begin{minipage}{0.45\textwidth}
      \begin{center} \large
        \textsc{Michael Hoffmann$^*$}\\
        \small 
        ETH Z\"{u}rich, Switzerland\\
        \verb|hoffmann@inf.ethz.ch|\\[0.4cm]
        \large
        \textsc{Vincent Kusters$^*$}\\
        \small 
        ETH Z\"{u}rich, Switzerland\\
        \verb|vincent.kusters@inf.ethz.ch|\\[0.4cm]
      \end{center}
    \end{minipage}
    \rule{\linewidth}{0mm} \\[0.6cm]
    \begin{minipage}{0.45\textwidth}
      \begin{center} \large
        \textsc{Csaba D. T\'oth$^{\dag}$}\\
        \small \small California State University Northridge\\ Los
        Angeles, CA, USA\\
        \verb|cdtoth@acm.org|
      \end{center}
    \end{minipage}
  \end{center}

  \vspace{\baselineskip}

  \begin{center}
    {\large 
      \svnToday}
  \end{center}

  \vspace{\baselineskip}

  \begin{abstract}
    Packing graphs is a combinatorial problem where several given graphs
    are being mapped into a common host graph such that every edge is
    used at most once. In the planar tree packing problem we are given
    two trees $T_1$ and $T_2$ on $n$ vertices and have to find a planar
    graph on $n$ vertices that is the edge-disjoint union of $T_1$ and
    $T_2$. A clear exception that must be made is the star which cannot
    be packed together with any other tree. But according to a
    conjecture of Garc\'ia et al.\ from 1997 this is the only exception,
    and all other pairs of trees admit a planar packing. Previous
    results addressed various special cases, such as a tree and a spider
    tree, a tree and a caterpillar, two trees of diameter four, two
    isomorphic trees, and trees of maximum degree three. Here we settle
    the conjecture in the affirmative and prove its general form, thus
    making it the planar tree packing theorem. The proof is constructive
    and provides a polynomial time algorithm to obtain a packing for two
    given nonstar trees.
  \end{abstract}

  \vfill

  \begin{center}
    \includegraphics{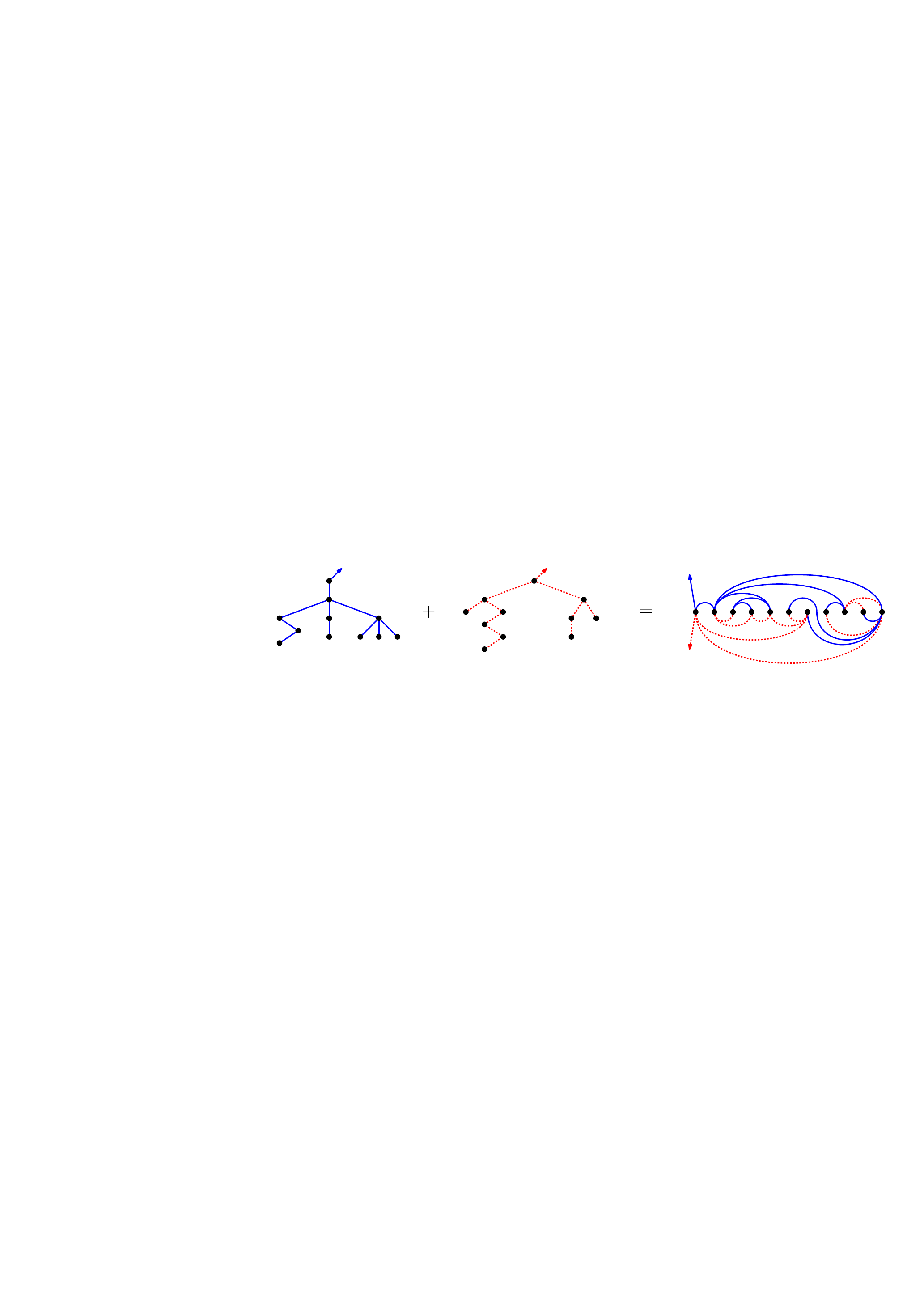}
  \end{center}

  \vfill

 \noindent
 \scriptsize$^*$ Supported by the ESF EUROCORES programme EuroGIGA, CRP
 GraDR and the Swiss National Science Foundation, SNF
 Project 20GG21-134306.\\
 \scriptsize$^{\dag}$ Supported by the NSF awards CCF-1422311 and
 CCF-1423615.

\end{titlepage}

\section{Introduction}\label{sec:introduction}
The \emph{packing problem} is to find a graph $G$ on $n$ vertices that
contains a given collection $G_1,\ldots, G_k$ of graphs on $n$ vertices
each as edge-disjoint subgraphs. This problem has been studied in a wide
variety of scenarios (see, e.g., \cite{AkiyamaC90,CaroY97,FrSz}). Much
attention has been devoted to the packing of trees (e.g., tree packing
conjectures by Gy\'arfas~\cite{gl-ptdok-78} and by Erd\H{o}s and
S\'os~\cite{e-epgt-65}). Hedetniemi~\cite{MR629868} proved that any two
nonstar trees can be packed into $K_n$. Teo and Yap~\cite{ty-ptgo-90}
showed, extending an earlier result by Bollob\'as and
Eldridge~\cite{be-pgacc-78}, that \emph{any} two graphs of maximum
degree at most $n-1$ with a total of at most $2n-2$ edges pack into
$K_n$ unless they are one of thirteen specified pairs of graphs. Maheo
et al{.}~\cite{msw-1996} characterized triples of trees that can be
packed into $K_n$.

In the \emph{planar packing} problem the graph $G$ is required to be
planar. Garc\'ia et al.~\cite{ghhnt-2002} conjectured in~1997 that there
exists a planar packing for any two nonstar trees, that is, for any two
trees with diameter greater than two. The assumption that none of the
trees is a star is necessary, since a star uses all edges incident to
one vertex and so there is no edge left to connect that vertex in the
other tree. Garc\'ia et al{.} proved their conjecture when one
of the trees is a path and when the two trees are isomorphic. Oda and
Ota~\cite{oo-2006} addressed the case that one of the trees is a
caterpillar or that one of the trees is a spider of diameter at most
four. A \emph{caterpillar} is a tree that becomes a path when all leaves
are deleted and a \emph{spider} is a tree with at most one vertex of
degree greater than two. Frati~et~al.~\cite{j-fgk-pptst-08} gave an
algorithm to construct a planar packing of any spider with any
tree. Frati~\cite{f-ppdft-09} proved the conjecture for the case that
both trees have diameter at most four. Finally,
Geyer~et~al.~\cite{gkh-ppbt-13} proved the conjecture for binary trees
(maximum degree three). In this paper we settle the general conjecture
in the affirmative:
\begin{theorem}\label{thm:planar_packing}
  Every two nonstar trees of the same size admit a planar packing.
\end{theorem}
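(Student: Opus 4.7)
My plan is to prove Theorem~\ref{thm:planar_packing} by induction on $n$, the number of vertices, strengthening the statement along the way so that the induction carries enough structural information. The base cases for small $n$ can be verified by inspection: since neither tree is a star, both have diameter at least three, and the finitely many such pairs can be packed explicitly. For the inductive step, the bare existence claim is too weak, so I would aim to establish a packing together with a planar embedding having a prescribed structure on its outer face---for instance, that a designated ``root'' vertex lies on the outer face with a specified cyclic order of its incident edges (from both trees). This kind of boundary invariant is what allows a removed piece to be reattached without creating crossings.

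The inductive step would then proceed by removing a leaf $\ell$ from one tree, say $T_2$, and identifying a matching vertex $v \in T_1$ (often itself a leaf, possibly also removed) so that the resulting smaller trees remain nonstar. Applying the strengthened induction hypothesis yields a planar packing of the smaller trees with the required outer-face structure, into which $\ell$ and the edge of $T_1$ incident to $v$ can be reinserted while preserving planarity. When the natural candidate leaf cannot be removed without producing a star (as happens for double stars and other near-degenerate configurations), one must short-circuit the induction with a direct construction based on the local structure around the high-degree vertices.

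The principal obstacle, and presumably the reason this conjecture resisted solution for nearly two decades, is the case analysis. Because the hypothesis permits $T_1$ and $T_2$ to be essentially arbitrary nonstar trees, they can exhibit very different shapes (path-like versus bushy, narrow versus wide), and the argument must select the leaves to be peeled, and the vertex to be matched to the root, differently in each regime. Previous partial results on caterpillars, spiders, binary trees, and trees of diameter at most four handle particular subfamilies of these configurations; the general proof must treat the remaining cases---I expect the hardest to be trees of small diameter with multiple vertices of large degree, where there is little room to reroute leaves---and unify all cases under a single strengthened inductive statement whose outer-face invariant is robust enough to survive every local modification.
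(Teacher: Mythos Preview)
Your proposal is a plan rather than a proof: you never state the strengthened inductive hypothesis, and the case analysis you correctly flag as the main obstacle is not attempted. More substantively, the leaf-at-a-time scheme has a structural problem you do not address. To reinsert the removed vertex you must add one edge to the parent $p_1$ of the removed $T_1$-leaf and one edge to the parent $p_2$ of the removed $T_2$-leaf. These two attachment points are specified in different coordinate systems (one is a $T_1$-vertex, the other a $T_2$-vertex), and the bijection between $V(T_1)$ and $V(T_2)$ is precisely what the induction is supposed to produce. Your outer-face invariant would therefore have to guarantee that a \emph{prescribed} $T_1$-vertex and a \emph{prescribed} $T_2$-vertex land on a common face of the smaller packing---a two-sided constraint that is strictly stronger than ``a designated root lies on the outer face with a given rotation,'' and you give no indication of how to formulate or propagate it. Without this, the reinsertion step is unjustified.

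The paper sidesteps this by recursing on whole subtrees rather than single leaves. It roots both trees, splits the red tree $R$ at its root into a smallest subtree $S$ and $R^-=R\setminus S$, and packs each piece onto a contiguous subinterval of a fixed one-page book layout of the blue tree. Because cutting an interval can sever blue edges, the blue side is allowed to become a \emph{forest} in recursive calls, and the strengthened statement (Theorem~\ref{thm:main}) packs a red tree against a blue forest so that all blue roots together with the red root are accessible from the outer face in a prescribed cyclic order. The replacement for ``nonstar'' in the recursion is a precise conflict condition on the first blue component: either it is not a central-star, or its root avoids both an edge-conflict and a degree-conflict with $r$. With this formulation only roots need to be tracked, and the vertices that later require outside edges are exactly those roots---which is what makes the invariant self-propagating. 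The bulk of the paper is then the case analysis showing these conflicts can always be avoided or locally repaired; that analysis is not something your sketch supplies or even approximates.
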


\subsubparagraph{Related work.} Finding subgraphs with specific
properties within a given graph or more generally determining
relationships between a graph and its subgraphs is one of the most
studied topics in graph theory. The \emph{subgraph isomorphism}
problem~\cite{Epp-JGAA-99,GareyJ79,Ullmann76} asks to find a subgraph
$H$ in a graph $G$. The \emph{graph thickness} problem~\cite{mutzel}
asks for the minimum number of planar subgraphs which the edges of a
graph can be partitioned into. The \emph{arboricity}
problem~\cite{Epp-IPL-94} asks to determine the minimum number of
forests which a graph can be partitioned into. Another related classical
combinatorial problem is the $k$ edge-disjoint spanning trees problem
which dates back at least to Tutte~\cite{t-pdgncf-61} and
Nash-Williams~\cite{nw-edstfg-61}, who gave necessary and sufficient
conditions for the existence of $k$ edge-disjoint spanning trees in a
graph. The interior edges of every maximal planar graph can be
partitioned into three edge-disjoint trees, known as a \emph{Schnyder
  wood}~\cite{s-pgpd-89}. Gon\c{c}alves~\cite{1060666} proved that every
planar graph can be partitioned in two edge-disjoint outerplanar graphs.

The study of relationships between a graph and its subgraphs can also be
done the other way round. Instead of decomposing a 
graph, one can ask for a graph $G$ that encompasses a given set of
graphs $G_1,\ldots,G_k$ and satisfies some additional properties. This
topic occurs with different flavors in the computational geometry and
graph drawing literature. It is motivated by applications in
visualization, such as the display of networks evolving over time and
the simultaneous visualization of relationships involving the same
entities. In the \emph{simultaneous embedding}
problem~\cite{BrassCDEEIKLM07}
the graph $G=\bigcup G_i$ is given and the goal is to draw it so that
the drawing of each $G_i$ is plane. The \emph{simultaneous embedding
  without mapping} problem~\cite{BrassCDEEIKLM07} is to find a graph $G$
on $n$ vertices such that: (i) $G$ contains all $G_i$'s as subgraphs,
and (ii) $G$ can be drawn with straight-line edges so that the drawing
of each $G_i$ is plane.

\section{Notation and Overview}\label{sec:def_overview}

A \emph{rooted tree} is a directed tree $T$ with exactly one vertex of
outdegree zero: its root, denoted $\rootof(T)$. Every vertex $v\ne\rootof(T)$ has
exactly one outgoing edge $(v,\p_T(v))$. The target $\p_T(v)$ is the
\emph{parent} of $v$ in $T$, and conversely $v$ is a \emph{child} of
$\p_T(v)$. In figures we denote the root of a tree by an outgoing
vertical arrow. For a vertex $v$ of a rooted tree $T$, denote by
$\tr_T(v)$ the \emph{subtree rooted at $v$}, that is, the subtree of $T$
induced by the vertices from which $v$ can be reached on a directed
path. The subscript is sometimes omitted if $T$ is clear from the
context. A \emph{subtree of (or below) $v$} is a tree $\tr_T(c)$, for a
child $c$ of $v$ in $T$. For a tree $T$, denote by $|T|$ the \emph{size}
(number of vertices) of $T$. We denote by $\deg_T(v)$ the degree
(indegree plus outdegree) of $v$ in $T$. For a graph $G$ we denote by
$\mathrm{E}(G)$ the edge set of $G$. A \emph{star} is a tree on $n$
vertices that contains at least one vertex of degree $n-1$. Such a
vertex is a \emph{center} of the star. A star on $n\ne 2$ vertices has a
unique center. For a star on two vertices, both vertices act as a
center. When considered as a rooted tree, there are two different rooted
stars on $n\geq 3$ vertices. A star rooted at a center is called
\emph{central-star}, whereas a star rooted at a leaf
that is not a center is called a \emph{dangling star}. In particular,
every star on one or two vertices is a central-star. A \emph{nonstar} is
a graph that is not a star. A \emph{substar} of a graph is a subgraph
that is a star.
A \emph{one-page book embedding} of a graph $G$ is an embedding of $G$
into a closed halfplane such that all vertices are placed on the
bounding line. This line is called the \emph{spine} of the book
embedding.

We embed vertices equidistantly along the positive $x$-axis and refer to
them by their $x$-coordinate, that is, $P=\{1,\ldots,n\}$. An
\emph{interval} $[i,j]$ in $P$ is a sequence of the form
$i,i+1,\ldots,j$, for $1\le i\le j\le n$, or $i,i-1,\ldots,j$, for
$1\le j\le i\le n$. Observe that we consider an interval $[i,j]$ as
oriented and so we can have $i>j$. Denote the \emph{length} of an
interval $[i,j]$ by $|[i,j]|=|i-j|+1$.  A \emph{suffix} of an interval
$[i,j]$ is an interval $[k,j]$, for some $k\in[i,j]$. To avoid
notational clutter we often identify points from $P$ with vertices
embedded at them.


\subsubparagraph{Overview.} We 
construct a plane drawing of two $n$-vertex trees $T_1$ and $T_2$ on the point set
$P=[1,n]$. We call $T_1$ the \emph{blue tree}; its edges are shown as
solid blue arcs in figures. The tree $T_2$ is called the \emph{red
  tree}; its edges are shown as dotted red arcs. The algorithm first
computes a preliminary one-page book embedding of $T_1$ onto $P$ (the
\emph{blue embedding}) in Section~\ref{sec:emb_t1}. In the second step
we recursively construct an embedding for the red tree to pair up with
the blue embedding. In principle we follow a similar strategy as in the
first step, but we take the constraints imposed by the blue embedding
into account. During this process we may reconsider and change the blue
embedding locally. For instance, we may \emph{flip} the embedding of
some subtree of $T_1$ on an interval $[i,j]$, that is, reflect the
embedded tree at the vertical line $x=\frac{i+j}{2}$ through the
midpoint of $[i,j]$. In some cases we also perform more drastic changes
to the blue embedding. In particular, the blue embedding may not be a
one-page book embedding in the final packing. Although neither of the
two trees $T_1$ and $T_2$ we start with is a star, it is possible---in
fact, unavoidable---that stars appear as subtrees during the recursion.
We have to deal with stars explicitly whenever they arise, because the
general recursive step works for nonstars only. We introduce the
necessary concepts and techniques in Section~\ref{sec:preT2} and give
the actual proof in Section~\ref{sec:embedding_the_red_tree}.


\section{A preliminary blue embedding}\label{sec:emb_t1}
We begin by defining a preliminary one-page book embedding
$\pi: V_1 \to [1,n]$ for a tree $T_1=(V_1,E_1)$ rooted at $r_1\in V$.
In every recursive step, we are given a tree $T$ rooted at a vertex $r$
and an interval $[i,j]$ of length $|T|$. Recall that we may have $i<j$
or $i>j$. We place $r$ at position $i$ and recursively embed the
subtrees of $r$ on pairwise disjoint subintervals of
$[i,j]\setminus\{i\}$. The embedding is guided by two rules illustrated
in \figurename~\ref{fig:emb_T_1}.
\begin{itemize}
\item The \emph{larger-subtree-first rule} (LSFR) dictates that for any
  two subtrees of $r$, the larger of the subtrees must be embedded on an
  interval closer to $r$. Ties are broken arbitrarily.
\item The \emph{one-side rule} (1SR) dictates that for every vertex
  all neighbors 
  are mapped to the same side. 
  That is, if $N_T(v)$ denotes the set of neighbors of $v$ in $T$
  (including its parent), then either $\pi(u)<\pi(v)$ for all
  $u\in N_T(v)$ or $\pi(u)>\pi(v)$ for all $u\in N_T(v)$.
\end{itemize}
These rules imply that every subtree $T\subseteq T_1$ is embedded onto
an interval $[i,j]\subseteq[1,n]$ so that $\{i,j\}$ is an edge of $T$
and either $i$ or $j$ is the root of $T$. Together with $\pi(r_1)=1$,
these rules define the embedding (up to tiebreaking). An explicit
formulation of the algorithm can be found as
Algorithm~\ref{alg:embed_t1} below and an example is depicted in
\figurename~\ref{fig:emb_T_1:3}.
\begin{figure}[htbp]
  \centering%
  \subfloat[LSFR]{\includegraphics{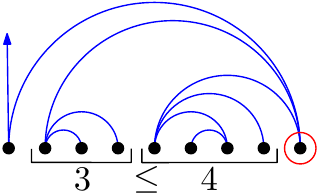}\label{fig:emb_T_1:1}}\hfil%
  \subfloat[1SR]{\includegraphics{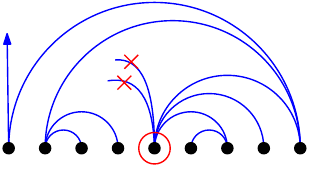}\label{fig:emb_T_1:2}}\hfil%
  \subfloat[A preliminary blue embedding.]{\includegraphics{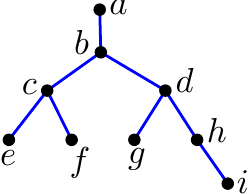}\hspace{2em}\includegraphics{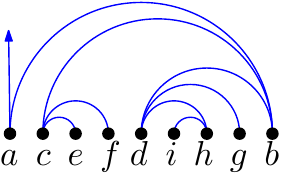}\label{fig:emb_T_1:3}}%
  \caption{Illustrations for the two rules and an example embedding.\label{fig:emb_T_1}}
\end{figure}

\begin{algorithm}[H] \label{alg:embed_t1}%
  \newcommand{\alet}{\leftarrow}%
  \newcommand{\id}[1]{\mathit{#1}}%
  \DontPrintSemicolon%
  \KwIn{A rooted tree $T=(V,E)$ and a directed interval $I\subseteq
    [1,n]$ with $|T|=|I|$.}%
  \KwOut{A map $\pi : V \rightarrow I$.}%

  Let $r$ be the root of $T$ and let $[i,j] = I$.\;%
  $\pi(r)\alet i$\;%
  \If{$|T|>1$}{%
    Let $r_1,\ldots,r_k$ be the children of $r$ in $T$ such that
    $|\tr_T(r_1)|\ge\ldots\ge|\tr_T(r_k)|$.\;%
    $\Sigma_0\alet 0$\;%
    \For{$h=1,\ldots,k$}{%
      $\Sigma_h=\sum_{b=1}^h|\tr_T(r_b)|$.\;%
    }
    \eIf{$i<j$}{%
      \For{$h=1,\ldots,k$}{%
        $\id{Embed}(\tr_T(r_h),[i+\Sigma_h,i+\Sigma_{h-1}+1])$\;%
      }
    }{%
      \For{$h=1,\ldots,k$}{%
        $\id{Embed}(\tr_T(r_h),[i-\Sigma_h,i-\Sigma_{h-1}-1])$\;%
      }
    }
  }
  \caption{$\id{Embed}(T,I)$.}
\end{algorithm}

\section{A red tree and a blue forest}\label{sec:preT2}

As common with inductive proofs, we prove a stronger statement than
necessary. 
This stronger statement does not hold
unconditionally but we need to impose some restrictions on the input.
The goal of this section is to derive this more general
statement---formulated as Theorem~\ref{thm:main}---from which
Theorem~\ref{thm:planar_packing} follows easily.

Our algorithm receives as input a nonstar subtree $R$ of the red tree
and an interval $I=[i,j]$ of size $|R|$ along with a blue graph $B$
embedded on $I$. Without loss of generality we assume $i<j$. In the
initial call $B$ is a tree, but in a general recursive call $B$ is a
\emph{blue forest} that may consist of several components. For
$k\in[i,j]$ let $\treeat{k}$ denote the component of $B$ that contains
$k$. For $[x,y]\subseteq[i,j]$ let $B[x,y]$ denote the subgraph of $B$
induced by the vertices in 
$[x,y]$, and for $k\in[x,y]$ let $\treeatt{[x,y]}{k}$ denote the
component of $B[x,y]$ that contains $k$.


In general the algorithm sees only a small part of the overall picture
because it has access to the vertices in $I$ only. However, blue
vertices in $I$ may have edges to vertices outside of $I$ and also
vertices of $R$ may have neighbors outside of $I$. We have to ensure
that such \emph{outside edges} are used by one tree only and can be
routed without crossings.
In order to control the effect of outside edges, we allow only one
vertex in each component---that is, the root of $R$ and the root of each
component of $B$---to have neighbors outside of $I$. Whenever we change
the blue embedding we need to maintain the relative order of these roots
so as to avoid crossings among outside edges.

\subsubparagraph{Conflicts.} Typically $r:=\rootof(R)$ has at least one
neighbor outside of $I$: its parent $\p_{T_2}(r)$. But $r$ may also have
children in $T_2\setminus R$. We assume that all neighbors---parent and
children---of $r$ in $T_2\setminus R$ are already embedded outside of
$I$ when the algorithm is called for $R$. There are two principal
obstructions for mapping $r$ to a point $v\in I$:
\begin{itemize}
\item 
  A vertex $v\in I$ is in \emph{edge-conflict} with $r$, if
  $\{v,r'\}\in\mathrm{E}(T_1)$ for some neighbor $r'$ of $r$ in
  $T_2\setminus R$. Mapping $r$ to $v$ would make $\{v,r'\}$ an edge of
  both $T_1$ and $T_2$
  (\figurename~\ref{fig:conflicts_1}--\ref{fig:conflicts_2}). In figures
  we mark vertices in edge-conflict with $r$ by a lightning symbol
  \Lightning.
\item 
  A vertex $v\in I$ is in \emph{degree-conflict} with $r$ on $I$ if
  $\deg_R(r)+\deg_B(v)\ge|I|$. If we map $r$ to $v$, then no child of
  $r$ in $R$ can be mapped to the same vertex as a child of $v$ in
  $B$. With only $|I|-1$ vertices available there is not enough room for
  both groups (\figurename~\ref{fig:conflicts_3}).
\end{itemize}
\begin{figure}[htbp]
  \centering%
  \subfloat[]{\includegraphics{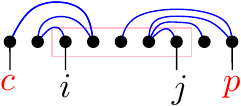}\label{fig:conflicts_1}}\hfil
  \subfloat[]{\includegraphics{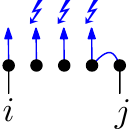}\label{fig:conflicts_2}}\hfil
  \subfloat[]{\includegraphics{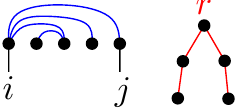}\label{fig:conflicts_3}}\hfil
  \caption{An interval $[i,j]$ on which a tree $R=\tr(r)$ is to be
    embedded. Two neighbors $p$ and $c$ of $r$ in $T_2\setminus R$ are
    already embedded (a). Then the situation on $[i,j]$ presents itself
    as in (b), where the three central vertices are in edge-conflict
    with $r$ due to blue outside edges to $p$ or $c$.  In (c) the vertex
    $i$ is in degree-conflict with $r$ because
    $\deg_R(r)+\deg_B(i)=2+3=5\ge|[i,j]|$. We cannot map $r$ to the blue
    vertex at $i$ because there is not enough room for the neighbors of
    both in $[i,j]$.\label{fig:conflicts}}
\end{figure}

We cannot hope to avoid conflicts entirely and we do not need to. It
turns out that is sufficient to avoid a very specific type of conflict
involving stars.
\begin{itemize}
\item An interval $[i,j]$ is in \emph{edge-conflict}
  (\emph{degree-conflict}) with $R=\tr(r)$ if $B^*:=\treeat{i}$ is a
  central-star and the root of $B^*$ is in edge-conflict
  (degree-conflict) with $r$ (\figurename~\ref{fig:starconflicts}).
\item An interval $I$ is in \emph{conflict} with $R$ if $I$ is in
  edge-conflict or degree-conflict with $R$ (or both).
\end{itemize}

\begin{figure}[htbp]
  \centering%
  \subfloat[]{\includegraphics{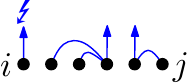}\label{fig:starconflicts_1}}\hfil
  \subfloat[]{\includegraphics{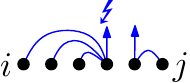}\label{fig:starconflicts_1a}}\hfil
  \subfloat[]{\includegraphics{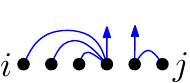}\label{fig:starconflicts_2}}\hfil
  \subfloat[]{\includegraphics{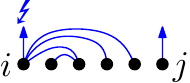}\label{fig:starconflicts_3}}\hfil
  \subfloat[]{\includegraphics{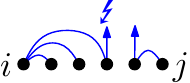}\label{fig:starconflicts_4}}\hfil
  \caption{An interval $[i,j]$ in edge-conflict (a)--(b), and examples
    where $[i,j]$ is not in edge-conflict (c)--(e). In (c) the center of
    $B^*$ is not in edge-conflict; it may be in degree-conflict, though,
    if $\deg_R(r)\ge 3$. In both (d) and (e) the tree $\treeat{i}$ is
    not a central-star.\label{fig:starconflicts}}
\end{figure}


\noindent
The following lemma shows that a degree-conflict cannot be caused by a
very small star.
\begin{restatable}{lemma}{degconthree}\label{lem:degcon3}
  If an interval $[i,j]$ is in degree-conflict with a nonstar subtree
  $R$ of $T_2$, then $\treeat{i}$ is a central-star on at least three
  vertices.
\end{restatable}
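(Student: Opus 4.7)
The plan is to unpack the definition of degree-conflict and combine it with the nonstar hypothesis via a one-line calculation. No real obstacles are anticipated; the whole proof is a matter of bookkeeping on degrees.

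First I would unpack the definitions. By assumption $[i,j]$ is in degree-conflict with $R=\tr(r)$, so by definition $B^*:=\treeat{i}$ is a central-star, the root of $B^*$ is $i$, and $\deg_R(r)+\deg_B(i)\ge|[i,j]|$. Setting $s:=|B^*|$, the task is to show $s\ge 3$.

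Next I would bound the two degree quantities separately. Since $B^*$ is a central-star rooted at its center $i$, the vertex $i$ is adjacent in $B$ to exactly the other $s-1$ vertices of its component, i.e.\ $\deg_B(i)=s-1$. For the red side, note that $R$ is a rooted tree with $|R|=|[i,j]|$ vertices. If $r$ had $|R|-1$ children, every other vertex of $R$ would be a leaf attached to $r$, making $R$ a central-star; since by hypothesis $R$ is nonstar, we conclude $\deg_R(r)\le |R|-2=|[i,j]|-2$.

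Finally, combining the degree-conflict inequality with these bounds gives
\[
|[i,j]|\;\le\;\deg_R(r)+\deg_B(i)\;\le\;\bigl(|[i,j]|-2\bigr)+(s-1),
\]
so $s\ge 3$, as claimed. This shows $\treeat{i}$ is a central-star on at least three vertices, completing the proof.
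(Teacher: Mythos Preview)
Your argument is correct and follows exactly the paper's approach: use that $R$ nonstar forces $\deg_R(r)\le |I|-2$, plug into the degree-conflict inequality, and read off $|B^*|\ge 3$. One small inaccuracy: by definition the \emph{root} $c$ of $B^*=\treeat{i}$ is the vertex in degree-conflict with $r$, and $c$ need not be located at the point $i$; it is merely the center of the central-star component containing $i$. This does not affect your computation, since for a central-star the root is the center and so $\deg_B(c)=|B^*|-1=s-1$ regardless of where $c$ sits---just replace each occurrence of $i$ by $c$ and your proof matches the paper's verbatim.
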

\begin{proof}
  By the definition of degree-conflict for $[i,j]$, $B^*:=\treeat{i}$ is
  a central-star. Let $c$ denote its root.  Then a degree-conflict
  implies $\deg_R(r)+\deg_{B^*}(c)\ge|I|$. As $R$ is not a star, we have
  $\deg_R(r)\le|R|-2=|I|-2$. Therefore $\deg_{B^*}(c)\ge 2$, that is,
  $|B^*|\ge 3$.
\end{proof}

We claim that $R$ can be packed with $B$ onto $I$ unless $I$ is in
conflict with $R$. The following theorem presents a precise formulation
of this claim. Only $R$ and the graph $\treeat{i}$ determine whether or
not an interval $[i,j]$ is in conflict with $R$. Therefore we can phrase
the statement without referring to an embedding of $B$ but just
regarding it as a sequence of trees. The set $C$ represents the set of
roots from $B$ that are in edge-conflict with $r$.
\begin{theorem}\label{thm:main}
  Let $R$ be a nonstar tree with $r=\rootof(R)$ and let $B$ be a nonstar
  forest with $|R|=|B|=n$, together with an ordering $b_1,\ldots,b_k$ of
  the $k\in\{1,\ldots,n\}$ roots of $B$ and a set
  $C\subseteq\{b_1,\ldots,b_k\}$. Suppose 
  (i) $\tr_B(b_1)$ is not a central-star or (ii) $b_1\notin C$ and
  $\deg_R(r)+\deg_B(b_1)<n$. Then there is a plane packing $\pi$ of $B$
  and $R$ onto any interval $I$ with $|I|=n$ such that
  \begin{itemize}
  \item $\pi(r)\notin\pi(C)$ and
  \item we can access $b_1,\ldots,b_k,r$ in this order from the outer
    face of $\pi$, that is, we can add a new vertex $v$ in the outer
    face of $\pi$ and route an edge to each of $b_1,\ldots,b_k,r$ such
    that the resulting multigraph is plane and the circular order of
    neighbors around $v$ is $b_1,\ldots,b_k,r$. (If $r=b_i$, for some
    $i\in\{1,\ldots,k\}$, then two distinct edges must be routed from
    $v$ to $r$ so that the result is a non-simple plane multigraph.)
  \end{itemize}
  Such a packing $\pi$ we call an \emph{ordered plane packing} of $B$
  and $R$ onto $I$.
\end{theorem}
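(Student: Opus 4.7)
The plan is to prove Theorem~\ref{thm:main} by strong induction on $n=|R|=|B|$, with the base case $n=4$ (the smallest $n$ admitting a nonstar tree) verified by direct enumeration of the finitely many configurations of $(R,B,C)$. In the inductive step on $I=[i,j]$ with $i<j$, the natural anchor move is to set $\pi(r):=i$. Hypothesis (i) or~(ii) is designed precisely so that this is safe: either $\treeat{i}=\tr_B(b_1)$ is not a central-star, in which case no star-conflict in the sense of Section~\ref{sec:preT2} can occur at $i$, or $b_1\notin C$ rules out an edge-conflict and $\deg_R(r)+\deg_B(b_1)<n$ rules out a degree-conflict at the root of $\treeat{i}$, with Lemma~\ref{lem:degcon3} available to discard tiny central-stars. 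Placing $r$ at $i$ also anchors the outer-face access: if the blue roots $b_1,\ldots,b_k$ are arranged left-to-right along the spine, then a vertex placed far below sees its neighbors in cyclic order $r,b_1,\ldots,b_k$, which matches the theorem's requirement.

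With $r$ at $i$, let $r_1,\ldots,r_\ell$ be the children of $r$ in $R$ ordered by larger-subtree-first, and partition $[i+1,j]$ into consecutive subintervals $I_1,\ldots,I_\ell$ with $|I_h|=|\tr_R(r_h)|$. On each $I_h$ identify the induced blue subforest $B_h:=B[I_h]$, its left-to-right list of roots (inherited from $b_1,\ldots,b_k$, augmented by any new roots produced when blue components are ``cut'' by the subdivision), and a local edge-conflict set $C_h$. The recursive call on $(R_h,B_h,I_h)$ with $R_h:=\tr_R(r_h)$ then produces an ordered plane packing on $I_h$, and these local packings glue into a global ordered packing on $I$: planarity across subintervals is automatic since the subintervals are disjoint on the spine; the blue edges from the still-present root of $\treeat{i}$ can be routed above the spine without crossing the red edges from $r$ to the $r_h$'s because both sets respect the same left-to-right order; and outer-face accessibility composes, since each $I_h$ contributes its own suffix of blue roots followed by $r_h$, and these suffixes concatenate left-to-right and are finally preceded by $r$ itself.

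The main obstacle, and the reason the statement is so intricate, is verifying that hypothesis (i) or (ii) holds for each recursive call. Three bad situations must be handled: $(a)$ $R_h$ may itself be a star, so the theorem does not apply recursively to $(R_h,B_h,I_h)$; $(b)$ the leftmost blue component of $B_h$ may be a central-star whose root lies in $C_h$ \emph{and} causes a degree-conflict with $r_h$, so both (i) and (ii) fail; $(c)$ the partitioning may chop the first blue component at an inconvenient place, leaving an awkward boundary vertex. Case $(a)$ is unavoidable (as noted in the overview) and must be resolved by ad hoc constructions for star-shaped red subtrees; this is feasible because $R$ itself is nonstar, so at most one child subtree is globally pathological. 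Cases $(b)$ and $(c)$ are where the local blue-embedding modifications promised in the overview enter: flipping $B_h$ on $I_h$ about its midpoint (swapping which end of a central- or dangling-star faces the boundary), reassigning the LSFR matching of red subtrees to subintervals, or performing more drastic re-embeddings while preserving the spine order of $b_1,\ldots,b_k$. Section~\ref{sec:embedding_the_red_tree} should therefore be organized as a case analysis, classifying each subproblem by the structure of its leftmost blue component (nonstar, central-star, or dangling star) against the structure of $R_h$ (star or nonstar), with each combination either recursing directly, recursing after a local modification, or being resolved by an explicit small construction whose planarity and outer-face accessibility are checked by inspection.
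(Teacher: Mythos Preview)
Your overall inductive framework is reasonable, but the decomposition you propose differs from the paper's in a way that creates a genuine gap. You split $R$ at $r$ into \emph{all} $\ell$ child subtrees $R_1,\ldots,R_\ell$ and recurse on each. The paper instead performs a \emph{two-way} split: it peels off a single \emph{smallest} subtree $S$ of $r$ and keeps $R^-:=R\setminus S$ as one piece, still rooted at $r$. The crucial point is that $R^-$ is guaranteed not to be a central-star (if it were, $R$ itself would be a star), so the induction hypothesis applies to $R^-$ with the same root; only the single piece $S$ may be a star, and that is handled by the dedicated star machinery (blue-star embedding, red-star embedding, leaf-isolation shuffle) and a large but finite case analysis on which of $S$, $R^-$, $B[i,j-|S|]$, $B[j-|S|+1,j]$ are stars.

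Your claim under case~$(a)$ that ``at most one child subtree is globally pathological'' is false. Nothing prevents \emph{every} subtree of $r$ from being a star: take $r$ with children $c_1,\ldots,c_\ell$, each $c_h$ having two leaf children. Then $R$ is a nonstar, yet every $R_h=\tr_R(c_h)$ is a central-star, and Theorem~\ref{thm:main} applies to none of your recursive calls. The ad hoc patch you suggest would have to handle an unbounded number of simultaneous star pieces against an arbitrary blue forest cut at arbitrary positions, which is exactly the difficulty the paper's two-way split is engineered to avoid. A secondary issue is that cutting the blue forest at $\ell-1$ internal positions produces many new local roots whose conflict status and outer-face order must be tracked; the paper's invariant~\ref{inv:rootsonly} and the careful treatment of provisional placements exist precisely because even a single cut already requires substantial bookkeeping.
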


\noindent
Theorem~\ref{thm:main} is a strengthening of
Theorem~\ref{thm:planar_packing} and so we obtain
Theorem~\ref{thm:planar_packing} as an easy corollary.
\begin{proofof}{Theorem~\ref{thm:planar_packing} from
    Theorem~\ref{thm:main}}
  Select roots arbitrarily so that $T_1=\tr(r_1)$ and $T_2=\tr(r_2)$.
  Then use Theorem~\ref{thm:main} with $R=T_2$, $B=T_1$, $k=1$,
  $b_1=r_1$, and $C=\emptyset$. By assumption $T_1$ is not a star and so
  (i) holds. Therefore we can apply Theorem~\ref{thm:main} and obtain
  the desired plane packing of $T_1$ and $T_2$.
\end{proofof}

It is not hard to see that forbidding conflicts in
Theorem~\ref{thm:main} is necessary: The example families depicted in
\figurename~\ref{fig:confness} do not admit an ordered plane packing.
\begin{figure}[htbp]
  \centering%
  \subfloat[$b_1\in C$]{\includegraphics{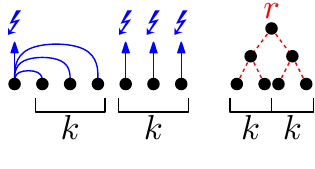}\label{fig:confness:1}}\hfil%
  \subfloat[$\deg_R(r)+\deg_{\tr(b_1)}(b_1)\ge n$]{%
    \hspace{1cm}\includegraphics{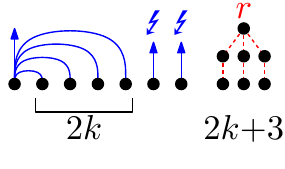}\hspace{1cm}\label{fig:confness:2}}\hfil%
  \caption{The statement of Theorem~\ref{thm:main} does not hold without
    (i) or (ii). In the examples the trees of $B$ are ordered from left
    to right so that $\tr(b_1)$ is a central-star. Vertices in $C$ are
    labeled with \Lightning.\label{fig:confness}}
\end{figure}

\paragraph{Runtime analysis.} The algorithm is parameterized with a
subtree $R$ of $T_2$ and an interval $I\subseteq[1,n]$, which $R$ is to
be packed onto together with an already embedded subforest of $T_1$. If
we represent $T_1$ as an adjacency matrix and the embeddings as arrays,
then after an $O(n^2)$ time initialization we can test in constant time
for the presence of an edge between $i,j\in I$. To represent $T_2$ we
use an adjacency list where the children are sorted by the size of their
subtrees, which can be precomputed in $O(n\log n)$ time. Then at each
step, the algorithm spends $O(|I|)$ time and makes at most two recursive
calls with disjoint sub-intervals of $I$, which yields $O(n^2)$ time
overall.

\section{Embedding the red tree: fundamentals}\label{sec:embedding_the_red_tree}

In this section we discuss some fundamental tools for our recursive
embedding algorithm to prove Theorem~\ref{thm:main}. First we formulate
four invariants that hold for every recursive call of the
algorithm. Next we present three tools that are specific types of
embeddings to handle a ``large'' substar of $B$ or $R$. All of these
embeddings rearrange the given embedding of $B$ to make room for the
center of the star. Finally, we conclude with an outline of the
algorithm.

\subsection{Invariants}

In the algorithm we are given a red tree $R=\tr(r)$, a blue forest $B$
with roots $b_1,\ldots,b_k$, an interval $I=[i,j]\subseteq[1,n]$ with
$|I|=|R|=|B|$, and a set $C$ that we consider to be the vertices from
$B$ in edge-conflict with $r$. As a first step, we embed $B$ onto $I$ by
embedding $\tr(b_1),\ldots,\tr(b_k)$ in this order from left to right,
each time using the algorithm from Section~\ref{sec:emb_t1}.
\begin{observation}\label{obs:invariants}
  We may assume that $R$, $B$ and $I=[i,j]$ satisfy the following
  invariants:\normalfont
  \begin{enumerate}[leftmargin=*,label={(I\arabic*)}]\setlength{\itemindent}{\labelsep}
  \item\label{inv:starconflict} $I$ is not in conflict with $R$.
    \emph{(peace invariant)}
  \item\label{inv:bluelocal} Every component of $B$ satisfies LSFR and
    1SR. All edges of $B$ are drawn in the upper halfplane (above the
    $x$-axis).
    All roots of $B$ are visible from above (that is, a vertical ray
    going up from $b_x$ does not intersect any edge of
    $B$). \emph{(blue-local invariant)}
  \item\label{inv:placement} $i$ is not in edge-conflict with
    $r$. \emph{(placement invariant)}
  \end{enumerate}
\end{observation}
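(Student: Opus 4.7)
The plan is to verify the three invariants in turn.

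The peace invariant \ref{inv:starconflict} is a direct consequence of the hypothesis of Theorem~\ref{thm:main}: the two conditions (i) and (ii) there precisely negate the two ways in which $I$ could be in conflict with $R$. If $\tr_B(b_1)=\treeat{i}$ is not a central-star then, by the very definition of interval conflict, no conflict is possible. Otherwise $\treeat{i}$ is a central-star and (ii) supplies both $b_1\notin C$ (ruling out edge-conflict) and $\deg_R(r)+\deg_B(b_1)<n$ (ruling out degree-conflict, using also Lemma~\ref{lem:degcon3} which ensures such a conflict would require $|\treeat{i}|\geq 3$).

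For the blue-local invariant \ref{inv:bluelocal}, I would appeal to the embedding step spelled out immediately before the observation: each component $\tr(b_x)$ is embedded by Algorithm~\ref{alg:embed_t1} on its own contiguous subinterval, with the intervals laid out left to right in the prescribed order $b_1,\ldots,b_k$. By construction each component then satisfies LSFR and 1SR; drawing every edge as an arc in the upper halfplane yields a one-page book embedding, which is plane because 1SR together with the recursive nesting structure of Algorithm~\ref{alg:embed_t1} forbids two arcs from crossing. Each root $b_x$ lands at the leftmost point of its component's interval, and since distinct components occupy disjoint intervals, no arc of $B$ spans two components and no arc of $\tr(b_x)$ lies to the left of $b_x$; hence the upward vertical ray from $b_x$ meets no arc.

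The placement invariant \ref{inv:placement} is the only one that requires real work. Since $b_1$ occupies position $i$ after the initial embedding, \ref{inv:placement} fails only when $b_1\in C$. In that case the already established \ref{inv:starconflict} together with the hypothesis of Theorem~\ref{thm:main} forces $\tr(b_1)$ to be a non-central-star: otherwise condition (i) would fail, so (ii) would apply and give $b_1\notin C$, a contradiction. The plan is then to reshape the blue embedding of $\tr(b_1)$---most naturally by flipping it so that $b_1$ migrates to the right end of its interval---and argue that the vertex appearing at position $i$ after this surgery is not in edge-conflict with $r$, while \ref{inv:starconflict}, \ref{inv:bluelocal} and the prescribed external access order $b_1,\ldots,b_k,r$ all remain intact. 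The main obstacle is certifying that such a local rearrangement always succeeds; I expect this to need a short case analysis exploiting the non-central-star structure of $\tr(b_1)$: when $\tr(b_1)$ is a dangling star, the unique center can be brought to position $i$ by flipping, and when $\tr(b_1)$ is a genuine non-star, either the natural flip already produces a non-conflicting leftmost vertex or a further tie-breaking choice among equal-size subtrees of $b_1$ in LSFR does, with the crucial point being that $\tr(b_1)$ must contain at least one vertex outside $C$ that can be placed at $i$ without breaking the visibility and ordering requirements of \ref{inv:bluelocal}.
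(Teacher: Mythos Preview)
Your treatment of \ref{inv:starconflict} and \ref{inv:bluelocal} is fine and matches the paper. For \ref{inv:placement}, you have the right move (flip $\tr(b_1)$) but you are working much harder than necessary, and your argument as written is incomplete because you leave the case analysis as a plan rather than carrying it out.

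The observation you are missing is that, by the very statement of Theorem~\ref{thm:main}, $C\subseteq\{b_1,\ldots,b_k\}$: only \emph{roots} of $B$ can be in edge-conflict with $r$. Once you flip $\tr(b_1)$, the vertex sitting at position $i$ is some non-root vertex of $\tr(b_1)$, and is therefore automatically outside $C$. No case distinction between dangling stars and genuine non-stars is needed, and there is no ``main obstacle'' to overcome: \emph{every} non-root vertex of $\tr(b_1)$ is a valid choice for position $i$. The paper's proof of \ref{inv:placement} is accordingly a single sentence. Note also that you prove more than is required to enable the flip: the paper only needs that $\tr(b_1)$ is not a singleton (a singleton is a central-star, so a singleton with $b_1\in C$ would already violate \ref{inv:starconflict}); your stronger conclusion that $\tr(b_1)$ is not a central-star is correct but unused.
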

\begin{proof}
  \ref{inv:starconflict} follows from the assumption (i) or (ii) in
  Theorem~\ref{thm:main}. \ref{inv:bluelocal} is achieved by using
  the embedding from Section~\ref{sec:emb_t1}. If $i$ is in conflict
  with $r$, then \ref{inv:starconflict} implies that $\treeat{i}$ is not
  a singleton (which would be a central-star). Therefore flipping
  $\treeat{i}$ establishes \ref{inv:placement} without affecting
  \ref{inv:starconflict} or \ref{inv:bluelocal}.
\end{proof}

Theorem~\ref{thm:main} ensures that all roots of $B$ along with $r$
appear on the outer face in the specified order. We cannot assume that
we can draw an edge to any other vertex of $B$ or $R$ without crossing
edges of the embedding given by Theorem~\ref{thm:main}. Therefore it is
important that whenever the algorithm is called recursively,
\begin{enumerate}[leftmargin=*,label={(I\arabic*)},start=4]%
  \setlength{\itemindent}{\labelsep}
\item\label{inv:rootsonly} only the roots $b_1,\ldots,b_k$ and $r$ have
  edges to the outside of $I$.
\end{enumerate}
Assuming 1SR for $B$ helps when splitting intervals for recursive
treatment.
\begin{observation}\label{obs:bluelocal}
  If $B$ satisfies \ref{inv:bluelocal} and \ref{inv:rootsonly} on an
  interval $I$, then both invariants also hold for $B[x,y]$ on $[x,y]$,
  for every subinterval $[x,y]\subseteq I$.
\end{observation}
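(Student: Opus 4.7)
My plan is to exploit the recursive interval structure of $B$'s embedding guaranteed by~\ref{inv:bluelocal} and Algorithm~\ref{alg:embed_t1}: every subtree of every component of $B$ occupies a contiguous sub-interval of $I$ with its root placed at one endpoint, and the child subtrees of any vertex are laid out consecutively away from it in order of decreasing size. The whole argument is a structural induction over these subtrees.

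The inductive claim I want to prove reads: for every subtree $S$ of a component of $B$ embedded on $[p,q]$ with root $r_S$ at one endpoint, and every subinterval $[x',y']\subseteq[p,q]$, the components of $S\cap[x',y']$ partition $[x',y']$ into contiguous sub-intervals, each satisfies LSFR and 1SR on its sub-interval with a designated root at one endpoint, and every non-root vertex has all of its $S$-neighbors inside $[x',y']$. The induction splits according to whether $r_S\in[x',y']$. If $r_S\in[x',y']$, then by LSFR the child subtrees $S_1,\ldots,S_k$ of $r_S$ lie in consecutive sub-intervals going away from $r_S$ in order of decreasing size; the ones fully inside $[x',y']$ form an initial prefix $S_1,\ldots,S_\ell$ that stays attached to $r_S$ (LSFR and 1SR inherited on this prefix), and at most one further subtree $S_{\ell+1}$ is cut at the far boundary---since its root sits at the far end of its own interval it is severed from $r_S$, and the remnants inside $S_{\ell+1}\cap[x',y']$ are handled by the inductive hypothesis applied to the strictly smaller tree $S_{\ell+1}$. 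If $r_S\notin[x',y']$, then every edge incident to $r_S$ is deleted and each intersecting child subtree contributes its components independently via the inductive hypothesis.

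Applied to each component of $B$, the claim yields~\ref{inv:bluelocal} for $B[x,y]$ on $[x,y]$: LSFR and 1SR pass to each piece, edges stay in the upper halfplane trivially as a subset of $B$'s edges, and the designated root of any component of $B[x,y]$ is visible from above because every remaining edge lies either within its own component's sub-interval (and is thus either incident to the root or lies to one side of it, since the root sits at an endpoint) or within a disjoint sub-interval belonging to another component, so no edge spans an endpoint root. The last clause of the inductive claim delivers~\ref{inv:rootsonly} for $B[x,y]$ directly.

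The step I expect to be most delicate is pinning down \emph{which} endpoint of each newly created sub-interval plays the role of the root, since both the visibility requirement of~\ref{inv:bluelocal} and the outside-edge requirement of~\ref{inv:rootsonly} must be met by the same vertex. Tracking this through the induction requires showing that the correct endpoint is precisely the one through which the severed ancestral connection to the cut-off portion of the component originally ran in the embedding produced by Algorithm~\ref{alg:embed_t1}.
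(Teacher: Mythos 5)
The paper states this observation without proof, treating it as immediate from the interval structure that LSFR and 1SR impose, so there is no argument of the authors' to compare against; your structural induction is a correct way of making that implicit reasoning explicit, and the point you flag as delicate (which endpoint of each new sub-interval carries the root) is indeed the crux. Two things are worth spelling out in a full write-up. First, visibility of the new local roots is a genuinely new fact rather than an inherited one: a vertex $u$ that becomes the root of a component of $B[x,y]$ was in general \emph{not} visible from above in the drawing of $B$ (it is typically covered by the arc from its parent to its grandparent, or by an arc from its parent to a larger sibling subtree); what rescues you is exactly what you say, namely that every such covering arc has an endpoint outside $[x,y]$ and hence is absent from $B[x,y]$, after which the disjointness and contiguity of the component intervals, together with each local root sitting at an endpoint of its interval, exclude any remaining covering arc. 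Second, your inductive clause only controls edges of $B$ that stay inside $I$; edges leaving $I$ altogether are controlled by \ref{inv:rootsonly} for $B$ on $I$, and you need the (easy) additional remark that an original root $b_m$ lying in $[x,y]$ is automatically the designated root of its component of $B[x,y]$, so that both sources of outside edges are attributed to the same set of vertices. With these two points made explicit, the plan goes through.
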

In the remainder of the proof we will ensure and assume that invariants
\ref{inv:starconflict}--\ref{inv:rootsonly} hold for every call of the
algorithm. For the initial instance of packing $T_1$ and $T_2$, we know
that \ref{inv:starconflict}--\ref{inv:placement} hold by
Observation~\ref{obs:invariants} and \ref{inv:rootsonly} holds trivially
because there are no vertices outside of $I$.

\subsection{Blue-star embedding}
\label{sec:greedy_grab_embedding}
The blue-star embedding is useful to handle the center $\sigma$ of a
substar $B^*$ of $B$. It explicitly embeds a subtree $A$ of $R$ onto a
part of $B$ that includes $\sigma$. It may use some of the leaves of
$B^*$. After taking care of $\sigma$, any unused leaf of $B^*$ appears
as a locally isolated vertex in the remaining interval of vertices.

The blue-star embedding consists of several steps:
It rearranges some vertices of $B$, moves some edges of $B$ below the
$x$-axis, and introduces edges that straddle both halfplanes above and
below the $x$-axis (\figurename~\ref{fig:greedygrabex}).
\begin{figure}[htbp]
  \centering\hfil%
  \subfloat[$A$]{\includegraphics{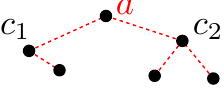}\label{fig:greedygrabex:1}}\hfil
  \subfloat[$B$]{\includegraphics{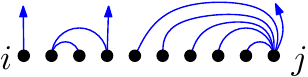}\label{fig:greedygrabex:2}}\hfil
  \subfloat[Result]{\includegraphics{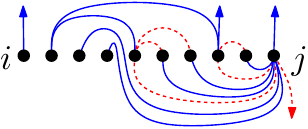}\label{fig:greedygrabex:3}}\hfil
  \caption{blue-star embedding $A$ onto a part of
    $B$.\label{fig:greedygrabex}}
\end{figure}

Suppose that $A$ is a subtree of $R$ with $a:=\rootof(A)$ (possibly
$a=r$) and $\sigma\in[i,j]$ is the center of a star $B^*=\tr_B(\sigma)$.
Either $\sigma$ is the root of $\treeat{\sigma}$ or
$\tau:=\p_B(\sigma)\in[i,j]$. Denote by $B^+$ the subgraph of $B$
induced by $\sigma$ and all its neighbors (parent and children). Note
that either $B^+=B^*$ or $B^+=B^*\cup\{\tau\}$. Put $d=\deg_A(a)$ and
let $\varphi=(v_1,\ldots,v_d)$ be a sequence of elements from
$B\setminus B^+$. Furthermore, suppose the following four conditions
hold:
  \begin{enumerate}[label={(BS\arabic*)}]\setlength{\itemindent}{4\labelsep}
  \item\label{gg:ec} $a$ is not in edge-conflict with $\sigma$,
  \item\label{gg:dc} $|A|\le|B^*|+\deg_A(a)$ and
    $|B^+|+\deg_A(a)\le|R|-1$,
  \item\label{gg:int} at least one of $B\setminus(B^*\cup\varphi)$ or
    $B\setminus(B^+\cup\varphi)$ forms an interval, and
  \item\label{gg:cs} if $B\setminus(B^*\cup\varphi)$ does not form an
    interval, then $A$ is not a central-star, $v_1=\tau\pm 1$ and
    $\{v_1,\tau\}\notin\EB$.
  \end{enumerate}
Note that \ref{gg:cs} is a trivial consequence of \ref{gg:int} in case
$B^*=B^+$. Furthermore, \ref{gg:ec} is trivially satisfied if no
neighbor of $a$ in $T_2$ has been embedded yet.

Let $c_1,\ldots,c_d$ denote the children of $a$ in $A$ such that
$|\tr_R(c_1)|\ge\ldots\ge|\tr_R(c_d)|$. Partition the leaves of $B^*$
into $d+1$ groups $G_1,\ldots,G_{d+1}$ such that $|G_k|=|\tr_R(c_k)|-1$,
for $k\in\{1,\ldots,d\}$, and $|G_{d+1}|=|B^*|-1-\sum_{k=1}^d|G_k|$. We
intend to embed the vertices of $\tr_R(c_k)\setminus\{c_k\}$ on the
leaves in $G_k$. Note that some (possibly all) of the sets $G_k$ may be
empty. Also note that
$\sum_{k=1}^d|G_k|=\sum_{k=1}^d(|\tr_R(c_k)|-1)=|A|-(d+1)$, where the
$+1$ accounts for $a$. Therefore
$|G_{d+1}|=(|B^*|-1)-(|A|-d-1)=|B^*|+d-|A|$ is nonnegative by
\ref{gg:dc} and so our assignment is well-defined.

If $B\setminus(B^*\cup\varphi)$ does not form an interval, then by
\ref{gg:cs} $A$ is not a central-star and so $|G_1|\ge 1$. In this case,
we move one leaf from $G_1$ to $G_{d+1}$ and add $\tau$ to $G_1$
instead.

The \emph{blue-star embedding of $A$ from $\sigma$ with $\varphi$}
proceeds in four steps, as detailed below. The first two steps rearrange
the embedding of $B$ to make room for the embedding of $A$ in the third
step. The fourth step ensures that the remaining unused vertices appear
in a form that allows to further process them.

\subsubparagraph{Step~1 (Flip)} We draw all edges of $B^*$ below the
spine. All edges of $B$ not inside $B^*$ remain above the spine
(\figurename~\ref{fig:greedygrab_1}).

\subsubparagraph{Step~2 (Mix)} Leaving $\sigma$ where it is, we
distribute the leaves of $B^*$ among the vertices in $\varphi$ as
follows: for $k\in\{1,\ldots,d\}$, move the vertices of $G_k$ so that
they appear as a contiguous subsequence immediately to the right of
$v_k$ (\figurename~\ref{fig:greedygrab_3}). If
$B\setminus(B^*\cup\varphi)$ does not form an interval, then we have
$\tau$ in $G_1$. As $\tau$ is not a leaf of $B^*$, we cannot move it
around so easily. Fortunately, no relocation is necessary because by
\ref{gg:cs} $\tau$ appears right next to $v_1$ in $I$. Any
remaining vertices in $G_1$ are placed between $\tau$ and $v_1$.
\begin{figure}[htbp]
  \centering\hfil%
  \subfloat[flip]{\includegraphics{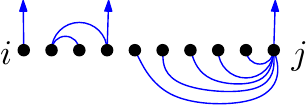}\label{fig:greedygrab_1}}\hfil
  \subfloat[mix]{\includegraphics{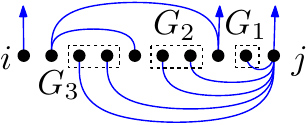}\label{fig:greedygrab_3}}\hfil
  \subfloat[complete]{\includegraphics{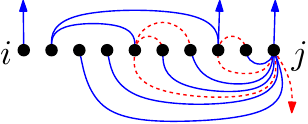}\label{fig:greedygrab_4}}\hfil
  \subfloat[cleanup]{\includegraphics{greedygrab_5}\label{fig:greedygrab_5}}\hfil
  \caption{The example from \figurename~\ref{fig:greedygrabex} in
    detail. We blue-star embed $A$ from $\sigma=j$ where $\varphi$
    takes the vertices of $B\setminus B^+$ from right to
    left.\label{fig:greedygrab}}
\end{figure}

\subsubparagraph{Step~3 (Complete)} Embed $A$ by first mapping $a$ to
$\sigma$, which is possible by \ref{gg:ec}. Next map $c_i$ to $v_i$, for
$i\in\{1,\ldots,d\}$, drawing the edge to $\sigma$ below the spine. Then
embed each subtree $\tr_R(c_i)$ explicitly (using
Algorithm~\ref{alg:embed_t1} and drawing all edges above the spine) on
the interval of $|\tr_R(c_i)|$ locally isolated vertices immediately to
the right of $c_i$ (\figurename~\ref{fig:greedygrab_4}). Note that
$G_1\cup\{v_1\}$ is locally isolated even if
$B\setminus(B^*\cup\varphi)$ does not form an interval because by
\ref{gg:cs} we have $\{v_1,\tau\}\notin\EB$.

It remains to describe the embedding for $G_{d+1}$. Before we do this,
let us consider the properties that we want the embedding to fulfill.
Note that the blue-star embedding---as far as described---does not use
any of the invariants \ref{inv:starconflict}--\ref{inv:bluelocal} other
than that we start from a one-page book embedding. However, if
\ref{inv:starconflict}--\ref{inv:bluelocal} hold for $B$, then we would
like to maintain these invariants also for the part
$B':=B\setminus(\{\sigma\}\cup\varphi\cup\bigcup_{x=1}^d{G_x})$ of $B$
that is not yet used by $R$ after the blue-star embedding. A necessary
prerequisite is that $B'$ forms an interval, that is, the vertices of
$B'$ appear as a contiguous subsequence of $[i,j]$. Given that we are
still free to place the vertices in $G_{d+1}$, it is enough that the
vertices in $B'\setminus G_{d+1}$ form a subinterval of $[i,j]$ that is
reachable from $\sigma$ (without crossing edges).

\subsubparagraph{Step~4 (Cleanup)} Suppose without loss of generality
that $\sigma$ is to the right of $B'\setminus G_{d+1}$. (If $\sigma$ is
to the left of $B'\setminus G_{d+1}$, replace all occurrences of
``right'' by ``left'' in the following paragraph.)

Move the vertices of $G_{d+1}$ so that they appear as a contiguous
subsequence immediately to the right of the rightmost vertex $z$ of
$B'\setminus G_{d+1}$. In order to establish that all edges are drawn
above the spine, we cannot draw the edges between $\sigma$ and $G_{d+1}$
in the same way as we did for $G_1,\ldots,G_d$ above. Instead we route
all edges between $\sigma$ and $G_{d+1}$ as parallel biarcs (curves that
cross the spine once) that leave $\sigma$ below the spine, then cross
the spine just to the right of the rightmost vertex of $G_{d+1}$, and
finally enter their destination from above
(\figurename~\ref{fig:greedygrab_5}). As a result, for the purpose of
embedding some part of $R$ onto $[i,j-|A|]$, the vertices of $G_{d+1}$
become isolated roots; each is connected with a single edge to the
outside that is (locally) routed in the upper halfplane.

This completes the description of the blue-star embedding. Below is a
formal statement summarizing the pre- and postconditions.
\begin{proposition}\label{p:greedygrab}
  Let $A=\tr_R(a)$ be a subtree of $R$, let $\sigma\in[i,j]$ be the
  center of a star $B^*=\tr_B(\sigma)$, and let $\varphi$ be a sequence
  of $\deg_A(a)$ pairwise distinct vertices from $B\setminus B^+$, where
  $B^+$ denotes the subgraph of $B$ induced by $\sigma$ and all its
  neighbors. If $A$ and $\sigma$ fulfill \ref{gg:ec}--\ref{gg:cs}, then
  the blue-star embedding of $A$ from $\sigma$ with $\varphi$ provides
  an ordered plane packing of $A$ onto $[i,j]\setminus[i',j']$, for some
  subinterval $[i',j']\subset[i,j]$.

  Furthermore, $\{x,\sigma\}\notin\EB$ after the blue-star
  embedding, where $x=i'$, if $\sigma>j'$, and $x=j'$, if
  $\sigma<i'$. Put $X=B\setminus(B^*\cup\varphi)$, if
  $B\setminus(B^*\cup\varphi)$ is an interval, and
  $X=B\setminus(B^+\cup\varphi)$, otherwise. Then $[i',j']$ is the union
  of $X$ with some (possibly empty) sequence of isolated vertices on the
  side of $[i',j']$ opposite from $x$.

  Finally, if the embedding of $B$ on $[i,j]$ initially satisfies
  \ref{inv:bluelocal}, then after the blue-star embedding the modified
  embedding of $B$ on $[i',j']$ satisfies \ref{inv:bluelocal}.
\end{proposition}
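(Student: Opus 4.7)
My plan is a verification walkthrough of the four-step construction, checking each conclusion of the proposition in turn. First I would confirm well-definedness: condition~\ref{gg:dc} gives $|G_{d+1}|=|B^*|+d-|A|\ge 0$ so the partition exists, $\varphi$ provides exactly $d=\deg_A(a)$ anchors, and in the exceptional case where $B\setminus(B^*\cup\varphi)$ is not an interval, \ref{gg:cs} ensures $A$ is not a central-star (so $G_1\neq\emptyset$) while $\tau$ sits spine-adjacent to $v_1$ with $\{v_1,\tau\}\notin\EB$, so swapping $\tau$ into $G_1$ is compatible with Step~2.

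Next I would establish planarity step by step. Step~1 moves the edges of $B^*$ to the lower halfplane as a nested family of arcs from $\sigma$, leaving the rest of $B$ untouched above the spine. In Step~2 the relocated vertices are leaves of $B^*$ with no other blue edges, so moving each $G_k$ next to its anchor $v_k$ preserves planarity above the spine, and the below-spine arcs from $\sigma$ are rerouted to the new positions as a nested family. Step~3 packs $\tr_R(c_i)$ above the spine by Algorithm~\ref{alg:embed_t1} on the locally isolated interval adjacent to $v_i$ (locally isolated even in the exceptional case by \ref{gg:cs}), and draws the red edges $\{\sigma,v_i\}$ below the spine as a nested family. Step~4 reroutes the edges from $\sigma$ to $G_{d+1}$ as parallel biarcs through a common spine-crossing point just past $G_{d+1}$, yielding a nested non-crossing family; since each red subtree is packed to the right of its root $v_i$ and $a=\sigma$ is accessible through the spine-crossing region, the outer-face ordering required for an ordered plane packing of $A$ holds.

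Finally I would verify the structural and invariant claims. By construction the unused vertices of $[i,j]$ form $X\cup G_{d+1}$, with $G_{d+1}$ pushed by Step~4 to the $\sigma$-facing side of $X$, so $[i',j']$ has the claimed form and the isolated vertices lie on the side opposite to $x$. Hence $x$ itself lies in $X$. In the non-interval case $X=B\setminus(B^+\cup\varphi)$ explicitly excludes $\tau$, so no vertex of $X$ is blue-adjacent to $\sigma$, giving $\{x,\sigma\}\notin\EB$ immediately. For \ref{inv:bluelocal} restricted to $[i',j']$, the internal structure of $X$ is unchanged (no edges of $X$ were moved or flipped), and each vertex of $G_{d+1}$ becomes an isolated root whose single blue edge is a biarc entering from above, so LSFR and 1SR hold trivially and the new roots are visible from above.

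The main obstacle, I expect, is the interval case $X=B\setminus(B^*\cup\varphi)$, where $\tau\in X$ and I must still argue $\{x,\sigma\}\notin\EB$, i.e., $x\neq\tau$. Here I would invoke \ref{inv:bluelocal} on the input embedding: 1SR at $\tau$ forces all of its blue neighbors (including $\sigma$) to the same side of $\tau$, and the embedding algorithm of Section~\ref{sec:emb_t1} places the root of each blue component at the leftmost position of its interval. Combining these facts with the geometry of $X$ as a contiguous interval facing $\sigma$, $\tau$ must sit on the $\sigma$-facing boundary of $X$ rather than at the opposite endpoint $x$, completing the argument.
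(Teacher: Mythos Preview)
Your verification walkthrough matches the paper's approach closely in structure and in most details. The one real divergence is your argument for $\{x,\sigma\}\notin\EB$ in the interval case $X=B\setminus(B^*\cup\varphi)$, and there you leave a gap.

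You argue via 1SR that $\tau$, if present in $X$, must lie on the $\sigma$-facing endpoint of $X$, so the far endpoint $x$ is not $\tau$. But this fails when $|X|=1$: then $X=\{\tau\}$, and $\tau$ is both endpoints, giving $x=\tau$ and $\{x,\sigma\}\in\EB$. Nothing in your argument rules this out. The paper closes exactly this gap with the second inequality of \ref{gg:dc}: from $|B^+|+d\le|R|-1$ one gets $|B\setminus B^+|\ge d+1>|\varphi|$, so at least one vertex of $B\setminus B^+$ survives into $[i',j']$; since vertices of $B\setminus B^+$ are by definition non-adjacent to $\sigma$, and since the cleanup step pushes $G_{d+1}$ (the only remaining $\sigma$-neighbors) to the $\sigma$-facing side, the far endpoint $x$ lands in $B\setminus B^+$. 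This counting is the missing ingredient in your outline; you used only the first inequality of \ref{gg:dc} (for $|G_{d+1}|\ge 0$) and never invoked the second.

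A secondary point: you invoke \ref{inv:bluelocal} (specifically 1SR) to locate $\tau$ inside $X$, but the proposition attaches the hypothesis \ref{inv:bluelocal} only to its final paragraph. The $\{x,\sigma\}\notin\EB$ conclusion is asserted unconditionally under \ref{gg:ec}--\ref{gg:cs}. The paper's counting argument above does not cite \ref{inv:bluelocal}, so it stays within the stated hypotheses; your 1SR route does not. (Your side remark that Algorithm~\ref{alg:embed_t1} ``places the root of each blue component at the leftmost position'' is also not quite right---the root goes to the first endpoint of the \emph{directed} interval, which may be either end---but this is incidental once you adopt the counting argument instead.)
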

\begin{proof}
  The packing for $A$ is immediate by construction. Let us first argue
  that after the blue-star embedding an interval $[i',j']\subset[i,j]$
  remains. We distinguish two cases.

  If $B\setminus(B^*\cup\varphi)$ is an interval, then the embedding
  uses exactly the vertices of $(B^*\cup\varphi)\setminus G_{d+1}$, and
  the vertices of $G_{d+1}$ are placed so that they extend the interval
  $B\setminus(B^*\cup\varphi)$.

  Otherwise, $B\setminus(B^*\cup\varphi)$ does not form an
  interval. Then the embedding uses exactly the vertices of
  $(B^+\cup\varphi)\setminus G_{d+1}$ (where one vertex originally in
  $G_1$ is moved to $G_{d+1}$). By \ref{gg:int} we know that
  $B\setminus(B^+\cup\varphi)$ forms an interval and the vertices of
  $G_{d+1}$ are placed so that they extend this interval.

  Next we argue that $\{x,\sigma\}\notin\EB$. By \ref{gg:dc}
  we have $|B\setminus B^+|=|R|-|B^+|\ge d+1$. As $\varphi$ consists of
  $d$ vertices, at least one vertex in $B\setminus B^+$ is not in
  $\varphi$. Due to the way we run the cleanup step, it follows that the
  vertex of $[i',j']$ furthest from $\sigma$ is in $B\setminus B^+$
  (whereas the closest vertex may be in $G_{d+1}$, which is adjacent to
  $\sigma$). By construction no vertex of $B\setminus B^+$ is adjacent
  to $\sigma$ in $B$. The description of $[i',j']$ holds by
  construction.

  It remains to argue that if $B$ satisfies \ref{inv:bluelocal}, then so
  does $B[i',j']$. The blue-star embedding does not change the order
  of the vertices in $B\setminus B^*$ and the vertices of $G_{d+1}$
  become isolated roots. Given the way the edges incident to $G_{d+1}$
  have been drawn, they do not affect the visibility of the roots in
  $B'\setminus G_{d+1}$.  Therefore, \ref{inv:bluelocal} holds for
  $B'\setminus G_{d+1}$. The validity of \ref{inv:bluelocal} for the
  vertices in $G_{d+1}$ follows from the discussion in Step~4 above.
\end{proof}

\subsection{Red-star embedding}
There is a natural counterpart to the blue-star embedding that we call
\emph{red-star embedding}. It embeds a red central-star onto a blue
tree.

Consider an interval $I=[i,j]$ on which we wish to embed a subtree $A^*$
of $R$ that is a 
central-star with $a:=\rootof(A^*)$. Consider some $\sigma\in\{i,j\}$
such that $\sigma$ is the root of $\treeat{\sigma}$.  Let
$k:=\deg_{B}(\sigma)$ and let $v_1,\dots,v_k$ denote the children of
$\sigma$ in $B$, such that $\tr_B(v_1)$ is the subtree closest to
$\sigma$.  Choose any interval $I'\subseteq I\setminus\{\sigma\}$ such
that $\tr_B(v_i)$ is either completely inside or completely outside
$I'$, for every $i\in\{1,\ldots,k\}$. See
\figurename~\ref{fig:sgg_setup}. We require that
\begin{enumerate}[label={(RS\arabic*)}]\setlength{\itemindent}{3em}
\item\label{sgg:ec} $a$ is not in edge-conflict with $\sigma$ and
\item\label{sgg:dc} $\deg_{A^*}(a)+\deg_{B[I'\cup\{\sigma\}]}(\sigma)\le|I'|$.
\end{enumerate}
Note that~\ref{sgg:ec} and~\ref{sgg:dc} are analogous to~\ref{gg:ec} and
\ref{gg:dc}, but only one inequality is needed in~\ref{sgg:dc}. In the
blue-star embedding, we need~\ref{gg:int} and~\ref{gg:cs} to handle
central-stars whose parent is also present in the interval under
consideration. In the red-star embedding, we have no requirements on $B$
other than \ref{sgg:ec} and \ref{sgg:dc}.

\subsubparagraph{Step 1 (Embed)} First embed $a$ at $\sigma$. This works
by~\ref{sgg:ec}. Let $d:=\deg_{A^*}(a)$ and let $c_1,\dots,c_d$ denote
the children of $a$ in $A$. By~\ref{sgg:dc} the interval $I'$ contains
enough vertices not adjacent to $\sigma$ in order to embed
$c_1,\dots,c_d$. Let $N$ be the set of the $d$ closest non-neighbors of
$\sigma$ in $I'$. Embed $c_1,\dots,c_d$ onto $N$. We next describe how
to draw the red edges from $c_1,\dots,c_d$ to $a$. Consider a vertex
$c_i$ and let $v$ be the vertex of the blue forest we embedded $c_i$
onto.  Refer to \figurename~\ref{fig:sgg_embed}. If $v\in\tr_B(v_1)$,
then draw $\{c_i,a\}$ as a semi-circle in the lower halfplane. If
$v\in\tr_B(v_t)$ with $1<t\leq k$ then draw $\{c_i,a\}$ as a biarc that
is in the upper halfplane near $a$, in the lower halfplane near $c_i$,
and crosses the spine between $v_{t-1}$ and $\tr_B(v_t)$.  Finally, if
$v\not\in\tr_B(\sigma)$, then draw $\{c_i,a\}$ as a biarc that is in the
upper halfplane near $a$, in the lower halfplane near $c_i$, and crosses
the spine right after $\tr_B(\sigma)$. Afterwards, the vertices of
$B[I']\setminus N$, i.e. the blue vertices that are not mapped to any
$c_i$, are visible from below.

\begin{figure}
  \centering\hfil%
  \subfloat[Setup.]{\includegraphics{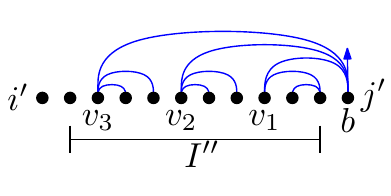}\label{fig:sgg_setup}}\hfil%
  \subfloat[Embed.]{\includegraphics{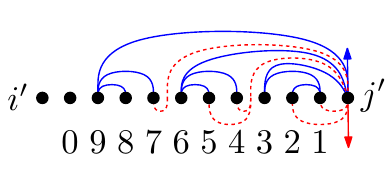}\label{fig:sgg_embed}}\hfil%
  \subfloat[Cleanup.]{\includegraphics{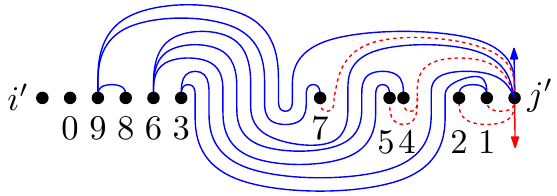}\label{fig:sgg_cleanup}}\hfil%
  \caption{Using the red-star embedding to embed $A^*$ with
    $\deg_{A^*}(a)=5$.}
  \label{fig:sgg}
\end{figure}

\subsubparagraph{Step 2 (Cleanup)} In general, the vertices of
$B[I']\setminus N$ do not form an interval. Assume without loss of
generality that $\sigma$ is the rightmost vertex of $I'$. Let
$N^+=N\cup\{\sigma\}$. We rearrange the vertices on $I'$: from left to
right, we first place all vertices of $B[I']\setminus N^+$ (maintaining
their relative order) and then all vertices of $N^+$ (maintaining their
relative order). Refer to \figurename~\ref{fig:sgg_cleanup}. In
particular, $\sigma$ is still at the rightmost position after this
rearrangement. The edges of $B[I']\setminus N^+$ are drawn as before, as
are the edges of $N^+$. We must redraw the edges that have one end
vertex in $N^+$ and one in $B[I']\setminus N^+$. The edges
$\{v_i,\sigma\}$ are drawn as triarcs: the edge is in the upper
halfplane near $v_i$ and $\sigma$. Its first spine intersection is to
the right of the rightmost vertex of $B[I']\setminus N^+$. Its second
spine intersection is such that it maintains the cyclic order of edges
leaving $\sigma$ (as before the rearrangement). The other edges are
drawn similarly.

The pre- and postconditions of the red-star embedding are
summarized by the following proposition.

\begin{proposition}\label{prop:stargreedygrab}
  Let $I$ be an interval for which $B$ satisfies
  \ref{inv:bluelocal}. Let $A^*=\tr_R(a)$ be a subtree of $R$ that is a
  central-star. Consider some $\sigma\in I$ such that $\sigma$ is the
  root of $\treeat{\sigma}$. Let $k:=\deg_{B}(\sigma)$ and denote the
  children of $\sigma$ in $B$ by $v_1,\dots,v_k$. Let
  $I'\subseteq I\setminus\{\sigma\}$ be any interval such that
  $\tr_B(v_i)$ is either completely inside or completely outside $I'$,
  for every $i\in\{1,\ldots,k\}$.

  If $A^*$ and $\sigma$ and $B[I']$ fulfill~\ref{sgg:ec}
  and~\ref{sgg:dc}, then the red-star embedding of $A^*$ from $\sigma$
  on $I'$ provides an ordered plane packing of $A^*$ onto a subinterval
  $X$ of $I'$. The set $I'\setminus X$ forms an interval that satisfies
  \ref{inv:bluelocal} and consists of $I'\setminus I$ followed by some
  vertices (possibly zero) originally in $B[I']$.
\end{proposition}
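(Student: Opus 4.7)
The plan is to verify each of the two steps of the red-star embedding in turn, establishing feasibility, planarity, the interval structure of the leftover, and the preservation of \ref{inv:bluelocal}.

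First, I would check that Step~1 (Embed) is well-defined and yields a plane embedding of $A^*$ on $N^+ := N \cup \{\sigma\}$. Condition~\ref{sgg:ec} permits mapping $a$ to $\sigma$. Since $\deg_{B[I'\cup\{\sigma\}]}(\sigma)$ counts exactly the neighbors of $\sigma$ inside $I'$, condition~\ref{sgg:dc} guarantees at least $d := \deg_{A^*}(a)$ non-neighbors in $I'$, so the set $N$ of the $d$ closest such non-neighbors is well-defined. For planarity, the key input is \ref{inv:bluelocal}: each blue subtree $\tr_B(v_t)$ occupies a contiguous spine interval with all edges above the spine, so the spine points between consecutive subtrees and immediately past $\tr_B(\sigma)$ carry no blue edges. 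The biarcs prescribed by the construction cross the spine only at these free points. A short case analysis over the three possible positions of the target vertex ($v\in\tr_B(v_1)$; $v\in\tr_B(v_t)$ for $t>1$; $v\notin\tr_B(\sigma)$) then shows that the red edges leave $a$ in the cyclic order $c_1,\ldots,c_d$ and pairwise avoid each other and the blue edges.

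Second, I would analyze Step~2 (Cleanup). The rearrangement sends $N^+$ to one end of $I'$ while preserving the internal order of both $N^+$ and $B[I']\setminus N^+$, so the edges internal to these two groups remain drawn above the spine without crossings. The only edges that need redrawing are the blue edges $\{v_i,\sigma\}$ for $v_i\in B[I']\setminus N^+$ (and, analogously, any red biarcs issued by $a$). Drawing them as triarcs whose spine crossings sit beyond $B[I']\setminus N^+$ and respect the original cyclic order at $\sigma$ keeps them crossing-free: no two such triarcs share a spine crossing, and each has its endpoint-incident portions in the upper halfplane as required.

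Finally, I would confirm the postconditions. By construction $A^*$ is packed onto $X := N^+$, and $a$ sits at $\sigma$'s original position on the outer boundary, which gives the ordered plane packing. The leftover $I' \setminus X$ is a contiguous interval consisting of (what was $I\setminus I'$ of) followed by the unused vertices from $B[I']$ in their original order, as stated. For \ref{inv:bluelocal}: each remaining component of $B$ retains its internal embedding, hence still satisfies LSFR and 1SR with all edges in the upper halfplane; visibility of roots from above is preserved because every edge redrawn in the cleanup leaves $\sigma$ downward, so no upward ray from a remaining root is newly obstructed. The main obstacle is the planarity bookkeeping across Steps~1 and~2---verifying that the cyclic order at $\sigma$ is maintained after moving $N^+$ and that the biarc and triarc spine crossings can be placed consistently on the free portion of the spine---but once the cases from the construction are made explicit these checks reduce to following the prescribed routing.
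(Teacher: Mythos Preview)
Your proposal follows the same two-step verification as the paper: check that Step~1 yields a plane packing via \ref{sgg:ec} and \ref{sgg:dc}, then check that the Step~2 rearrangement keeps everything plane and leaves an interval satisfying \ref{inv:bluelocal}. Your planarity bookkeeping is actually more explicit than the paper's, which largely defers to the construction description.

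There is, however, a gap in your argument for \ref{inv:bluelocal} on $I'\setminus X$. You write that ``each remaining component of $B$ retains its internal embedding, hence still satisfies LSFR and 1SR.'' Preserving relative positions does give 1SR for free (remaining neighbours stay on the same side), but it does \emph{not} give LSFR: components may have lost vertices to $N$, and an arbitrary deletion can shrink a closer subtree of some surviving vertex below a farther one, breaking LSFR. The paper closes this with a structural observation you omit: because $N$ consists of the $d$ non-neighbours of $\sigma$ that are \emph{closest} to $\sigma$, within any subtree $\tr_B(v_t)$ of $\sigma$ (on a directed interval $[x,y]$ with root at $x$) the deleted vertices form a \emph{suffix} of $[x+1,y]$. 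Hence what survives of each such subtree is a contiguous prefix $B[x,z]$, and restriction of $B$ to a subinterval preserves \ref{inv:bluelocal} (Observation~\ref{obs:bluelocal}). The same reasoning handles any other tree of $B$ inside $I'$ that $N$ reaches. Without this suffix property your ``hence'' does not follow. (Also, your phrase ``consisting of (what was $I\setminus I'$ of)'' is garbled; this appears to echo a typo in the proposition's own statement and should simply be dropped.)
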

\begin{proof}
  As argued above, Step~1 produces a plane packing of $A^*$ and $B[I']$
  by~\ref{sgg:ec} and~\ref{sgg:dc}. Any remaining vertices of
  $\treeatt{[I']}{\sigma}$ remain visible from below. Furthermore, if a
  subtree of $\treeatt{[I']}{\sigma}$ is embedded onto a (directed)
  interval $[x,y]$ with the root at $x$, then Step~1 embeds children of
  $a$ on a (possibly empty) suffix of $[x+1,y]$. Since Step~2 does not
  change the relative position of the remaining vertices of
  $\treeatt{[I']}{\sigma}$ nor the relative position of the other
  vertices in $I'$, the set $I'\setminus X$ satisfies
  \ref{inv:bluelocal} after Step~2.
\end{proof}

\subsection{Leaf-isolation shuffle}
While we are at discussing how to deal with red stars, let us introduce
another basic operation that will turn out useful in this context.

Suppose we need to embed a substar $A^*\subset R$ onto a subinterval
$[a,b]\subset[i,j]$. Then we need to pair the center of $A^*$ with an
isolated vertex in $B[a,b]$. If there is no such vertex, we occasionally
embed $A^*$ onto $[a+1,b+1]$ after a rearrangement of $B$ that ensures
that $B[a+1,b+1]$ has a suitable isolated vertex. The goal of such a
\emph{leaf-isolation shuffle} is to modify $B$ so that a leaf of
$B[a,b]$ is at $a+1$ and its parent is at
$a$. \figurename~\ref{fig:leafshuffle_2} shows the result of performing
a left-isolation shuffle on \figurename~\ref{fig:leafshuffle_0} with
$[a,b]=[1,9]$. The idea is then to take the parent out of the interval
by embedding $A^*$ onto $[a+1,b+1]$ instead and mapping the center of
$A^*$ to $a+1$, which is locally isolated on $[a+1,b+1]$. The
proposition below guarantees that such a leaf-isolation shuffle is
always possible. Note that we do not care about the invariant
\ref{inv:bluelocal} in this scenario because we cannot use a recursive
embedding for a star anyway. There is one part of the invariant that we
need to maintain, though, which is the visibility of the blue root from
above.
\begin{figure}[htbp]
  \centering\hfil%
  \subfloat[]{\includegraphics{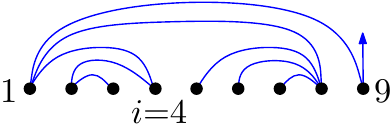}\label{fig:leafshuffle_0}}\hfil
  \subfloat[]{\includegraphics{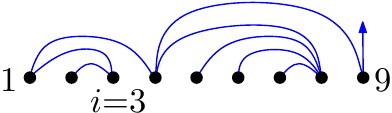}\label{fig:leafshuffle_1}}\hfil
  \subfloat[]{\includegraphics{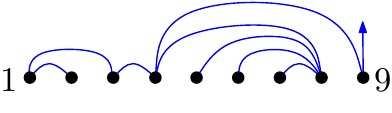}\label{fig:leafshuffle_2}}\hfil
  \caption{A leaf is shuffled into position $2$, with its parent at
    $1$.\label{fig:leafshuffle}}
\end{figure}

Below is a formal statement summarizing the conditions and properties of
the leaf-isolation shuffle.
\begin{proposition}\label{prop:leafshuffle}
  Every rooted tree $T$ on $|T|\ge 2$ vertices admits a one-page book
  embedding onto $[1,|T|]$ such that $q:=\rootof(T)$ is visible from
  above, $2$ is a leaf $\ell$ of $T$, and $1$ is the parent of
  $\ell$. Moreover, if $T$ is a central-star, then $q=1$; otherwise,
  $q=|T|$.
\end{proposition}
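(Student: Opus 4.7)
I would prove Proposition~\ref{prop:leafshuffle} by induction on $|T|$. The base case $|T|=2$ is immediate: $T$ is a central-star, so I set $q=1$ and place its unique leaf at~$2$; the single edge gives a valid one-page book embedding with $q$ visible from above. For the inductive step, the two star subcases are handled directly. If $T$ is a central-star rooted at the center $q$, I put $q=1$ and the $|T|-1$ leaves at $2,\ldots,|T|$ with nested arcs from position~$1$. If $T$ is a dangling star with center $c$ and $q$ a leaf, I put $c=1$, the other leaves at $2,\ldots,|T|-1$, and $q=|T|$, again with nested arcs from $c$; the outermost arc $\{1,|T|\}$ touches $q$ only at the spine, so the vertical ray above $q$ remains unobstructed.

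From now on $T$ is not a star. If $q$ has a leaf child $\ell_0$, I let $T^-=T\setminus\{\ell_0\}$; one checks that $T^-$ cannot be a central-star (else $T$ itself would be one), so induction gives an embedding of $T^-$ on $[1,|T|-1]$ with $q=|T|-1$ and the required parent-leaf pair at positions $1,2$. I then shift $q$ one position to the right to $|T|$---this stretches each $q$-incident arc by one unit but introduces no new crossings, since all $q$-arcs share the endpoint $q$ and every non-$q$ arc lies in $[1,|T|-2]$---place $\ell_0$ at the vacated position $|T|-1$, and add the short arc $\{\ell_0,q\}$. An analogous step handles the case where $q$ has a unique non-leaf child $v_1$: since $T(v_1)$ cannot be a central-star (else $T$ is a dangling star), induction embeds $T(v_1)$ on $[1,|T|-1]$, and I append $q$ at position~$|T|$ with the short arc $\{v_1,q\}$.

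The core subcase is when $q$ has $d\ge 2$ children, all non-leaves; let $v$ be the smallest. If $T(v)$ is not a central-star, induction embeds $T(v)$ on $[1,|T(v)|]$ with $v=|T(v)|$ and the parent-leaf pair at positions $1,2$; I embed $T\setminus T(v)$ on $[|T(v)|+1,|T|]$ by Algorithm~\ref{alg:embed_t1} in reverse so that $q=|T|$, and draw the edge $\{q,v\}=\{|T|,|T(v)|\}$, which shares $q$ with the other $q$-incident arcs and contains all arcs in the middle, so no crossings occur. The main obstacle is when $T(v)$ is itself a central-star, because induction would then place the center $v$ at position~$1$ rather than $|T(v)|$, preventing a direct connection to the near-right copy of $q$. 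I sidestep this by embedding $T(v)$ manually with $v=1$ and its leaf children at $2,\ldots,|T(v)|$, which already gives the parent-leaf pair at positions $1,2$; the edge $\{q,v\}=\{|T|,1\}$ then becomes the outermost arc of the drawing, nesting around everything and leaving $q$'s upward ray free. In all cases the verification reduces to checking that arcs sharing an endpoint do not cross, that every $q$-incident arc opens leftward, and that position~$1$ is the parent of the leaf at position~$2$.
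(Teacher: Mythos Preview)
Your proof is correct, but it takes a genuinely different route from the paper's. The paper uses a single unified construction: it first embeds $T$ with a \emph{smaller-subtree-first} variant of Algorithm~\ref{alg:embed_t1}, placing $q$ at $|T|$; then, if position~$1$ is not already a leaf (which by SSFR happens exactly when $T$ is not a central-star), it recursively applies the proposition to the subtree $B[1,i]$ sitting at the far left, where $i$ is the smallest-index neighbor of~$1$. So the paper's induction always descends into one specific subtree determined by the SSFR layout, and the rest of the embedding is left untouched.

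Your approach instead performs a structural case analysis at the root---leaf child present, unique non-leaf child, or $d\ge 2$ non-leaf children (with a further split on whether the smallest subtree is a central-star)---and builds the embedding directly in each case. Every case is straightforward to verify, and your crossing arguments (shared endpoints for $q$-arcs, nesting for the long arc $\{1,|T|\}$ or $\{|T(v)|,|T|\}$, disjoint intervals for non-$q$ arcs) are sound. The price is more cases; the paper's argument is shorter and avoids the manual handling of the ``$T(v)$ is a central-star'' obstacle, since SSFR automatically pushes a non-central-star structure to the left whenever one exists. On the other hand, your construction is more explicit about where each vertex ends up, which can be convenient if one wants to read off the final positions directly.
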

\begin{proof}
  We use induction on $n=|T|$. Clearly the statement holds for $n=2$.
  For $n\ge 3$ we start by constructing a one-page book embedding for
  $T$ with a modified version of Algorithm~\ref{alg:embed_t1} where we
  invert the order of subtrees, that is, we use a ``smaller subtree
  first rule'' (SSFR). By starting from $q$ and placing it at $|T|$ we
  ensure that it is visible from above. As $T$ is a tree, the embedding
  uses the edge $\{1,|T|\}$. If $1$ is a leaf of $T$, then $q$ is its
  parent and by SSFR $T$ is a central-star. Therefore, flipping $T$
  yields the desired embedding. Otherwise, let $i\in\{2,\ldots,|T|-1\}$
  denote the smallest (index) neighbor of $1$ and obtain the desired
  embedding inductively for $B[1,i]$ (whose root is $1$). The root of
  this subtree $B[1,i]$ ends up at either $1$ or $i$, both of which are
  visible from above. Therefore, we can complete the embedding by
  routing all edges from $1$ or $i$ to the existing forest on
  $[i+1,|T|]$. \figurename~\ref{fig:leafshuffle} illustrates
  the execution of the leaf-isolation shuffle on an example. The root
  $q$ is at $1$ if and only if $T$ is a central-star; otherwise, it
  remains at $|T|$.
\end{proof}

\subsection{Algorithm outline}\label{proofstart}

Recall that we are given a red tree $R=\tr(r)$, a blue forest $B$ with
roots $b_1,\ldots,b_k$, an interval $I=[i,j]\subseteq[1,n]$ with
$|I|=|R|=|B|$, and a set $C\subset B$ of vertices in edge-conflict with
$r$.

Let $s$ denote a child of $r$ that minimizes $|\tr_R(c)|$ among all
children $c$ of $r$ in $R$. Denote $S=\tr_R(s)$ and $R^-=R\setminus S$.
If $|R^-|\ge 2$, then $R^-$ cannot be a central-star: if it were, then
$|S|=1$ and $R$ would be a star. Another easy consequence of the choice
of $s$ is the following.
\begin{restatable}{lemma}{degr}\label{lem:degr}
  If $\deg_R(r)\ge 2$, then $|R^-|\ge|S|+\deg_{R^-}(r)$.
\end{restatable}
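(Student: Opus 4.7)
The plan is to unfold the definitions and reduce the inequality to a simple nonnegativity check.

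First I would let $d=\deg_R(r)\ge 2$ and name the children of $r$ in $R$ as $c_1,\ldots,c_d$. Since $r$ is the root of $R$, $\deg_R(r)$ equals the number of children of $r$ in $R$. By the choice of $s$, we have $|\tr_R(s)|\le|\tr_R(c_i)|$ for each $i$, so every remaining subtree $\tr_R(c_i)$ with $c_i\ne s$ has size at least $|S|$. Removing $S$ from $R$ leaves $r$ together with the $d-1$ subtrees rooted at the children of $r$ other than $s$, and it also removes exactly one edge incident to $r$. Thus
\[
|R^-|=1+\sum_{c_i\ne s}|\tr_R(c_i)|\ge 1+(d-1)|S|,\qquad \deg_{R^-}(r)=d-1.
\]

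Next I would plug these into the target inequality $|R^-|\ge|S|+\deg_{R^-}(r)$. It becomes
\[
1+(d-1)|S|\ \ge\ |S|+(d-1),
\]
which rearranges to $(d-2)(|S|-1)\ge 0$. Since $d\ge 2$ and $|S|\ge 1$, this is immediate. The only subtlety is the counting of $\deg_{R^-}(r)$: removing the entire subtree $S$ (not just the edge to $s$) decreases the degree of $r$ by exactly one, so the $d-1$ bookkeeping is correct. There is no real obstacle here; the lemma follows purely from the minimality of $|S|$ among sibling subtrees and the assumption $d\ge 2$.
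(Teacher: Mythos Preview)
Your proof is correct and follows essentially the same idea as the paper: use the minimality of $S$ among the subtrees of $r$ to bound the sizes of the other $d-1$ subtrees from below by $|S|$, then check the resulting inequality reduces to $(d-2)(|S|-1)\ge 0$. The only cosmetic difference is that the paper argues by contradiction (deriving $(d-2)|S|\le d-3$ from the negation), whereas you proceed directly; your version is, if anything, slightly cleaner.
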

\begin{proof}
  Set $d:=\deg_R(r)=\deg_{R^-}(r)+1$ and suppose to the contrary that
  $|R^-|-|S|\le\deg_{R^-}(r)-1=d-2$. Adding $2|S|$ on both sides of the
  inequality yields $|R|\le d+2|S|-2$. By the minimality of $S$ we have
  $|S|\le(|R|-1)/d$. Solving for $|R|$ and combining with the previous
  inequality yields
  \[
  d|S|+1\le |R|\le d+2|S|-2 \Longrightarrow (d-2)|S|\le d-3,
  \]
  which is impossible, given that $|S|\ge 1$.
\end{proof}

Ideally, we can recursively embed $S$ onto $[j,j-|S|+1]$ and $R^-$ onto
$[i,j-|S|]$ (\figurename~\ref{fig:ideal:1}).  But in general the
invariants may not hold for the recursive subproblems. For instance,
some of the subgraphs could be stars, or if $\{i,j\}\in\EB$, then
placing $r$ at $i$ may put $[j,j-|S|+1]$ in edge-conflict with
$S$. Therefore, we explore a number of alternative strategies, depending
on which---if any---of the four forests $R^-$, $S$, $B[i,j-|S|]$ and
$B[j-|S|+1,j]$ in our decomposition is a star.
\begin{figure}[htbp]
  \centering
  \subfloat[]{\includegraphics{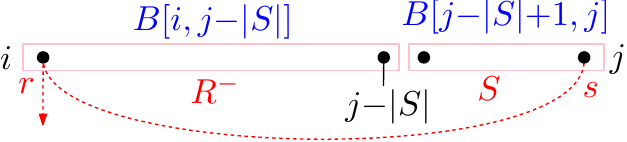}\label{fig:ideal:1}}\hfil
  \subfloat[]{\includegraphics{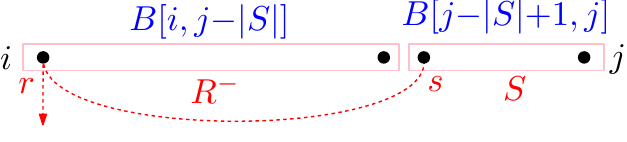}\label{fig:ideal:2}}\hfil
  \caption{Our recursive strategy in an ideal world.\label{fig:ideal}}
\end{figure}

To complete the proof of Theorem~\ref{thm:main} we distinguish seven
cases. In each of these seven cases, we follow the notation of and
assume the preconditions discussed above. First, in
Section~\ref{subsec:rec_general} we discuss the general case, where none
of the four forests is a star. Then, in Section~\ref{subsec:rec_unary}
and Section~\ref{subsec:rec_singleton} we handle the special cases
$\deg_R(r)=1$ and $|S|=1$, respectively. The final four sections each
correspond to one of the four forests being a star. Capturing the
general intuition we refer to $R^-$ as ``large'' and to $S$ as
``small'', although they may have almost the same size and---in special
cases, like $\deg_R(r)=1$---$S$ may actually be larger than $R^-$.

\section{Embedding the red tree: the general case}
\label{subsec:rec_general}
In the general case, we suppose that none of the subtrees in our current
decomposition is a star.
\begin{lemma}\label{lem:rec_general}
  If none of $S$, $R^-$, $B[i,j-|S|]$, and $B[j-|S|+1,j]$ is a star,
  then there is an ordered plane packing of $B$ and $R$ onto $I$.
\end{lemma}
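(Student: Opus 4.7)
The plan is to realize the ideal decomposition of Figure~\ref{fig:ideal:1}: place $r$ at $i$, recursively apply Theorem~\ref{thm:main} to pack $R^-$ with $B[i,j-|S|]$ on $I_1:=[i,j-|S|]$ (with $r$ at the left end) and to pack $S$ with $B[j-|S|+1,j]$ on $I_2:=[j,j-|S|+1]$ (so that $s$ is placed at $j$), and finally draw the edge $\{r,s\}$ as a single red arc over the top of both packings. This arc does not cross anything because both recursive outputs are ordered plane packings, so $r$ and $s$ each lie on the outer face of their respective packings and are accessible from above.

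For each recursive call I must verify the preconditions of Theorem~\ref{thm:main}. The tree and forest on each side are non-stars by the case hypothesis, the sizes match, and Observation~\ref{obs:bluelocal} transports \ref{inv:bluelocal} and \ref{inv:rootsonly} to $I_1$ and $I_2$. The conflict sets $C_1$ and $C_2$ for the two subproblems are obtained from $C$ by restriction, together with any blue roots in the respective subinterval that carry a blue outside edge to the newly outside red vertex, namely $s$ for the $R^-$-call and $r$ for the $S$-call. Invariant \ref{inv:placement} rules out edge-conflict at $i$ or $j$ with all \emph{previously} outside red neighbors, so the only new source of edge-conflict is a potential blue edge $\{i,j\}\in\EB$, which would simultaneously cause an edge-conflict for $r$ at $i$ (through the new outside neighbor $s$ at $j$) and for $s$ at $j$ (through the new outside neighbor $r$ at $i$).

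To discharge condition~(i) or~(ii) of Theorem~\ref{thm:main} for the $R^-$-call, I split on whether $\treeatt{[I_1]}{i}$ is a central-star. If not, condition~(i) holds. If it is, its root is $i$ by \ref{inv:bluelocal}, and I need to verify~(ii): $i\notin C_1$ plus the strict inequality $\deg_{R^-}(r)+\deg_{B[I_1]}(i)<|I_1|$. The first reduces to $\{i,j\}\notin\EB$ combined with \ref{inv:placement}; the second uses that $B[I_1]$ is a nonstar forest, so a central-star component at $i$ cannot exhaust $I_1$ and therefore $\deg_{B[I_1]}(i)\le|I_1|-2$, while Lemma~\ref{lem:degr} together with $R^-$ being a nonstar supplies enough slack on $\deg_{R^-}(r)$ to close the gap. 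A symmetric analysis handles the $S$-call on $I_2$.

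The principal obstacle is the subcase $\{i,j\}\in\EB$, where the symmetric decomposition fails on both sides at once. I plan to resolve this by switching to the alternative decomposition of Figure~\ref{fig:ideal:2}, which places $S$ adjacent to $r$ instead of at the far end of $I$; then $s$ lands next to $r$, so no long red arc is needed and the blue edge $\{i,j\}$ no longer threatens planarity. Combining the two recursive outputs is then routine: concatenating the outer-face orderings from the $R^-$-side and the $S$-side, and closing with the arc $\{r,s\}$, produces the cyclic outer-face order $b_1,\ldots,b_k,r$ required by the definition of an ordered plane packing.
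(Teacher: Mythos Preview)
Your plan captures the paper's ``ideal'' recursion (Option~1) correctly, but the verification of Theorem~\ref{thm:main}'s precondition~(ii) is where the argument breaks down, and the fallback you propose for the bad case does not work.

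\textbf{Degree-conflicts are not ruled out.} You assert that when $\treeatt{[I_1]}{i}$ is a central-star, the bound $\deg_{B[I_1]}(i)\le|I_1|-2$ (from $B[I_1]$ not being a star) together with Lemma~\ref{lem:degr} gives $\deg_{R^-}(r)+\deg_{B[I_1]}(i)<|I_1|$. This is false. From the hypotheses one only gets $\deg_{R^-}(r)\le(|R^-|-1)/4$ (each subtree of $r$ has size $\ge|S|\ge4$), so the sum can reach $(|R^-|-1)/4+|R^-|-2$, which exceeds $|R^-|$ once $|R^-|\ge9$. The paper's Case~2 is devoted entirely to this degree-conflict and resolves it with a blue-star embedding of $R^-$ from the star center, plus further subcase analysis. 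The ``symmetric analysis'' you invoke for the $S$-side has the same defect: $\deg_S(s)$ can be as large as $|S|-2$, and the paper's Case~3 (with its own subcases and blue-star embeddings) handles exactly the resulting degree-conflict for $S$ on $I_2$.

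\textbf{The fallback for $\{i,j\}\in\EB$ is based on a misreading.} Option~2 (Figure~\ref{fig:ideal:2}) does \emph{not} place $S$ adjacent to $r$; it still puts $S$ on $[j-|S|+1,j]$ but with $s$ at $j-|S|+1$ instead of $j$. The red edge $\{r,s\}$ still spans almost all of $I$, and the blue edge $\{i,j\}$ is still present. The paper's Case~1 shows that even with Option~2 available, the situation $\{i,j\}\in\EB$ leads to a further central-star $B^*=B[j-|S|+1,b]$ that can obstruct both options simultaneously; resolving this requires Cases~1.1--1.3, including explicit rearrangements of $B$, a third embedding direction ($R^-$ onto $[j-|S|,i]$), and another blue-star embedding. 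None of this is covered by simply ``switching to the alternative decomposition.''

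You are also missing the provisional placement of $s$ before embedding $R^-$: since $s$ becomes an outside neighbor of $r$ once $R^-$ is embedded, the conflict set $C_1$ for the $R^-$-call must anticipate where $s$ will go, which in turn depends on which option is chosen.
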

\begin{proof}
  As $S$ is a minimum size subtree of $r$ in $R$, and neither $S$ nor
  $R^-$ is a star, we know that $r$ has at least one more subtree other
  than $S$ and every subtree of $r$ in $R$ has size at least four. (All
  trees on three or less vertices are stars.) It follows that
  \begin{equation}\label{eq:degr}
    \deg_{R^-}(r)\le(|R^-|-1)/4.
  \end{equation}

  The general plan is to use one of the following two options. In both
  cases we first embed $R^-$ recursively onto $[i,j-|S|]$. Then we
  conclude as follows.
  \begin{description}
  \item[Option 1:] Embed $S$ recursively onto $[j,j-|S|+1]$
    (\figurename~\ref{fig:ideal:1}).
  \item[Option 2:] Embed $S$ recursively onto $[j-|S|+1,j]$
    (\figurename~\ref{fig:ideal:2}).
  \end{description}
  In some cases neither of these two options works and so we have to use
  a different embedding.

  As we embed $S$ after $R^-$, the (final) mapping for $s$ is not known
  when embedding $R^-$. However, we need to know the position of $s$ in
  order to determine the conflicts for embedding $R^-$. Therefore,
  before embedding $R^-$ we \emph{provisionally} embed $s$ at
  $\alpha:=\rootof(\treeatt{[j-|S|+1,j]}{j})$ (Option 1) or
  $\alpha:=\rootof(\treeatt{[j-|S|+1,j]}{j-|S|+1})$ (Option 2).  That
  is, for the recursive embedding of $R^-$ we pretend that some neighbor
  of $r$ is embedded at $\alpha$. In this way we ensure that $S$ is not
  in edge-conflict with the interval in its recursive embedding.
  The final placement for $s$ is then determined by the recursive
  embedding of $S$, knowing the definite position of its parent $r$.

  For the recursive embeddings to work, we need to show that the
  invariants \ref{inv:starconflict}, \ref{inv:bluelocal} and
  \ref{inv:rootsonly} hold (\ref{inv:placement} then follows as in
  Observation~\ref{obs:invariants}). For \ref{inv:bluelocal} and
  \ref{inv:rootsonly} this is obvious by construction and
  Observation~\ref{obs:bluelocal}, as long as we do not change the
  embedding of $B$. As we do not change the embedding in Option~1 and 2,
  it remains to ensure \ref{inv:starconflict}. So suppose that for both
  options, \ref{inv:starconflict} does not hold for at least one of the
  two recursive embeddings. There are two possible obstructions for
  \ref{inv:starconflict}: edge-conflicts and degree-conflicts. We
  discuss both types of conflicts, starting with edge-conflicts.

  \case{1} $[i,j-|S|]$ is not in degree-conflict with $R^-$ and
  $[j,j-|S|+1]$ is not in degree-conflict with $S$. Then Option~1 works,
  unless $[i,j-|S|]$ is in edge-conflict with $R^-$. Recall that
  $[j,j-|S|+1]$ is not in edge-conflict with $S$ after embedding $R^-$
  onto $[i,j-|S|]$ due to the provisional placement of $s$.

  We claim that an edge-conflict between $R^-$ and $[i,j-|S|]$ implies
  $\{i,j\}\in\EB$. To prove this claim, suppose that $[i,j-|S|]$ is in
  edge-conflict with $R^-$. Then $\treeatt{[i,j-|S|]}{i}$ is a
  central-star whose root $c$ is in edge-conflict with $r$. If $c=i$,
  then by \ref{inv:placement} there was no such conflict initially (for
  $R$ and $[i,j]$). So, as claimed, the conflict can only come from a
  blue edge to $s$ (provisionally placed) at $j$. Otherwise, $c>i$ and
  by 1SR there is no edge in $B$ from $c$ to any point in $[c+1,j]$. It
  follows that $\treeatt{[i,j-|S|]}{i}=\treeat{i}$. The conflict between
  $c$ and $r$ does not come from the edge to $s$ but from an edge to a
  vertex outside of $[i,j]$. This contradicts \ref{inv:starconflict} for
  $R$ and $[i,j]$, which proves the claim.

  The presence of the edge $\{i,j\}$ implies that $B$ is a tree and by
  \ref{inv:rootsonly} only (the root) $i$ or $j$ may have edges out of
  $[i,j]$. Consider Option~2, which embeds $S$ onto $[j-|S|+1,j]$,
  provisionally placing $s$ at
  $\rootof(\treeatt{[j-|S|+1,j]}{j-|S|+1})$.  There are two possible
  obstructions: an edge-conflict for $R^-$ or a degree-conflict for
  $S$. In both cases we face a central-star $B^*=B[j-|S|+1,b]$ with
  center $b\in[j-|S|+1,j-1]$. Due to 1SR and $\{i,j\}\in\EB$, we know
  that $b=\rootof(\treeatt{[j-|S|+1,j]}{j-|S|+1})$. We distinguish three
  cases.

  \case{1.1} $\{i,b\}\in\EB$. Then we consider a third option:
  provisionally place $s$ at $j$, 
  embed $R^-$ recursively onto $[j-|S|,i]$ and then $S$ 
  onto $[j,j-|S|+1]$ (\figurename~\ref{fig:general2_1}). The edge
  $\{i,b\}$ of $B$ prevents any edge-conflict between $[j-|S|,i]$ and
  $R^-$ (and, as before, for $S$). Given that we assume in Case~1 that
  $[j,j-|S|+1]$ is not in degree-conflict with $S$, we are left with
  $[j-|S|,i]$ being in degree-conflict with $R^-$ as a last possible
  obstruction.
  \begin{figure}[htbp]
    \centering\hfil%
    \subfloat[]{\includegraphics{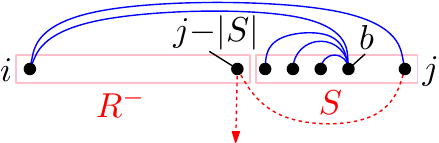}\label{fig:general2_1}}\hfil
    \subfloat[]{\includegraphics{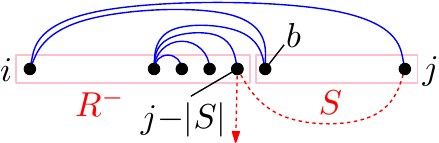}\label{fig:general2_2}}\hfil
    \caption{A third embedding when the first two options
      fail.\label{fig:general2}}
  \end{figure}

  \noindent
  Then the tree $\treeatt{[i,j-|S|]}{j-|S|}$ is a central-star $A^*$
  with root $a$ such that
  \begin{equation}\label{eq:degcon5}
    \deg_{A^*}(a)+\deg_{R^-}(r)\ge|R^-|.
  \end{equation}

  \noindent
  Combining Lemma~\ref{lem:degr} with \eqref{eq:degcon5} we get
  $|A^*|=\deg_{A^*}(a)+1\ge|S|+1\ge 5$. Note that $A^*$ can be huge, but
  we know that it does not include $i$ (because $B[i,j-|S|]$ is not a
  star). 
  We also know that $a\ne j-|S|$:
  If $a= j-|S|$, then by 1SR we have
  $\p_B(a)\in [i,j-|S|-1]$, in contradiction to
  $a=\rootof(\treeatt{[i,j-|S|]}{j-|S|}$. Therefore $a=j-|S|-|A^*|+1$
  and by 1SR its parent is to the right. Due to $\{i,b\}\in\EB$ and
  since $B[j-|S|+1,b]$ is a tree rooted at $b$, we have $\p_B(a)=b$. As
  $A^*$ is a subtree of $b$ in $B$ on at least five vertices, by LSFR
  $b$ cannot have a leaf at $b-1$. Therefore, the star
  $\treeatt{[j-|S|+1,j]}{j-|S|+1}$ consists of a single vertex only,
  that is, $b=j-|S|+1$ (\figurename~\ref{fig:general2_2}). We consider
  two subcases.
  In both 
  the packing is eventually completed by recursively embedding $S$ onto
  $[j,j-|S|+1]$.

  \case{1.1.1} $\{x,b\}\in\EB$, for some $x\in[i+1,a-1]$
  (\figurename~\ref{fig:general3_1}). Select $x$ to be maximal with this
  property. Then we exchange the order of the two subtrees $\tr(x)$ and
  $A^*$ of $b$ (\figurename~\ref{fig:general3_2}). This may violate LSFR
  for $B$ at $b$, but \ref{inv:bluelocal} holds for both $B[i,j-|S|]$
  and $B[j-|S|+1,j]$.
  Clearly there is still no edge-conflict for $[j-|S|,i]$ with $R^-$
  after this change. We claim that there is no degree-conflict anymore,
  either.

  \begin{figure}[htbp]
    \subfloat[]{\includegraphics{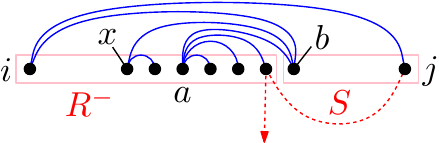}\label{fig:general3_1}}\hfill
    \subfloat[]{\includegraphics{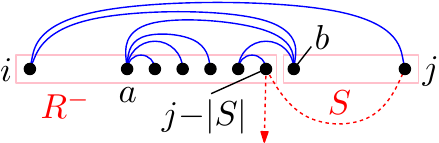}\label{fig:general3_2}}\hfill
    \subfloat[]{\includegraphics{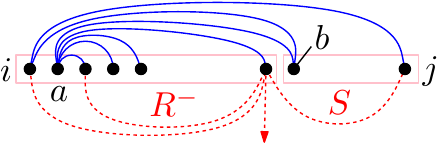}\label{fig:general4}}
    \caption{Swapping two subtrees of $b$ in Case~1.1.1 and an explicit
      embedding for Case~1.1.2.\label{fig:general3}}
  \end{figure}

  To prove the claim, note that by LSFR at $b$ we have
  $|\tr(x)|\le|A^*|$. As the size of both subtrees combined is at most
  $|R^-|$, we have $|\tr(x)|\le|R^-|/2$.  Then, using~\eqref{eq:degr},
  $|\tr(x)|-1+\deg_{R^-}(r)< |R^-|/2+\deg_{R^-}(r)< 3|R^-|/4<|R^-|$.
  Therefore after the exchange $[j-|S|,i]$ is not in degree-conflict
  with $R^-$, which proves the claim and concludes this case.

  \case{1.1.2} $i$ and $a=j-|S|-|A^*|+1$ are the only neighbors of $b$
  in $B$. We claim that in this case $A^*$ extends all the way up to
  $i+1$, that is, $A^*= 
  B[i+1,j-|S|]$.
  To prove this claim, suppose to the contrary that $a\ge i+2$. Then
  there is another subtree of $i$ to the left of $a$ and, in particular,
  $\{i,a-1\}\in\EB$. By LSFR this closer subtree 
  is at least as large as $A^*$. Using \eqref{eq:degr} and
  \eqref{eq:degcon5} we get
  $|[i+1,a-1]|+|A^*|\ge 2|A^*|>2(|R^-|-\deg_{R^-}(r))>3|R^-|/2>|R^-|$,
  in contradiction to $|[i+1,a-1]|+|A^*|<|R^-|$. Therefore $a=i+1$, as
  claimed (\figurename~\ref{fig:general4}).

  The vertex $a$ has high degree in $B$ but it is not adjacent to
  $i$. Therefore, we can embed $R^-$ as follows: put $r$ at $j-|S|$ and
  embed an arbitrary subtree $Y$ of $r$ onto $[i,i+|Y|-1]$ recursively
  or, if it is a star, explicitly, using the locally isolated vertex at
  $i$ for the center (and $i+|Y|-1$ for the root in case of a dangling
  star). As $i$ is 
  isolated on $[i,i+|Y|-1]$ there is no conflict between $[i,i+|Y|-1]$
  and $Y$. As $|Y|\ge|S|\ge 4$, the remaining graph $B[i+|Y|,j-|S|-1]$
  consists of isolated vertices only, on which we can explicitly embed
  any remaining subtrees of $r$ using the algorithm from
  Section~\ref{sec:emb_t1}.

  \case{1.2} $\{i,b\}\notin\EB$ and $b=\p_B(j-|S|)$. Then $j-|S|$ is a
  locally isolated vertex in $B[i,j-|S|]$, whose only neighbor in $B$ is
  at $b\notin\treeatt{[j-|S|+1,j]}{j}$. Therefore, we can provisionally
  place $s$ at $j$ 
  so that $[j-|S|,i]$ is not in conflict with $R^-$. By the assumption
  of Case~1 
  $[j,j-|S|+1]$ is not in degree-conflict with $S$. Therefore, we obtain
  the claimed packing by first embedding $R^-$ onto $[j-|S|,i]$
  recursively and then $S$ onto $[j,j-|S|+1]$.

  \case{1.3} $\{i,b\}\notin\EB$ and $b\ne\p_B(j-|S|)$. As
  $\{i,b\}\notin\EB$ and $s$ is provisionally placed at
  $b$, 
  the interval $[i,j-|S|]$
  is not in edge-conflict with $R^-$.
  Thus, Option~2 (\figurename~\ref{fig:ideal:2}) succeeds unless
  $[j-|S|+1,j]$ is in degree-conflict with $S$. Hence suppose
  \begin{equation}\label{eq:cconf}
    \deg_S(s)+\deg_{B^*}(b)\ge|S|.
  \end{equation}

  By Lemma~\ref{lem:degcon3} we have $|B^*|\ge 3$. %
  As $b\ne\p_B(j-|S|)$, by LSFR $b$ has exactly one neighbor in $B$
  outside of $B^*$: its parent $\p_B(b)\in[i+1,j-|S|]$
  (\figurename~\ref{fig:general_inline2}). Let
  $B^+=B^*\cup\{\p_B(b)\}$. We blue-star embed $S$ starting from $b$
  with $\varphi=(v_1,\ldots,v_d)=(j,\ldots)$ so that $\varphi$ takes the
  vertices of $I\setminus B^+$ from right to left. Let us argue that the
  conditions for the blue-star embedding hold.

  \ref{gg:ec} holds due to $\{i,b\}\notin\EB$ and
  $i=\rootof(\treeatt{[i,j-|S|]}{i})$. For the first inequality of
  \ref{gg:dc} we have to show $|S|\le|B^*|+\deg_S(s)$, which is
  immediate from \eqref{eq:cconf}. For the second inequality of
  \ref{gg:dc} we have to show $|B^+|+\deg_S(s)\le|I|-1$. This follows
  from
  $|B^*|+1+\deg_S(s)\le(|S|-1)+1+(|S|-1)\le|S|+(|R^-|-1)-1=|I|-2$. Regarding
  \ref{gg:int} note that in $\varphi$ we take the vertices of
  $I\setminus B^+$ from right to left. If $\varphi$ reaches beyond
  $\p_B(b)$, then $B\setminus(B^+\cup\varphi)$ forms an interval
  (\figurename~\ref{fig:general_inline2_gg5}); otherwise,
  $B\setminus(B^*\cup\varphi)$ forms an interval
  (\figurename~\ref{fig:general_inline2_gg2}).  Conversely, if
  $B\setminus(B^*\cup\varphi)$ does not form an interval, then $\varphi$
  reaches beyond $\p_B(b)$. In particular, in that case $\varphi$
  includes $\p_B(b)-1$ and we may simply move $\p_B(b)-1$ to the front
  of $\varphi$, establishing the second condition in
  \ref{gg:cs}. Regarding the remaining two conditions it suffices to
  note that $S$ is not a star by assumption and that $\p_B(b)-1$ is not
  a neighbor of $b$ in $B$ because $\p_B(b)$ is the only neighbor of $b$
  outside of $B^*$.

  Therefore, we can blue-star embed $S$ as claimed. By construction and
  Proposition~\ref{p:greedygrab} that leaves us with an interval
  $[i',j']$, where $i=i'$. This ``new'' interval is obtained from the
  interval $[i,j-|S|]$ before the blue-star embedding 
  by replacing some suffix of vertices by a corresponding number of
  locally isolated vertices. In particular, $\treeatt{[i',j']}{i'}$ is a
  subtree of $\treeatt{[i,j-|S|]}{i}$ and
  $i'=\rootof(\treeatt{[i',j']}{i'})$.

  We complete the packing by recursively embedding $R^-$ onto
  $[i',j']$. This interval is not in edge-conflict with $R^-$ by
  \ref{inv:placement}, $\{i,b\}\notin\EB$ and
  $i'=\rootof(\treeatt{[i',j']}{i'})$. We claim that it is not in
  degree-conflict with $R^-$, either. Suppose towards a contradiction
  that $[i',j']$ is in degree-conflict with $R^-$. Then
  $\treeatt{[i',j']}{i'}$ is a central-star and so by LSFR also
  $\treeatt{[i,j-|S|]}{i}$ is a central-star on at least this many
  vertices before the blue-star embedding. This contradicts the
  assumption of Case~1 that $[i,j-|S|]$ is not in degree-conflict with
  $R^-$. Therefore, $[i',j']$ is not in degree-conflict with $R^-$ and
  we can complete the packing as described. This completes the proof for
  Case~1.

  \begin{figure}[thbp]
    \centering%
    \subfloat[]{\includegraphics{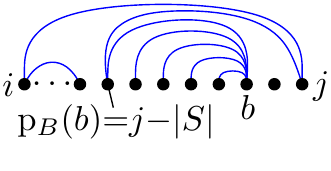}\label{fig:general_inline2_gg4}}\hfil%
    \subfloat[$\p_B(b)>v_d$]{\includegraphics{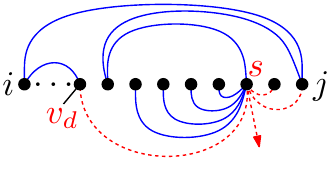}\label{fig:general_inline2_gg5}}\hfil%
    \subfloat[]{\includegraphics{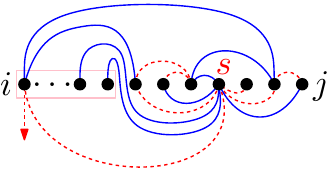}\label{fig:general_inline2_gg6}}\\
    \subfloat[]{\includegraphics{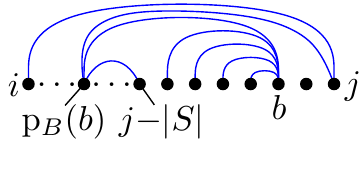}\label{fig:general_inline2_gg1}}\hfil%
    \subfloat[$\p_B(b)<v_d$]{\includegraphics{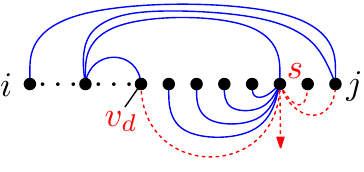}\label{fig:general_inline2_gg2}}\hfil%
    \subfloat[]{\includegraphics{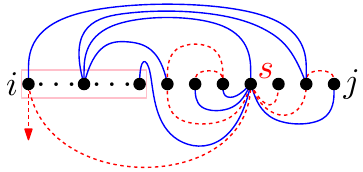}\label{fig:general_inline2_gg3}}
    \caption{Explicit embedding of $S$ in Case~1.3. The edge
      $\{\p_B(b),j\}$ need not be present in
      $B$.\label{fig:general_inline2}}
  \end{figure}

  \case{2} $[i,j-|S|]$ is in degree-conflict with $R^-$. Then
  $\treeatt{[i,j-|S|]}{i}$ is a central-star $B[i,x]$
  \begin{equation}
    \label{eq:degconf}
    \mbox{with}\;\;\deg_{R^-}(r)+(x-i)\ge|R^-|
  \end{equation}
  and $|B[i,x]|=x-i+1\ge 3$ by Lemma~\ref{lem:degcon3}. We distinguish
  two cases.

  \case{2.1} $\treeat{i}=B[i,x]$. Then $\treeat{i}\neq\treeat{j}$. If
  necessary, flip $\treeat{i}$ to put its center at $i$. If $\treeat{j}$
  is a central-star on $\ge 3$ vertices, then---if necessary---flip
  $\treeat{j}$ to put its root at $j$. We use a blue-star embedding
  for $R^-$ starting from $\sigma=i$ with $\varphi=(x+1,\ldots)$. As
  $\varphi$ consists of $d:=\deg_{R^-}(r)$ vertices, we have
  $[i,j]\setminus(B[i,x]\cup\varphi)=[x+d+1,j]$. If
  $\treeatt{[x+d+1,j]}{j}$ is a central-star on $\ge 3$ vertices, then
  use $\varphi=(j,x+1,\ldots)$ instead (and note that
  $\rootof(\treeatt{[x+d+1,j]}{j})=j$).

  In the notation of the blue-star embedding we have
  $B^*=B^+=B[i,x]$. We need to show that the conditions for this
  embedding hold. \ref{gg:ec} holds by \ref{inv:starconflict} (for
  embedding $R$ onto $[i,j]$). For \ref{gg:dc} we have to show
  $|R^-|\le|B^*|+\deg_{R^-}(r)\le|R|-1$. The first inequality holds by
  \eqref{eq:degconf} and $|B^*|\ge x-i$. The second inequality holds due
  to \ref{inv:starconflict} (for embedding $R$ onto $[i,j]$), which
  implies $\deg_{R}(r)+(x-i)\le|R|-1$. As $|B[i,x]|=x-i+1$ and
  $\deg_{R}(r)=\deg_{R^-}(r)+1$, \ref{gg:dc} follows. \ref{gg:int} is
  obvious by the choice of $\varphi$ and \ref{gg:cs} is trivial for
  $B^*=B^+$ due to \ref{gg:int}. 
  That leaves us with an interval $[i',j']$, where $j'\in\{j,j-1\}$.  We
  claim that $[j',i']$ is not in conflict with $S$.

  \begin{figure}[htbp]
    \centering\hfil%
    \subfloat[$R$]{\includegraphics{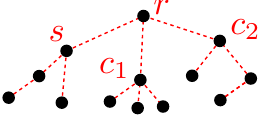}\label{fig:greedygrab_m2}}\hfil
    \subfloat[$B$]{\includegraphics{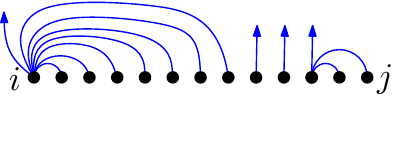}\label{fig:greedygrab_m0}}\hfil
    \subfloat[blue-star]{\includegraphics{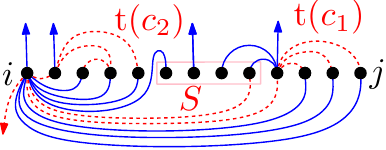}\label{fig:greedygrab_m1}}\hfil
    \caption{Handling a degree-conflict for $R^-$ in
      Case~2.1.\label{fig:modgreedygrab}}
  \end{figure}

  To prove the claim we consider two cases. If $j'=j-1$, then initally
  $\treeatt{[x+d+1,j]}{j}$ was a central-star on $\ge 3$ vertices rooted
  at $j$. By the choice of $\varphi$ a leaf of this star is at $j'$
  whose only neighbor in $B$ is at $j\ne i=r$. Therefore $[j',i']$ is
  not in edge-conflict with $S$.  As $\treeatt{[j',i']}{j'}$ is an
  isolated vertex, by Lemma~\ref{lem:degcon3} there is no
  degree-conflict between $[j',i']$ and $S$, either, which proves the
  claim.

  Otherwise, $j'=j$ and $\treeatt{[j',i']}{j'}=\treeatt{[x+d+1,j]}{j}$
  is not a central-star on $\ge 3$ vertices. Therefore by
  Lemma~\ref{lem:degcon3} there is no degree-conflict between $[j',i']$
  and $S$. In order to show that there is no edge-conflict, either, it
  is enough to show that $\rootof(\treeatt{[j',i']}{j'})$ is not
  adjacent to $i=r$ in $B$. If $\rootof(\treeatt{[j',i']}{j'})\ne j'$
  this follows from 1SR. Otherwise
  $\rootof(\treeatt{[j',i']}{j'})=j'=j$, and $\{i,j\}\notin\EB$ because
  $\treeat{i}$ is a star but $B$ is not. Therefore the claim holds and
  we can complete the packing by recursively embedding $S$ onto
  $[j',i']$.

  \case{2.2} $\treeat{i}\ne B[i,x]$. By 1SR this means that $i=\p_B(x)$
  and $i$ has at least one more neighbor in $[i,j]\setminus
  B[i,x]$. Since by assumption $B[i,j-|S|]$ is not a star, we have $x\le
  j-|S|-1$. Since $B[i,x]$ is a central-star and $x\le j-|S|-1$, by LSFR
  for $i$ the only neighbor of $i$ in $B$ outside of $B[i,x]$ is its
  parent $\p_B(i)\in[j-|S|+1,j]$.
  We claim that such a configuration is impossible. To prove the claim,
  note that $\p_B(i)$ has at least two children in $B[i,j-|S|]$ because
  $x\le j-|S|-1$ and $\p_B(i)\ge j-|S|+1$. By LSFR, the corresponding
  subtrees have size at least $|B[i,x]|=x-i+1$, and so
  $|R|\geq |B[i,\p_B(i)]|\ge 2|B[i,x]|+1\ge
  2(|R^-|-\deg_{R^-}(r)+1)+1$,
  %
  where the last inequality uses \eqref{eq:degconf}. Rewriting and using
  \eqref{eq:degr} yields
  \[|R^-|\leq
  \frac{|R|-1}{2}+\deg_{R^-}(r)-1<\frac{|R|-1}{2}+\frac{|R^-|}{4}.\]
  It follows that $|R^-|<\frac23(|R|-1)$ and hence that
  $|S|>\frac13(|R|-1)$. Since $S$ is a smallest subtree of $r$ in $R$,
  this means that $r$ is binary in $R$ and thus unary in $R^-$. This,
  finally, contradicts the degree-conflict for $[i,j-|S|]$ with $R^-$
  because $x<j-|S|$ and hence $\deg_{R^-}(r)+\deg_{B[i,x]}(i)=1+(x-i)<
  1+(j-|S|)-i=|R^-|$.

  \case{3} $[j,j-|S|+1]$ is in degree-conflict with $S$ and $[i,j-|S|]$
  is not in degree-conflict with $R^-$. Then $\treeatt{[j-|S|+1,j]}{j}$
  is a central-star $Z=\tr_B(z)$ with $|Z|\ge 3$ by
  Lemma~\ref{lem:degcon3} and
  \begin{equation}\label{eq:degcz}
    \deg_S(s)+\deg_Z(z)\ge|S|.
  \end{equation}

  \case{3.1} $\{i,j\}\notin\EB$. Then we claim that we may
  assume $z=j$ and $Z=\treeat{j}$.

  Let us prove this claim. If $z=j-|Z|+1$, then by 1SR it does not have
  any neighbor in $B\setminus Z$. Flipping $Z=\treeat{j}$ establishes
  the claim. Otherwise, $z=j$. Suppose that $z$ has a neighbor
  $y\in B\setminus Z$. As $z$ is the root of
  $Z=\treeatt{[j-|S|+1,j]}{j}$, it does not have a neighbor in
  $[j-|S|+1,j-|Z|]$ and therefore $y\in[i+1,j-|S|]$. By LSFR and because
  $B[j-|S|+1,j]$ is not a star, $y=\p_B(z)$. In particular, since
  $|Z|\ge 3$, LSFR for $y$ implies $\{y,y+1\}\notin\EB$. It follows that
  after flipping $\treeat{j}$ the resulting subtree
  $\treeatt{[j-|S|+1,j]}{j}$ is not a central-star anymore and so there
  is no conflict for embedding $S$ onto $[j,j-|S|+1]$ anymore. Therefore
  we can proceed as above in Case~1 (the conflict situation for $R^-$
  did not change because $\treeat{i}$ remains unchanged). Hence we may
  suppose that there is no such neighbor $y$ of $z$, which establishes
  the claim.

  We blue-star embed $S$ starting from $\sigma=j=z$ with
  $\varphi=(j-|Z|,j-|Z|-1,\ldots)$. In the terminology of the
  blue-star embedding we have $B^*=B^+=Z$. Let us argue that the
  conditions for the embedding hold. \ref{gg:ec} is trivial because no
  neighbor of $s$ is embedded yet. For \ref{gg:dc} we have to show
  $|S|\le|Z|+\deg_S(s)\le|R|-1$. The first inequality holds by
  \eqref{eq:degcz} and the second by
  $|S|\le(|R|-1)/\deg_R(r)\le(|R|-1)/2$, which implies
  $|Z|+\deg_S(s)\le 2(|S|-1)\le|R|-3$. \ref{gg:int} is obvious by the
  choice of $\varphi$ and given $B^*=B^+$, \ref{gg:cs} is trivial.
  That leaves us with an interval $[i',j']$, where $i'=i$.

  The plan is to recursively embed $R^-$ onto $[i,j']$. This works fine,
  unless $[i,j']$ and $R^-$ are in conflict. So suppose that they are in
  conflict. Then there is a central-star
  $Y=\treeatt{[i,j']}{i}$. Considering how $\varphi$ consumes the
  vertices in $I$ from right to left, $Y$ appears as a part of some
  component of $B$, that is, $Y=B[i,y]$, for some $y\in[i,j-|S|]$.

  We claim that $\rootof(Y)=i$. To prove the claim, suppose to the
  contrary that $\rootof{Y}=y=\p_B(i)$. Then by 1SR $Y$ \emph{is} a
  component of $B$. Thus, a degree-conflict contradicts the assumption
  of Case~3 that $[i,j-|S|]$ is not in degree-conflict with $R^-$, and
  an edge-conflict contradicts \ref{inv:starconflict} for embedding $R$
  onto $[i,j]$ together with the fact that by 1SR $y$ is not adjacent to
  any vertex outside of $Y$ in $B$---in particular not to $j$, where $s$
  was placed. This proves the claim and, furthermore, that
  $\p_B(i)\in[y+1,j-|S|]$ and $\p_B(i)$ appears in $\varphi$.

  By \ref{inv:placement} for embedding $R$ onto $[i,j]$ and
  $\{i,j\}\notin\EB$ we know that $[i,j']$ and $R^-$ are not in
  edge-conflict and so they are in degree-conflict.  In particular,
  $\deg_Y(i)+\deg_{R^-}(r)\ge|R^-|$.

  Undo the blue-star embedding. We claim 
  $\{i,y+1\}\in\EB$. To prove the claim, suppose to the contrary that
  $\{i,y+1\}\notin\EB$. Then $\{y+1,\p_B(i)\}\in\EB$ because in $B$ the
  vertex $y+1$ lies below the edge $\{i,\p_B(i)\}$. By LSFR the subtree
  of $\p_B(i)$ rooted at $y+1$ is at least as large as $Y$. Therefore,
  \[
  |\tr_B(p_B(i))|\ge 2|Y|+1=2\deg_Y(i)+3\ge 2(|R^-|-\deg_{R^-}(r))\ge
  \frac{3}{2}|R^-|,
  \]
  where the last inequality uses \eqref{eq:degr}. This is in
  contradiction to $p_B(i)\le j-|S|$, which implies
  $|\tr_B(p_B(i))|\le|R^-|$. Therefore, the claim holds and
  $\{i,y+1\}\in\EB$.

  Flip $B[i,y+1]$ and perform the blue-star embedding again. Although
  1SR may be violated at $y+1$, this is of no consequence for the
  blue-star embedding. As $Y=\treeatt{[i,j']}{i}=B[i,y]$, we know that
  $y+1$ appears in $\varphi$ and so the offending vertex is not part of
  $[i,j']$ after the blue-star embedding. Furthermore, in this way we
  also get rid of the high-degree vertex of $B$ that was at $i$
  initially so that the vertices in $[i,y]\subset[i,j']$ are isolated.
  In particular, $i$ is isolated in $[i,j']$ and its only neighbor in
  $B$ is at $y+1\ne j$. Therefore, $[i,j']$ and $R^-$ are not in
  conflict, unless $y+1=\p_B(i)$ initially and $\p_B(i)$ is in
  edge-conflict with $r$.

  In other words, it remains to consider the case 
  $\treeat{i}=B[i,y+1]=\tr_B(y+1)$ is a dangling star whose root $i$ (at
  $y+1$ before flipping) is in edge-conflict with $r$
  (\figurename~\ref{fig:31:1}).
  Then $[i,j']$ is an independent set in $B$ that consists of leaves of
  the two stars $Y$ and $Z$ plus the isolated vertex at $i$. Yet we
  cannot simply embed $R^-$ using the algorithm from
  Section~\ref{sec:emb_t1} because $i$ is and $j'$ may be in
  edge-conflict with $r$. Given that $|Y|\ge 3$ and $\varphi$ gets to
  $y+1$ only, at least two leaves of $Y$ remain in $[i,j']$ and so, in
  particular, $i+1$ is not in conflict with $r$. We explicitly embed
  $R^-$ as follows (\figurename~\ref{fig:31:2}): place $r$ at $i+1$ and
  a child $c$ of $r$ in $R^-$ at $i$. Then collect $|\tr_R(c)|$ leaves
  from $Z$ and/or $Y$ and put them right in between $i$ and $i+1$.
  First---from left to right---the leaves of $Z$ whose blue edges leave
  them upwards to bend down and cross the spine immediately to the right
  of the vertices of 
  the red subtree rooted at $y+1$ (the leftmost subtree of $S$) and then
  reach $z$ from below. Next come the leaves of $Y$ whose blue edges to
  $y+1$ are drawn as arcs in the upper halfplane. In order to make room
  for those leaves, the blue edge $\{i,y+1\}$ is re-routed to leave $i$
  downwards to bend up and cross the spine immediately to the left of
  $i+1$ in order to reach $y+1$ from above. Using the algorithm from
  Section~\ref{sec:emb_t1} we can now embed $\tr_R(c)$ onto these leaves
  and any remaining subtrees of $r$ can be embedded explicitly
  on the vertices $i+2,\ldots$ (ignoring the change of numbering caused
  by the just discussed repositioning of leaves).
  \begin{figure}[htbp]
    \centering%
    \subfloat[]{\includegraphics{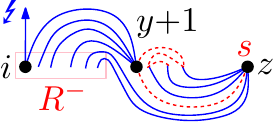}\label{fig:31:1}}\hfil
    \subfloat[]{\includegraphics{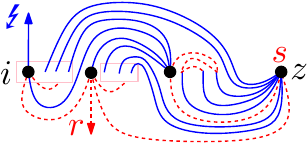}\label{fig:31:2}}\hfil
    \subfloat[]{\includegraphics{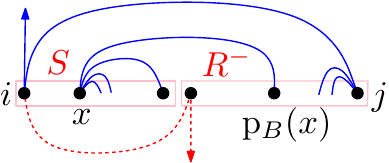}\label{fig:gen_322}}\hfil
    \caption{(a)--(b): Relocating some leaves of the stars
      $Y=\tr_B(y+1)$ and $Z=\tr_B(z)$ in Case~3.1. One subtree of
      $\tr_R(c)$ of $R^-$ is embedded at $i$ and the leaves to the right
      of $i$; all other subtrees of $R^-$ are embedded to the right of
      $r$. Both from $y+1$ and from $z$ we can route as many blue edges
      as desired to either of these ``pockets''. (c): Evading a
      degree-conflict for $S$ in Case~3.2.1.\label{fig:31}}
  \end{figure}

  \case{3.2} $\{i,j\}\in\EB$. Then $z=j$ because $j-|Z|+1$ is enclosed
  by $\{i,j\}$ and therefore cannot be the root of $Z$. Moreover,
  $\rootof(B)=i$ by LSFR and since $B$ is not a star. By LSFR $j$ does
  not have any child in $B\setminus Z$ and as $B[j-|S|+1,j]$ is not a
  star, $Z\subseteq B[j-|S|+2,j]$. In particular, $j$ is not adjacent to
  any vertex in $B[i+1,j-|S|+1]$. We provisionally place $s$ at any
  vertex in $[i+1,j-|R^-|]$, say, at $j-|R^-|$. Then $[j,j-|R^-|+1]$ is
  not in edge-conflict with $R^-$. We claim that it is not in
  degree-conflict, either. As $Z$ is a star on $|Z|\le|S|-1$ vertices,
  by Lemma~\ref{lem:degr} we have
  $\deg_{R^-}(r)+\deg_Z(z)\le\deg_{R^-}(r)+|S|-2\le|R^-|-2$ and the
  claim follows. We recursively embed $R^-$ onto $[j,j-|R^-|+1]$,
  treating all local roots of $B$ other than $j$ as in conflict with
  $r$. It remains to recursively embed $S$ onto $[j-|R^-|,i]$.

  Suppose towards a contradiction that $[j-|R^-|,i]$ is in conflict with
  $S$. Then there is a central-star $X=B[x,j-|R^-|]=\treeatt{[i,j-|R^-|]}{j-|R^-|}$.
  Due to $\{i,j\}\in\EB$ and 1SR we have $\rootof(X)=x$ and $\p_B(x)>j-|R^-|$.
  Together with LSFR for $j$ it follows that
  $\p_B(x)\in[j-|R^-|+1,j-|Z|]$. Due to the conflict setting for
  embedding $R^-$, $i$ is the only vertex in $[j-|R^-|,i]$ that may be
  in edge-conflict with $s$. As $X$ is a central-star and
  $\rootof(B)=i$, we cannot have $x=i$ because then $B$ would be a star.
  It follows that $x>i$ and so $[j-|R^-|,i]$ is not in edge-conflict
  with $S$. Therefore $[j-|R^-|,i]$ and $S$ are in degree-conflict.
  Then $|X|\ge 3$ by Lemma~\ref{lem:degcon3} and
  \begin{equation}\label{eq:degcon4}
    \deg_S(s)+\deg_X(x)\ge|S|.
  \end{equation}
  Depending on $\p_B(x)$ we consider two final subcases.

  \case{3.2.1} $\p_B(x)\in[j-|R^-|+2,j-|Z|]$
  (\figurename~\ref{fig:gen_322}). Then the edge $\{x,\p_B(x)\}$
  encloses $j-|R^-|+1$ so that, in particular,
  $\{i,j-|R^-|+1\}\notin\EB$. We provisionally place $s$ at
  $i=\rootof(\treeatt{[i,j-|R^-|]}{i})$ and claim that $[j-|R^-|+1,j]$
  and $R^-$ are not in conflict.

  To prove the claim, consider $W^*:=\treeatt{[j-|R^-|+1,j]}{j-|R^-|+1}$
  and suppose it is a central-star. (If it is not, then we are done.) If
  $\rootof(W^*)>j-|R^-|+1$, then by 1SR and $\{x,\p_B(x)\}\in\EB$ we
  have $\p_B(\rootof(W^*))=x$, in contradiction to LSFR for $x$.
  Therefore $\rootof(W^*)=j-|R^-|+1$. In order for $j-|R^-|+1$ to be the
  local root for $W^*$ in the presence of $\{x,\p_B(x)\}\in\EB$, it
  follows that $\p_B(j-|R^-|+1)=x$ and so by 1SR $|W^*|=1$. Therefore by
  Lemma~\ref{lem:degcon3} there is no degree-conflict between
  $[j-|R^-|+1,j]$ and $R^-$. As $\{x,\p_B(x)\}\in\EB$ prevents any
  connection in $B$ from $j-|R^-|+1$ to $i$ and to vertices outside of
  $[i,j]$, there is no edge-conflict between $[j-|R^-|+1,j]$ and $R^-$,
  either. This proves the claim. Recursively embed $R^-$ onto
  $[j-|R^-|+1,j]$. Recall that $\rootof(B)=i$. There is no conflict for
  embedding $S$ onto $[i,j-|R^-|]$ since $\{i,r\}\not\in\EB$ and
  $\treeatt{[i,j-|R^-|]}{i}$ is not a central-star of size at least 2 by
  LSFR at $i$. Finish the packing by recursively embedding $S$ onto
  $[i,j-|R^-|]$.

  \case{3.2.2} $\p_B(x)=j-|R^-|+1$. Then by 1SR $\p_B(x)$ is the only
  neighbor of $x$ outside of $X$ in $B$. We provisionally place $r$ at
  $j$ and employ a blue-star embedding for $S$, starting from
  $\sigma=x$ with $\varphi=(i,\ldots)$, that is, $\varphi$ takes
  vertices from left to right, skipping over $[x,\p_B(x)]$. Let us argue
  that the conditions for the blue-star embedding hold.

  In the terminology of the blue-star embedding we have $B^*=X$ and
  $B^+=X\cup\{\p_B(x)\}$. \ref{gg:ec} holds because
  $\{x,j\}\notin\EB$. For the first inequality of \ref{gg:dc} we have to
  show $|S|\le|X|+\deg_S(s)$, which is immediate from
  \eqref{eq:degcon4}. For the second inequality of \ref{gg:dc} we have
  to show $|X|+1+\deg_S(s)\le|I|-1$. This follows from
  $|X|+1+\deg_S(s)\le|S|+(|S|-1)\le|R^-|+|S|-1=|I|-1$. Regarding
  \ref{gg:int} note that in $\varphi$ we take the vertices of
  $B\setminus B^+$ from left to right. As there are not enough vertices
  in $[i,x-1]$ to embed the neighbors of $s$ (which causes the
  degree-conflict), $\varphi$ reaches beyond $\p_B(x)$ and so
  $B\setminus(B^+\cup\varphi)$ forms an interval. In particular,
  $\varphi$ includes $\p_B(x)+1$ and we may simply move $\p_B(x)+1$ to
  the front of $\varphi$, establishing the second condition in
  \ref{gg:cs}. Regarding the remaining two conditions in
  \ref{gg:cs} 
  note that $S$ is not a star by assumption and that $\p_B(x)+1$ is not
  a neighbor of $x$ in $B$ because $\p_B(x)$ is the only neighbor of $x$
  outside of $X$.

  Therefore, we can blue-star embed $S$ as claimed, which
  leaves us with an interval $[i',j']$, where $j=j'$. As
  $\{x,j\}\notin\EB$ and $j$ is not the local root of $B$ ($i$ is),
  there is no edge-conflict between $[j',i']$ and $R^-$. As there is no
  degree-conflict between $[j,j-|R^-|+1]$ and $R^-$ and the number of
  neighbors of $j$ in $B[i',j']$ can only decrease compared to
  $B[j-|R^-|+1,j]$ (if they appear in $\varphi$), there is no
  degree-conflict between $[j',i']$ and $R^-$, either. Therefore, we can
  complete the packing by embedding $R^-$ onto $[j',i']$ recursively.
\end{proof}

\section{Embedding the red tree: a unary root}\label{subsec:rec_unary}

In this section we handle all cases where the root $r$ of $R$ is unary.

\begin{proposition}\label{prop:rec_unary_star}
  If $\deg_R(r)=1$ and $S$ is a star, then there is an ordered plane
  packing of $B$ and $R$ onto $I$.
\end{proposition}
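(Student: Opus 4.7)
Since $\deg_R(r)=1$, we have $R^-=\{r\}$ and $|S|=n-1$. Because $R$ is nonstar while $S$ is a star, $S$ must be a \emph{dangling} star: $s$ is a leaf of $S$, its unique non-leaf neighbor---call it $c$---is the center of the star, and $c$ has $n-3$ further leaves as children in $S$. Thus $R$ is the path $r$--$s$--$c$ with $n-3$ leaves attached to $c$, so $n\ge 4$, and the only vertex of $R$ with degree exceeding $2$ is $c$, with $\deg_R(c)=n-2$.

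The plan is to place $r$ at $i$ (permitted by~\ref{inv:placement}) and then embed $S$ explicitly on $[i+1,j]$. The crux is the placement of $c$: since $\deg_R(c)=n-2$, the blue vertex paired with $c$ must have blue degree at most $1$ in the relevant subinterval, and $c$ must be positioned so that it can reach $s$ and all its $n-3$ red leaves without crossings. The main strategy places $s$ at $i+1$ and applies the red-star embedding (Proposition~\ref{prop:stargreedygrab}) to the central-star $\tr_S(c)$ on $[i+2,j]$ with $\sigma=j$ and $I'=[i+2,j-1]$. The red edges $\{r,s\}$ and $\{s,c\}$ are then drawn as two nested arcs below the spine: a short arc from $i$ to $i+1$, and a long arc from $i+1$ to $j$. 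Under these placements, the preconditions of the red-star embedding reduce to (i)~$j$ being the root of its blue component inside $[i+2,j]$; (ii)~$\{i+1,j\}\notin\EB$ (condition~\ref{sgg:ec} for $c$ at $\sigma=j$, since $s$ at $i+1$ is the only already-embedded neighbor of $c$); and (iii)~$j$ having no blue neighbors within $[i+2,j-1]$ (condition~\ref{sgg:dc}).

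When (i) or (iii) fails but (ii) still holds, a flip of $\treeat{j}$, possibly combined with a leaf-isolation shuffle (Proposition~\ref{prop:leafshuffle}) inside that component, brings an isolated leaf to position $j$ with its only blue neighbor outside $[i+2,j-1]$; the main strategy then applies. When (ii) fails, i.e., $\{i+1,j\}\in\EB$, the LSFR and 1SR force strong structural constraints on $\treeat{i+1}=\treeat{j}$, and the roles of $s$ and $c$ are swapped, placing $c$ at $i+1$ and $s$ at $j$, with the red-star embedding applied symmetrically with $\sigma=i+1$. The main technical obstacle throughout is to verify, in each subcase, that conditions~\ref{sgg:ec}--\ref{sgg:dc} hold after every local modification of $B$, and that invariants~\ref{inv:starconflict}--\ref{inv:rootsonly} are preserved for the subproblem; in particular, invariant~\ref{inv:bluelocal} (root visibility from above), together with the visibility property guaranteed by the leaf-isolation shuffle, ensures that the final packing remains \emph{ordered}, with $b_1,\ldots,b_k,r$ accessible on the outer face in the specified cyclic order.
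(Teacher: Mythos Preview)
Your approach differs substantially from the paper's: the paper never invokes the red-star embedding here but gives fully explicit placements, splitting on whether $\{i,j\}\in\EB$ and on which of $i,j$ is isolated, and using the leaf-isolation shuffle only to manufacture a position of blue degree at most one on which to place the center (the paper's $q$, your $c$). Your route via Proposition~\ref{prop:stargreedygrab} is a reasonable idea, but the sketch has concrete gaps.

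First, your main strategy places $r$ at $i$ and $s$ at $i+1$, which requires $\{i,i+1\}\notin\EB$; you never check this, and it can fail. Take $n=5$ with $B$ the forest consisting of a central-star $\tr_B(b_1)$ on four vertices together with an isolated $b_5$. The preliminary embedding uses edges $\{1,2\},\{1,3\},\{1,4\}$, and condition~(ii) of Theorem~\ref{thm:main} holds since $\deg_R(r)+\deg_B(b_1)=1+3<5$. Your conditions (i)--(iii) are all satisfied ($j=5$ is isolated), yet $\{i,i+1\}=\{1,2\}\in\EB$, so the red edge $\{r,s\}$ cannot be drawn. The paper sidesteps this by sending $s$ to $j$ and the center to $i+1$ instead.

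Second, your fallback when $\{i+1,j\}\in\EB$ swaps $s$ and $c$, putting $s$ at $j$ and $c$ at $i+1$. But then the red edge $\{s,c\}$ is again $\{i+1,j\}$, which is blue by hypothesis; equivalently, \ref{sgg:ec} fails for $\sigma=i+1$ because the one already-embedded neighbor $s$ of $c$ sits on a blue neighbor of $\sigma$. Concretely: $n=5$, $B$ an isolated $b_1$ plus a path $b_2b_3b_4b_5$; the embedding gives edges $\{2,5\},\{5,3\},\{3,4\}$, so $\{i+1,j\}=\{2,5\}\in\EB$ and both your main and swap strategies fail.

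Third, Proposition~\ref{prop:leafshuffle} does not do what you claim: it places the leaf and its parent at \emph{adjacent} positions, so a leaf brought to $j$ would have its parent at $j-1\in[i+2,j-1]$, not outside that interval.
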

\begin{proof}
  Since $\deg_R(r)=1$ and $R$ is not a star by assumption, $S$ must be a
  dangling star. Thus, we know exactly what $R$ looks like: it is rooted
  at $r$, which has a single child $s$, which has a single child $q$,
  which finally has zero or more leaf children.

  \case{1} $\{i,j\}\not\in\EB$. We consider three cases.

  \begin{figure}[b]
    \centering
    \subfloat[Case~1.1]{\includegraphics{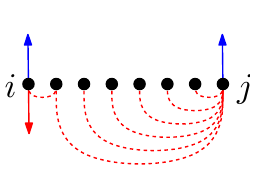}\label{fig:unary_star_ij_isolated}}\hfil%
    \subfloat[Case~1.2]{\includegraphics{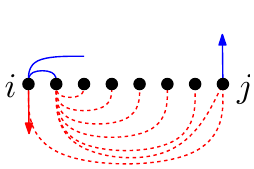}\label{fig:unary_star_i_not_isolated}}\hfil%
    \subfloat[Case~1.3]{\includegraphics{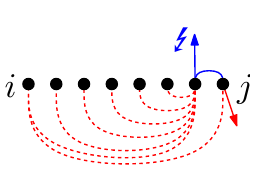}\label{fig:unary_star_j_not_isolated_1}}\\
    \subfloat[Case~1.3]{\includegraphics{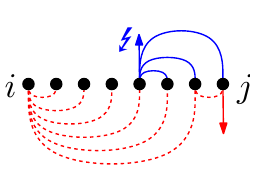}\label{fig:unary_star_j_not_isolated_2}}\hfil%
    \subfloat[Case~2.1]{\includegraphics{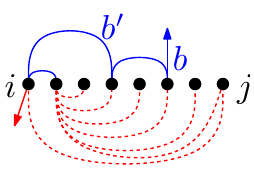}\label{fig:unary_star_ij_used_1}}\hfil%
    \subfloat[Case~2.2]{\includegraphics{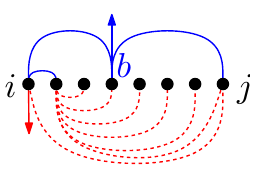}\label{fig:unary_star_ij_used_2}}
    \caption{The case analysis in the proof of
      Proposition~\ref{prop:rec_unary_star}.}
    \label{fig:unary_star}
  \end{figure}

  \case{1.1} $i$ and $j$ are both isolated in $B$. Embed $r$ to $i$, $s$
  to $i+1$, $q$ to $j$, and the children of $q$ onto $[j-1,i+2]$. See
  \figurename~\ref{fig:unary_star_ij_isolated}. Note that $i$ is not in
  edge-conflict due to the placement invariant. Every red edge is
  incident to $i$ or $j$ and hence does not occur in $B$ by assumption.

  \case{1.2} $i$ is not isolated in $B$. If $\treeat{i}$ is a
  central-star, flip it if necessary to put its root (which is not in
  edge-conflict by the peace invariant) at $i$. Otherwise, use
  the leaf-isolation shuffle to put a leaf at $i+1$ and its parent at
  $i$. Since $\treeat{i}$ is not a central-star, by
  Proposition~\ref{prop:leafshuffle}, this will place the root of
  $\treeat{i}$ at some position $x>i+1$. In both cases, embed $r$ onto
  $i$, $s$ onto $j$, $q$ onto $i+1$, and the children of $q$ onto
  $[i+2,j-1]$. See \figurename~\ref{fig:unary_star_i_not_isolated}. The
  edge $\{r,s\}$ is not used by $B$ since $\{i,j\}$ is not used by
  assumption (and the leaf-isolation shuffle cannot change that). The
  red edges incident to $q$ are not used since the only neighbor of
  $i+1$ in $B$ is $i$.

  \case{1.3} $i$ is isolated and $j$ is not isolated in $B$. Flip
  $\treeat{j}$ if its root is currently at $j$. Note that $[j,i]$ is not
  in degree-conflict for embedding $R$: this would imply that
  $B$ is a star since $\deg_R(r)=1$. If $[j,i]$ is not in
  conflict, then the invariants hold for $[j,i]$ and we can apply
  Case~1.2 by embedding $R$ on $[j,i]$ instead of $[i,j]$. Otherwise,
  $\treeat{j}$ is a central-star on at least two vertices.

  If $|\treeat{j}|=2$, then embed $r$ onto $j$, $s$ onto $i$, $q$ onto
  $j-1$, and the children of $q$ onto $[j-2,i+1]$. See
  \figurename~\ref{fig:unary_star_j_not_isolated_1}. If
  $|\treeat{j}|\geq3$, then embed $r$ onto $j$, $s$ onto $j-1$, $q$ onto
  $i$, and the children of $q$ onto $[i+1,j-2]$. See
  \figurename~\ref{fig:unary_star_j_not_isolated_2}. This works because
  the root of $\treeat{j}$ is not at $j$ (so $j$ is not in
  edge-conflict), the size of the star $\treeat{j}$ is at least three
  (so $\{j-1,j\}$ is not used), and $i$ is isolated in $B$ (so the
  red edges incident to $q$ are not used by $B$).

  \case{2} $\{i,j\}\in\EB$. Let $b$ be the root of $B$.
  We claim that (1) some vertex of $B$ has distance at least three
  to $b$ or (2) $\deg_{B}(b)\geq 2$. To prove the claim, suppose
  that all vertices in $B$ have distance at most two to $b$ and
  that $b$ is unary. Then the child of $b$ has distance one to all other
  vertices of $B$: hence $B$ is a star centered at the child
  of $b$, a contradiction. We perform a case analysis on whether (1) or
  (2) holds.

  \case{2.1} Some vertex $v$ of $B$ has distance at least three to
  $b$. Let $b'$ be the child of $b$ that contains $v$ in its subtree
  $B'$. Let $w$ be the size of $B'$. We re-embed $B$ as follows.
  $B'$ is not a central-star by choice of $v$. Hence, by
  Proposition~\ref{prop:leafshuffle}, we can use the leaf-isolation
  shuffle to embed $B'$ on $[i,i+w-1]$, placing a leaf at $i+1$, its
  parent at $i$, and the root $b'$ at some position in $[i+2,i+w-1]$.
  Complete this embedding of $B'$ to any one-page book embedding of
  $B$. Note that this embedding does not use the edge
  $\{i,j\}$. Embed $r$ at $i$, $s$ at $j$, $q$ at $i+1$, and the
  children of $q$ at $[i+2,j-1]$. See
  \figurename~\ref{fig:unary_star_ij_used_1}. This works because $b$ is
  not at $i$ (so $i$ is not in edge-conflict), $B$ does not use the
  edge $\{i,j\}$ (so $\{r,s\}$ is not used by $B$), and $i+1$ is
  isolated in $B[i+1,j]$ (so the red edges incident to $q$ are not used
  by $B$).

  \case{2.2} $\deg_{B}(b)\geq 2$. Since $B$ is not a star,
  some vertex $v$ has distance at least two to $b$ in $B$. Let $b'$
  be the child of $b$ that contains $v$ in its subtree $B'$. Let $w$ be
  the size of $B'$. We re-embed $B$ as follows. Use the
  leaf-isolation shuffle to embed $B'$ together with $b$ on $[i,i+w]$,
  placing a leaf at $i+1$, its parent at $i$, and $b$ at $i+w$. Complete
  this embedding to any one-page book embedding of $B$. Note that
  this embedding does not use the edge $\{i,j\}$. Finish by using the
  same embedding for $R$ as in Case~2.1. See
  \figurename~\ref{fig:unary_star_ij_used_2}.
\end{proof}

\begin{proposition}\label{prop:rec_unary_regular_ij_used}
  If $\deg_R(r)=1$, $S$ is not a star, and $\{i,j\}\in\EB$,
  then there is an ordered plane packing of $B$ and $R$ onto $I$.
\end{proposition}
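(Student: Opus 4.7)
Since $\{i,j\} \in \EB$, the blue forest $B$ is a single tree whose root $b_1$ lies at one endpoint of $I$ by 1SR. I describe the case $b_1 = i$; the case $b_1 = j$ is handled symmetrically by placing $r$ at $i$ and recursing on $[i+1,j]$ (in that case no flip is needed). With $b_1 = i$, LSFR combined with $\{i,j\} \in \EB$ forces $j$ to be a leaf child of $i$; in particular the only blue neighbor of $j$ is $i$.

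The plan is to place $r$ at $j$ and recursively pack $S$ with $B' := B[i,j-1]$ onto $[i,j-1]$ via Theorem~\ref{thm:main}, by induction on $|R|$. The tree $B'$ is $B$ minus the leaf $j$, rooted at $i$, with $|B'| = |S|$; it is not a central-star, because if its root $i$ had $B'$-degree $|B'|-1$ then adding back $j$ would make $B$ a star. Set $b'_1 = i$ and $C' = \{i\}$ (since $r$ placed at $j$ is blue-adjacent only to $i$). Condition~(i) of Theorem~\ref{thm:main} is then satisfied.

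Invariants \ref{inv:starconflict}, \ref{inv:bluelocal}, and \ref{inv:rootsonly} all hold for the recursive call (using that $B'$ is not a central-star and that within $B'$ only $i$ has an outside edge, namely $\{i,j\}$). Invariant \ref{inv:placement} initially fails, but Observation~\ref{obs:invariants} repairs it inside the recursive call by flipping $\treeat{i}=B'$; this moves $i$ to position $j-1$ and places at the new leftmost position $i$ the vertex originally at $j-1$, which (since $j$'s only $B$-neighbor is $i$) is not adjacent to $j$. The induction hypothesis then yields an ordered plane packing of $S$ and $B'$ on $[i,j-1]$ with $\pi(s) \ne j-1$ and ordered access to $i$ followed by $s$ from the outer face.

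To finish, I add the blue edge $\{i,j\}$ as a short arc between the adjacent positions $j-1$ and $j$, and route the red edge $\{r,s\}$ from $j$ to $\pi(s)$. Treating position $j$ as an external vertex of the recursive packing, the ordered access guarantees both edges can be drawn in a plane manner with cyclic order $(i,s)$ around $j$, matching the two incident edges. No packing conflict arises because $\pi(s) \ne j-1$ implies the red edge's position pair $\{j,\pi(s)\}$ differs from the blue pair $\{j-1,j\}$ and from every blue edge strictly inside $[i,j-1]$. The outer ordered plane packing then reduces to checking that $b_1 = i$ (at $j-1$) and $r$ (at $j$) are both visible from the outer face, which they are; the required cyclic order $(b_1,r) = (i,r)$ is trivial since $k=1$. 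The chief obstacle is orchestrating the internal flip of $B'$ and reconciling the ordered access from the induction hypothesis with the simultaneous plane routing of both external edges $\{i,j\}$ and $\{r,s\}$ incident at $r$'s position.
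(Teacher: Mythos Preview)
Your argument rests on the claim that, with $b_1$ at $i$, LSFR together with $\{i,j\}\in\EB$ forces $j$ to be a leaf child of $i$. This is false. With the root at $i$, Algorithm~\ref{alg:embed_t1} lays the subtrees of $i$ out in decreasing order of size, each rooted at the far end of its subinterval; hence $j$ is the root of the \emph{smallest} subtree of $i$, which can have any size. For instance, if $B$ is the path $b\!-\!c\!-\!d\!-\!e\!-\!f$ on five vertices rooted at $b$, the embedding on $[1,5]$ places $b$ at~$1$ and $c$ at $5=j$, with $d,e,f$ at $2,4,3$; here $j$ has two blue neighbors ($1$ and $2$), not one.

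Once $j$ is not a leaf, several steps collapse. The graph $B':=B[i,j-1]$ is then a forest, not a tree (removing $j$ separates the subtrees of $j$'s children from the component of $i$), so the identification ``$\treeat{i}=B'$'' that you use when invoking Observation~\ref{obs:invariants} is wrong. Moreover, the children of $j$ become new roots of $B'$ that are blue-adjacent to $j$, so they all belong in $C'$; you cannot take $C'=\{i\}$. Worse, hypothesis~(i) of Theorem~\ref{thm:main} can fail for the recursive call: in the path example above, $\tr_{B'}(b'_1)=\{b\}$ is a singleton, hence a central-star, and since $b'_1=b\in C'$ hypothesis~(ii) fails as well, so the induction hypothesis does not apply. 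The paper's proof confronts exactly these obstructions: after flipping $B$ to root it at $j$ and placing $r$ at $i$, it isolates the three ways the recursion on $[i+1,j]$ can fail (namely $B[i+1,j]$ a star, $\{i,i+1\}\in\EB$, or $\treeatt{[i+1,j]}{i+1}$ a central-star) and dispatches each by an explicit re-embedding of pieces of $B$.
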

\begin{proof}
  Flip $B$ if necessary to put its root at $j$. The general plan is
  to embed $r$ onto $i$ and $S$ recursively onto $[i+1,j]$. This works
  unless (1) $B[i+1,j]$ is a star, (2) $\{i,i+1\}\in\EB$, or (3) there
  is a conflict for embedding $S$ onto $[i+1,j]$. Below, we find an ordered
  plane packing under a weaker condition than (1) to allow for reuse in
  cases (2) and (3).
  In case (2), by LSFR, $\treeatt{[i,j-1]}{i}$ is a central-star on at
  least two vertices. In case (3), $\treeatt{[i+1,j]}{i+1}$ is a
  central-star. We deal with these cases below.

  \case{1} $B[i+1,j]$ is a star or $B[i,j-1]$ is a star. If $B[i,j-1]$
  is a star, then we flip $B$ to reduce to the case that
  $B[i+1,j]$ is a star. Thus, in the following, assume that $B[i+1,j]$
  is a star and that the root of $B$ may be either at $i$ or at
  $j$. We know exactly what $B$ looks like: since $B$ is not a
  star, the star $B[i+1,j]$ must be centered at $i+1$ and rooted at $j$.
  Flip the blue embedding at $[i+1,j]$: this puts the star-center at
  $j$. Note that $\{i,j\}\not\in\EB$. Embed $r$ onto $j$. The
  interval $[i,j-1]$ is in edge-conflict with $S$ if the root of
  $B$ is now at $i+1$. Hence, we embed $S$ explicitly. Embed $s$
  onto $i$. Since $S$ is not a star, it must have a subtree of size
  $k\geq2$. Embed this subtree explicitly at $[i+k,i+1]$. Embed the
  other subtrees of $s$ explicitly on the remainder. See
  \figurename~\ref{fig:unary_regular_ij_used_1}.

  \begin{figure}[t]
    \centering
    \subfloat[Case~1]{\includegraphics{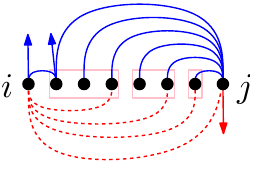}\label{fig:unary_regular_ij_used_1}}\hfil
    \subfloat[Case~2]{\includegraphics{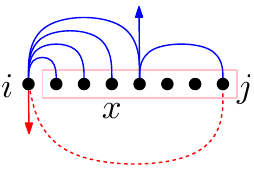}\label{fig:unary_regular_ij_used_2}}\hfil
    \subfloat[Case~3]{\includegraphics{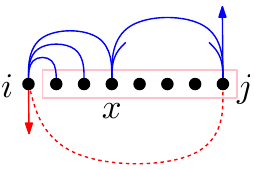}\label{fig:unary_regular_ij_used_3}}\hfil
    \caption{The case analysis in the proof of
      Proposition~\ref{prop:rec_unary_regular_ij_used}.}
    \label{fig:unary_regular}
  \end{figure}

  \case{2} $\treeatt{[i,j-1]}{i}$ is a central-star on at least two
  vertices. Let $x$ be such that $\treeatt{[i,j-1]}{i}=B[i,x]$. By Case~1
  we may assume that $x\leq j-2$. By LSFR at $i$ and by choice of $x$,
  $B[x+1,j]$ is a tree. Flip $B[x+1,j]$. Since $x\leq j-2$, the root of
  $B$ is no longer at $j$. Embed $r$ onto $i$ and $S$ recursively onto
  $[j,i+1]$. See \figurename~\ref{fig:unary_regular_ij_used_2}. Since
  $\{i,j\}\not\in\EB$ after flipping and $i+1$ is isolated in
  $B[i+1,j]$, this works unless $[j,i+1]$ is in conflict for $S$. Then
  $\treeatt{[i+1,j]}{j}$ is a central-star that is rooted at the root of
  $B$. But this contradicts LSFR at $j$ before flipping: a
  contradiction. Hence, there is no conflict for $S$.

  \case{3} $\treeatt{[i+1,j]}{i+1}$ is a central-star. Let $x$ be such
  that $B[i+1,x]=\treeatt{[i+1,j]}{i+1}$. Since $\{i,j\}\in\EB$,
  $B[i+1,x]$ is rooted and centered at $x$ and the parent of $x$ is at
  $i$. Hence $B[i,x]$ is a dangling star. By Case~1 we may assume that
  $x\leq j-2$. Flip $B[i,x]$. Embed $r$ at $i$ and $S$ recursively at
  $[j,i+1]$. See \figurename~\ref{fig:unary_regular_ij_used_3}. Since
  $\{i,j\}\not\in\EB$ after flipping and $i+1$ is isolated in
  $B[i+1,j]$, this works unless $[j,i+1]$ is in conflict for $S$. Then
  $\treeatt{[i+1,j]}{j}$ is a central-star that is rooted at the root of
  $B$. But this contradicts LSFR at $j$ before flipping: a
  contradiction. Hence, there is no conflict for $S$.
\end{proof}

\begin{proposition}\label{prop:rec_unary_regular_ij_not_used}
  If $\deg_R(r)=1$, $S$ is not a star, and
  $\{i,j\}\not\in\EB$, then there is an ordered plane packing
  of $B$ and $R$ onto $I$.
\end{proposition}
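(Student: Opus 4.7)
The natural plan is to embed $r$ at position $i$ and recursively pack $S$ together with $B[i+1,j]$ onto $[i+1,j]$. Because $r$'s unique red edge is $\{r,s\}$ and $\{i,j\}\notin\EB$, the edge $\{r,s\}$ cannot coincide with any blue edge so long as $s$ lands inside $[i+1,j]$. By Observation~\ref{obs:bluelocal}, the blue-local and rootsonly invariants \ref{inv:bluelocal} and \ref{inv:rootsonly} transfer to the subproblem for free. Thus the recursion succeeds whenever (a) $B[i+1,j]$ is a nonstar forest, (b) the peace invariant \ref{inv:starconflict} holds for $S$ on $[i+1,j]$, and (c) the placement invariant \ref{inv:placement} holds, i.e., $i+1$ is not in edge-conflict with $s$; condition (c) can fail only if $\{i,i+1\}\in\EB$.

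I structure the proof as a case analysis that mirrors the one used in Proposition~\ref{prop:rec_unary_regular_ij_used}. In Case~1, when $B[i+1,j]$ is a star (or, by a symmetric consideration, $B[i,j-1]$ is a star), I give an explicit embedding: such a blue star has a designated center and leaves, and since $S$ is a nonstar tree rooted at $s$ I can place $s$ at a locally isolated vertex and then pack the children of $s$ onto the remaining leaves using Algorithm~\ref{alg:embed_t1}, possibly placing $r$ at $j$ instead of $i$ (still fine since $\{i,j\}\notin\EB$). In Case~2, where $\{i,i+1\}\in\EB$ but $B[i+1,j]$ is a nonstar, combining the blue-local invariant with LSFR and 1SR forces a specific structure on $\treeat{i}$—in particular $i+1$ must be a leaf child of $i$, and by LSFR $\treeat{i}$ is a central-star. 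A flip of $\treeat{i}$, or of the subtree $\tr_B(i+1)$ via the leaf-isolation shuffle of Proposition~\ref{prop:leafshuffle}, then moves a non-neighbor of $i$ to position $i+1$. Case~3 handles a violation of the peace invariant at $[i+1,j]$ with $S$: there is a central-star $\treeatt{[i+1,j]}{i+1}$ in edge- or degree-conflict with $s$. By Lemma~\ref{lem:degcon3} this star has at least three vertices, and a flip relocates its center away from $i+1$, after which the recursion goes through. Once none of the three failure conditions applies, the recursive call produces the desired ordered plane packing.

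The main obstacle will be verifying that the local modifications in Cases~2 and~3 preserve all four invariants \emph{simultaneously} and do not convert one type of failure into another—e.g., that flipping $\treeat{i}$ to eliminate an edge-conflict at $i+1$ does not produce a fresh central-star $\treeatt{[i+1,j]}{i+1}$ in degree-conflict with $S$, and conversely that flipping in Case~3 does not reintroduce $\{i,i+1\}\in\EB$. I plan to verify these no-cascade properties subcase-by-subcase using the size bounds implied by $\deg_R(r)=1$ together with $S$ being a nonstar tree, and to fall back on the explicit leaf-isolation shuffle whenever a direct recursion remains unavailable.
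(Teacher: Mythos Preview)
Your proposal has a genuine gap in Case~3. A flip of the central-star $\treeatt{[i+1,j]}{i+1}$ does \emph{not} resolve a degree-conflict: flipping is a geometric reflection that repositions the vertices but leaves the tree structure---and in particular the degree of the root---unchanged. After the flip, $\treeatt{[i+1,j]}{i+1}$ is still the same central-star with a root of the same degree, so the inequality $\deg_S(s)+\deg_B(c)\ge|S|$ persists. Your sentence ``a flip relocates its center away from $i+1$, after which the recursion goes through'' is therefore false for degree-conflicts, and since $S$ is a nonstar with $\deg_S(s)$ potentially as large as $|S|-2$, such conflicts genuinely occur.

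The paper avoids the edge-conflict part of your Case~3 entirely by recursing on the \emph{reversed} interval $[j,i+1]$ rather than $[i+1,j]$: then the first blue component is $\treeatt{[i+1,j]}{j}$, and any edge-conflict would force its root to be adjacent to $i$, which by 1SR would put the root at $j$ and hence $\{i,j\}\in\EB$, contradicting the hypothesis. So only a degree-conflict at the $j$-side can obstruct the recursion. When that happens, the paper switches sides (placing $r$ at $j$ and recursing $S$ onto $[i,j-1]$); if \emph{both} directions are blocked by degree-conflicts, one is in a very constrained situation where $B$ consists of two large central-stars $\treeat{i}$ and $\treeat{j}$ plus a middle piece, and the proof finishes with an explicit embedding that relies on a nontrivial number-partitioning observation (showing that the subtrees of $s$ can be split to fit two disjoint leftover intervals exactly). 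This last argument is the real content of the proposition and is entirely missing from your plan. Your Case~2 claim that $\{i,i+1\}\in\EB$ forces $\treeat{i}$ to be a central-star is also incorrect (it can be a dangling star rooted away from $i$), but this is a minor issue compared to the degree-conflict gap.
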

\begin{proof}
  The general plan is to embed $r$ onto $i$ and $S$ recursively onto
  $[j,i+1]$. Since $\{i,j\}\not\in\EB$ and $S$ is not a star,
  this works unless (1) $B[i+1,j]$ is a star or (2) there is a
  conflict for embedding $S$ onto $[j,i+1]$. In case (2), the star
  $B^*:=\treeatt{[j,i+1]}{j}$ is either in edge-conflict or in
  degree-conflict for embedding $S$. If it is in edge-conflict, then
  there must be an edge from the root of $B^*$ to $r$. By 1SR, the root
  of $B^*$ must be at $j$. But that means that
  $\{i,j\}\in\EB$, a contradiction. Thus, in case (2), there
  is a degree-conflict for embedding $S$ onto $[j,i+1]$. We deal with
  these cases below.

  \case{1} $B[i+1,j]$ is a star. Since $\{i,j\}\not\in\EB$, vertex $i$
  is isolated in $B$. Flip $\treeat{j}=B[i+1,j]$ if necessary to put the
  center of $B[i+1,j]$ at $j$. If the root of $B[i+1,j]$ is at $i+1$,
  then embed $r$ onto $j$ and $S$ recursively onto the independent set
  $[i,j-1]$. Since $i$ is isolated in the blue embedding, $[i,j-1]$ is
  not in conflict for $S$. If the root of $B[i+1,j]$ is at $j$, then
  flip the blue embedding at $[j-1,j]$. This places the root at $j-1$
  and a leaf of the star at $j$. After flipping, the interval $[i,j-1]$
  still satisfies the invariants. Embed $r$ onto $j$ (which is not in
  edge-conflict) and $S$ recursively onto $[i,j-1]$. See
  \figurename~\ref{fig:unary_regular_ij_not_used_star}. Since $i$ is
  isolated in the blue embedding, $[i,j-1]$ is not in conflict for $S$.

  \begin{figure}[b]
    \centering
    \subfloat[Case~1]{\includegraphics{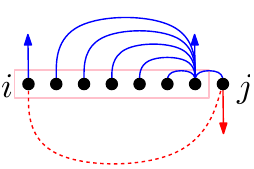}\label{fig:unary_regular_ij_not_used_star}}\hfil
    \subfloat[Case~2.1]{\includegraphics{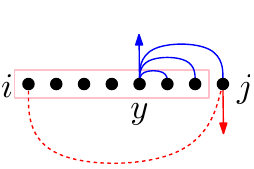}\label{fig:unary_regular_ij_not_used_no_conflict}}\hfil
    \subfloat[Case~2.2:$k_2>0$]{\includegraphics{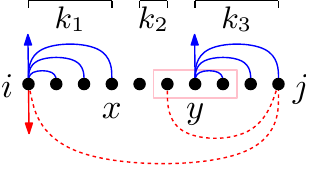}\label{fig:unary_regular_ij_not_used_conflict_k2}}\hfil
    \subfloat[Case~2.2:
    $k_2=0$]{\includegraphics{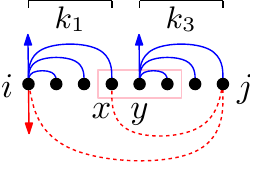}\label{fig:unary_regular_ij_not_used_conflict_no_k2}}\hfil
    \caption{The case analysis in the proof of
      Proposition~\ref{prop:rec_unary_regular_ij_not_used}.}
    \label{fig:unary_regular_ij_not_uses}
  \end{figure}

  \case{2} There is a degree-conflict for embedding $S$ onto $[j,i+1]$.
  Let $y$ be such that $\treeatt{[j,i+1]}{j}=B[y,j]$. Due to the
  degree-conflict, $B[y,j]$ is a central-star on at least three
  vertices. Since $\treeatt{[j,i+1]}{j}=B[y,j]$, the root of $B[y,j]$ is
  not adjacent to any vertex in $[i+1,y-1]$. By 1SR, if it were adjacent
  to $i$, then the root of $B[y,j]$ must be at $j$: this however,
  violates the assumption that $\{i,j\}\not\in\EB$. Hence,
  $B[y,j]=\treeat{j}$. Since $B[y,j]$ is a tree and $y\leq j-2$,
  $B[i,j-1]$ is not a star. We distinguish two cases.

  \case{2.1} There is no conflict for embedding $S$ onto $[i,j-1]$.
  Flip $B[y,j]$ if necessary to put its root at $y$. This preserves
  all invariants on $[i,j]$. Embed $r$ onto $j$ (which is not the root
  of $B[y,j]$) and $S$ recursively onto $[i,j-1]$. See
  \figurename~\ref{fig:unary_regular_ij_not_used_no_conflict}. This
  works by the assumption that there is no conflict for embedding
  $S$ onto $[i,j-1]$ before flipping $B[y,j]$ and since
  $\treeat{i}\neq\treeat{j}$ due to $\{i,j\}\not\in\EB$.

  \case{2.2} There is a conflict for embedding $S$ onto $[i,j-1]$.  By
  the 1SR and the fact that $\{i,j\}\not\in\EB$, there is a
  degree-conflict for embedding $S$ onto $[i,j-1]$. Let $x$ be such that
  $\treeatt{[i,j-1]}{i}=B[i,x]$. By the same argumentation that proved
  $B[y,j]=\treeat{j}$ we have $B[i,x]=\treeat{i}$. Thus, we can divide
  $B$ into three disjoint parts: $B[i,x]$ (a central-star),
  $B[x+1,y-1]$ (about which we know nothing), and $B[y,j]$ (a
  central-star). For notational convenience, let $k_1=|[i,x]|$,
  $k_2=|[x+1,y-1]|$, and $k_3=|[y,j]|$ be the corresponding sizes. Let
  $d=\deg_S(s)$ and let $v_1,\dots,v_d$ be the children of $s$, ordered
  by increasing size of their subtrees ($\tr_S(v_d)$ is the largest).
  Since $S$ is not a star $|\tr_S(v_d)|\geq 2$. Let $\lambda$ be the
  number of leaf children of $s$. Then $|\tr_S(v_\ell)|=1$ if and only if
  $\ell\leq\lambda$.

  Flip $B[i,x]$ if necessary to put the root (and center) at $i$ and
  flip $B[y,j]$ if necessary to put the root (and center) at $y$. We
  first explain how to embed $R$ and then prove that it always works.
  Refer to \figurename~\ref{fig:unary_regular_ij_not_used_conflict_k2}
  for the case $k_2>0$ and
  \figurename~\ref{fig:unary_regular_ij_not_used_conflict_no_k2} for the
  case $k_2=0$. Embed $r$ onto $i$ and $s$ onto $j$. This works so far:
  by the peace invariant the root of $B[i,x]=\treeat{i}$ is not
  in conflict and $\{i,j\}\not\in\EB$. Next, embed $\tr(v_d)$
  recursively onto $[y-1,y+|\tr(v_d)|-2]$. Since
  $\{y-1,j\}\not\in\EB$ and $y-1$ is isolated in
  $B[y-1,y+|\tr(v_d)|-2]$, this works provided $\tr(v_d)$ fits inside
  $[y-1,j-1]$, i.e. provided $|\tr(v_d)|\leq |[y-1,j-1]|=|[y,j]|=k_3$.
  Next, embed a leaf child of $s$ on each vertex in $[x+1,y-2]$ (this
  interval may be empty). This embeds the children $v_1,\dots,v_{k_2-1}$
  and works provided that $\lambda\geq k_2-1$. This leaves two disjoint
  intervals to embed the remaining subtrees
  $\tr(v_{\max(1,k_2)}),\dots,\tr(v_{d-1})$ of $s$:
  $I_1:=[i+1,\min(x,y-2)]$ and $I_2:=[y+|\tr(v_d)|-1,j-1]$. Thus, it
  remains to prove that (i) $|\tr(v_d)|\leq k_3$, and (ii) $\lambda\geq
  k_2-1$, and that (iii) we can distribute the remaining subtrees over
  $I_1$ and $I_2$.

  We begin by showing that $d$ must be large. Since there is a
  degree-conflict for embedding $S$ onto $[j,i+1]$ we have
  $k_1+k_2-1<d$, and since there is a degree-conflict for embedding $S$
  onto $[i,j-1]$ we have $k_2+k_3-1<d$:
  \begin{align}
    k_1+k_2&\leq d;\label{eq:unary_2dc_left}\\
    k_2+k_3&\leq d.\label{eq:unary_2dc_right}
  \end{align}
  Recall that $k_1+k_2+k_3=|R|=|S|+1$. Adding~\eqref{eq:unary_2dc_left}
  and~\eqref{eq:unary_2dc_right} yields $2d\geq
  k_1+k_2+k_3+k_2=|S|+1+k_2$ and so
  \begin{equation}
    \label{eq:unary_2dc_deglower}
    d\geq\frac{|S|+1+k_2}{2},
  \end{equation}
  \subsubparagraph{Proof of (i)} We must show that $|\tr(v_d)|\leq k_3$.
  Using~\eqref{eq:unary_2dc_left} we get
  $\sum_{\ell=1}^{d-1}|\tr(v_\ell)|\geq d-1\geq k_1+k_2-1=|S|-k_3$. Since the
  total size of the subtrees at the children of $S$ is $|S|-1$ we have
  $|\tr(v_d)|=|S|-1-\sum_{\ell=1}^{d-1}|\tr(v_\ell)|\leq |S|-1-|S|+k_3=k_3-1<k_3$,
  which completes the proof of (i). 

  \subsubparagraph{Proof of (ii)} We must show that $\lambda\geq k_2-1$.
  Since $|\tr(v_\ell)|\geq 2$ for all $\ell$, $\lambda+1\leq \ell\leq d$,
  we have $2(d-\lambda)+\lambda\leq |S|-1$ and so
  \[\lambda\geq 2d-|S|+1\overset{\eqref{eq:unary_2dc_deglower}}{\geq}
  (|S|+1+k_2)-|S|+1=k_2+2.\]

  \subsubparagraph{Proof of (iii)} It remains to prove that we can
  distribute $\tr(v_{\max(1,k_2)}),\dots,\tr(v_{d-1})$ over the disjoint
  intervals $I_1=[i+1,\min(x,y-2)]$ and $I_2=[y+|\tr(v_d)|-1,j-1]$. We
  use the following observation on partitioning natural numbers.
  \begin{observation}
    \label{obs:unary_2dc_partition}
    Let $n$ and $t$ be positive integers with $t\geq \lfloor
    n/2\rfloor+1$ and let $a_1\leq\dots\leq a_t$ be positive integers
    with $\sum_{i=1}^t a_i=n$. Then for all $0\leq k\leq n$ there exists
    a set $J_k\subseteq [1,t]$ such that $\sum_{i\in J_k} a_i=k$.
  \end{observation}
  \begin{proof}
    We prove the statement by induction on $n$. The statement is true
    for $n=1$: in this case we must have $t=1$ and $a_1=1$, and so
    $J_0=\emptyset$ and $J_1=\{1\}$ work. Suppose that the statement
    holds for all positive integers smaller than $n$. It suffices to
    prove the statement for $k\geq \lceil n/2 \rceil$ since we can
    choose $J_k=[1,t]\setminus J_{n-k}$ for $k<\lceil n/2 \rceil$. If
    $a_t=1$ then $a_1=\dots=a_t=1$ and we choose $J_k=[1,k]$. Otherwise,
    by the assumption on $t$ we have $a_t=n-\sum_{i=1}^{t-1}a_i\leq
    n-t+1\leq \lceil n/2 \rceil$ and hence $k-a_t\geq 0$.
    By the assumption on $t$ and since $a_t\geq 2$ we have $t-1\geq
    \lfloor n/2 \rfloor\geq \lfloor (n-a_t)/2\rfloor +1$. Hence, by the
    induction hypothesis, there exists a set $J_{k-a_t}\subseteq
    [1,t-1]$ with $\sum_{i\in J_{k-a_t}}a_i= k-a_t$. Choose
    $J_k=J_{k-a_t}\cup\{t\}$ to complete the proof.
  \end{proof}
  The total size of the remaining subtrees is
  $n:=|S|-1-\sum_{\ell=1}^{k_2-1}|\tr(v_\ell)|-|\tr(v_d)|\leq
  |S|-1-\max(0,k_2-1)-2=|S|-2-\max(1,k_2)$ since $|\tr(v_d)|\geq 2$.
  Then
  \[t:=d-1\overset{\eqref{eq:unary_2dc_deglower}}{\geq}
  \frac{|S|+1+k_2}{2}-1= \frac{|S|-3+k_2}{2}+1\geq \frac{n}{2}+1,\]
  where the last step uses that $|S|-3+k_2\geq |S|-2-k_2$ for $k_2\geq
  1$ and $|S|-3+k_2\geq |S|-2-1$ for $k_2=0$.
  Hence, $n$ and $t$ satisfy the precondition of
  Observation~\ref{obs:unary_2dc_partition}. We apply the observation
  with $k=|I_1|$. This gives us a set $J_k$ such that $\sum_{\ell\in J_k}
  |\tr(v_\ell)|=|I_1|$ and $\sum_{\ell\in [1,d-1]\setminus J_k} |\tr(v_\ell)|=|I_2|$.

  Since $B[I_1]$ and $B[I_2]$ have no internal edges and no edges to the
  position of $r$ at $j$, we can embed the subtrees $\tr(v_\ell)$ with
  $\ell\in S_k$ explicitly from left to right on $I_1$ and the remaining
  subtrees explicitly from left to right on $I_2$. This completes the
  proof.
\end{proof}

Proposition~\ref{prop:rec_unary_star},
Proposition~\ref{prop:rec_unary_regular_ij_used}, and
Proposition~\ref{prop:rec_unary_regular_ij_not_used} together prove the
following.

\begin{lemma}
  \label{lem:rec_unary}
  If $\deg_R(r)=1$, then there is an ordered plane packing of $B$ and
  $R$ onto $I$.
\end{lemma}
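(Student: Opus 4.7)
The plan is to prove Lemma~\ref{lem:rec_unary} as a simple case distinction that dispatches to the three preceding propositions, since together they cover all possible configurations when $\deg_R(r)=1$. The assumption that $R$ is a nonstar (which holds throughout by invariant \ref{inv:starconflict} applied to the input of the algorithm) combined with $\deg_R(r)=1$ forces $R$ to have the specific shape of a path at its root, so only the structure of $S$ and of $B$ on $I$ remains to be classified.

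First I would split on whether $S$ is a star. If $S$ is a star, Proposition~\ref{prop:rec_unary_star} directly produces the desired ordered plane packing with no further work. Otherwise, $S$ is a nonstar, and I would split on whether the edge $\{i,j\}$ lies in $\EB$: in the affirmative case Proposition~\ref{prop:rec_unary_regular_ij_used} applies, and in the negative case Proposition~\ref{prop:rec_unary_regular_ij_not_used} applies. The three cases are manifestly mutually exclusive and exhaustive, so this exhausts all possibilities under the hypothesis $\deg_R(r)=1$.

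There is no real obstacle here, since all difficult work has been isolated in the three propositions; I just need to verify that their preconditions match the invariants \ref{inv:starconflict}--\ref{inv:rootsonly} assumed at entry to the algorithm and that the outputs (ordered plane packings of $B$ and $R$ onto $I$) coincide with what Lemma~\ref{lem:rec_unary} demands. This verification is immediate by inspection of the statements, so the proof can be written in a few lines.
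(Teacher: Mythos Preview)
Your proposal is correct and matches the paper's own argument exactly: the paper simply states that Propositions~\ref{prop:rec_unary_star}, \ref{prop:rec_unary_regular_ij_used}, and \ref{prop:rec_unary_regular_ij_not_used} together prove Lemma~\ref{lem:rec_unary}, which is precisely the case split you outline. One minor nitpick: the fact that $R$ is a nonstar comes from the standing hypotheses of Theorem~\ref{thm:main} rather than from invariant~\ref{inv:starconflict}, but this does not affect the validity of your argument.
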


\section{Embedding the red tree: a singleton subtree}
\label{subsec:rec_singleton}
Here we completely handle the case $|S|=1$.

\begin{lemma}\label{lem:rec_singleton}
  If $|S|=1$, then $R$ and $B$ admit an ordered plane packing onto
  $[i,j]$.
\end{lemma}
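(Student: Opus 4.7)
The plan is to reduce to a recursive call for $R^-$ on an interval of size $n-1$, placing the single leaf $s$ at one endpoint of $I$. By Lemma~\ref{lem:rec_unary} I may assume $\deg_R(r)\ge 2$, which forces $|R^-|\ge 3$ and rules out $R^-$ being a central-star: if it were, every subtree of $r$ in $R$ would be a leaf, so $R$ itself would be a star, contradicting the hypothesis. A key simplification in this case is that $s$ has only one neighbor in $T_2$, namely $r$, which has not yet been embedded. Consequently $s$ is in edge-conflict with no vertex of $I$, and the only obstructions to placing $s$ at a position $v$ are that $\{v,\pi(r)\}\in\EB$ (which would make $\{r,s\}$ coincide with a blue edge) and that $\{r,s\}$ not be routable without crossings in the resulting drawing.

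The primary attempt is to place $s$ at $j$ and invoke Theorem~\ref{thm:main} recursively on $R^-$ and $B[i,j-1]$ over $[i,j-1]$, treating $s$ at $j$ as an already-embedded outside neighbor of $r$. Invariants \ref{inv:bluelocal} and \ref{inv:rootsonly} transfer to $B[i,j-1]$ on $[i,j-1]$ by Observation~\ref{obs:bluelocal}. Since the recursive ordered packing places $r$ last on the outer face of the packing on $[i,j-1]$, the edge $\{r,s\}$ can be closed off by routing through the outer face to $j$; a short check verifies that the required circular order $b_1,\ldots,b_k,r,s$ is respected. This attempt succeeds whenever $B[i,j-1]$ is a nonstar forest and $[i,j-1]$ is not in conflict with $R^-$.

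When the primary attempt fails, there are two sources of trouble: either $B[i,j-1]$ is a star (so the preconditions of Theorem~\ref{thm:main} are violated), or $[i,j-1]$ is in conflict with $R^-$ via a central-star $B^*:=\treeatt{[i,j-1]}{i}$ whose root is in edge- or degree-conflict with $r$. The first fallback is the symmetric attempt: place $s$ at $i$ and recurse on $[j,i+1]$. If that also fails, I would follow the pattern of Lemma~\ref{lem:rec_general}: if the offending conflict is merely a degree-conflict and $B^*$ is not the entire component containing $i$, then flipping a subtree of $B$ or swapping two children at the appropriate vertex can remove the conflict; if $B^*$ is the full component at $i$, I employ a blue-star embedding of a large subtree of $R^-$ starting from the center of $B^*$, reducing the remaining problem to an interval on which the recursion applies. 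The case $\{i,j\}\in\EB$ is handled as in Proposition~\ref{prop:rec_unary_regular_ij_used}, by first flipping a suitable subtree to eliminate the $\{i,j\}$-edge before recursing.

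The main obstacle I anticipate is the cross-fire configuration in which both $[i,j-1]$ and $[j,i+1]$ are in conflict with $R^-$ \emph{and} in addition one of the residual blue forests is a star: such configurations pin $B$ down to essentially a double-star structure, and the recursive strategy must be replaced by an explicit embedding of $R^-$ onto the leaves of $B$, using $|R^-|\ge\deg_{R^-}(r)+2$ (which holds because $R^-$ is not a central-star) together with Lemma~\ref{lem:degr} to guarantee that enough independent blue vertices are available to host the children of $r$ in $R^-$ without reusing blue edges.
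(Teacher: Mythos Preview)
Your reduction to a recursive call for $R^-$ has a genuine gap: you only rule out $R^-$ being a \emph{central}-star, but $R^-$ can very well be a \emph{dangling} star. Concretely, take $\deg_R(r)=2$ with children $s$ (a leaf, since $|S|=1$) and $q$, where $q$ has only leaf children. Then $R^-=\tr_R(r)\setminus\{s\}$ is a star centered at $q$ and rooted at the leaf $r$. In this situation Theorem~\ref{thm:main} is not applicable to $R^-$ at all (its hypothesis requires the red tree to be a nonstar), so your entire recursive plan collapses. The paper treats this as a separate case and resolves it by fully explicit embeddings of $r$, $s$, $q$, and the leaves of $q$, with a further split according to whether $\{i,j\}\in\EB$ and according to the structure of $\treeat{i}$ and $\treeat{j}$. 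None of the tools you invoke (blue-star embedding of a subtree of $R^-$, swapping children, the unary-root proposition) are designed to produce a packing of a star $R^-$ against a general $B[i,j-1]$.

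Even within the nonstar-$R^-$ case your outline is too optimistic. The symmetric fallback ``place $s$ at $i$ and recurse on $[j,i+1]$'' is not freely available: the ordered-packing requirement fixes the cyclic order $b_1,\ldots,b_k,r$ on the outer face, and placing $s$ at $i$ while recursing from the right can break this order unless you first flip the appropriate blue subtree and verify that no new conflict is created. The paper does not use this blanket symmetric fallback; instead, when $\{i,j\}\in\EB$ it flips $B$ to put the root at $j$, then tries $[j-1,i]$ for $R^-$, and when that is blocked by a central-star $B^*=B[x,j-1]$ it branches on whether $x\ge i+2$ or $x=i+1$, each handled by a specific flip-and-recurse or by an explicit layout. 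Your appeal to ``follow the pattern of Lemma~\ref{lem:rec_general}'' and to Proposition~\ref{prop:rec_unary_regular_ij_used} does not cover these branches: Lemma~\ref{lem:rec_general} assumes $|S|\ge 4$ (all four pieces nonstar), and the unary-root proposition assumes $\deg_R(r)=1$, both of which fail here.
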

\begin{proof}
  We distinguish two cases.

  \case{1} $R^-$ is not a star. We first describe an embedding that
  works whenever $B[i,j-1]$ is a star. Flip $B[i,j-1]$ if necessary to
  put its center at $j-1$. In addition to the star at $[i,j-1]$, the
  blue embedding may use the edge $\{i,j\}$. Note that it cannot use
  $\{j-1,j\}$, as this would imply that $B$ is a star. Thus, $j$ is
  isolated in $B[i+1,j]$. Embed $r$ onto $i+1$ and $s$ onto $i$. Let $U$
  be a largest subtree of $r$ in $R^-$. Since $R^-$ is not a star,
  $|U|\geq 2$. Embed $U$ recursively onto $[j,j-|U|+1]$. Since $j$ is
  locally isolated in $B[j,j-|U|+1]$ and $j$ is not adjacent to $i+1$
  (which is where we embedded $r$), this always works. Embed the
  remaining subtrees of $r$ in $R^-$ explicitly on $[i+2,j-|U|]$.

  Assume now that $B[i,j-1]$ is not a star.
  If $\{i,j\}\not\in E(B)$, then we embed $s$ at $j$, and recursively
  embed $R^-$ onto $[i,j-1]$.
  $R^-$ has no edge-conflict with $[i,j-1]$ by the peace invariant.
  It also has no degree-conflict with $[i,j-1]$:
  otherwise $R$ would already have had a degree-conflict with $[i,j]$.

  So assume that $\{i,j\}\in E(B)$. Flip $B$ if necessary to put
  its root at $j$. If $B[i,j-1]$ is a star now, then use the embedding
  described in the first paragraph to find an ordered plane packing.
  Otherwise, $r$ is not in edge-conflict with any vertex in $[i,j-1]$.
  The general plan is to embed $s$ at $j$ and $R^-$ recursively onto
  $[j-1,i]$. Since $B$ is not a star and $B$ is rooted at $j$, the edge
  $\{j-1,j\}$ is not used. Hence, this works unless there is a
  conflict for embedding $R^-$ onto $[j-1,i]$. This means in
  particular that $\treeatt{[i,j-1]}{j-1}$ is a central-star
  $B^*=B[x,j-1]$. See
  \figurename~\ref{fig:singleton_rm_no_star_default}. By assumption,
  $i+1\leq x\leq j-2$. Due to the presence of the edge $\{i,j\}$ and
  since $x\geq i+1$, the root (and hence also the center) of $B^*$ must
  be at $x$.

  \begin{figure}[b]
    \centering\hfil%
    \subfloat[Case~1]{\includegraphics{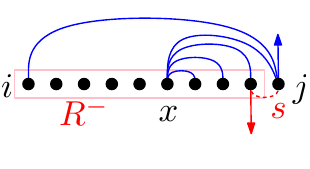}\label{fig:singleton_rm_no_star_default}}\hfil%
    \subfloat[Case~1.1]{\includegraphics{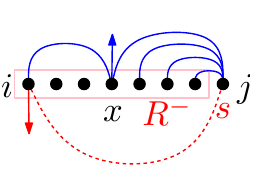}\label{fig:singleton_rm_no_star_x_large_1}}\hfil%
    \subfloat[Case~1.1]{\includegraphics{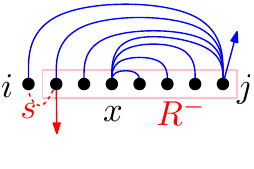}\label{fig:singleton_rm_no_star_x_large_2}}\hfil%
    \subfloat[Case~1.2]{\includegraphics{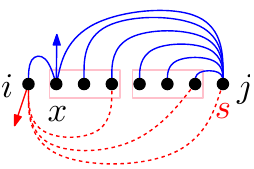}\label{fig:singleton_rm_no_star_x_small}}\hfil%
    \caption{Case~1 in the proof of Lemma~\ref{lem:rec_singleton}.}
  \end{figure}

  \case{1.1} $x\geq i+2$. Flip $B[x,j]$. Note that afterwards
  $\{i,j\}\not\in\EB$ and $B[i,j-1]$ satisfies 1SR and LSFR.
  Embed $s$ onto $j$ and $R^-$ recursively onto $[i,j-1]$. See
  \figurename~\ref{fig:singleton_rm_no_star_x_large_1}. Since $|B^*|\geq
  2$, the interval $[i,j-1]$ contains at least one leaf of $B^*$ and so
  $B[i,j-1]$ is not a star. Hence, this works unless there is a
  conflict for embedding $R^-$ onto $[i,j-1]$. In that case, note
  that $\treeatt{[i,x]}{i}$ is now formed by the root of $B$ and its
  subtrees other than $B^*$. Since $\treeatt{[i,x]}{i}$ is a
  central-star, it follows that the subtrees of the root $b$ of $B$
  other than $B^*$ are all leaves. Flip $B[x,j]$ again to restore the
  original embedding. Embed $s$ onto $i$ and $R^-$ recursively onto
  $[i+1,j]$. See \figurename~\ref{fig:singleton_rm_no_star_x_large_2}.
  Since $x\geq i+2$, $B[i+1,j]$ is a tree that is not a star and
  $\{i,i+1\}\not\in\EB$. Hence, the peace invariant holds for
  $R^-$.

  \case{1.2} $x=i+1$. Flip $B[x,j]$. Embed $r$ onto $i$ and $s$ onto
  $j$. Embed the remaining subtrees of $r$ in $R$ explicitly onto the
  independent set $B[i+1,j-1]$, putting the largest one (which has size
  at least two) next to $i$. See
  \figurename~\ref{fig:singleton_rm_no_star_x_small}.

  \case{2} $R^-$ is a star. Then $\deg_R(r)=2$ and the child $q$ of $r$
  in $R^-$ is the root and center of a star $Q=\tr(q)$.

  \case{2.1} $\{i,j\}\in\EB$. Let $b$ be the root of $B$. If
  $\deg_B(b)=1$, then flip $B$ if necessary to put its root at
  $j$. Then $j$ is isolated in $B[i+1,j]$ and $\{i,i+1\}\not\in\EB$
  since $B$ is not a star and by LSFR. Embed $r$ onto $i+1$, $s$ onto
  $i$, $q$ onto $j$, and the children of $q$ onto $[j-1,i+2]$. See
  \figurename~\ref{fig:singleton_rm_star_ij_1}.

  \begin{figure}[t]
    \centering\hfil%
    \subfloat[Case~2.1]{\includegraphics{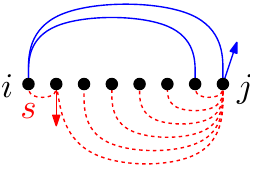}\label{fig:singleton_rm_star_ij_1}}\hfil%
    \subfloat[Case~2.1]{\includegraphics{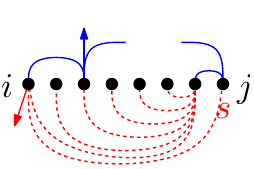}\label{fig:singleton_rm_star_ij_2}}\hfil%
    \subfloat[Case~2.2]{\includegraphics{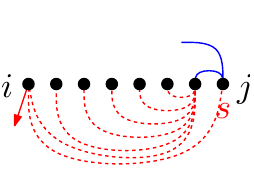}\label{fig:singleton_rm_star_no_ij_1}}\hfil%
    \subfloat[Case~2.2]{\includegraphics{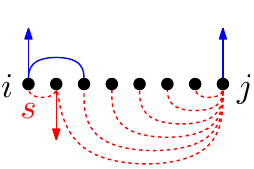}\label{fig:singleton_rm_star_no_ij_2}}\hfil%
    \subfloat[Case~2.2]{\includegraphics{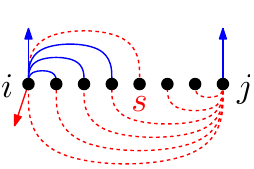}\label{fig:singleton_rm_star_no_ij_3}}\hfil%
    \caption{Case~2 in the proof of Lemma~\ref{lem:rec_singleton}.}
  \end{figure}

  If $\deg_B(b)\geq2$, then flip $B$ if necessary to put its root
  at $j$. Let $x$ be such that $\treeatt{[i,j-1]}{i}=B[i,x]$, which is
  a smallest subtree of $b$. Since $B$ is not a star, $B[x+1,j]$
  is not a central-star. Flip $B[x+1,j]$. This
  puts the root $b$ at $x+1$. Use a leaf-isolation-shuffle on $B[x+1,j]$
  to embed a leaf at $j-1$, its parent of $j$, and the root at $x+1$.
  This works by Proposition~\ref{prop:leafshuffle}. Embed $r$ onto $i$,
  $s$ onto $j$, $q$ onto $j-1$ and the children of $q$ onto $[j-1,i+1]$.
  See \figurename~\ref{fig:singleton_rm_star_ij_2}.

  \case{2.2} $\{i,j\}\not\in\EB$. Then $\treeat{i}\neq\treeat{j}$. If
  $|\treeat{j}|\geq2$, then perform a leaf-isolation-shuffle to put a
  leaf at $j-1$ and its parent at $j$. Since $\treeat{i}\neq\treeat{j}$,
  this does not touch the blue vertex at $i$. Embed $r$ onto $i$, $s$
  onto $j$, $q$ onto $j-1$, and the children of $q$ onto $[j-2,i+1]$.
  See \figurename~\ref{fig:singleton_rm_star_no_ij_1}.

  If $|\treeat{j}|=1$ and $\treeat{i}$ is not a central-star, then
  flip $\treeat{i}$ if necessary to put its root at $i$. Since it is
  not a central-star, $\{i,i+1\}\not\in\EB$. Embed $r$ onto $i+1$, $s$
  onto $i$, $q$ onto $j$, and the children of $q$ onto $[j-1,i+2]$. See
  \figurename~\ref{fig:singleton_rm_star_no_ij_2}.

  Finally, if $|\treeat{j}|=1$ and $\treeat{i}$ is a central-star, then
  let $x$ such that $B[i,x]=\treeat{i}$. We have $x\leq j-2$ by the
  peace invariant. Flip $B[i,x]$ if necessary to put its root
  at $i$. By the peace invariant, $i$ is not in edge-conflict
  with $r$. Embed $r$ onto $i$, $s$ onto $x+1$, $q$ onto $j$, and the
  children of $q$ onto $[j-1,x+2]$ and $[x,i+1]$. See
  \figurename~\ref{fig:singleton_rm_star_no_ij_3}.
\end{proof}

\section{Embedding the red tree: a large blue star}
\label{subsec:rec_large_blue_star}
In this and the following section we handle the case that $B[i,j-|S|]$
is a star. The graphs $S$, $R^-$, and $B[j-|S|+1,j]$ may or not be
stars. The case that we actually handle is more general, as specified in
the following
\begin{lemma}\label{lem:rec_large_blue_star}
  If $B[i,x]$ is a star, for $x\in[j-|S|,j-1]$, then $R$ and $B$ admit
  an ordered plane packing onto $[i,j]$.
\end{lemma}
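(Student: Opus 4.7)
By Lemmas~\ref{lem:rec_unary} and~\ref{lem:rec_singleton} we may assume $\deg_R(r)\ge 2$ and $|S|\ge 2$, so that $R^-$ is a nonsingleton subtree rooted at $r$. The strategy is to apply the blue-star embedding (Proposition~\ref{p:greedygrab}) to pack $R^-$ on top of the blue star $B[i,x]$, and then recursively embed $S$ onto the leftover subinterval of size $|S|$.

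We first locate the center $\sigma$ of $B[i,x]$: applying 1SR at each leaf of this star (for which $\sigma$ is the only in-star neighbor) forces $\sigma\in\{i,x\}$; a further application of LSFR at $\sigma$ forces every child of $\sigma$ in $B$ to be a leaf, so $B^*:=\tr_B(\sigma)$ is itself a central-star of size $|B^*|\ge|B[i,x]|\ge|R^-|$. Recall that $i$ is always a root of $B$ (the leftmost $B$-root $b_1$ sits at the left end of the leftmost component's interval), so when $\sigma=i$ we have $B^*=\treeat{i}$ and the peace invariant applies directly to $B^*$. When $\sigma=x\ne i$, the vertex $x$ lies inside $\treeat{i}$; we reduce to the case $\sigma=i$ by flipping $B[i,x]$, which preserves invariants~\ref{inv:bluelocal} and~\ref{inv:rootsonly} because, by 1SR, only $\sigma$ is incident to $B$-edges that leave the star, and handle the narrow remaining configuration (where the flip would violate an invariant) by choosing a different $x'\in[j-|S|,j-1]$ with $B[i,x']$ a star centered at $i$, or by an explicit embedding.

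Assuming $\sigma=i$, we invoke Proposition~\ref{p:greedygrab} with $A=R^-$, $a=r$, center $\sigma=i$, and $\varphi$ consisting of $\deg_{R^-}(r)$ vertices from $B\setminus B^*$ scanned right-to-left. The preconditions hold: (BS1) by~\ref{inv:placement}; (BS2) using $|B^*|\ge|R^-|$ and $B^+=B^*$ (since $i$ is a root of $B$), so $|B^+|+\deg_{R^-}(r)=|B^*|+\deg_R(r)-1\le|R|-1$ by the peace invariant $\deg_R(r)+\deg_B(i)\le|R|-1$; (BS3) automatic for a one-sided $\varphi$; (BS4) vacuous since $B^*=B^+$. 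Proposition~\ref{p:greedygrab} returns an ordered plane packing of $R^-$ and a subinterval $[i',j']\subset[i,j]$ of size $|S|$ satisfying~\ref{inv:bluelocal}, with $\{x',i\}\notin\EB$ for the endpoint $x'$ of $[i',j']$ closest to $i$.

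We finally embed $S$ recursively on $[i',j']$ and compose the two packings. The main obstacle is verifying invariant~\ref{inv:starconflict} for this recursive call. An edge-conflict for $S$ would require a $B$-edge between $[i',j']$ and $r$'s position $\sigma=i$, which is blocked by $\{x',i\}\notin\EB$ together with 1SR. A degree-conflict for $S$ on $[i',j']$ would combine the degree of the star-center in $B[i',j']$ with the degree of $i$ in $B^*$ to force a degree-conflict for $R$ on the original $[i,j]$, contradicting the peace invariant already assumed for the current call. Invariants~\ref{inv:bluelocal} and~\ref{inv:rootsonly} follow from Proposition~\ref{p:greedygrab}, and~\ref{inv:placement} is restored by a flip on $B[i',j']$ if needed.
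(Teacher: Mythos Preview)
Your high-level plan---blue-star embed $R^-$ from the center of the large blue star, then recurse on $S$---matches the paper's Case~1, but the proposal has genuine gaps in both the reduction to that case and in the recursive step.

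\textbf{The dangling-star case is not ``narrow.''} Your reduction to $\sigma=i$ by flipping $B[i,x]$ only works when $B[i,x]$ is a central-star. If $B[i,x]$ is a dangling star (root at $i$, center at $x$), the root $i$ may have further neighbours in $[x+1,j]$; after your flip those edges emanate from position $x$ and the vertex now at $i$ is \emph{not} a root of $B$, so the peace invariant does not apply and $B^+\ne B^*$. Moreover $\tr_B(\sigma)$ then has size $|B[i,x]|-1$, not $\ge|B[i,x]|$ as you claim. The paper devotes its entire Case~2 (with subcases~2.1, 2.2.1--2.2.2.2, 2.3.1--2.3.3.2) to this situation, using several different explicit rearrangements of $B$ and direct embeddings of $R^-$ and $S$; it is the bulk of the proof, not a residual configuration that can be dispatched by ``choosing a different $x'$ or an explicit embedding.''

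\textbf{The recursive call on $S$ is not justified.} Two obstructions remain even in the central-star case. First, $S$ may itself be a star, in which case you cannot recurse at all; the paper's Case~1.1 handles this by a leaf-isolation shuffle and a modified $\varphi$. Second, your argument that a degree-conflict for $S$ on $[i',j']$ would force a degree-conflict for $R$ on $[i,j]$ does not go through: the putative star at the far end of $[i',j']$ lives in $\treeat{j}$, not in $\treeat{i}$, so its size is unrelated to $\deg_B(i)$, and the inequality $\deg_S(s)+\deg_{B[i',j']}(\text{center})\ge|S|$ does not combine with $\deg_R(r)+\deg_B(i)<|R|$ to yield a contradiction. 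The paper's Case~1.2 shows precisely that such a degree-conflict \emph{can} occur and resolves it by first reorienting $\treeat{j}$ and then changing $\varphi$ to start at $j$, so that a leaf of the offending star ends up as the locally isolated endpoint of $[i',j']$.

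Finally, a technical point: with $\varphi$ scanned right-to-left starting at $j$, the blue-star embedding would try to place the subtree of the first child of $r$ to the right of $v_1=j$, which is outside $I$. The paper takes $\varphi=(x+1,\ldots,x+d)$ left-to-right, so that the leftover interval ends at $j$ and the subtrees of $r$ sit on leaves of $B^*$ to the left.
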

\begin{proof}
  By Lemma~\ref{lem:rec_unary} and Lemma~\ref{lem:rec_singleton}, we may
  assume $\deg_R(r)\geq2$ and $|S|\geq2$. The following observation does
  not depend on the context of this proof.
  \begin{observation}\label{obs:rmsize}
    $|R^-|\ne 2$.
  \end{observation}
  \begin{proof}
    If $|R^-|=2$, then by the minimality of $S$ we have $|S|=1$. It
    follows that $|R|=3$ and so $R$ is a star, contrary to our assumption.
  \end{proof}
  By Observation~\ref{obs:rmsize}, $|R^-|\ge 3$. Select $x$ maximally so
  that $B^*=B[i,x]$ is a star, and let $d=\deg_{R^-}(r)\ge 1$. Note that
  $|B[i,x]|\geq|R^-|\geq3$. We distinguish two cases.

  \case{1} $B^*$ is a central-star. Then by LSFR we have
  $\treeat{i}=B^*$. If necessary, flip $B^*$ to put its root and center
  at $i$. We will use a blue-star embedding to embed $R^-$ from
  $\sigma=i$ with $\varphi=(x+1,\dots,x+d)$. Let us first check the
  conditions for the blue-star embedding. \ref{gg:ec} holds by
  \ref{inv:starconflict} for embedding $R$ onto $[i,j]$. For \ref{gg:dc}
  we must show $|R^-|\leq|B^*|+d$ and $|B^*|+d\leq|I|-1$. We wish to
  argue that at least one leaf of $B^*$ remains after the blue-star
  embedding, and thus we show $|R^-|<|B^*|+d$. This inequality holds
  since $x\geq j-|S|$ and $d\geq 1$. For the second inequality, by
  \ref{inv:starconflict}, we have $|B^*|\leq|I|-\deg_R(r)$ and so
  $|B^*|+d=|B^*|+\deg_{R}(r)-1\le|I|-1$. \ref{gg:int} and \ref{gg:cs}
  hold since $B\setminus(B^*\cup\varphi)$ forms an interval. Hence, by
  Proposition~\ref{p:greedygrab}, the blue-star embedding succeeds and
  leaves an interval $[i',j']=[i',j]$ such that $j$ is not in
  edge-conflict for embedding $s$ and a non-empty prefix of $[i',j]$
  consists of isolated vertices that are in edge-conflict for embedding
  $s$ (these are leaves of $B^*$). Recursively embed $S$ onto $[j,i']$.
  This works unless $S$ is a star or $\treeatt{[i',j]}{j}$ is in conflict
  (which must be a degree-conflict) for embedding $S$.

  \case{1.1} $S$ is a star. If $S$ is a dangling star then embed $s$
  onto $j$, the child $s'$ of $s$ onto $i'$ (which is locally isolated),
  and the children of $s'$ onto $[i'+1,j-1]$. Otherwise, $S$ is a
  central-star. If there is a locally isolated vertex in $B[i',j]$ that
  is not in edge-conflict, then use this vertex to embed $s$ and embed
  the children of $s$ on the remainder. Otherwise, undo the blue-star
  embedding. Consider the blue vertex at $j$. It does not get consumed
  by the blue-star embedding. Since it was not isolated after the
  blue-star embedding, it is not isolated now. By choice of $x$, we
  have $\treeat{j}=\treeatt{[x+1,j]}{j}$. Perform a
  leaf-isolation-shuffle on $\treeat{j}$ to place a leaf $\ell$ at $j-1$
  and its parent at $j$. Perform the original blue-star embedding, but
  now with $\varphi=(j,x+1,\dots,x+d-1)$ if $d\geq 2$ and $\varphi=(j)$ if
  $d=1$. The conditions of the blue-star embedding still hold. The
  resulting interval $[i',j']$ contains the now isolated vertex $\ell$
  and we embed $S$ by placing $s$ onto $\ell$ and embedding the children
  of $s$ on the remainder.

  \case{1.2} $\treeatt{[i',j]}{j}$ is a central-star that raises a
  degree-conflict. Note that $[i',j]$ is composed of some locally
  isolated vertices plus some a suffix of the interval $[i,j]$ before
  the blue-star embedding. Undo the blue-star embedding. Now
  $B[z,j]$ is a central-star for some minimal $z$. We claim that we may
  assume that $B[z,j]$ is rooted at $j$. Indeed, if $B[z,j]$ is rooted
  at $z$, then by 1SR we have $B[z,j]=\treeat{j}$ and we can flip
  $\treeat{j}$ to establish the claim. Perform the original blue-star
  embedding for $R^-$, but now with $\varphi=(j,x+1,\dots,x+d-1)$ if $d\geq
  2$ and $\varphi=(j)$ if $d=1$. In the remaining interval $[i',j']$, the
  vertex $j'$ is a leaf of what was the central-star $B[z,j]$ before the
  blue-star embedding. Hence, $j'$ is locally isolated and not in
  edge-conflict with $s$. Recursively embed $S$ onto $[j',i']$ to
  complete the embedding.

  \case{2} $B^*$ is a dangling star. In this case \ref{inv:starconflict}
  does not tell us anything about the size of $B^*$ (because it applies
  to central-stars only). If the root of $B^*$ is at $x$, then its
  center is at $i$ and by 1SR $i$ is the only neighbor of $x$ in $B$.
  Hence by flipping $B^*$ we may suppose that the root of $B^*$ is at
  $i$. Note that $i$ may have more neighbors, in addition to the center
  of $B^*$ at $x$. Also note that $i$ may be in conflict with $r$, in
  case we flipped $B^*$ (the original vertex at $i$ cannot be in
  conflict by \ref{inv:placement}). We distinguish two cases.

  \case{2.1} $x=j-1$. In this case we know almost completely what $B$
  looks like: $B[i,j-1]$ is a star rooted at $i$ and centered at $j-1$
  and the edge $\{i,j\}$ may or may not be used. We embed $R$ explicitly
  as follows. Since $\deg_R(r)\geq2$, there is a subtree $W=\tr(w)$ of
  $r$ different from $S$. Embed $r$ onto $i+|W|$ and embed $W$
  explicitly onto the independent set at $[i,i+|W|-1]$. Since
  $|S|\geq2$, we know that $|R'|<|B[i,j-1]|$, and hence $r$ is not
  embedded at the center of the star $B[i,j-1]$. If $S$ is not a star,
  embed it recursively onto $[j,j-|S|+1]$. This works because $j$ is
  locally isolated in $B[j-|S|+1,j]$ and $j$ is not adjacent to $i+|W|$
  (which is where we embedded $r$). If $S$ is a central-star, embed $s$
  onto $j$ and its children onto $[j-1,j-|S|+1]$. If $S$ is a dangling
  star, embed $s$ onto $j-|S|+1$, the child $s'$ of $s$ onto $j$, and
  the children of $s'$ onto $[j-1,j-|S|+2]$. Embed the remaining
  subtrees (if any) of $r$ on the remaining interval $[i+|W|+1,j-|S|]$,
  which forms a locally independent set, none of whose vertices are
  adjacent to $r$.

  \case{2.2} $x\le j-2$ and $\treeatt{[j,j-|S|+1]}{j}$ is a central-star
  $B^{**}$ on $|B^{**}|\ge|S|-\deg_S(s)+1$ (in particular, this holds if
  $S$ has a degree-conflict for embedding on $[j,j-|S|+1]$). We
  distinguish two subcases.

  \case{2.2.1} $\{i,j\}\in\EB$. Then the root and center $b^{**}$ of
  $B^{**}$ must be at $j$ and cannot be the root of $B$ because then
  LSFR would imply that $B$ is a star. Therefore $i$ is the root of $B$
  (\figurename~\ref{fig:large_blue_star_221_ij_1}) and it is not in
  conflict with $r$ due to \ref{inv:placement}. We modify the embedding
  of $B$ as follows: Move $b^{**}$ to $j-|S|$ and all leaves of $B^{**}$ (as
  $|B^{**}|\ge|S|-\deg_S(s)+1\ge 2$, there is at least one) in sequence
  immediately to the right of $b^{**}$, at position $j-|S|+1$ and onward,
  shifting all vertices between there and $j$ to the right accordingly.
  Draw the edge $\{i,b^{**}\}$ below the spine to avoid crossings, and all
  other edges incident to $b^{**}$ above the spine
  (\figurename~\ref{fig:large_blue_star_221_ij_2}).
  \begin{figure}[htbp]
    \centering\hfil%
    \subfloat[]{\includegraphics{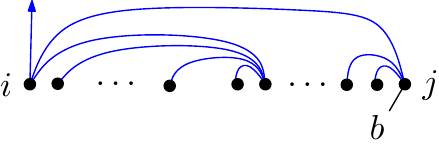}\label{fig:large_blue_star_221_ij_1}}\hfil
    \subfloat[]{\includegraphics{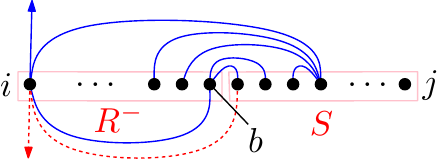}\label{fig:large_blue_star_221_ij_2}}\hfil
    \caption{$\{i,j\}\in\EB$
      (Case~2.2.1).\label{fig:large_blue_star_221_ij}}
  \end{figure}

  We place $r$ at $i$ and explicitly embed $R^-$ onto $[i,j-|S|]$, which
  in $B$ consists of a single edge $\{i,j-|S|\}$ with isolated vertices
  (at least one because $|R^-|\ge 3$) in between. Recall that $R^-$ is
  not a central-star and so we can embed it as described. It remains to
  embed $S$ onto $[j-|S|+1,j]$. As $j-|S|+1$ is a leaf of $B^{**}$,
  which is isolated on $[j-|S|+1,j]$, there is no conflict for this
  embedding and $B[j-|S|+1,j]$ is not a star. Therefore, if $S$ is not a
  star, then we can complete the packing recursively by embedding $S$
  onto $[j-|S|+1,j]$.

  It remains to consider the case that $S$ is a star. If $S$ is a
  central-star, then we can put this center at the locally isolated
  vertex $j-|S|+1$. Otherwise, $S$ is a dangling star with $|S|\ge 3$.
  As $|B^{**}|\ge|S|-\deg_S(s)+1=|S|\ge 3$, we have at least two locally
  isolated vertices (leaves of $B^{**}$) at $j-|S|+1$ and $j-|S|+2$. We
  put the root of $S$ at $j-|S|+1$ and the center at $j-|S|+2$ to
  complete the packing.

  \case{2.2.2} $\{i,j\}\notin\EB$. Let us consider the
  central-star $B^{**}=\treeatt{[j,j-|S|+1]}{j}$. We claim that
  $B^{**}=\treeat{j}$. Indeed, if the root $b^{**}$ of $B^{**}$ is on
  the left, then by 1SR $B^{**}=\treeat{j}$. Otherwise, $b^{**}$ is at
  $j$. By definition of $B^{**}$, $b^{**}$ has no neighbors in
  $B[j,j-|S|+1]\setminus B^{**}$. Since $B^*=B[i,x]$ is a star and
  $x\geq j-|S|$, the only remaining possible neighbor of $b^{**}$ would
  be $i$, but this is excluded by the assumption. We conclude that
  $B^{**}=\treeat{j}$. If necessary, flip $B^{**}$ to put its root (and
  center) at $j$.

  \case{2.2.2.1} $x=j-|S|$
  (\figurename~\ref{fig:large_blue_star_221_1}). Then we change the
  embedding of $B$ by moving one leaf $\ell$ of $B^{**}$ all the way to
  the left at $i$. As a leaf of $B^{**}$ it is not in conflict with $r$,
  and so we can map $r$ to $\ell=i$ and embed $R^-$ explicitly onto the
  locally independent set $B[i,j-|S|]$. If $S$ is not a star, then we
  recursively embed $S$ onto $[j-|S|+1,j]$
  (\figurename~\ref{fig:large_blue_star_221_2}). Note that $j-|S|+1$ is
  the center of $B^*$, which is isolated in $B[j-|S|+1,j]$ and not
  adjacent to the leaf of $B^{**}$ at $i$. Therefore, $B[j-|S|+1,j]$ is
  not a star and there is no degree-conflict and no edge-conflict for
  embedding $S$ onto $[j-|S|+1,j]$.
  \begin{figure}[htbp]
    \centering\hfil%
    \subfloat[]{\includegraphics{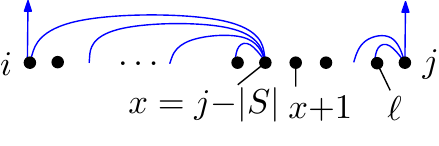}\label{fig:large_blue_star_221_1}}\hfil
    \subfloat[]{\includegraphics{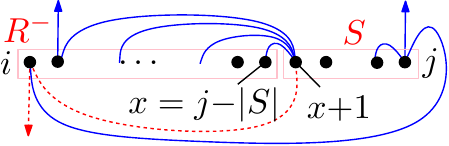}\label{fig:large_blue_star_221_2}}\hfil
    \caption{$x=j-|S|$ (Case~2.2.2.1).\label{fig:large_blue_star_221}}
  \end{figure}

  It remains to consider the case that $S$ is a star. If $S$ is a
  central-star, then the center can be embedded on the isolated vertex
  at $j-|S|+1$. Otherwise, $S$ is a dangling star with $|S|\ge 3$. Then
  at least one more leaf of $B^{**}$ remains at $j-1$, where we can
  embed the root of $S$. The center of $S$ is again mapped to the
  isolated vertex $j-|S|+1$ and the edge $\{j-|S|+1,j\}$ is drawn as a
  biarc, crossing the spine between $j-2$ and $j-1$.

  \case{2.2.2.2} $x\ge j-|S|+1$
  (\figurename~\ref{fig:large_blue_star_223_1}). Then we change the
  embedding of $B$ by simultaneously moving the root of $B^*$ to $x$ and
  and moving all other vertices of $B^*$ to the left by one
  (\figurename~\ref{fig:large_blue_star_223_2}). Embed $r$ at $i$. We
  will use a blue-star embedding to embed $S$ on $B[i+1,j]$ from
  $\sigma=j$ where $\varphi$ consists of the rightmost $\deg_S(s)$
  non-neighbors of $j$ in $B$ from right to left. Note that
  $\treeatt{[i+1,j]}{j}=B^{**}$ and hence $B^*=B^+$ in the terminology
  of the blue-star embedding. Let us check the conditions for the
  blue-star embedding. \ref{gg:ec} holds because $|R^-|\ge 3$ and so
  $i$ is a leaf of $B^*$ that is adjacent to $x-1\ne j$ only.
  \ref{gg:int} and \ref{gg:cs} hold because
  $B[i+1,j]\setminus(B^{**}\cup\varphi)$ forms an interval. For \ref{gg:dc}
  we must show $S|\leq|B^{**}|+\deg_S(s)$ and
  $|B^{**}|+\deg_S(s)\leq|I|-2$. The first inequality follows from the
  assumption of Case~2.2. For the second inequality, we have
  $|B^{**}|\leq |B[x+1,j]|\leq|S|-1$ and $\deg_S(s)\leq|S|-1$, and so
  $|B^{**}|+\deg_S(s)\leq 2|S|-2<|I|-2$, since $|S|<|I|/2$. Hence, the
  conditions for the blue-star embedding are satisfied.

  Since $\deg_S(s)\geq|S|-|B^{**}|+1$, we have $\varphi\supset
  B[j-|S|,j]\setminus B^{**}$, and hence the blue-star embeding embeds
  a child of $s$ onto the root of $B^{*}$, which was embedded at $x$,
  and the center of $B^{*}$, which was embedded at $x-1$. Therefore,
  the remaining vertices not used for the embedding of $S$ form an
  independent set in $B$ and we can explicitly embed $R^-$ on them.
  \begin{figure}[htbp]
    \centering\hfil%
    \subfloat[]{\includegraphics{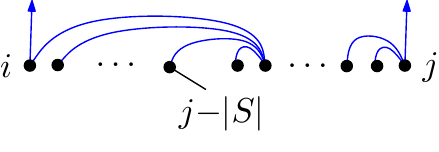}\label{fig:large_blue_star_223_1}}\hfil
    \subfloat[]{\includegraphics{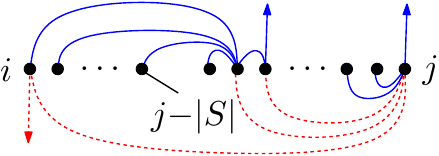}\label{fig:large_blue_star_223_2}}\hfil
    \caption{$x\ge j-|S|+1$
      (Case~2.2.2.2).\label{fig:large_blue_star_223}}
  \end{figure}

  \case{2.3} $x\le j-2$ ($\Rightarrow|S|\ge 2\Rightarrow|R|\ge 5$) and
  $\treeatt{[j,j-|S|+1]}{j}$ is \textbf{not} a central-star $B^{**}$ on
  $|B^{**}|\ge|S|-\deg_S(s)+1$ vertices. We first prove a claim and then
  distinguish two subcases.

  \emph{Claim:} We may suppose that $S$ is a star or $x=j-|S|$. To prove
  the claim, suppose that $x\ge j-|S|+1$. Then $j-|S|$ is a leaf of
  $B^*$ and we can explicitly embed $R^-$ onto the independent set
  $[j-|S|,i]$. As $x\neq j$ is the only neighbor of $j-|S|$ in $B$, we
  have $\{j-|S|,j\}\notin\EB$ and so there is no edge-conflict for
  embedding $S$ onto $[j,j-|S|+1]$. By assumption there is no
  degree-conflict for this embedding, either, and $B[j-|S|+1,j]$ is not
  a star because the root of $B^*$ is part of it but $B^*\ne B$. The
  only remaining obstruction for the recursive embedding of $S$ onto
  $[j,j-|S|+1]$ is $S$ being a star. This proves the claim.

  \case{2.3.1} $\{i,x+1\}\notin\EB$. Then we rearrange the
  embedding of $B$ as follows: move the center $c$ of $B^*$ to $j-|S|+1$
  and the vertex $b'$ at $x+1$ (the leftmost vertex not in $B^*$) to
  $j-|S|$. In order to avoid crossings with the edge(s) incident to
  $b'$, draw all edges between $c$ and its neighbors in $[i,j-|S|-1]$
  below the spine, whereas edges to neighbors in $[j-|S|+2,j]$ remain
  above the spine (\figurename~\ref{fig:large_blue_star}). After this
  transformation $B[i,j-|S|]$ is an independent set, on which we can
  embed $R^-$ explicitly. However, we have to take care because of the
  blue edges drawn below the spine and the (possibly) conflicting root
  $i$. Without loss of generality suppose that the root of $B^*$ at $i$
  is in conflict with $r$.
  \begin{figure}[htbp]
    \centering\hfil%
    \subfloat[]{\includegraphics{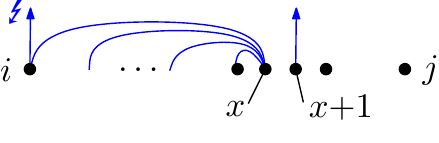}\label{fig:large_blue_star_1}}\hfil
    \subfloat[]{\includegraphics{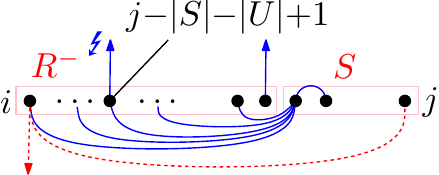}\label{fig:large_blue_star_2}}\hfil
    \caption{$x\le j-2$ and $\{i,x+1\}\notin\EB$
      (Case~2.3).\label{fig:large_blue_star}}
  \end{figure}

  Recall that $R^-$ is not a central-star and so there is at least one
  non-leaf child $u$ of $r$ in $R^-$. Denote $U=\tr(u)$ and map both $u$
  and the conflicting root of $B^*$ to $j-|S|-|U|+1$ (by exchanging the
  order of leaves of $B^*$ in $B[i,j-|S|-1]$). As $|U|\ge 2$, we have
  $j-|S|-|U|+1\le j-|S|-1$ and so the local order for the roots of
  subtrees from $B$ is maintained. On the other hand, we have
  $j-|S|=i+|R^-|-1$ and $|U|\le|R^-|-1$, which imply
  $j-|S|-|U|+1\ge(i+|R^-|-1)-(|R^-|-1)+1=i+1$. Therefore (the leaf now
  at) $i$ is not in conflict with $r$.

  As $\{i,x+1\}\notin\EB$ initially, after the transformation
  we have $\{j-|S|-|U|+1,j-|S|\}\notin\EB$ and so
  $[j-|S|-|U|+1,j-|S|]$ is an independent set in $B$. Therefore, we can
  embed $U$ onto $[x-|U|+1,x]$ explicitly, drawing all edges above the
  spine, and complete the embedding of $R^-$ by embedding $R^-\setminus
  U$ onto $[i,x-|U|]$ explicitly, again drawing all edges above the
  spine. After these changes to the embedding of $B$, the only neighbor
  of $i$ in $B$ is $j-|S|+1$. Together with $|S|\ge 2$ it follows
  that there is no edge-conflict for recursively embedding $S$ onto
  $[j,j-|S|+1]$. We also know that $B[j,j-|S|+1]$ is not a star because
  it contains part of $B^*$ (at least the center at $j-|S|+1$) and at
  least one more vertex not connected to that part of $B^*$: the vertex
  at $j$. (As $|S|\ge 2$, there were at least two such vertices
  initially, but one, the vertex $b'$, has been moved and used for
  embedding $R^-$.) Two possible obstructions for the recursive
  embedding of $S$ onto $[j,j-|S|+1]$ remain: a degree-conflict or $S$
  is a star. We conclude by considering both cases.

  \case{2.3.1.1} $S$ is a star. Undo the rearrangement of $B^*$. We will
  redo the rearrangement, but first do some other modifications.

  Suppose first that $|\treeatt{[x+1,j]}{x+1}|\geq2$ or
  $|\treeatt{[x+1,j]}{x+2}|=1$. In the former case, use a leaf-isolation
  shuffle on $B[x+1,j]$ to place a leaf at $x+2$ and its parent at
  $x+1$. We can apply the shuffle because $x\le j-2$ and therefore
  $|B[x+1,j]|\ge 2$. After the modification (as described in the first
  paragraph of Case~2.3.1) we proceed as follows. If $S$ is a
  central-star, then $s$ can be placed at $x+2$, which is adjacent to
  $j-|S|$ only and therefore locally isolated in $B[j-|S|+1,j]$.
  Otherwise, $S$ is a dangling star. Then either there is a (non-root)
  leaf of $B^*$ in $[j-|S|+1,j]$ or the center $c$ of $B^*$ is isolated
  in $[j-|S|+1,j]$. In either case, we put the center of $S$ on $x+2$.
  In the former case, we put the root of $S$ on $j-|S|+2$ (the leftmost
  leaf of $B^*$ in $[j-|S|+1,j]$, and draw the edge $\{c,x+2\}$ as a
  biarc that crosses the spine between $j-|S|+2$ and $j-|S|+3$. In the
  latter case, we put the root of $S$ on $j-|S|+1=c$. Either way, we can
  complete the star $S$ and the embedding of $R^-$ works just as before.

  Otherwise, $|\treeatt{[x+1,j]}{x+1}=1$ and
  $|\treeatt{[x+1,j]}{x+2}|\geq2$. Since $\{i,x+1\}\not\in\EB$ by the
  assumption of Case~2.3.1, we know that
  $\treeat{x+1}=\treeatt{[x+1,j]}{x+1}$ and hence also
  $\treeat{x+2}=\treeatt{[x+1,j]}{x+2}$. It follows that $x\leq j-3$ and
  hence $|S|\geq 3$. Perform a leaf-isolation-shuffle on $\treeat{x+2}$
  to place a leaf at $x+3$ and its parent at $x+2$. Rearrange the
  embedding of $B$ as follows: move the center $c$ of $B^*$ to
  $j-|S|+1$, the vertex $b'$ at $x+1$ to $j-|S|-1$, and the vertex $b''$
  at $x+2$ to $j-|S|$. Draw the blue edges as explained in the first
  paragraph of Case~2.3.1. We embed $S$ analogously to the previous
  paragraph, using $x+3$ as the location for the star-center. To embed
  $R^-$, let us consider the embedding $B[i,j-|S|]$. It is again an
  independent set. As opposed Case~2.3.1, however, we have local roots
  at $j-|S|-1$ and at $j-|S|$. Fortunately, since $|S|\geq3$ and $S$ is
  a smallest subtree, also $|U|\geq3$, and hence $j-|S|-|U|+1\leq
  j-|S|-2$, as required. Hence, we can embed $R^-$ explicitly,
  analogously to the second paragraph of Case~2.3.1.

  \case{2.3.1.2} There is a degree-conflict for embedding $S$ onto
  $[j,j-|S|+1]$. Then this conflict must have been created by our
  transformation of the embedding of $B$. Before this transformation
  there was no degree-conflict by assumption (Case~2.2 handles this
  scenario). In other words, $b'$ is the root of a star $B[x+1,j]$ in
  the initial embedding whose center is at $j$. After moving $b'$ out of
  the interval $[j-|S|+1,j]$, $j$ became the local root, which raised
  the degree-conflict. By our claim and the preceding Case~2.3.1.1 we
  may suppose that $x=j-|S|$
  (\figurename~\ref{fig:large_blue_star_2312_1}). We use a different,
  explicit embedding as follows: flip the star $B[x+1,j]$ so that its
  root is at $j$ and the center is at $x+1$ and draw all edges below the
  spine. Next move the center at $x+1$ to $i$ instead, shifting all
  vertices in between to the right by one. Then put $r$ at $i$ (not
  being the root of $B[x+1,j]$ it is not in conflict), and explicitly
  embed $R^-$ onto the (now) independent set $[i,x]$. Finally,
  explicitly embed $S$ onto the (now) independent set $[x+1,j]$
  (\figurename~\ref{fig:large_blue_star_2312_2}). Note that $B$ might be
  a tree (in which case the two roots in the figure are actually
  connected), but the embedding works also in this case.
  \begin{figure}[htbp]
    \centering\hfil%
    \subfloat[]{\includegraphics{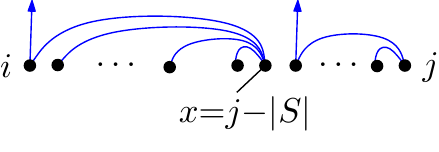}\label{fig:large_blue_star_2312_1}}\hfil
    \subfloat[]{\includegraphics{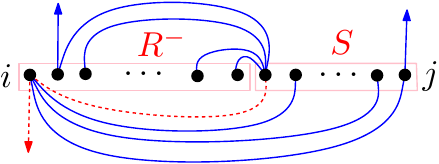}\label{fig:large_blue_star_2312_2}}\hfil
    \caption{A new degree-conflict for $S$ on $[j,j-|S|+1]$
      (Case~2.3.1.2).\label{fig:large_blue_star_2312}}
  \end{figure}

  \case{2.3.2} $\{i,x+1\}\in\EB$ and $x\ge j-|S|+1$. Then by
  our claim we may suppose that $S$ is a star. We embed $R^-$ explicitly
  onto $[j-|S|,i]$, noting that $j-|S|$ is a non-root leaf of $B^*$ and,
  therefore, not in conflict with $r$. If $S$ is a central-star, then we
  put the center at $x+1$, which is connected to $i$ only and therefore
  locally isolated on $[j-|S|+1,j]$. Otherwise, $|S|\ge 3$ and $S$ is a
  dangling star. Given that $x\le j-2$, we have $x+1\ne j$ and therefore
  can put the root of $S$ on $j$ and the center on $x+1$.

  \case{2.3.3} $\{i,x+1\}\in\EB$ and $x=j-|S|$. We distinguish
  two final subcases.

  \case{2.3.3.1} The root of $B[i,x+1]$ is at $i$. Then we change the
  embedding of $B$ by moving the vertex at $x+1$ to $i$ and shifting the
  vertices in between to the right by one. We explicitly embed $R^-$
  onto $[i,j-|S|]$. This is possible because $B[i,j-|S|]$ is an
  independent set except for the single edge $\{i,i+1\}$ and $R^-$ is
  not a central-star. Then if $S$ is a central-star, we embed
  the center at $j-|S|+1$, which is an isolated vertex in $[j-|S|+1,j]$.
  If $S$ is a dangling star, then we embed the root at $j$ and
  the center at $j-|S|+1$. Otherwise, $S$ is not a star and we
  recursively embed $S$ onto $[j-|S|+1,j]$. Recall that $j-|S|+1$ is a
  locally isolated vertex and $\{i,j-|S|+1\}\notin\EB$
  (because $i$ is a leaf whose only neighbor is at $i+1\ne j-|S|+1$).
  Therefore, there is no conflict for the recursive embedding and
  $B[j-|S|+1,j]$ is not a star.

  \case{2.3.3.2} The root of $B[i,x+1]$ is at $x+1$ (and possibly in
  conflict with $r$; \figurename~\ref{fig:large_blue_star_2331_1}).

  If $S$ is a central-star, then we change the embedding of $B$ as
  follows: First flip $B^*$ so that its center is at $i$ and then
  exchange the vertices at $i$ (center $c$ of $B^*$) and $i+1$ (a leaf
  of $B^*$). Put $r$ at $i$, which is a leaf of $B^*$ and therefore not
  in conflict. Then put $s$ at $x+1$, whose only neighbor is (now) at
  $x$ (originally at $i$), drawing the edge $\{r,s\}$ above the spine.
  Next put a leaf of $S$ on $i+1$ (center $c$ of $B^*$), again drawing
  the edge $\{s,c\}$ above the spine. Put the remaining leaves of $S$ on
  the vertices $[x+2,j-1]$, drawing the edges to $s$ below the spine.
  This leaves us with a set of isolated vertices, all accessible from
  below the spine, on which we can complete an explicit embedding of
  $R^-$ (\figurename~\ref{fig:large_blue_star_2331_2}).
  \begin{figure}[htbp]
    \centering\hfil%
    \subfloat[]{\includegraphics{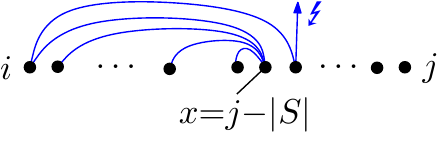}\label{fig:large_blue_star_2331_1}}\hfil
    \subfloat[]{\includegraphics{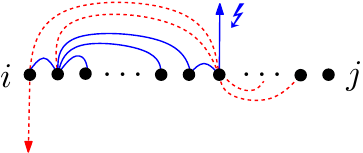}\label{fig:large_blue_star_2331_2}}\hfil
    \subfloat[]{\includegraphics{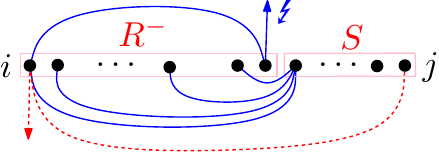}\label{fig:large_blue_star_2331_3}}\hfil
    \caption{$\{i,x+1\}\in\EB$, $x=j-|S|$ and $S$ is a
      central-star (Case~2.3.3.1).\label{fig:large_blue_star_2331}}
  \end{figure}

  A similar embedding also works for a dangling star $S$: Put $r$ at
  $i+2$ (which is another leaf of $B^*$ because $|R^-|\ge 3$), put $s$
  at $i$ and the center of $S$ at $x+1$.

  Otherwise, $S$ is not a star. Then we modify the embedding of $B$ by
  drawing all edges of $B^*$ below the spine and exchanging $x$ and
  $x+1$. Explicitly embed $R^-$ onto $[i,j-|S|]$. This is possible
  because $B[i,j-|S|]$ is an independent set except for the edge
  $\{i,j-|S|\}$ and $R^-$ is not a central-star. Recursively embed $S$
  onto $[j,j-|S|+1]$ (\figurename~\ref{fig:large_blue_star_2331_3}). As
  $j-|S|+1$ is a locally isolated vertex in $B[j-|S|+1,j]$, we know that
  $B[j-|S|+1,j]$ is not a star. There is no degree-conflict by
  assumption (Case~2.2 handles this scenario) and---as opposed to
  Case~2.3.1.2---we do not change $\treeat{j}$ here. Again by assumption
  $\{i,j\}\notin\EB$ and so there is no edge-conflict for the
  recursive embedding of $S$, either.
\end{proof}

\section{Embedding the red tree: a large red star}
\label{subsec:rec_large_red_star}
In this section we handle the case where $R^-$ is a star. If $R^-$ is a
star, then it must be a dangling star: otherwise, by the choice of $S$
as a smallest subtree, $R$ would be a star. Let $q$ be the child of $r$
in $R^-$ and let $Q=\tr(q)$. Then $Q$ is a central-star. Our default
approach in this case is to explicitly embed $Q$ and recursively embed
$S^+:=R\setminus Q$. Note that $\deg_{S^+}(r)=1$. Consequently, when we
try to recursively embed $S^+$ onto some interval $[x,y]$, there can be
a degree-conflict only if $B[x,y]$ is a star: a case we must handle
separately anyway. Hence, for a recursive embedding of $S^+$ it suffices
to check that we are not embedding against a star, to establish the
placement invariant, and to check that there is no edge-conflict.

\begin{proposition}\label{prop:rec_large_red_star_ij_not_used_sp_not_star}
  If $R^-$ is a star, $S^+$ is not a star, and
  $\{i,j\}\not\in\EB$, then $R$ and $B$ admit an ordered plane
  packing onto $[i,j]$.
\end{proposition}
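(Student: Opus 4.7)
The natural plan exploits $\{i,j\}\notin\EB$: place $r$ at $i$ and $q$ at $j$, so that the red edge $\{r,q\}$ can be drawn as an arc above the spine without coinciding with a blue edge. Concretely, recursively embed $S^+$ onto $[i,i+|S^+|-1]$, which puts $r$ at $i$, and then explicitly place the central-star $Q$ with center $q$ at $j$ and its $|Q|-1$ leaves filling $[i+|S^+|,j-1]$. Because $\deg_{S^+}(r)=1$, a degree-conflict for the recursive call on $[i,i+|S^+|-1]$ would force $B[i,i+|S^+|-1]$ to be a central-star on $|S^+|$ vertices, a case already delivered by Lemma~\ref{lem:rec_large_blue_star}. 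An edge-conflict at $i$ can be neutralized by flipping $\treeatt{[i,i+|S^+|-1]}{i}$, exactly as in the argument for the placement invariant in Observation~\ref{obs:invariants}, so invariants \ref{inv:starconflict}--\ref{inv:rootsonly} transfer to the recursive subproblem.

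The main obstacle is the explicit placement of $Q$: a red edge from $q$ at $j$ to a leaf $\ell\in[i+|S^+|,j-1]$ clashes with $B$ whenever $j$ has a blue neighbor in that range. My first fallback is the mirrored setup: recursively embed $S^+$ onto $[j,j-|S^+|+1]$ and place $Q$ with $q$ at $i$ and leaves on $[i+1,i+|Q|-1]$; the arc $\{r,q\}=\{i,j\}$ is again safe by hypothesis. If both orientations fail simultaneously, then $\treeat{i}$ intrudes past $i+|S^+|-1$ and $\treeat{j}$ intrudes before $j-|S^+|+1$, and since $\{i,j\}\notin\EB$ these are two distinct blue trees whose shapes are tightly constrained by LSFR.

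In the residual configuration I would apply the red-star embedding of Proposition~\ref{prop:stargreedygrab} to pack $Q$ onto whichever of the two obstructing blue trees (rooted at $i$ or $j$) is large enough. Condition \ref{sgg:ec} follows from the (possibly re-established) placement invariant, and \ref{sgg:dc} follows because $S^+$ being a nonstar guarantees at least two positions outside the neighborhood of the chosen star-center, so $\deg_Q(q)+\deg_B(\sigma)\le|I'|$ can be arranged on a suitable subinterval $I'$. By Proposition~\ref{prop:stargreedygrab} the leftover vertices form an interval satisfying \ref{inv:bluelocal}, onto which $S^+$ can be embedded inductively; the single red edge $\{r,q\}$ joining the two embedded pieces is routed between the opposite endpoints of the two subintervals and avoids $\EB$ by the hypothesis $\{i,j\}\notin\EB$.
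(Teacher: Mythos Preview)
Your outline captures the right high-level idea (pair the central-star $Q$ with a red-star embedding and recurse on $S^+$), but it has two substantive gaps.

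First, the sentence ``recursively embed $S^+$ onto $[i,i+|S^+|-1]$, which puts $r$ at $i$'' is not justified: Theorem~\ref{thm:main} only guarantees that $r$ is accessible from the outer face, not that it lands at the left endpoint. This is repairable (provisionally place $q$ at $j$ and put its blue neighbours into the conflict set $C$, so that wherever $r$ lands the edge $\{r,q\}$ is unused), but it is not what you wrote. More importantly, your characterisation of when both orientations fail is incorrect. The first plan fails as soon as $j$ has \emph{any} blue neighbour in $[j-|Q|+1,j-1]$; this does not force $\treeat{j}$ to ``intrude before $j-|S^+|+1$'', it merely forces $\deg_B(j)\ge 1$. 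Symmetrically for the mirrored plan. So the residual configuration you describe is much more constrained than what actually remains.

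Second, and this is the real problem, your justification of \ref{sgg:dc} for the fallback red-star embedding is a handwave that does not survive contact with the actual inequality. You need $\deg_Q(q)+\deg_B(\sigma)\le|I'|$; with $\deg_Q(q)=|Q|-1=|I|-|S^+|-1$ and $I'\subseteq[i+1,j-1]$ this reduces to $\deg_B(\sigma)\le|S^+|$. Nothing you have said rules out $\deg_B(j)>|S^+|$, and in fact the paper's own proof devotes its entire Case~2 (with sub-cases down to~2.3.3) to exactly the situation $\deg_Q(q)+\deg_B(j)\ge|I|-1$. That case involves re-embedding $\treeat{j}$, invoking Observation~\ref{obs:unary_2dc_partition}-style counting, leaf-isolation shuffles, and several explicit constructions depending on the shape of the subtrees of $j$. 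Saying ``$S^+$ being a nonstar guarantees at least two positions outside the neighborhood'' does not come close to establishing \ref{sgg:dc}, and the argument cannot be completed without confronting the high-degree case head-on.
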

\begin{proof}
  Let $d:=\deg_Q(q)$. We have $|S^+|\geq 4$ since $S^+$ is not a star.
  Hence, $|Q|\geq|S|\geq 3$. Flip $\treeat{j}$ if necessary to put its
  root at $j$. We first try the following. Use the red-star
  embedding to embed $q$ at $j$ and the children of $q$ on the
  $\deg_Q(q)$ rightmost non-neighbors of $j$ in $[i+1,j-1]$. Let $I'$ be
  the interval of remaining vertices. Embed $S^+$ recursively onto $I'$.
  See \figurename~\ref{fig:large_red_ij_not_used_default}.

  \begin{figure}[b]
    \centering\hfil%
    \subfloat[Default]{\includegraphics{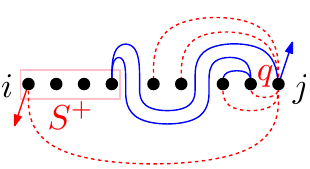}\label{fig:large_red_ij_not_used_default}}\hfil%
    \subfloat[Case~1.1]{\includegraphics{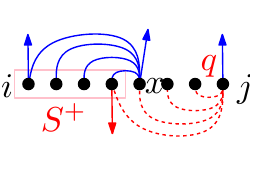}\label{fig:large_red_ij_not_used_ix_star_large}}\hfil%
    \subfloat[Case~1.2]{\includegraphics{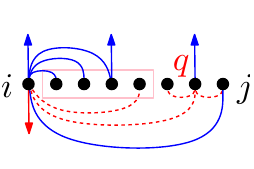}\label{fig:large_red_ij_not_used_ix_star_equal}}\hfil%
    \subfloat[Case~2]{\includegraphics{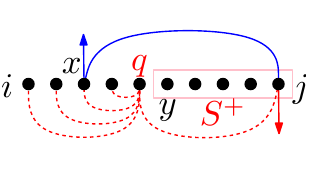}\label{fig:large_red_ij_not_used_dc_default}}\hfil%
    \caption{The case analysis in the proof of
      Proposition~\ref{prop:rec_large_red_star_ij_not_used_sp_not_star}~(Part~1/6).}
  \end{figure}

  Let us first consider the conditions under which the embedding of
  $S^+$ works.
  The embedding fails if $B[I']$ is a star, which happens only if
  (Case~1) $B^*:=\treeat{i}$ is a star with $|B^*|\geq|S^+|$ and
  $\deg_B(j)=0$. Otherwise, suppose there is an edge-conflict for
  embedding $S^+$ onto $I'$. Then $B^*=\treeatt{[I']}{i}$ is a
  central-star rooted at $b^*$. By choice of $I'$, we have
  $B^*=\treeat{i}$. By~\ref{inv:starconflict}, $b^*$ has no edge to the
  parent of $r$. If it had an edge to $j$ (which is where we embedded
  $q$), then by 1SR $b^*$ is embedded at $i$. But then $\{i,j\}\in\EB$,
  a contradiction. As argued at the start of this section, there can be
  no degree-conflict for $S^+$. Hence, \ref{inv:starconflict} holds.

  The embedding of $Q$ works unless there is a degree-conflict for
  placing $q$ onto $j$ and embedding the children of $q$ onto
  $[j-1,i+1]$, that is unless (Case~2)
  $\deg_Q(q)+\deg_B(j)\geq|I|-1$. We deal with both cases below.

  \case{1} $B^*:=\treeat{i}$ is a star with $|B^*|\geq|S^+|$ and
  $\deg_B(j)=0$. Let $x$ be such that $B[i,x]=B^*$.

  \case{1.1} $|B^*|>|S^+|$. Flip $B[i,x]$ if necessary to put its
  center at $x$. Embed $q$ onto $j$ and the children of $q$ explicitly
  onto $[j-1,j-\deg_Q(q)]$. See
  \figurename~\ref{fig:large_red_ij_not_used_ix_star_large}. This works
  since $\deg_B(j)=0$. Embed $S^+$ recursively onto $[j-\deg_Q(q)-1,i]$.
  This works since $|B^*|>|S^*|$ and hence $B[i,j-\deg_Q(q)-1]$ is an
  independent set and $j-\deg_Q(q)-1$ is not the root of $B^*$.

  \case{1.2} $|B^*|=|S^+|$. Flip $B[i,x]$ if necessary to put its center
  at $i$. By the peace invariant, this star-center is not in
  edge-conflict with $r$. Since $|S^+|\geq 4$, the blue vertex at $i+1$
  is a leaf in $B^*$ that is not the root of $B^*$. We change the blue
  embedding as follows. Simultaneously move $B[i+2,j]$ to $[i+1,j-1]$
  and $i+1$ to $j$. The edge $\{i,j\}$ is drawn in the lower
  halfplane. Note that $\treeatt{[i,x]}{x}$ is now an isolated vertex.
  Embed $q$ onto $j-1$ and the children of $q$ onto $j$ and
  $[j-2,j-\deg_Q(q)]$. This works because $j-1$ is isolated. Embed $r$
  at $i$. Embed $S$ recursively onto $[x,i+1]$ if $S$ is not a star. See
  \figurename~\ref{fig:large_red_ij_not_used_ix_star_equal}. Otherwise,
  $S$ is a dangling star. In this case, embed $s$ at $x$, the child $s'$
  of $s$ onto $i+1$ and the children of $s'$ onto $[i+2,x-1]$. This
  works because $j-1$ is isolated in $B$ (and so $\{i,j-1\}$ is not
  used) and $x$ is isolated in $B[i,x]$ (and so $\{i,x\}$ is not used
  and there is no conflict for embedding $S$ onto $[x,i+1]$).

  \case{2} $\deg_Q(q)+\deg_B(j)\geq|I|-1$. Then
  $\deg_B(j)\geq|I|-1-\deg_Q(q)=|S^+|$ and so $|S^+|<|\treeat{j}|$. Let
  $x$ and $y$ such that $B[x,j]=\treeat{j}$ and $|[y,j]|=|S^+|$. Since
  $|S^+|<|\treeat{j}|$ we have $x<y$. Flip $B[x,j]$ to put its root at
  $x$. The proof of this case will not rely on the peace invariant,
  except in Case~2.3.3.

  We first try the following. If $x$ has a subtree that is not a
  central-star on at least $|S^+|$ vertices, then rearrange $B[x,j]$ to
  put a smallest such subtree at $j$. Embed $q$ at $y-1$ and the
  children of $q$ at $[y-2,i]$. Embed $S^+$ recursively at $[j,y]$. See
  \figurename~\ref{fig:large_red_ij_not_used_dc_default}. This fails
  immediately if (1) $B[y,j]$ is a star, in which case every subtree of
  $x$ is a central-star on at least $|S^+|$ vertices or a dangling star
  on exactly $|S^+|$ vertices. Otherwise, suppose there is a
  edge-conflict for embedding $S^+$. Then $B^*=\treeatt{[j,y]}{j}$
  is a star rooted at a center $b^*$. Since $x$ is the only vertex on
  $[x,j]$ with edges to the outside of $I$, $b^*$ must be adjacent to
  $y-1$ (which is where we placed $q$). By 1SR, we have $b^*=j$. Hence,
  we must handle the case (2) $\{y-1,j\}\in\EB$ separately.
  This covers the possible issues with the recursive embedding of $S^+$.
  The embedding of $Q$ works unless (3) $y-1$ is not isolated in
  $B[i,y-1]$. We deal with these three cases next.

  \case{2.1} $B[y,j]$ is a star. Then by the rearrangement of $B[x,j]$
  performed above, every subtree of $x$ is a central-star on at least
  $|S^+|$ vertices or a dangling star on exactly $|S^+|$ vertices.

  \case{2.1.1} Some subtree $U$ of $x$ is a dangling star on exactly
  $|S^+|$ vertices. Re-embed $B[x,j]$, placing the root at $j$ and $U$
  as the closest subtree. Afterwards, $U=B[y-1,j-1]$, the center of $U$
  is at $j-1$, and $j$ is isolated in $B[y,j]$. Embed $r$ at $i$, $S$
  recursively onto $[j,y+1]$, $q$ onto $y$, and the children of $q$ onto
  $[y-1,i+1]$. The embedding of $S$ works because $\{i,j\}\not\in\EB$
  and $j$ is isolated in $B[y+1,j]$. The embedding of $Q$ works because
  $y$ is a leaf of $U$ and hence adjacent only to $j-1$ in $B$.

  \case{2.1.2} Every subtree of $x$ is a central-star on at least
  $|S^+|$ vertices. Flip $B[x,j]$ to put its root at $j$. Embed $r$
  onto $i$ and $s$ onto $j$. We embed $S$ explicitly, as follows. Let
  $c_1,\dots,c_d$ be the children of $S$ such that $\tr(c_1)$ is a
  largest subtree. Since $S$ is not a central-star,
  $|\tr(c_1)|\geq2$. Let $v_1,\dots,v_k$ be the children of $j$ ordered
  by proximity of $j$ ($v_1$ is the closest). By the assumption of
  Case~2 we have $\deg_B{j}\geq|S^+|$ and hence $k\geq d+1$. We embed
  $\tr(c_1)$ as follows. Reroute the edges of $v_2$ to its
  $|\tr(c_1)|-1$ rightmost neighbors via the lower halfplane. Embed
  $\tr(c_1)$ explicitly on these $|\tr(c_1)|-1$ rightmost neighbors of $v_2$
  and on $v_1$ in the upper halfplane. For $i\geq2$, we embed $\tr(c_i)$
  as follows. Reroute the edges of $v_{i+1}$ to its $|\tr(c_i)|$ rightmost
  neighbors via the lower halfplane and embed $\tr(c_i)$ explicitly on
  these vertices in the upper halfplane. Since we embedded a vertex of
  $\tr(c_1)$ on $v_1$, $j-1$ is isolated on the remainder. Embed $q$ onto
  $j-1$ and the children of $q$ onto the remainder. See
  \figurename~\ref{fig:large_red_ij_not_used_dc_star}.

  \begin{figure}[t]
    \centering\hfil%
    \subfloat[Case~2.1.2]{\includegraphics{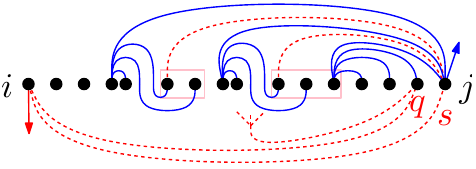}\label{fig:large_red_ij_not_used_dc_star}}\hfil%
    \subfloat[Case~2.2.1]{\includegraphics{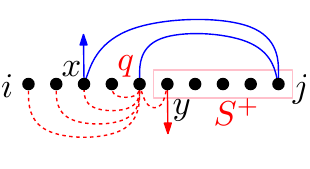}\label{fig:large_red_ij_not_used_dc_not_star_not_yy1_default}}\hfil%
    \subfloat[Case~2.2.1.1]{\includegraphics{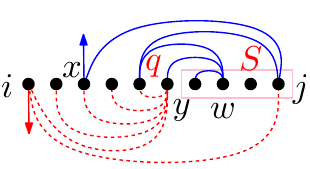}\label{fig:large_red_ij_not_used_dc_not_star_not_yy1_not_star}}\hfil%
    \caption{The case analysis in the proof of
      Proposition~\ref{prop:rec_large_red_star_ij_not_used_sp_not_star}~(Part~2/6).}
  \end{figure}

  \case{2.2} $B[y,j]$ is not a star and $\{y-1,j\}\in\EB$. We
  consider two cases.

  \case{2.2.1} $\{y-1,y\}\not\in\EB$. Embed $q$ onto $y-1$ and
  the children of $q$ onto $[y-2,i]$. This works by 1SR and
  $\{y-1,j\}\in\EB$. Embed $S^+$ recursively onto $[y,j]$. See
  \figurename~\ref{fig:large_red_ij_not_used_dc_not_star_not_yy1_default}.
  Since $B[y,j]$ is not a star by assumption and since $B[x,j]$ is
  rooted at $x$, the only possible issue is an edge-conflict. In
  that case, let $w$ such that $\treeatt{[y,j]}{y}=B[y,w]$. The root
  $b^*$ of $B[y,w]$ is in edge-conflict with $r$. Due to the edge
  $\{x,j\}$ that is used by the blue embedding, the edge-conflict can be
  caused only by an edge from $b^*$ to $y-1$ (which is where we embedded
  $q$). By 1SR, $b^*=w$. Since $\{y-1,y\}\not\in\EB$ and $B[y,j]$ is not
  a star, we have $y+1\leq w\leq j-1$. We consider two cases.

  \begin{figure}[b]
    \centering\hfil%
    \subfloat[Case~2.2.1.2]{\includegraphics{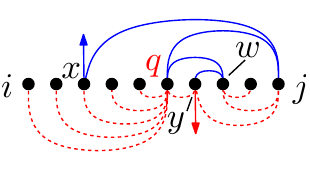}\label{fig:large_red_ij_not_used_dc_not_star_not_yy1_star_1}}\hfil%
    \subfloat[Case~2.2.1.2]{\includegraphics{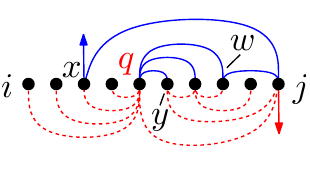}\label{fig:large_red_ij_not_used_dc_not_star_not_yy1_star_2}}\hfil%
    \subfloat[Case~2.2.2.1]{\includegraphics{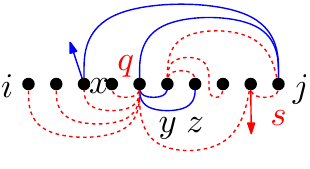}\label{fig:large_red_ij_not_used_dc_not_star_yy1_not_star_1}}\hfil%
    \subfloat[Case~2.2.2.1]{\includegraphics{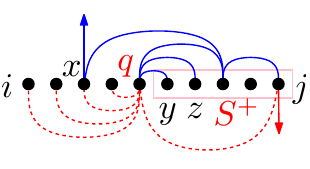}\label{fig:large_red_ij_not_used_dc_not_star_yy1_not_star_2}}\hfil%
    \caption{The case analysis in the proof of
      Proposition~\ref{prop:rec_large_red_star_ij_not_used_sp_not_star}~(Part~3/6).}
  \end{figure}

  \case{2.2.1.1} $S$ is not a star. Embed $r$ onto $i$, $q$ onto $y$,
  the children of $q$ onto $[y-1,i+1]$, and $S$ recursively onto
  $[j,y+1]$. See
  \figurename~\ref{fig:large_red_ij_not_used_dc_not_star_not_yy1_not_star}.
  Since $B[y,w]$ is a star and $y<w$, by 1SR $y$ is isolated on
  $B[i,y]$: hence the embedding of $Q$ works and the edges $\{r,q\}$ and
  $\{r,s\}$ incident to $r$ are not used by the blue embedding. Hence,
  the only possible issues are caused by recursively embedding $S$.
  Suppose there is a conflict for embedding $S$ onto $[j,y+1]$.
  Then $\treeatt{[j,y+1]}{j}$ is a central-star. Since
  $\{x,j\}\in\EB$ and $x<y$ it follows that the root (and hence the
  center) of $\treeatt{[j,y+1]}{j}$ is at $j$. But since $B[y-1,w]$ is a
  subtree of $j$ on more than one vertex, this violates LSFR at $j$.
  Hence, there is no conflict for embedding $S$, which concludes
  this case.

  \case{2.2.1.2} $S$ is a star. Since $S^+$ is not a star, $S$ is a
  dangling star. Let $s'$ be the child of $s$. We distinguish two cases.
  If $w=y+1$, then embed $r$ onto $y$, $q$ onto $y-1$, the children of
  $q$ onto $[y-2,i]$, $s$ onto $j$, $s'$ onto $w=y+1$, and the children
  of $s'$ onto $[y+2,j-1]$. See
  \figurename~\ref{fig:large_red_ij_not_used_dc_not_star_not_yy1_star_1}.
  Since $y$ is not the root of $B[i,x]$, $y$ has no edges to the outside
  of the interval and hence it is safe to embed $r$ there. Since
  $B[y-1,w]$ is a star centered at $w$, $y$ is adjacent only to $w$ in
  the blue embedding and hence $\{r,s\}=\{y,j\}\not\in\EB$ and
  $\{r,q\}=\{y-1,y\}\not\in\EB$. By 1SR, $w$ is isolated in $B[w,j]$ and
  hence we can embed $S$ as described and similarly $y$ is isolated in
  $B[i,y]$ and hence we can embed $Q$ as described.

  If $w\geq y+2$, then flip the blue embedding at $[y-1,w]$. This
  places the center of the star $B[y-1,w]$ at $y-1$ and its root at $w$.
  Since $w\geq y+2$, the vertices $y$ and $y+1$ are adjacent only to
  $y-1$ now. Embed $r$ onto $j$ (which is not the root of $B[x,j]$), $s$
  onto $y$, $s'$ onto $y+1$, the children of $s'$ onto $[y+2,j-1]$, $q$
  onto $y-1$ (the edge $\{y-1,j\}$ is no longer used after flipping),
  and the children of $q$ onto $[y-2,i]$. See
  \figurename~\ref{fig:large_red_ij_not_used_dc_not_star_not_yy1_star_2}.
  After flipping, $y-1$ is isolated in $B[i,y-1]$ and hence the
  embedding of $Q$ works.

  \case{2.2.2} $\{y-1,y\}\in\EB$.

  \case{2.2.2.1} $B[y-1,j]$ is not a star centered at $y-1$. Let $z$ be
  such that $\treeatt{[y-1,j-1]}{y-1}=B[y-1,z]$. Since $\{y-1,y\}\in\EB$
  and $\{y-1,j\}\in\EB$ (Case~2.2), $B[y-1,z]$ is a central-star. Since
  $B[y,j]$ is not a star (Case~2.2), $y\leq z\leq j-2$. By LSFR at $j$
  we know that $\{j-1,j\}\not\in\EB$. Since $z\leq j-2$ we know that
  $\{y-1,j-1\}\not\in\EB$.

  Suppose first that $S$ is a dangling star and let $s'$ be the child of
  $s$ in $S$. Then embed $r$ onto $j-1$, $q$ onto $y-1$, the children of
  $q$ onto $[y-2,i]$, and $s$ onto $j$. Flip $B[y-1,z]$ into the lower
  halfplane. Embed $s'$ onto $y$, drawing the edge $\{s,s'\}$ in the
  upper halfplane. Embed the children of $s'$ onto $[y+1,j-2]$. The
  edges between $s'$ and its children embedded at $[z+1,j-2]$ are drawn
  as biarcs. See
  \figurename~\ref{fig:large_red_ij_not_used_dc_not_star_yy1_not_star_1}.

  Otherwise, $S$ is not a star since $S^+$ is not a star. Flip
  $B[z+1,j]$. Embed $q$ onto $y-1$ and the children of $q$ onto
  $[y-2,i]$. Embed $S^+$ recursively onto $[j,y]$. See
  \figurename~\ref{fig:large_red_ij_not_used_dc_not_star_yy1_not_star_2}.
  Since $y$ is isolated in $B[y,j]$, $B[y,j]$ is not a star. If there is
  a conflict for the embedding of $S^+$ onto $[j,y]$, then
  $\treeatt{[y,j]}{j}$ must be a central-star rooted and centered at
  $z+1$. But this violates the LSFR at $j$ before flipping. Hence, this
  embedding works.


  \case{2.2.2.2} $B[y-1,j]$ is a star centered at $y-1$. Let
  $B':=B[y-1,j]$. We reembed $B[x,j]$ as follows. Use the normal
  embedding algorithm for blue trees to embed $B[x,j]$ onto $[j,x]$
  (placing the root at $j$), but embed $B'$ as the closest subtree,
  i.e., embed $B'$ at $[y-2,j-1]$. This embeds the center of $B'$ at
  $j-1$.

  Embed $r$ onto $i$, $q$ onto $y$, and the children of $q$ onto
  $[y-1,i+1]$. This works so far: $y$ is adjacent only to $j-1$ in the
  blue embedding. If $S$ is a star (it is not rooted at a center) then
  embed $s$ onto $y+1$, the child $s'$ of $s$ onto $j$, and the children
  of $s'$ onto $[j-1,y+2]$. See
  \figurename~\ref{fig:large_red_ij_not_used_dc_not_star_yy1_star_1}.
  This works because $j$ is isolated on $B[y-1,j]$. If $S$ is not a
  star, embed $S$ recursively onto $[j,y+1]$. See
  \figurename~\ref{fig:large_red_ij_not_used_dc_not_star_yy1_star_2}.
  Since $j$ is isolated in $B[y+1,j]$, $B[y+1,j]$ is not a star and
  there is no conflict for embedding $S$ onto $[j,y+1]$.

  \begin{figure}
    \centering\hfil%
    \subfloat[Case~2.2.2.2]{\includegraphics{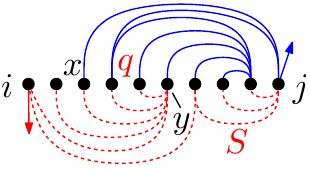}\label{fig:large_red_ij_not_used_dc_not_star_yy1_star_1}}\hfil%
    \subfloat[Case~2.2.2.2]{\includegraphics{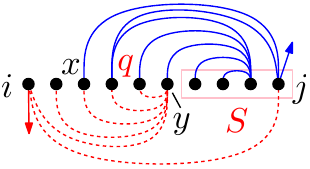}\label{fig:large_red_ij_not_used_dc_not_star_yy1_star_2}}\hfil%
    \subfloat[Case~2.3.1.1]{\includegraphics{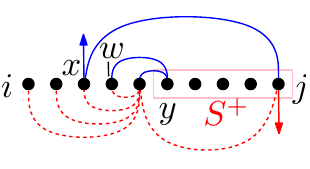}\label{fig:large_red_ij_not_used_dc_not_isolated_1}}\hfil%
    \caption{The case analysis in the proof of
      Proposition~\ref{prop:rec_large_red_star_ij_not_used_sp_not_star}~(Part~4/6).}
  \end{figure}

  \case{2.3} $B[y,j]$ is not a star and $y-1$ is not isolated in
  $B[i,y-1]$. We distinguish three cases.

  \case{2.3.1} $B[i,x]$ is not a star and $y$ is not isolated in
  $B[i,y]$. By 1SR at $y-1$, the edge $\{y-1,y\}$ is not used. Let $w$
  be the rightmost neighbor of $y$ on $[i,y-1]$. Then $x\leq w$ and
  $B[w,y]$ is a tree on at least three vertices.

  If $B[w,y]$ is a central-star, then its root and center is at $w$. Use
  the leaf-isolation-shuffle on $B[w,y]$ to place a leaf at $y-1$ and
  its parent at $y$. By Proposition~\ref{prop:leafshuffle}, if $B[w,y]$
  is not a central-star, this places the root of $B[w,y]$ at $w$ and
  preserves the 1SR at $y$. If $B[w,y]$ is a central-star, then let $z$
  be the largest index such that $B[w,z]$ is a central-star. Since
  $B[i,x]$ is not a star, we have $x<w$ and since $\{i,x\}\in\EB$ we
  have $z\leq j-1$. The leaf-isolation-shuffle places the root of
  $B[w,z]$ at $y$. Note that all vertices in $[y+1,z]$ are now also
  adjacent to $y$.

  \case{2.3.1.1} $B[w,y]$ is not a central-star or $B[w,y]$ is a
  central-star but $z<j-1$. In the latter case, the edge $\{w,j\}$ is
  not used, as this would imply that $\{w,z+1\}$ is used by LSFR at $w$,
  contradicting the choice of $z$. Embed $q$ onto $y-1$ and the children
  of $q$ onto $[y-2,i]$. This works because $y-1$ is adjacent only to
  $y$ in $B$. Embed $S^+$ recursively onto $[j,y]$. See
  \figurename~\ref{fig:large_red_ij_not_used_dc_not_isolated_1}. Since
  the root of $B[x,j]$ is at $x$, any edge-conflicts must be caused by
  edges to $y-1$ (which is where we embedded $q$). However, only $y$ is
  adjacent to $y-1$ in $B$ and $\treeatt{[y,j]}{j}\neq B[y,j]$ by 1SR on
  $y$ or by $z<j-1$. Hence, there is no conflict for embedding $S^+$
  onto $[j,y]$.

  \case{2.3.1.2} $B[w,y]$ is a central-star with $z=j-1$. Then $B[w,j]$ is
  a dangling star centered at $w$. Since $w<y$, we can proceed as in
  Case~2.2.2.2 (the argument still works for the larger star we have in
  this case).

  \case{2.3.2} $B[i,x]$ is not a star and $y$ is isolated in $B[i,y]$.
  Since $\{x,j\}\in\EB$, it follows that $y$ is not isolated in
  $B[y,j]$. We first try the following. Embed $r$ onto $i$, $q$ onto
  $y$, the children of $q$ onto $[y-1,i+1]$, and $S$ recursively onto
  $[j,y+1]$. See
  \figurename~\ref{fig:large_red_ij_not_used_dc_not_isolated_2}. The
  embedding of $Q$ works because $y$ is isolated in $B[i,y]$. The
  embedding of $S$ fails if (1) $S$ is a star. In addition, the
  embedding could fail if $B[y+1,j]$ is a star or if there is a conflict
  for embedding $S$ onto $[j,y+1]$, in which case $\treeatt{[y+1,j]}{j}$
  is a central-star. We cover these cases with (2) $B[y+1,j]$ is a
  dangling star and (3) $[j,y+1]$ is in conflict for embedding $S$.

  \case{2.3.2.1} $S$ is a star. Since $S^+$ is not a star, $S$ is a
  dangling star centered at $s'$. Let $z$ be such that
  $\treeatt{[y,j]}{y}=B[y,z]$. Suppose first that $B[y,z]$ is not a
  central-star. Use a leaf-isolation shuffle on $B[y,z]$ to put a leaf
  at $y+1$, its parent at $y$, and the root at $z$. This works by
  Proposition~\ref{prop:leafshuffle}. Embed $r$ onto $i$, $q$ onto $y$,
  and the children of $q$ onto $[y-1,i+1]$. This works so far, since the
  leaf-isolation shuffle preserves the 1SR at $y$. Embed $s$ onto $j$,
  $s'$ onto $y+1$, and the children of $s'$ onto $[y+2,j-1]$. This works
  because $\treeat{i}\neq\treeat{j}$ and because $y+1$ is isolated in
  $B[y+1,j]$. See
  \figurename~\ref{fig:large_red_ij_not_used_dc_not_isolated_3}.

  \begin{figure}
    \centering\hfil%
    \subfloat[Case~2.3.2]{\includegraphics{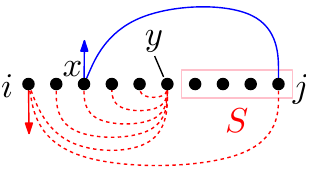}\label{fig:large_red_ij_not_used_dc_not_isolated_2}}\hfil%
    \subfloat[Case~2.3.2.1]{\includegraphics{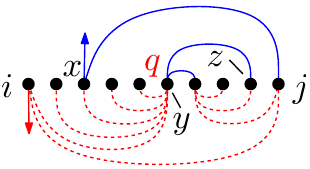}\label{fig:large_red_ij_not_used_dc_not_isolated_3}}\hfil%
    \subfloat[Case~2.3.2.1]{\includegraphics{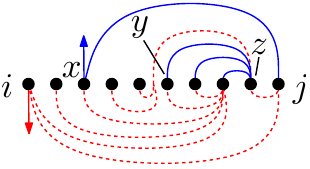}\label{fig:large_red_ij_not_used_dc_not_isolated_4}}\hfil%
    \subfloat[Case~2.3.2.2]{\includegraphics{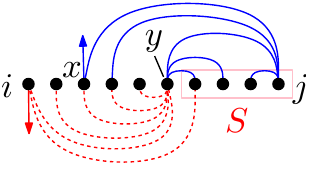}\label{fig:large_red_ij_not_used_dc_not_isolated_5}}\hfil%
    \caption{The case analysis in the proof of
      Proposition~\ref{prop:rec_large_red_star_ij_not_used_sp_not_star}~(Part~5/6).}
  \end{figure}

  Otherwise, $B[y,z]$ is a central-star. Then it must be rooted and
  centered at $z$. By the assumption of Case~2.3, we must have $z<j$.
  Embed $r$ onto $i$, $s$ onto $j$, and $s'$ onto $z$. This works so far
  since $\treeat{i}\neq\treeat{j}$ and
  $\treeatt{[y,j]}{y}\neq\treeatt{[y,j]}{j}$. Embed a child of $s'$ on
  every vertex in $[z+1,j]$. Exactly $|[y,z-1]|$ children of $s'$ remain
  to be embedded. Since $y-1$ is not isolated in $B[i,y-1]$ (assumption
  of Case~2.3), $y-1$ and $z$ must have some common parent $p$ with
  $x\leq p\leq y-2$. By LSFR at $p$, we have
  $|\tr(y-1)|\geq|\tr(z)|>|[y,z-1]|$, and so $\tr(y-1)$ is large enough to
  accomodate all remaining children of $s'$. This is true even if $x=p$
  (for which we modified the order of the subtrees), since $z$ is not
  the last subtree. Thus, embed the remaining children of $s'$ onto
  $[y-1,y-|[y,z-1]|]$. Embed $q$ onto the leaf $z-1$ of $z$ and the
  children of $q$ onto the remainder. See
  \figurename~\ref{fig:large_red_ij_not_used_dc_not_isolated_4}.

  \case{2.3.2.2} $B[y+1,j]$ is a dangling star. Since $y$ is not
  isolated in $B[y,j]$, we must have $\{y,j\}\in\EB$. Simultaneously
  shift $B[y+1,j-1]$ to $[y,j-2]$ and $y$ to $j-1$. Embed $r$ onto $i$,
  $q$ onto $y$, the children of $q$ onto $[y-1,i+1]$, and $S$
  recursively onto $[y+1,j]$. Since $y+1$ is isolated in $B[y+1,j]$, the
  recursive embedding of $S$ always works. See
  \figurename~\ref{fig:large_red_ij_not_used_dc_not_isolated_5}.

  \case{2.3.2.3} $[j,y+1]$ is in conflict for embedding $S$. Let
  $w$ be such that $\treeatt{[y+1,j]}{j}=B[w,j]$. Then $B[w,j]$ is a
  central-star rooted at $j$. If $w=y+1$, then $y$ must be connected to
  $j$ and so $B[y,j]$ is a star. This contradicts our assumption of
  Case~2.3 and hence $w\geq y+2$. The root $j$ of $B[w,j]$ cannot be in
  edge-conflict with $s$ because $\treeat{i}\neq\treeat{j}$. Thus, it is
  in degree-conflict and we have $\deg_B(j)+\deg_S(s)\geq|[y,j]|$.
  Recall from the start of Case~2 that the root of $B[x,j]$ has degree
  at least $|S^+|$. Since $B[w,j]$ is not an isolated vertex, all other
  subtrees of $x$ must have size at least two. Thus,
  $|[x+1,w-1]|\geq2(|S^+|-1)\geq|S|$. Embed $r$ onto $i$. Use a
  blue-star embedding to embed $S$ onto $[j,x+1]$. Note that
  $\treeatt{[x+1,j]}{j}=B[w,j]$ and that \ref{gg:dc} is satisfied by the
  discussion above. Embed $q$ onto $y$ and the children of $q$ onto
  $[y-1,i+1]$ to complete the embedding.

  \case{2.3.3} $B[x,j]$ is a star. Recall that $B[x,j]$ is rooted at
  $x$. Since $\deg_B(j)\geq|S^+|\geq4>1$, $B[x,j]$ is a central-star.
  Then the rearrangement of $B[x,j]$ at the start of Case~2 did not
  change anything, and hence $B$ satisfies the invariants. We replay the
  case analysis, starting from the very start of this proof, but now we
  embed on $[i',j']:=[j,i]$ (i.e. we embed from the other side). Note
  that $[i',j']$ may not satisfy the peace invariant, but it satisfies
  the other invariants. Consider the initial embedding in the proof,
  which performs a red-star embedding of $Q$ from $j'$ and then
  embed $S^+$ on the left of $[i',j']$. The embedding of $S^+$ always
  works: $B[x,j]$ is a star of size larger than $|S^+|$ which appears on
  the left of the interval $[i',j']$, and hence the first $|S^+|$
  elements of $B[i',j']$ form an independent set. Thus, if the initial
  embedding fails, we must land in Case~2. Since we have not yet used
  the peace invariant in Case~2 so far, we can simply execute the case
  analysis of Case~2 until we get an embedding or we arrive at this case
  (Case~2.3.3).

  It remains to consider the event that the embedding procedure also
  reaches this case (Case~2.3.3) for embedding $R$ onto $[j',i']$. Refer
  to Figure~\ref{fig:large_red_ij_not_used_dc_two_stars}. Then
  $\treeat{i}$ and $\treeat{j}$ are both central-stars of size larger
  than $|S^+|$. Flip $\treeat{i}$ if necessary to put its root at $i$
  and flip $\treeat{j}$ if necessary to put its root at $j$. Let $x'$ be
  such that $B[i,x']=\treeat{i}$. By the peace invariant for embedding
  $R$ on $[i,j]$, the root of $\treeat{i}$ at $i$ is not in
  edge-conflict with $r$. Embed $r$ at $i$ and $q$ at $x$, drawing the
  edge $\{r,q\}$ as a biarc that is in the upper halfplane near $r$ and
  crosses the spine between $x'$ and $x'+1$. Embed a child of $q$ on
  every vertex in $[x'+1,x-1]\cup[x+1,j-1]$. Using that
  $|\treeat{i}|\geq|S^+|+1$, this works because
  $\deg_Q(q)=|I|-|S^+|-1\geq|I|-|\treeat{i}|=|[x'+1,j]|>|[x'+1,x-1]\cup[x+1,j-1]|$.
  Embed $s$ onto $j$. The remaining blue vertices in $[i+1,x']$ form an
  independent set on which we can easily embed the remaining children of
  s $q$ and $S$ explicitly.

  \begin{figure}
    \centering%
    \subfloat[Case~2.3.3]{\includegraphics{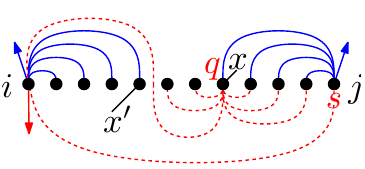}\label{fig:large_red_ij_not_used_dc_two_stars}}%
    \caption{The case analysis in the proof of
      Proposition~\ref{prop:rec_large_red_star_ij_not_used_sp_not_star}~(Part~6/6).}
  \end{figure}
\end{proof}

\begin{proposition}\label{prop:rec_large_red_star_ij_not_used_sp_star}
  If $R^-$ and $S^+$ are both stars and $\{i,j\}\not\in\EB$,
  then $R$ and $B$ admit an ordered plane packing onto $[i,j]$.
\end{proposition}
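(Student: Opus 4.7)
By Lemma~\ref{lem:rec_unary} and Lemma~\ref{lem:rec_singleton} we may assume $\deg_R(r)\geq 2$ and $|S|\geq 2$. The hypothesis that $R^-$ is a (necessarily dangling) star forces $r$ to have exactly one neighbor $q$ in $R^-$, and symmetrically $S^+$ being a star with $\deg_{S^+}(r)=1$ forces $s$ to be the center of $S^+$. Hence $R$ has a rigid \emph{double-star} structure: $r$ has exactly two children $s$ and $q$, both $S=\tr(s)$ and $Q=\tr(q)$ are central-stars, and $2\leq|S|\leq|Q|$.

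The plan closely mirrors Proposition~\ref{prop:rec_large_red_star_ij_not_used_sp_not_star}, with one crucial difference: since $S^+$ is itself a star we cannot recurse on it and must always embed it explicitly. The default strategy is to flip $\treeat{j}$ if necessary to move its root away from $j$, place $q$ at $j$, and use the red-star embedding (Proposition~\ref{prop:stargreedygrab}) to accommodate the leaves of $q$ on the rightmost $\deg_Q(q)$ non-neighbors of $j$ in $[i+1,j-1]$. This reserves a prefix $[i,\,i+|S^+|-1]$ for the dangling star $S^+$; we then explicitly put $r$ at $i$ (safe by \ref{inv:placement}), $s$ at a locally isolated vertex of this prefix, and the leaves of $s$ on the remainder.

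The default fails when blue structure near $j$ forces a conflict for $Q$, or when blue structure near $i$ leaves no suitable position for $s$. As in Proposition~\ref{prop:rec_large_red_star_ij_not_used_sp_not_star}, the key obstructions are large central-stars $\treeat{i}$ or $\treeat{j}$ that either block embedding of $S^+$ on a prefix or cause a degree-conflict for placing $q$. I would address these by flipping the offending blue star, applying a leaf-isolation shuffle (Proposition~\ref{prop:leafshuffle}) to expose a locally isolated leaf at a desired position, or invoking a blue-star embedding (Proposition~\ref{p:greedygrab}) to absorb the large blue star inside $Q$. Whenever one side is blocked, I would also try swapping the roles of $S$ and $Q$ and embedding from the opposite side, exploiting that $R$'s double-star structure is symmetric in $s$ and $q$ modulo the size comparison.

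The main obstacle is the symmetric situation where both $\treeat{i}$ and $\treeat{j}$ are large central-stars causing conflicts at their respective ends. Since $\{i,j\}\notin\EB$, these stars lie in distinct blue components and can be rearranged independently. Here I would place $r$ at a leaf of $\treeat{i}$ that is not in edge-conflict (whose existence follows from \ref{inv:starconflict} together with $\deg_R(r)=2$, which limits how many leaves of the blue star can be in edge-conflict), map $s$ and $q$ onto the centers of the two blue stars, distribute the leaves of $s$ and $q$ among the leaves of the two blue stars (enough room is available because $|S|+|Q|=|R|-1\leq|I|-1$), and route the red edges $\{r,s\}$ and $\{r,q\}$ as biarcs crossing the spine so that they avoid all blue star-edges.
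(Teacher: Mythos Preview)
Your proposal is a sketch, not a proof, and several of the sketched steps do not go through.

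First, your default is inverted relative to the paper's. You place the larger red star $Q$ at $j$ via a red-star embedding and reserve a prefix for $S^+$. The paper instead re-embeds $\treeat{j}$ with its root \emph{at} $j$ using SSFR (not flipped away from $j$), places $s$ at $j$, and then places $q$ on the rightmost child $h'$ of $j$. This particular layout is what drives the paper's clean trichotomy: (1) a degree obstruction $\deg_S(s)+\deg_B(j)\ge|I|-1$, (2) $\deg_B(j)=0$, or (3) $\deg_S(s)$ is too small to cover $[h'+1,j-1]$. Each branch then needs real work---Observation~\ref{obs:two_stars_sdc} to bound degrees, the partition argument of Observation~\ref{obs:unary_2dc_partition} in Case~2.2.2, and specific SSFR rearrangements in Case~3. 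None of this is captured by ``I would address these by flipping / leaf-isolation shuffle / blue-star embedding.''

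Second, your handling of the symmetric two-blue-stars case is flawed. If you map $s$ to the center $c$ of one blue central-star, then every leaf of that star is adjacent to $c$ in $B$ and cannot receive a red leaf of $s$. So ``distribute the leaves of $s$ and $q$ among the leaves of the two blue stars'' forces all leaves of $s$ into the \emph{other} blue star and all leaves of $q$ into the first; nothing guarantees the sizes match. The paper's Case~2.3.3 avoids this trap: it puts $r$ at the center of $\treeat{i}$ (safe by the peace invariant), $q$ at the center of $\treeat{j}$ with its children filling the gap and overflow, and $s$ at the leaf $j$, then finishes on the independent prefix $[i+1,x']$.

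Third, your appeal to ``swapping the roles of $S$ and $Q$'' is not free: $|Q|\ge|S|$ is used asymmetrically in the degree bounds (e.g.\ \ref{obs:two_stars_sdc_nodc}), and the placement invariant is only guaranteed at $i$, not at $j$.
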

\begin{proof}
  Let $q$ be the child of $r$ in $R^-$ and let $Q=\tr(q)$. Then $Q$ is a
  star centered at $q$ and $S$ is a star centered at $s$. The case
  $|S|=1$ is handled by Lemma~\ref{lem:rec_singleton}. In the remainder
  we assume $|S|\geq 2$.  We deal with two red stars here, so we
  frequently use the red-star embedding. Since all embeddings in this
  proof are explicit (we cannot recursively embed stars, after all), we
  only perform Step~1 (Embed) of the red-star embedding for ease of
  explanation.

  Let $h$ such that $B[h,j]=\treeat{j}$. Re-embed $B[h,j]$ by putting
  its root at $j$ and embedding its subtrees according to the
  smaller-subtree-first rule (SSFR) and the 1SR. By assumption,
  $\{i,j\}\not\in\EB$ and hence these modifications do not touch
  $\treeat{i}$. Our general plan is the following. Embed $r$ at $i$.
  This works by the placement invariant. Perform a red-star
  embedding to embed $s$ onto $j$ and the children of $s$ onto the
  rightmost $\deg_S(s)$ non-neighbors of $j$ in $[i+1,j-1]$. Since
  $\{i,j\}\not\in\EB$, $j$ is not in edge-conflict with $s$ and
  hence~\ref{sgg:ec} holds. Hence, this works unless~\ref{sgg:dc} fails,
  i.e., unless (1) $\deg_S(s)+\deg_B(j)\geq |[i+1,j-1]|+1=|I|-1$. We
  embed $q$ onto the rightmost child $h'$ of $j$. This works unless $j$
  has no children, i.e., unless (2) $\deg_B(j)=0$. We finally embed the
  children of $q$ onto the remaining vertices. See
  \figurename~\ref{fig:two_red_no_ij_default}. Since the
  red-star embedding ensures that all remaining vertices are
  visible from below, this is possible unless $h'$ has an edge to a
  remaining vertex. Note that all edges of $h'$ are in $B[h',j]$, and we
  embedded $s$ onto $j$. Hence, it suffices to handle the case where the
  red-star embedding did not embed a child of $s$ onto every
  vertex of $B[h'+1,j-1]$, i.e., the case that (3)
  $\deg_S(s)\leq|[h'+1,j-1]|-1$. We deal with these remaining cases
  below. We first state a useful observation.

  \begin{figure}[b]
    \centering\hfil%
    \subfloat[Default]{\includegraphics{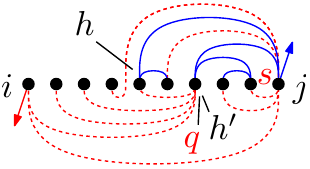}\label{fig:two_red_no_ij_default}}\hfil%
    \subfloat[Case~1]{\includegraphics{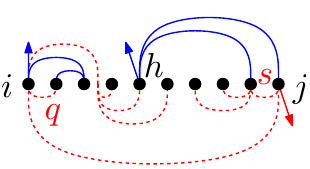}\label{fig:two_red_no_ij_dc}}\hfil%
    \subfloat[Case~2]{\includegraphics{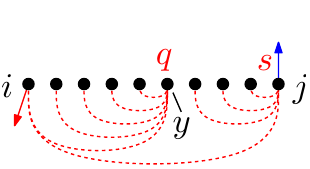}\label{fig:two_red_no_ij_deg0_default}}\\
    \subfloat[Case~2.1]{\includegraphics{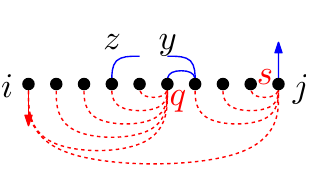}\label{fig:two_red_no_ij_deg0_y1_not_iso_1}}\hfil%
    \subfloat[Case~2.1]{\includegraphics{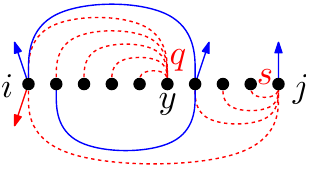}\label{fig:two_red_no_ij_deg0_y1_not_iso_2}}\hfil%
    \label{fig:two_red_no_ij_1}
    \caption{The case analysis in the proof of
      Proposition~\ref{prop:rec_large_red_star_ij_not_used_sp_star}
      (Part~1/4).}
  \end{figure}

  \begin{observation}\label{obs:two_stars_sdc}
    Let $b,b'\in B$ be the roots of two different trees of the
    forest $B$. Suppose that $\deg_B(b)+\deg_S(s)\geq |I|-1$. Then
    \begin{enumerate}[label={(P\arabic*)}]\setlength{\itemindent}{3\labelsep}
    \item\label{obs:two_stars_sdc_large} $\deg_B(b)\geq(|I|+1)/2$;
    \item\label{obs:two_stars_sdc_leaves} at least three children of $b$
      are leaves; and
    \item\label{obs:two_stars_sdc_nodc} $\deg_B(b')+\deg_S(s)\leq
      \deg_B(b')+\deg_Q(q)\leq |I|-3$
    \end{enumerate}
  \end{observation}
  \begin{proof}
    Since $r$ has two subtrees and $S$ is the smaller one, we have
    $|S|\leq (|I|-1)/2$ and hence $\deg_S(s)=|S|-1\leq (|I|-3)/2$. By
    the assumption, we have $\deg_B(b)\geq|I|-1-\deg_S(s)\geq
    |I|-1-(|I|-3)/2=(|I|+1)/2$, as claimed
    in~\ref{obs:two_stars_sdc_large}. Let $\lambda$ be the number of
    leaf subtrees of $b$. The other subtrees of $b$ have size at least
    two and the total size of $\treeat{b}$ is at most $|I|-1$, since $b$
    and $b'$ are the roots of different trees. Hence,
    $1+\lambda+2(\deg_B(b)-\lambda)\leq|\treeat{b}|\leq |I|-1$, and so
    $\lambda\geq 1+2\deg_B(b)-(|I|-1)=2-|I|+2\deg_B(b)$. Then,
    by~\ref{obs:two_stars_sdc_large}, $\lambda\geq 2-|I|+(|I|+1)=3$,
    which proves~\ref{obs:two_stars_sdc_leaves}.

    Since $r$ has two subtrees and $Q$ is the larger one, we have
    $|Q|\geq (|I|-1)/2$ and hence $\deg_Q(q)=|Q|-1\geq (|I|-3)/2$. The
    first inequality of~\ref{obs:two_stars_sdc_nodc} follows from
    $|S|\leq|Q|$. Suppose towards a contradiction that the second
    inequality of~\ref{obs:two_stars_sdc_nodc} is false, that is,
    $\deg_B(b')+\deg_Q(q)\geq |I|-2$. Adding this equation to the
    assumption, we obtain $\deg_B(b)+\deg_B(b')+\deg_S(s)+\deg_Q(q)\geq
    2|I|-3$. Since $\deg_S(s)+\deg_Q(q)=|I|-3$, it follows that
    $\deg_B(b)+\deg_B(b')\geq|I|$, which contradicts $b\neq b'$.
    Claim~\ref{obs:two_stars_sdc_nodc} follows.
  \end{proof}

  \case{1} $\deg_S(s)+\deg_B(j)\geq |I|-1$. Then
  Observation~\ref{obs:two_stars_sdc} applies with $b:=j$. Re-embed
  $B[h,j]$ by placing its root at $h$ and embedding its subtrees with
  LSFR and 1SR. Embed $r$ onto $j$ and $s$ onto $j-1$. This works
  because $j$ and $j-1$ are leaves by~\ref{obs:two_stars_sdc_leaves} and
  LSFR. If necessary, flip $\treeat{i}$ to put its root at $i$. Use
  the red-star embedding to embed $q$ onto $i$ and the children
  of $q$ onto the leftmost $\deg_Q(q)$ non-neighbors of $i$ in
  $[i+1,j-2]$. \ref{sgg:ec} holds since $\{i,j\}\not\in\EB$.
  \ref{sgg:dc} holds since $|[i+1,j-2]|=|I|-3$ and
  by~\ref{obs:two_stars_sdc_nodc} with $b:=j$ and $b':=i$. Let $x$ be
  the largest index on which a child of $q$ was embedded. Then
  $|[i,x]|\geq 1+\deg_Q(q)$. Since $\deg_B(j)\geq(|I|+1)/2$
  by~\ref{obs:two_stars_sdc_large} we have $|[h,j]|\geq(|I|+3)/2$. Then
  $|[i,x]|+|[h,j]|\geq 1+(|I|-3)/2+(|I|+3)/2=|I|+1$. It follows that
  $x\geq h$, and so the red-star embedding embedded a child of
  $q$ onto $h$. Since this is the only vertex in $B$ adjacent to
  $j-1$ (which is where we embedded $s$), we can embed the children of
  $s$ on the remainder. See \figurename~\ref{fig:two_red_no_ij_dc}.

  \case{2} $\deg_B(j)=0$. Let $y$ such that $|[i,y]|=1+|Q|$. If $y$ is
  isolated in $B[i,y]$ then embed $r$ onto $i$, $q$ onto $y$, the
  children of $q$ onto $[y-1,i+1]$, $s$ onto $j$, and the children of
  $s$ onto $[j-1,y+1]$. See
  \figurename~\ref{fig:two_red_no_ij_deg0_default}. This works due to
  the placement invariant and the fact that $y$ is isolated in $B[i,y]$
  and $j$ is isolated in $B$. Otherwise, $y$ is not isolated in
  $B[i,y]$. We distinguish two cases.

  \case{2.1} $y+1$ is not isolated in $B[i,y+1]$. Let $z$ be the
  rightmost neighbor of $y+1$ in $B[i,y+1]$. We have $i\leq z\leq y$. If
  $i<z$, then perform a leaf-isolation-shuffle on $B[z,y+1]$ to put a
  leaf at $y$ and its parent at $y+1$. Embed $r$ onto $i$. Since $i<z$,
  the blue vertex at $i$ was not changed and hence this works by the
  placement invariant. Embed $q$ onto $y$ and the children of $q$ onto
  $[i+1,y-1]$. This works since $y$ is adjacent only to $y+1$ in
  $B$. Finally, embed $s$ onto $j$ and the children of $s$ onto
  $[j-1,y+1]$. This works because $j$ is isolated in $B$. See
  \figurename~\ref{fig:two_red_no_ij_deg0_y1_not_iso_1}.

  Otherwise, $i=z$. Since $z$ was chosen as the rightmost vertex of
  $y+1$ in $B[i,y+1]$, we have $\deg_{B[i,y+1]}(y+1)=1$ and hence
  $\{i,y\}\in\EB$. Flip $B[i,y+1]$. After flipping,
  $\{i,i+1\}\not\in\EB$. Embed $r$ onto $i+1$ and and $q$ onto $i$. Flip
  $B[i+1,y+1]$ into the lower halfplane and embed the children of $q$
  onto $[i+2,y]$. This works because after flipping, $i$ is adjacent
  only to $y+1$ in $B[i,y+1]$. Finally, embed $s$ onto $j$ and the
  children of $s$ onto $[j-1,y+1]$. This works because $j$ is isolated
  in $B$. See
  \figurename~\ref{fig:two_red_no_ij_deg0_y1_not_iso_2}.

  \case{2.2} $y+1$ is isolated in $B[i,y+1]$. In other words, all
  (possibly zero) edges incident to $y+1$ leave $y+1$ to the right. We
  distinguish two cases.

  \case{2.2.1} $\treeat{i}$ is a central-star. If
  $\treeat{i}=\treeat{y}$ then use Lemma~\ref{lem:rec_large_blue_star}
  to compute an ordered plane packing. Otherwise
  $\treeat{i}\neq\treeat{y}$. Flip $\treeat{i}$ if necessary to put
  its root at $i$.

  If $|\treeat{i}|=1$ then flip $\treeat{y}$ if necessary to put the
  root away from $y$ (recall that $y$ is not isolated in $B[i,y]$).
  Embed $r$ onto $y$, $q$ onto $i$, the children of $q$ onto
  $[i+1,y-1]$, $s$ onto $j$ and the children of $s$ onto $[j-1,y+1]$.
  See \figurename~\ref{fig:two_red_no_ij_deg0_y1_iso_star_1}. This works
  because $i$ and $j$ are both isolated in $B$.

  \begin{figure}
    \centering\hfil%
    \subfloat[Case~2.2.1]{\includegraphics{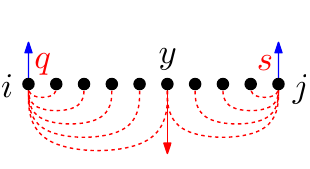}\label{fig:two_red_no_ij_deg0_y1_iso_star_1}}\hfil%
    \subfloat[Case~2.2.1]{\includegraphics{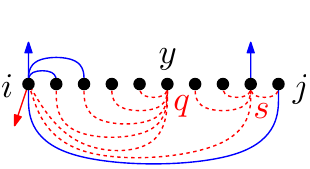}\label{fig:two_red_no_ij_deg0_y1_iso_star_2}}\hfil%
    \label{fig:two_red_no_ij_2}
    \caption{The case analysis in the proof of
      Proposition~\ref{prop:rec_large_red_star_ij_not_used_sp_star}
      (Part~2/4).}
  \end{figure}

  If $|\treeat{i}|\geq 2$, then we change the blue embedding as follows.
  Simultaneously shift $B[i+2,j]$ to $[i+1,j-1]$ and $i+1$ to $j$. The
  new edge $\{i,j\}$ is drawn in the lower halfplane. Afterwards, $y$ is
  isolated in $B[i,y]$ and $j-1$ is isolated in $B$. Embed $r$ onto
  $i$. By the peace invariant, $i$ is not in edge-conflict with
  $r$. Embed $q$ onto $y$ and the children of $q$ onto $[y-1,i+1]$.
  Embed $s$ onto $j-1$ and the children of $s$ onto $j$ and $[j-2,y+1]$.
  See \figurename~\ref{fig:two_red_no_ij_deg0_y1_iso_star_2}.

  \begin{figure}
    \centering\hfil%
    \subfloat[Case~2.2.2]{\includegraphics{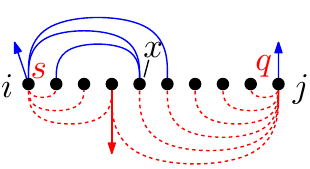}\label{fig:two_red_no_ij_deg0_y1_iso_no_star_1}}\hfil%
    \subfloat[Case~2.2.2]{\includegraphics{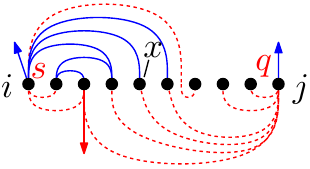}\label{fig:two_red_no_ij_deg0_y1_iso_no_star_2}}\hfil%
    \subfloat[Case~2.2.2]{\includegraphics{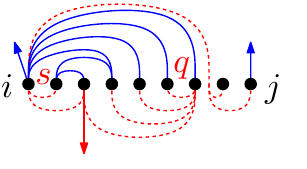}\label{fig:two_red_no_ij_deg0_y1_iso_no_star_3}}\hfil%
    \subfloat[Case~2.2.2]{\includegraphics{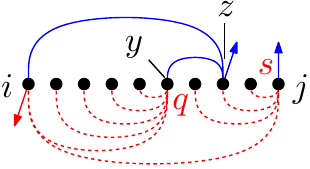}\label{fig:two_red_no_ij_deg0_y1_iso_no_star_4}}\hfil%
    \label{fig:two_red_no_ij_3}
    \caption{The case analysis in the proof of
      Proposition~\ref{prop:rec_large_red_star_ij_not_used_sp_star}
      (Part~3/4).}
  \end{figure}

  \case{2.2.2} $\treeat{i}$ is not a central-star. Then
  $|\treeat{i}|\geq 3$. Flip $\treeat{i}$ if necessary to put its root
  at $i$. Let $z$ such that $B[i,z]=\treeat{i}$ and let $x$ be the
  leftmost neighbor of $i$. Then $i<i+2\leq x\leq z$.

  If $\deg_S(s)\leq |[i+1,x-2]$ then embed $r$ onto $i+1+\deg_S(s)$, $s$
  onto $i$, the children of $s$ onto $[i+1,i+\deg_S(s)]$, $q$ onto $j$,
  and the children of $q$ onto $[j-1,i+2+\deg_S(s)]$. See
  \figurename~\ref{fig:two_red_no_ij_deg0_y1_iso_no_star_1}.

  Otherwise, $\deg_S(s)\geq |[i+1,x-2]|+1$. Embed $r$ onto $x-1$ and $s$
  onto $i$. Embed children of $s$ onto $[i+1,x-2]$. Use the
  red-star embedding to embed the remaining children of $s$ onto
  the $\deg_S(s)-|[i+1,x-2]|$ leftmost non-neighbors of $i$ in
  $[x+1,j-1]$. If~\ref{sgg:dc} is not violated, we complete the
  embedding by placing $q$ at $j$ and embedding the children of $q$ on
  the remainder. See
  \figurename~\ref{fig:two_red_no_ij_deg0_y1_iso_no_star_2}.
  If~\ref{sgg:dc} is violated, then
  $\deg_S(s)-|[i+1,x-2]|+\deg_{B[x+1,j]}(i)>|[x+1,j-1]|$. Equivalently,
  $\deg_S(s)+\deg_B(i)\geq |I|-2$. It follows that
  $\deg_B(i)\geq|I|-2-\deg_S(s)\geq|I|-2-(|I|-3)/2=(|I|-1)/2$. Since
  $|Q|\geq|S|\geq2$ we have $|I|\geq 5$. Hence $\deg_B(i)\geq
  (5-1)/2=2$. Instead of performing the red-star embedding on
  $[x+1,j-1]$, we now perform it on $[x+1,j]$. If~\ref{sgg:dc} is not
  violated, then since our first red-star embedding failed, the
  remaining vertices are exactly the neighbors of $i$ in $B$, which
  form an independent set. Complete the embedding by placing $q$ at the
  rightmost neighbor of $i$ (which is not adjacent to $r$) and the
  children of $q$ on the remainder. See
  \figurename~\ref{fig:two_red_no_ij_deg0_y1_iso_no_star_3}.

  It remains to consider the case where~\ref{sgg:dc} is again violated.
  In this case we have $\deg_S(s)+\deg_B(i)\geq |I|-1$. Due to the
  degree-conflict and the fact that $\deg_S(s)=|[y+2,j]|$ we have $z\geq
  y+2$. Flip $B[i,z]$ to put its root at $z$.
  Observation~\ref{obs:two_stars_sdc} applies with $b:=z$. By LSFR
  and~\ref{obs:two_stars_sdc_leaves}, $i$ is a leaf of $B[i,z]$. We want
  to apply Observation~\ref{obs:unary_2dc_partition} on $B[i+1,z]$. We
  first argue that the preconditions are satisfied. Let
  $n:=|B[i+1,z]|-1$ and $t:=\deg_B(z)-1$. Then $n\leq |[i+1,j-1]|-1\leq
  |I|-3$ and by~\ref{obs:two_stars_sdc_large} $t\geq
  (|I|+1)/2-1=(|I|-3)/2+1\geq n/2+1$, as required. Apply
  Observation~\ref{obs:unary_2dc_partition} with $k:=|[i+1,y-1]|$ and
  rearrange $B[i+1,z]$ to put the corresponding subtrees at $[i+1,y-1]$.
  Afterwards, all edges adjacent to $y$ leave $y$ to the right. Embed
  $r$ onto $i$, $q$ onto $y$, the children of $q$ onto $[y-1,i+1]$, $s$
  onto $j$, and the children of $s$ onto $[j-1,y+1]$. See
  \figurename~\ref{fig:two_red_no_ij_deg0_y1_iso_no_star_4}. This works
  since $y$ is isolated in $B[i,y]$ and $j$ is isolated in $B$.

  \case{3} $\deg_S(s)\leq|[h'+1,j-1]|-1$. Recall that we re-embedded
  $B[h,j]$ by placing the root at $j$ and embedding the subtrees of $j$
  according to the SSFR and 1SR. We defined $h'$ as the rightmost child
  of $j$. We have $|\tr(h')|=|[h',j-1]|\geq 2+\deg_S(s)\geq 3$. Hence, all
  subtrees of $j$ have size at least $1+|S|\geq 3$. We distinguish two
  cases.

  \case{3.1} $\deg_B(j)\geq 2$.

  \case{3.1.1} All subtrees of $j$ are central-stars. Then we flip
  $B[h,j]$, placing its root at $h$. Embed $r$ onto $i$, $s$ onto $j$,
  and the children of $s$ onto the rightmost $\deg_S(s)$ non-neighbors
  of $j$ in $[h+1,j-1]$. Each edge is drawn with a biarc that is in the
  upper halfplane close to $j$. This works because $\deg_B(h)\geq2$ (by
  our assumption and after flipping $B[h,j]$) and since all subtrees of
  $h$ have size at least $1+|S|$. Since $j-1$ is adjacent only to $j$
  (which is where we embedded $s$), we can safely place $q$ on $j-1$ and
  the children of $q$ on the remainder. See
  \figurename~\ref{fig:two_red_no_ij_smalls_largej_1}.

  \case{3.1.2} Some subtree $Z$ of $j$ is not a central-star. Re-embed
  $B[h,j]$, putting the root at $j$ and embedding the subtrees of $j$ in
  any order that places $Z$ leftmost. Let $x$ and $y$ with $x<y$ such
  that $Z=B[x,y]$. By Proposition~\ref{prop:leafshuffle} and since $Z$
  is not a central-star, we can use the leaf-isolation shuffle to place
  a leaf of $B[x,y]$ at $y-1$, its parent at $y$, and the root at $x$.
  Embed $r$ onto $i$. This works because of the placement invariant.
  Embed $q$ onto $y-1$ and $s$ onto $j$. This works since $y-1$ is
  incident only to $y$ in $B$ and $\{i,j\}\not\in\EB$. Embed a
  child of $s$ onto $y$, drawing the edge in the upper halfplane. This
  works because $\deg_B(j)\geq 2$ and $Z$ is the leftmost subtree of
  $j$. Embed the remaining children of $s$ onto the rightmost vertices
  of $[i,j-1]$. This works because all subtrees of $j$ have size at
  least $1+|S|$. Finally, note that we already embedded a vertex on the
  only blue vertex incident to $y-1$, and hence we can embed the
  children of $q$ onto the remainder. See
  \figurename~\ref{fig:two_red_no_ij_smalls_largej_2}.

  \begin{figure}
    \centering\hfil%
    \subfloat[Case~3.1.1]{\includegraphics{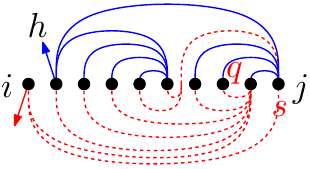}\label{fig:two_red_no_ij_smalls_largej_1}}\hfil%
    \subfloat[Case~3.1.2]{\includegraphics{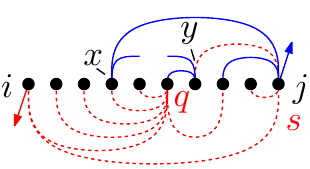}\label{fig:two_red_no_ij_smalls_largej_2}}\hfil%
    \subfloat[Case~3.2]{\includegraphics{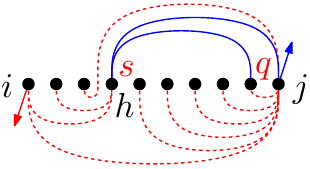}\label{fig:two_red_no_ij_smalls_smallj_1}}\hfil%
    \subfloat[Case~3.2]{\includegraphics{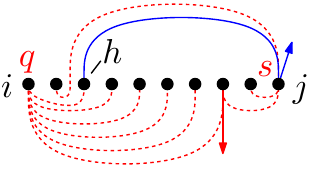}\label{fig:two_red_no_ij_smalls_smallj_2}}\hfil%
    \label{fig:two_red_no_ij_4}
    \caption{The case analysis in the proof of
      Proposition~\ref{prop:rec_large_red_star_ij_not_used_sp_star}
      (Part~4/4).}
  \end{figure}

  \case{3.2} $\deg_B(j)=1$. We first try the following. Embed $r$ onto
  $i$. Use the red-star embedding to embed $q$ onto $j$ and the
  children of $q$ onto the rightmost $\deg_Q(q)$ non-neighbors of $j$ in
  $[i+1,j-1]$. \ref{sgg:ec} holds since $\{i,j\}\not\in\EB$. Since
  $|S|\geq 2$ we have $\deg_Q(q)\leq|I|-4$ and hence
  $\deg_Q(q)+\deg_B(j)\leq |I|-3<|[i+1,j-1]|$. This
  establishes~\ref{sgg:dc}. Embed $s$ onto $h$ and the children of $s$
  onto the remainder. See
  \figurename~\ref{fig:two_red_no_ij_smalls_smallj_1}. This works unless
  some remaining vertex is adjacent to $h$.

  Since all neighbors of $h$ are in $[h+1,j]$, this implies $|Q|\leq
  |[h+1,j]|-1$, or equivalently, $|[h,j]|\geq |Q|+2$. Embed $r$ onto
  $h+\deg_Q(q)$. By $|[h,j]|\geq |Q|+2$, this is not $j$ and hence there
  is no edge-conflict. Embed $q$ onto $i$ and $s$ onto $j$. This works
  because all neighbors of $h+\deg_Q(q)$ (which is where we placed $r$)
  in the blue embedding are in $[h,j-1]$ since $\deg_B(j)=1$. Embed the
  children of $q$ onto $[h,h+\deg_Q(q)-1]$. This works because
  $\treeat{i}\neq\treeat{j}$. Embed the children of $s$ onto
  $[j-1,h+\deg_Q(q)+1]$ and $[h-1,i+1]$ (with biarcs). This works
  because the only neighbor of $j$ is at $h$. See
  \figurename~\ref{fig:two_red_no_ij_smalls_smallj_2}.
\end{proof}

\begin{proposition}\label{prop:rec_large_red_star_ij_used}
  If $R^-$ is a star and $\{i,j\}\in\EB$, then $R$ and $B$
  admit an ordered plane packing onto $[i,j]$.
\end{proposition}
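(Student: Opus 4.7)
The plan is to proceed by case analysis on whether $S^{+}$ is a star (equivalently, whether $S$ is a dangling star) and on the structure of $B$ near the endpoints. By Lemma~\ref{lem:rec_unary} and Lemma~\ref{lem:rec_singleton} we may assume $\deg_R(r)\geq 2$ and $|S|\geq 2$, so $Q=\tr(q)$ is a central-star with $|Q|\geq|S|\geq 2$. Since $\{i,j\}\in\EB$, the forest $B$ is a single tree, and by flipping if necessary we put its root at $j$. The essential new constraint compared with Proposition~\ref{prop:rec_large_red_star_ij_not_used_sp_not_star} and Proposition~\ref{prop:rec_large_red_star_ij_not_used_sp_star} is that the blue edge $\{i,j\}$ forbids us from simultaneously placing $r$ at $i$ and any child of $r$ at $j$, since the corresponding red edge ($\{r,q\}$ or $\{r,s\}$) would duplicate it.

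In the main case where $S^{+}$ is not a star, the default attempt is to place $r$ at $i$ (allowed by \ref{inv:placement}), then explicitly embed $Q$ on a right portion $[x{+}1,j]$ by mapping $q$ to a non-neighbor of $i$ in that portion (e.g.\ the leftmost one) and the leaves of $Q$ to the remaining vertices, and finally recursively embed $S^{+}$ onto the left portion $[i,x]$. Since $Q$ is a central-star, any blue edges inside $[x{+}1,j]$ touching $q$'s position can be avoided by the choice of position, and blue edges among the leaves of $Q$ are harmless. For the recursion on $S^{+}$ to be legal we need the peace and placement invariants on $[i,x]$, which we secure by picking $x$ so that $\treeatt{[i,x]}{x}$ does not raise a conflict; when a conflict does appear, a small reflip of a subtree of $B$ or a blue-star embedding of a problematic component of $B$ should resolve it, exactly in the spirit of Case~2.3 of Lemma~\ref{lem:rec_general}.

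In the case where $S^{+}$ is also a star, $R$ is completely determined: $r$ has two children $q$ and $s$, $q$ is the center of the central-star $Q$, and $s$ is the root of the dangling star $S$ whose unique internal child $s'$ is the center of the remaining leaves. Here the embedding is fully explicit. The idea is to place $r$ at some vertex $v\in[i+1,j-1]$ that is not adjacent in $B$ to any of the candidate positions we wish to use for $q$ and $s'$, and then place $q$, $s$, $s'$ and their leaf children on isolated neighborhoods around them; the room needed is at most $|Q|+|S|+1=|R|$, which is exactly what is available. The existence of two such vertices follows because $B$ is a single tree of size $|R|\geq 5$ and its high-degree structure is restricted by LSFR emanating from $j$.

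The main obstacle, as in the preceding propositions of this section, will be the subcases where $B$ itself looks very star-like near $i$ or $j$. In particular, if $\treeat{i}=B$ is a central-star centered at $i$, or if $j$ is adjacent to almost every vertex, then every candidate placement of $q$ collides either with the blue edge $\{i,j\}$ or with a large number of other blue edges at the star center. In such situations the clean recursive approach fails and one is forced to perform a blue-star embedding of $Q$ from the problematic star of $B$ (ensuring \ref{gg:ec}--\ref{gg:cs} via the size bound $|Q|\geq(|I|-1)/2$), or to perform a leaf-isolation shuffle to free a well-placed isolated vertex, mirroring Cases~2.2 and~2.3 of Proposition~\ref{prop:rec_large_red_star_ij_not_used_sp_not_star}. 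Verifying that the degree-inequalities required by these tools hold under the extra constraint $\{i,j\}\in\EB$ is where the bulk of the technical work will go.
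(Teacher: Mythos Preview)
Your proposal is a sketch rather than a proof, and it misses the central idea the paper uses. Because $\{i,j\}\in\EB$, the blue forest $B$ is necessarily a single tree; the paper exploits this by \emph{exchanging the roles of $R$ and $B$}: it discards the current blue embedding, embeds $R$ with Algorithm~\ref{alg:embed_t1} (so the ``new blue'' tree has a known dangling-star subtree $B^-$ of size $\geq(|I|+1)/2$), and then embeds the old $B$ as the new red tree against this highly structured blue picture. After the swap the default recursion works immediately except for (i) a degree-conflict, dispatched by a short explicit embedding, and (ii) the symmetric situation where the new $R^-$ is also a star, handled by one fully explicit simultaneous drawing. No blue-star embeddings, leaf-isolation shuffles, or the elaborate case analysis you anticipate are needed.

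Your plan, by contrast, tries to adapt the $\{i,j\}\notin\EB$ machinery, and this runs into real trouble. First, there is an inconsistency: you ``place $r$ at $i$'' and then ``recursively embed $S^{+}$ onto $[i,x]$'', but $r\in S^{+}$, so the recursion chooses where $r$ goes---you cannot both fix $r$ at $i$ and hand $S^{+}$ to the recursive procedure. Second, the parameter $x$ is not free; once you decide to embed $Q$ on a suffix, $x=j-|Q|$ is forced, so ``picking $x$ so that $\treeatt{[i,x]}{x}$ does not raise a conflict'' is not available. Third, and most seriously, when $\{i,j\}\in\EB$ the whole of $B$ is one tree, so the safety valves you invoke from the $\{i,j\}\notin\EB$ cases (flipping $\treeat{i}$ or $\treeat{j}$ independently, arguing $\treeat{i}\neq\treeat{j}$, etc.) are unavailable; the very cases you flag as ``the bulk of the technical work'' are exactly where the direct approach breaks down. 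The color-exchange trick is what makes this proposition short.
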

\begin{proof}
  The presence of edge $\{i,j\}\in E(B)$ implies that $B$ is a tree. In
  this case, we discard the initial embedding of $B$. Instead, we embed
  $R$ using Algorithm~\ref{alg:embed_t1}, and then \emph{re-embed} $B$
  using the additional information that $R^-$ is a star.

  To simplify notation, we exchange the roles of $R$ and $B$. Refer to
  \figurename~\ref{fig:color_exchange_1}--\ref{fig:color_exchange_2}.
  That is, we assume that $B$ has been embedded using
  Algorithm~\ref{alg:embed_t1}, its root is at $j$, and it is composed
  of two trees $S_B$ and $B^-$ (corresponding to $S$ and $R^-$,
  respectively): $S_B=B[i,x]$ is a tree of size $|S_B|\geq 2$ rooted at
  $i$, and $B^-=B[x+1,j]$ is a star of size $|B^-|\geq (|I|+1)/2$
  centered at $x+1$ and rooted at $j$. We do not make any assumption
  about $S_B$, apart from that it fulfills
  invariants~\ref{inv:starconflict} and \ref{inv:bluelocal}. It remains
  to embed $R$ onto $[i,j]$. Let $S$ be a smallest subtree of $R$ and
  let $R^-=R\setminus S$.

\begin{figure}[htbp]
  \centering\hfil%
  \subfloat[]{\includegraphics{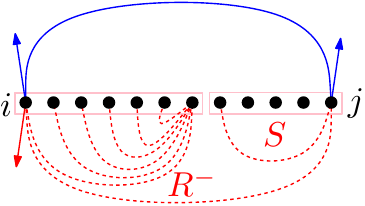}\label{fig:color_exchange_1}}\hfil%
  \subfloat[]{\includegraphics{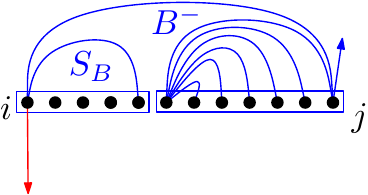}\label{fig:color_exchange_2}}\hfil%
  \subfloat[]{\includegraphics{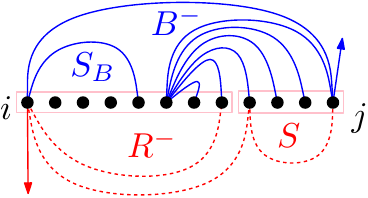}\label{fig:color_exchange_3}}\hfil%
  \caption{When the default embedding of both $R$ and $B$ contains edge
    $\{i,j\}$, we exchange the roles of $R$ and $B$ to simplify
    notation.\label{fig:color_exchange}}
\end{figure}

If $\deg_R(r)=1$, then Lemma~\ref{lem:rec_unary} completes the proof. If
$|S|=1$, then Lemma~\ref{lem:rec_singleton} completes the proof. Hence
we may assume $\deg_R(r)\geq 2$ and $|S|\geq 2$. Since $S$ is a smallest
of two or more subtrees of $r$, we have $|S|\leq (|I|-1)/2$ and
$\deg_{R^-}(r)\leq (|R^-|-1)/2$. Let $z$ be such that $|R^-|=|B[i,z]|$.
Since $|B^-|\geq(|I|-1)/2$, it follows that $z+1$ is a leaf of the
star $B^-$, and $B[z+1,j]$ consists of isolated vertices.

Our first option to embed $R$ is the following. Embed $S$ onto $[z+1,j]$
using Algorithm~\ref{alg:embed_t1}, and then embed $R^-$ recursively
onto $[i,z]$; see \figurename~\ref{fig:color_exchange_3}. This works
unless $[i,z]$ is in degree-conflict with $R^-$, or $R^-$ is a star. We
consider these two possibilities separately.

\case{1} $[i,z]$ is in degree-conflict with $R^-$, but $R^-$ is not a
star. In this case, $S_B=\treeatt{[i,z]}{i}$ is a central-star
rooted at $i$ and $\deg_{B[i,z]}(i)+\deg_{R^-}(r)\geq |R^-|$. Note that
$B[i,j-1]$ consists of two central-stars, $S_B=B[i,x]$ (rooted at $i$)
and $B[x+1,j-1]$ (rooted at $x+1$). Since $|R^-|\geq(|I|+1)/2$ and
$|S_B|\leq(|I|-1)/2$ we have $z\geq x+1$.

\case{1.1} $z\geq x+2$. Embed $S$ explicitly onto $[z+1,j]$ and $R^-$
recursively onto $[z,i]$. Since $z\geq x+2$, the blue vertex at $z$ is a
leaf that is adjacent only to $x+1$. Hence, there is no edge-conflict
for the recursive embedding of $R^-$. Since $\treeatt{[i,z]}{z}$ is a
central-star, there could be a degree-conflict. In this case we have
$\deg_{B[i,z]}(x+1)+\deg_{R^-}(r)\geq|R^-|$. Adding this equation to the
equation for the degree-conflict at $[i,z]$, we obtain
$2|R^-|-2\deg_{R^-}(r)\leq \deg_{B[i,z]}(i)+\deg_{B[i,z]}(x+1)=|R^-|-2$.
It follows that $\deg_{R^-}(r)\geq |R^-|/2+1$. Since each subtree of $r$
in $R^-$ has size at least $|S|\geq2$, we get
$\deg_{R^-}(r)\geq(1+2\deg_{R^-}(r))/2+1>\deg_{R^-}(r)+1$, a
contradiction. Hence, there is no degree-conflict and the recursive
embedding of $R^-$ always works.

\case{1.2} $z=x+1$. In this case $|R^-|=(|I|+1)/2$ and $|S|=(|I|-1)/2$.
Hence, $r$ is binary in $R$. Let $Q=\tr(q)$ be the subtree of $r$ in
$R^-$. Embed $r$ onto $i$, $Q$ explicitly onto the independent set at
$[z,i+1]$, and $S$ explicitly onto the independent set at $[z+1,j]$.
Since the blue embedding uses neither $\{i,z\}$ nor $\{i,z+1\}$, this
always works.

\case{2} $R^-$ is a star. Since $|S|\geq2$ and $S$ is a smallest subtree
of $r$, $R^-$ is a dangling star, that is, it is centered at the unique
child $q$ of $r$ in $R^-$. In this case, $R$ and $B$ have even more similarities:
their roots each have two children, and both $B^-$ and $R-$ are dangling stars.
See \figurename~\ref{fig:color_exchange_4}.

\begin{figure}[htbp]
  \centering
  \subfloat[]{\includegraphics{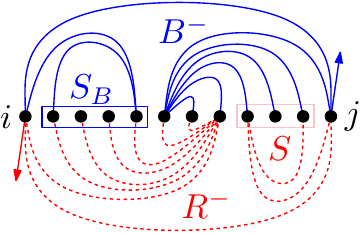}\label{fig:color_exchange_4}}\hfil%
  \subfloat[]{\includegraphics{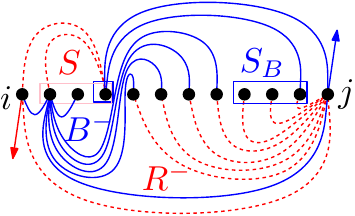}\label{fig:color_exchange_6}}\hfil%
  \caption{(a) In Case~2, the default embeddings of $B$ and $R$ share
    several edges. (b) We embed $R$ and $B$ explicitly
    (right).\label{fig:color_exchange_again} }
\end{figure}

We embed $B$ and $R$ simultaneously such that the root of $S_B$ and $S$
are mapped to the same point, and all other vertices of $S_B$ are $S$
are disjoint. This is possible since $S_B$ and $S$ each have size at
most $(|I|-1)/2$. Refer to \figurename~\ref{fig:color_exchange_6}. Embed
the star $B^-$ on $\{j\}\cup [i,j-|S_B|]\setminus \{i+|S|\}$ such
that its root (which is the root of $B$) is embedded at $j$ and its center at $i+1$.
The edges between the center $i+1$ and other vertices of $[i,i+|S|-1]$ are
semicircles \emph{below} the spine; the edge $\{i+1,j\}$ is also a semicircle below the spine;
and the edges between $i+1$ and $[i+|S|+1,j-|S_B|]$ are biarcs that start from $i+1$ below the
spine and cross the spine right after $i+|S|$.
Embed the subtree $S_B$ onto $\{i+|S|\}\cup [j-|S_B|+1,j-1]$
using Algorithm~\ref{alg:embed_t1}, with semicircles above the spine.
Embed the tree $R^-$ on $\{i\}\cup [i+|S|,j]$ such that its root (which is the root of $R$)
is at $i$, and its center is at $j$, using semicircles below the spine.
If $S$ is not a star, then finish by embedding $S$ onto $[i+|S|,i+1]$
recursively, such that the edge $\{r,s\}$ and all edges of $S$ are
semicircles \emph{above} the spine.
If $S$ is a central-star, embed $S$ explicitly onto $[i+|S|,i+1]$ above
the spine. If $S$ is a dangling star, then $|S|\geq3$. Flip the blue
embedding at $[i+1,i+|S|-1]$, placing the star-center of $B^-$ at
$i+|S|-1$. Since $|S|\geq3$, $\{i,i+1\}\not\in\EB$. Embed $s$ onto
$i+1$, the child $s'$ of $s$ onto $i+|S|$, and the children of $s'$ onto
$[i+|S|-1,i+2]$ (all above the spine).

It remains to show that we can recursively embed $S$ as described above
when $S$ is not a star.
Note that $B[i+1,i+|S|]$ consists of an isolated vertex at $i+|S|$ (the
root of $S_B$), and a star $B[i+1,i+|S|-1]$ centered and rooted at
$i+1$. Hence \ref{inv:starconflict} and \ref{inv:bluelocal} follow.
\end{proof}

Proposition~\ref{prop:rec_large_red_star_ij_not_used_sp_not_star},
Proposition~\ref{prop:rec_large_red_star_ij_not_used_sp_star}, and
Proposition~\ref{prop:rec_large_red_star_ij_used} together prove the
following.

\begin{lemma}
  \label{lem:rec_large_red_star}
  If $R^-$ is a star, then $R$ and $B$ admit an ordered plane packing
  onto $[i,j]$.
\end{lemma}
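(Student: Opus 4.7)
The plan is simply to observe that the three preceding propositions already cover every possible configuration once $R^-$ is known to be a star, so the proof amounts to a clean case split that invokes them.

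First I would dispose of the degenerate subcases that are handled elsewhere: if $\deg_R(r) = 1$, then Lemma~\ref{lem:rec_unary} immediately yields an ordered plane packing, and if $|S| = 1$, then Lemma~\ref{lem:rec_singleton} does. In the remaining situation we may assume $\deg_R(r)\ge 2$ and $|S|\ge 2$, which is precisely the regime in which the three propositions operate. Since $R^-$ is assumed to be a star and $|S|\ge 1$, the minimality of $|S|$ forces $R^-$ to be a dangling star with a unique non-root internal vertex $q$, so that $S^+ = R\setminus Q$ is well-defined.

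Next I would branch on whether the edge $\{i,j\}$ is used by $B$. If $\{i,j\}\in\EB$, then Proposition~\ref{prop:rec_large_red_star_ij_used} applies directly and supplies the packing. Otherwise $\{i,j\}\notin\EB$, and I would branch further on whether $S^+$ is a star. If $S^+$ is not a star, Proposition~\ref{prop:rec_large_red_star_ij_not_used_sp_not_star} applies; if $S^+$ is a star, then both $R^-$ and $S^+$ are stars and Proposition~\ref{prop:rec_large_red_star_ij_not_used_sp_star} applies. In each branch the hypotheses of the invoked proposition are exactly the conditions defining that branch, together with $R^-$ being a star, so no additional verification is needed.

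The only obstacle, really, is confirming that these three propositions together exhaust the case $R^-$~is~a~star: this is immediate because the two binary splits on $\{i,j\}\in\EB$ and on ``$S^+$ is a star'' cover all possibilities (in the $\{i,j\}\in\EB$ case no assumption on $S^+$ is needed, since Proposition~\ref{prop:rec_large_red_star_ij_used} handles it uniformly). Thus the proof is a one-paragraph dispatch, with all of the combinatorial and drawing work already carried out in the three propositions.
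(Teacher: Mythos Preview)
Your proposal is correct and takes essentially the same approach as the paper: the lemma is obtained by combining Propositions~\ref{prop:rec_large_red_star_ij_not_used_sp_not_star}, \ref{prop:rec_large_red_star_ij_not_used_sp_star}, and \ref{prop:rec_large_red_star_ij_used}, which partition the case ``$R^-$ is a star'' according to whether $\{i,j\}\in\EB$ and, in the negative case, whether $S^+$ is a star. Your explicit disposal of the degenerate subcases $\deg_R(r)=1$ and $|S|=1$ is harmless but unnecessary at this level, since the three propositions already delegate these to Lemmata~\ref{lem:rec_unary} and~\ref{lem:rec_singleton} internally.
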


\section{Embedding the red tree: a small blue star}
\label{subsec:rec_small_blue_star}
In this section, we consider the case that $B[j-|S|+1,j]$ is a star, but
$B[i,j-|S|]$ is not a star. The size of the star is $|S|$. Due to
Lemma~\ref{lem:rec_singleton}, we may assume $|S|\geq2$. Note, however,
that $B[j-|S|+1,j]$ may be part of a larger star within $B$. Let
$B^{**}$ be the maximal star in $B$ that contains $B[j-|S|+1,j]]$. Note
that the tree $\treeat{j}$ may be larger than $B^{**}$. Clearly, we have
$|S|=|B[j-|S|+1,j]|\leq |B^{**}|$. Due to 1SR, the center and the root
of $B^{**}$ are each located at either $j$ or the leftmost vertex of
$B^{**}$, which may be outside of the interval $[j-|S|+1,i]$. We
distinguish two cases: either $|S|<|B^{**}|$
(Proposition~\ref{prop:rec_small_blue_star_larger}) or $|S|=|B^{**}|$
(Proposition~\ref{prop:rec_small_blue_star_equal_ij_not_used},
Proposition~\ref{prop:rec_small_blue_star_equal_ij_used_no_red_star},
and Proposition~\ref{prop:rec_small_red_star_ij_used}). These cases are
tackled below.

\begin{proposition}\label{prop:rec_small_blue_star_larger}
  If $B[j-|S|+1,j]$ is a star and $2\leq |S|<|B^{**}|$, then $R$ and $B$
  admit an ordered plane packing onto $[i,j]$.
\end{proposition}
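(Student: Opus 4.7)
My plan is to re-embed $B^{**}$ so that $B[j-|S|+1,j]$ is no longer a star, and then reduce to a case already handled, either Lemma~\ref{lem:rec_general} or Lemma~\ref{lem:rec_large_blue_star}.

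First I reduce the setting: if $\deg_R(r)=1$ I invoke Lemma~\ref{lem:rec_unary}; if $R^-$ is a star I invoke Lemma~\ref{lem:rec_large_red_star}; if $B[i,j-|S|]$ is a star I invoke Lemma~\ref{lem:rec_large_blue_star}; and if $S$ is a star I defer to the symmetric small-red-star analysis. Combined with the hypothesis $|S|\ge 2$, we may therefore assume $\deg_R(r)\ge 2$ and that none of $R^-$, $S$, $B[i,j-|S|]$ is a star. Since $B[j-|S|+1,j]$ is a connected star with $|S|\ge 2$, its center must lie in $[j-|S|+1,j]$; 1SR at the center of $B^{**}$ then forces this center to sit at position $j$, because otherwise all leaves of $B^{**}$ would lie strictly to the right of its center and the induced subgraph on $[j-|S|+1,j]$ would be $|S|$ isolated leaves rather than a connected star. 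Since $B^{**}\ne B$, we moreover have $y>i$.

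The key move is to flip $B^{**}$: reverse the positional order of its vertices within $[y,j]$ so that the tree-center moves from $j$ to $y$ and the $|B^{**}|-1$ leaves occupy $[y+1,j]$. After this flip, $B[j-|S|+1,j]$ is an independent set of $|S|$ vertices, each adjacent only to the new center at $y$ (outside the interval), and so is not a star. If $B[i,j-|S|]$ is still not a star after the flip, then none of the four forests $S$, $R^-$, $B[i,j-|S|]$, $B[j-|S|+1,j]$ is a star and Lemma~\ref{lem:rec_general} yields the desired ordered plane packing. Otherwise $B[i,x]$ is a star for $x=j-|S|$, and Lemma~\ref{lem:rec_large_blue_star} applies instead.

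The hard part will be verifying that the flip preserves the invariants \ref{inv:starconflict}--\ref{inv:rootsonly} required by the invoked lemma. Invariants \ref{inv:starconflict}, \ref{inv:placement}, and \ref{inv:rootsonly} are immediate since position $i$ is untouched (using $y>i$) and no outside edges change. The subtle point is 1SR at the vertex newly placed at $y$: this vertex has at most one non-$B^{**}$ neighbor in $B$, namely its parent, which by the 1SR of the original embedding sits strictly to the left of $y$, while its $B^{**}$-leaf neighbors now sit to its right after the flip. Restoring 1SR requires an additional local re-embedding---for instance a whole-component flip of the blue tree containing $B^{**}$, followed by a check that the resulting $B[j-|S|+1,j]$ is still non-star (or else that the resulting $B[i,x]$ is a star for some $x\in[j-|S|,j-1]$ so that Lemma~\ref{lem:rec_large_blue_star} still applies). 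The case analysis is routine once this structural observation is in hand.
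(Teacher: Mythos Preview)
Your proposal has two genuine gaps.

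\textbf{Circularity when $S$ is a star.} You write ``if $S$ is a star I defer to the symmetric small-red-star analysis.'' But Lemma~\ref{lem:rec_small_red_star} (the small-red-star lemma) is proved using Lemma~\ref{lem:rec_small_blue_star}, and the latter is exactly what Proposition~\ref{prop:rec_small_blue_star_larger} contributes to. Concretely, Proposition~\ref{prop:rec_small_red_star_ij_not_used} begins by assuming $B[j-|S|+1,j]$ is not a star, justifying this via Lemma~\ref{lem:rec_small_blue_star}. So you cannot forward-reference the small-red-star section from here without creating a dependency cycle. The paper avoids this entirely: after its rearrangement, $B[j-|S|+1,j]$ consists of isolated vertices, and $S$ is embedded on them \emph{explicitly} using Algorithm~\ref{alg:embed_t1}, which works regardless of whether $S$ is a star.

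\textbf{The 1SR repair is not carried out.} You correctly observe that when $B^{**}$ is a central-star with a parent $p<y$ in $B$, flipping $B^{**}$ leaves the center at $y$ with neighbors on both sides, violating 1SR. Your suggested fix---a whole-component flip of $\treeat{j}$---is asserted to be ``routine,'' but it is not: the component may extend far to the left of $y$, and after flipping it you must re-examine what sits on $[j-|S|+1,j]$ and on $[i,j-|S|]$, re-verify the hypotheses of Lemma~\ref{lem:rec_general} or Lemma~\ref{lem:rec_large_blue_star}, and handle the new structure of $\treeatt{[i,j-|S|]}{i}$. None of this is done.

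\textbf{How the paper proceeds instead.} The paper does not flip $B^{**}$ to its leftmost position $y$, nor does it reduce to Lemma~\ref{lem:rec_general}. It moves the center of $B^{**}$ to the specific position $j-|S|$ (and, if $B^{**}$ is a dangling star, moves the leaf-root to $j$). This makes $B[j-|S|+1,j]$ an independent set while keeping \ref{inv:bluelocal} valid on $B[i,j-|S|]$, because within that interval the center at $j-|S|$ has all its remaining neighbors to the left. Then $S$ is embedded explicitly on $[j-|S|+1,j]$, and $R^-$ is handled directly on $[i,j-|S|]$: by recursion when there is no degree-conflict, and by a blue-star embedding of $R^-$ from $i$ when $\treeatt{[i,j-|S|]}{i}$ is a large central-star. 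This two-case split is short and sidesteps both problems above.
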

\begin{proof}
  By Lemma~\ref{lem:rec_large_blue_star}, we may assume that
  $B[i,j-|S|]$ is not a star. By Lemma~\ref{lem:rec_large_red_star}, we
  may assume that $R^-$ is not a star. Recall that the center of
  $B^{**}$ is $j$, and its root is either $j$ or the leftmost vertex of
  $B^{**}$. We start by rearranging the tree $B^{**}$ such that its
  center moves to $j-|S|$; see \figurename~\ref{fig:move_center}. If
  $B^{**}$ is rooted at its center, then the root automatically moves to
  $j=|S|$, as well. Otherwise $B^{**}$ is rooted at a leaf, which is the
  leftmost vertex of $B^{**}$ and the root of the entire tree
  $\treeat{j}$ due to 1SR, and then we move the root of $B^{**}$ to $j$.
  In both cases, $B[j-|S|+1,j]$ consists of $|S|$ isolated vertices, and
  $B[i,j-|S|]$ continues to fulfill invariant~\ref{inv:bluelocal}.

\begin{figure}
  \centering\hfil%
  \subfloat[]{\includegraphics{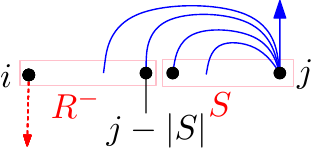}\label{fig:move_center_1}}\hfil%
  \subfloat[]{\includegraphics{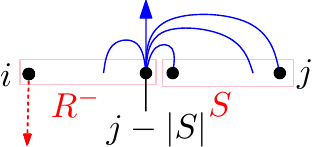}\label{fig:move_center_2}}\hfil%
  \subfloat[]{\includegraphics{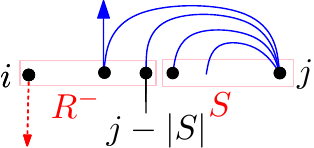}\label{fig:move_center_3}}\hfil%
  \subfloat[]{\includegraphics{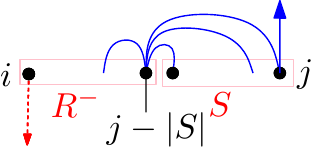}\label{fig:move_center_4}}\hfil%
  \caption{Moving the center of star $B^{**}$ from $j$ to $j-|S|$:
  when $B^{**}$ is rooted at its center (a--b), and when it is rooted at a leaf (c--d).
  \label{fig:move_center}
}
\end{figure}

\case{1} $\treeatt{[i,j-|S|]}{i}$ is not a central-star of size at least
$|R^-|-\deg_{R^-}(r)+1$. In this case, we embed $S$ explicitly onto
$[j-|S|+1,j]$ and then $R^-$ recursively onto $[i,j-|S|]$. Since
$B[j-|S|+1,j]$ consists of isolated vertices and $j-|S|+1$ is not
adjacent to $i$, the embedding of $S$ always works. We can embed $R^-$
on $[i,j-|S|]$ because it fulfills invariants \ref{inv:starconflict} and
\ref{inv:bluelocal}. Invariant~\ref{inv:bluelocal} holds by
construction. If $\treeatt{[i,j-|S|]}{i}$ is not a central-star, then
\ref{inv:starconflict} is immediate; otherwise $\treeatt{[i,j-|S|]}{i}$
is a central-star of size at most $|R^-|-\deg_{R^-}(r)$. Hence, $r$ has
no degree-conflict with $[i,j-|S|]$, and \ref{inv:starconflict} follows.

\case{2} $\treeatt{[i,j-|S|]}{i}$ is a central-star of size at least
$|R^-|-\deg_{R^-}(r)+1$. Let $B^*:=\treeatt{[i,j-|S|]}{i}$. We claim
that $\treeat{i}\neq\treeat{j}$. Suppose that $\treeat{i}=\treeat{j}$
for the sake of contradiction. Then before rearranging $B^{**}$, we had
$\{i,j\}\in\EB$. By LSFR and since $B$ is not a star, the root of $B$
was not at $i$. Again by LSFR, the root could have been at $j$ only if
$B^{**}$ is a dangling star. But then, since $|B^{**}|>|S|$,
$B[j-|S|+1,j]$ was not a star to begin with: a contradiction. The claim
follows. Since $|\treeat{j}|\geq|B^{**}|>|S|$, we have
$\treeatt{[i,j-|S|]}{i}=\treeat{i}$ and so $B^*=\treeat{i}$.

Since $R^-$ has a degree-conflict with $[i,j-|S|]$, we follow a
different strategy. We first blue-star embed $R^-$ from $\sigma=i$ with
$\varphi=(i+|B^*|+1,\ldots,i+|B^*|+d)$, and then embed $S$ on
$[j-|S|+1,j]$. The conditions for the blue-star embedding are met:
\ref{gg:ec} holds by \ref{inv:placement} for embedding $R$ onto $[i,j]$;
for \ref{gg:dc} on the one hand
$|R^-|\leq |B^*|+\deg_{R^-}(r)-1\leq |B^*|+\deg_{R^-}(r)$ and on the
other hand, by \ref{inv:starconflict}, we have $|B^*|\leq |R|-\deg_R(r)$
and so $|B^*|+\deg_{R^-}(r)\le |R|-1$. As $B^*=\treeat{i}$, the vertices
in $B\setminus(B^*\cup\varphi)$ form an interval and both \ref{gg:int}
and \ref{gg:cs} hold.

By Proposition~\ref{p:greedygrab} we are left with an interval
$[j-|S|+1,j]$ that satisfies \ref{inv:bluelocal}. Note that $\varphi$
includes the center $j-|S|$ of the star $B^{**}$, but does not include
$j$. Consequently, $B[j-|S|,j]$ consists of isolated vertices after the
blue-star embedding, and $j$ is not in edge-conflict with $s$. Hence, we
can embed $S$ explicitly onto $[j,j-|S|+1]$.
\end{proof}

\begin{proposition}\label{prop:rec_small_blue_star_equal_ij_not_used}
  If $B[j-|S|+1,j]$ is a star, $2\leq |S|=|B^{**}|$, and
  $\{i,j\}\not\in\EB$, then $R$ and $B$ admit an ordered plane
  packing onto $[i,j]$.
\end{proposition}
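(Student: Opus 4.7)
The plan is to follow the template of Lemma~\ref{lem:rec_general}: recursively embed $R^-$ onto $[i,j-|S|]$ and $S$ onto one of $[j-|S|+1,j]$ or $[j,j-|S|+1]$, handling obstructions by appealing to the blue-star embedding or explicit embeddings. By Lemmas~\ref{lem:rec_singleton}, \ref{lem:rec_unary}, \ref{lem:rec_large_blue_star}, and~\ref{lem:rec_large_red_star} we may assume $\deg_R(r)\ge 2$, $|S|\ge 2$, and that neither $R^-$ nor $B[i,j-|S|]$ is a star. Since $\{i,j\}\notin\EB$ we have $\treeat{i}\ne\treeat{j}$, so actions on one half cannot directly create conflicts on the other half.

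First I would flip $B^{**}$ if necessary so that its center sits at $j$ and its root sits at $j-|S|+1$ or $j$, whichever matches the orientation chosen for $S$. Because $B^{**}$ is a maximal star of size exactly $|S|$, any extension of $B^{**}$ by a single vertex must fail to be a star; this restricts how $B^{**}$ can attach to the rest of $B$ and in particular controls whether the center of $B^{**}$ has parent/child edges into $B[i,j-|S|]$. After placing $s$ provisionally at the appropriate endpoint of the star interval, I would attempt the default recursive embedding. The obstructions to tackle are: (a) $[i,j-|S|]$ is in conflict with $R^-$, handled by a blue-star embedding from $i$ exactly as in Case~2 of Lemma~\ref{lem:rec_general} applied to the central-star $\treeat{i}$; (b) the star interval is in degree-conflict with $S$ (unavoidable when the center of $B^{**}$ coincides with the root of the $S$-interval, since $\deg_S(s)+(|S|-1)\ge|S|$), which I resolve by a blue-star embedding of $S$ from the center of $B^{**}$ with $\varphi$ consuming the $\deg_S(s)$ rightmost vertices of $B[i,j-|S|]$, followed by a recursive call for $R^-$ on the shrunken interval; and (c) $S$ is itself a star, handled by an explicit embedding that matches the center (and, for a dangling $S$, also its root) to suitable vertices of $B^{**}$, using the leaf-isolation shuffle (Proposition~\ref{prop:leafshuffle}) when needed.

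The hard part will be the combined obstruction in which $\treeat{i}$ is a central-star causing a conflict on the left \emph{and} $B^{**}$ causes a degree-conflict on the right, because vertices consumed from $B[i,j-|S|]$ to embed $S$ may destroy the structure needed to handle the left conflict. I would settle this with a size argument: Lemma~\ref{lem:degr} gives $\deg_{R^-}(r)\le(|R^-|-1)/|S|$, and together with $|B^{**}|=|S|$ this forces the offending central-star on the left to be small enough that either the shrunken interval after the blue-star embedding of $S$ still satisfies \ref{inv:starconflict} for $R^-$ (verified by comparing sizes), or else the remaining vertices form an independent set in $B$ on which $R^-$ can be placed explicitly via Algorithm~\ref{alg:embed_t1}. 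Throughout, the invariants \ref{inv:bluelocal} and \ref{inv:rootsonly} are preserved by construction (using the postcondition of Proposition~\ref{p:greedygrab}), and \ref{inv:placement} is restored, when needed, by flipping the offending blue component.
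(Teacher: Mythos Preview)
Your overall plan---blue-star embed $S$ from the center of $B^{**}$, then recurse on $R^-$ over the shrunken left interval, with a blue-star embedding from $i$ as a fallback---matches the paper's Case~1. But two key ideas are missing.

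First, the size argument you sketch does not rule out the combined obstruction. After the blue-star embedding of $S$ consumes the $\deg_S(s)$ rightmost vertices of $B[i,j-|S|]$, the component $\treeatt{[i,z-1]}{i}$ can become a central-star even though $\treeat{i}$ was not one originally (think of $i$ having a parent $p\in[i,z-1]$ whose subtrees to the right of $p$ get eaten by $\varphi$). Your bound $\deg_{R^-}(r)\le(|R^-|-1)/|S|$ does not force this new central-star to be small: for $\deg_S(s)=1$ and $\deg_{R^-}(r)\ge 2$ a degree-conflict on the shrunken interval is perfectly consistent with all the inequalities. The paper handles this not by a counting argument but by a \emph{preemptive rearrangement} of $\treeat{i}$: before the blue-star embedding it shifts the parent $p$ and some of its subtrees so that $\treeatt{[i,z-1]}{i}$ is guaranteed not to be a central-star, \emph{unless} $\treeat{i}$ was already a star. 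Only after this reduction does the paper split on whether $|\treeat{i}|+\deg_{R^-}(r)\le|I|-1$ (then blue-star $R^-$ from $i$) or $\ge|I|$ (then $\treeat{i}$ must be a dangling star; flip it so that the blue-star of $S$ eats its center, leaving an independent set on which $R^-$ is placed explicitly). Neither branch follows from Lemma~\ref{lem:degr} alone.

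Second, your handling of ``$S$ is a star'' via a leaf-isolation shuffle does not cover the case where $S$ is a central-star and $B^{**}$ is a dangling star. In that situation condition~\ref{gg:cs} of the blue-star embedding fails (because $B\setminus(B^*\cup\varphi)$ is not an interval and $S$ \emph{is} a central-star), so the blue-star route is blocked. The paper instead uses a red-star embedding of $S$ from $j$ here, with its own subcases mirroring Cases~1.1 and~1.2; the leaf-isolation shuffle plays no role in this proposition.
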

\begin{proof}
  By Lemma~\ref{lem:rec_large_blue_star}, we may assume that
  $B[i,j-|S|]$ is not a star. By Lemma~\ref{lem:rec_unary} and
  Lemma~\ref{lem:rec_large_red_star}, we may assume that
  $\deg_{R^-}(r)\geq1$ and $R^-$ is not a star. By
  Lemma~\ref{lem:rec_singleton}, we may assume that $|S|\geq2$. Due to
  LSFR, the center and the root of $B^{**}$ are each located at either
  $j-|S|+1$ or $j$, but $B^{**}$ may be either a central-star or a
  dangling star. We distinguish two cases.

  \case{1} $S$ is not a central-star or $B^{**}$ is a central-star. If
  necessary, flip $B^{**}$ to put its center at $j$. We will later perform a
  blue-star embedding of $S$ from $j$ with $\varphi=(j-|S|,\dots,z)$.

  Let us first check the conditions for the blue-star embedding.
  \ref{gg:ec} follows from the condition that $\{i,j\}\not\in E(B)$.
  For the other conditions, consider first the case that $B^{**}$ is a
  central-star. If the parent $p$ of $j$ is in $B$ then since
  $B^{**}$ is maximal, $p$ must have a subtree other than $B^{**}$. By
  LSFR and $\deg_S(s)\leq|S|-1$, we have that $p<z$. Hence, regardless
  of whether $p$ is in $B$, we know that
  $B\setminus(B^{**}\cup\varphi)$ forms an interval. \ref{gg:int} and
  \ref{gg:cs} follow. For \ref{gg:dc}, on the one hand we have
  $|S|=|B^{**}|<|B^{**}|+\deg_S(s)$. On the other hand we have
  $|B^{**}|+1+\deg_S(s)\leq 2|S|\leq |I|-1$.

  Otherwise, $B^{**}$ is a dangling star. Then \ref{gg:cs} is satisfied
  (with $B^+:=B^{**}$) and \ref{gg:int} is satisfied by the assumption
  of Case~1 and the choice of $\varphi$. For \ref{gg:dc}, on the one hand
  we have $|S|=|B^{**}|\leq|B^{**}|-1+\deg_S(s)$ since $\deg_S(s)\geq1$.
  On the other hand we have $|B^{**}|+\deg_S(s)< 2|S|\leq |I|-1$.

  Before performing this blue-star embedding, we modify the embedding
  of $\treeat{i}$. Since $\treeat{i}\neq\treeat{j}$ this does not affect
  the validity of the preconditions of the blue-star embedding. We
  want to ensure the following: if $[i,j-|S|]$ is in degree-conflict with
  $R^-$ after the blue-star embedding, then $\treeat{i}$ is a star
  before the blue-star embedding. We proceed as follows.

  The interval that the blue-star embedding will leave for $R^-$
  consists of $B[i,z-1]$, followed by $\deg_S(s)$ isolated vertices (all
  in edge-conflict). Suppose that this interval would be in
  degree-conflict for embedding $R^-$. Then
  $B[i,x]:=\treeatt{[i,z-1]}{i}$ is a central-star. If
  $B[i,x]=\treeat{i}$ then we do nothing. Otherwise, $B[i,x]$ is rooted
  at $i$. Let $p$ be the parent of $i$ in $B$. By 1SR we have
  $B[i,p]=\treeat{i}$. If $\deg_B(p)=1$ then $\treeat{i}$ is a dangling
  star and we do nothing. Otherwise, $\deg_B(p)\geq2$. We claim that
  then $z\geq x+2$. To prove the claim, suppose to the contrary that
  $z\leq x+1$. Since $\treeat{i}\neq\treeat{j}$ and
  $|\treeat{j}|\geq|S|$ we know that $\treeat{i}\subseteq[i,j-|S|]$. By LSFR
  and 1SR and $\deg_B(p)\geq2$, $p$ has at least one subtree $B'$
  besides $B[i,x]$ in $[i,j-|S|]$ with size
  $|B'|\geq|B[i,x]|=1+\deg_B(i)$. Since $z\leq x+1$, we know that the
  blue-star embedding consumes $p$ and all except at most one vertex of
  $B'$. Hence, $\deg_S(s)\geq 1+|B'|-1\geq 1+\deg_B(i)$. By the
  degree-conflict, we know that $\deg_{R^-}(r)+\deg_B(i)\geq|R^-|$.
  Since every subtree of $r$ in $R^-$ has size at least $|S|\geq
  1+\deg_S(s)$, we get
  \begin{align*}
    \deg_{R^-}(r)%
    &\geq |R^-|-\deg_B(i)\\
    &\geq (\deg_{R^-}(r))(1+\deg_S(s))-\deg_B(i)\\
    &\geq (\deg_{R^-}(r))(2+\deg_B(i))-\deg_B(i)\\
    &>2\deg_{R^-}(r),
  \end{align*}
  a contradiction. This proves our claim that $z\geq x+2$. Now let
  $G_1,\dots,G_k$ be the subtrees of $p$ from left to right. Note that
  $G_1=B[i,x]$. We select a parameter $t\in\{1,\ldots,k\}$ as follows.
  If $z=p$ then let $t=k$. If $z$ coincides with the root of a subtree
  of $p$, then let $t$ be such that $z$ coincides with the root of
  $G_{t+1}$. Otherwise, let $t$ be such that $z$ is contained in
  $G_t$. Then $t\geq 2$ since $z\geq x+2$. Modify the embedding of
  $\treeat{i}$ as follows. Flip the embedding of each subtree
  $G_t,\dots,G_k$ individually. Simultaneously shift each subtree
  $G_t,\dots,G_k$ one position to the right and shift $p$ to the
  position before $G_t$. In this modified embedding, $B[i,z-1]$
  satisfies LSFR and 1SR and $\treeatt{[i,z-1]}{i}$ is not a
  central-star, as intended.

  Now perform the blue-star embedding of $S$ with the parameters listed
  above. Recursively embed $R^-$ onto $[i,j-|S|]$. This works unless
  there is a conflict. There can be no edge-conflict since
  $\treeat{i}\neq\treeat{j}$ and by \ref{inv:starconflict}. If there is
  a degree-conflict, then $\treeatt{[i,z-1]}{i}$ is a central-star
  centered at $c$ and $\deg_{R^-}(r)+\deg_{B[i,z-1]}(c)\geq|R^-|$. By
  the modification of the embedding described above, we know that
  $\treeat{i}$ was a (possibly larger) star before the blue-star
  embedding. Undo the blue-star embedding. We have
  $\deg_{R^-}(r)+|\treeat{i}|-1\geq|R^-|$. We distinguish two subcases.

  \case{1.1} $|\treeat{i}|+\deg_{R^-}(r)\leq|I|-1$. Flip $B^{**}$ to put
  its center at $j-|S|+1$. If necessary, flip $\treeat{i}$ to put its
  center at $i$. First blue-star embed $R^-$ from $i$ with $\varphi$
  as the $\deg_{R^-}(r)$ leftmost vertices following $\treeat{i}$; and
  then embed $S$ onto $[j,j-|S|+1]$ using Algorithm~\ref{alg:embed_t1}.
  The conditions for the blue-star embedding are met: \ref{gg:ec}
  follows from \ref{inv:starconflict} and \ref{gg:dc} follows from
  $\deg_{R^-}(r)+|\treeat{i}|-1\geq|R^-|$ and the assumption of
  Case~1.1. \ref{gg:int} and \ref{gg:cs} hold by choice of $\varphi$. The
  blue-star embedding replaces the center of $B^{**}$ at $j-|S|+1$
  with an isolated vertex, but it does not affect $j$. Consequently,
  after the blue-star embedding $B[j-|S|+1,j]$ consists of $|S|\geq 2$
  isolated vertices, where $j$ is not in edge-conflict with $s$. Thus,
  we can embed $S$ onto $[j,j-|S|+1]$.

  \case{1.2} $|\treeat{i}|+\deg_{R^-}(r)\geq|I|$. By
  \ref{inv:starconflict}, $\treeat{i}$ is not a central-star and must
  hence be a dangling star $B[i,y]$. Since $\treeatt{[i,z-1]}{i}$ is a
  central-star, $\treeat{i}$ is centered at $i$ and $z\leq y$. Flip
  $\treeat{i}$ to place the center at $y$. Perform the original
  blue-star embedding of $S$ from $j$ again. Let us consider the
  interval $B[i,j-|S|]$ that remains for $R^-$. Since $z\leq y$,
  $B[i,j-|S|]$ is an independent set. At $i$ we have the original root
  of $\treeat{i}$, which may be in edge-conflict with $r$. Each of the
  $\deg_S(s)$ rightmost vertices of $B[i,j-|S|]$ is in edge-conflict
  with $r$. Since $|\treeat{j}|\geq|B^{**}|\geq2$ and
  $\treeat{i}\neq\treeat{j}$ we have
  $\deg_{R^-}(r)\geq|I|-|\treeat{i}|\geq2$. Every subtree of $r$ in
  $R^-$ has size at least $|S|$, and hence we can embed one subtree
  explicitly on a prefix of $[i,j-|S|]$ (which takes care of the vertex
  $i$ which is potentially in edge-conflict) and one subtree explicitly on
  a suffix of $[i,j-|S|]$ (which takes care of all $\deg_S(s)$ vertices
  which are in edge-conflict). The remaining vertices are not in
  edge-conflict, and so we can explicitly complete this partial
  embedding of $R^-$ to a complete embedding of $R^-$.

  \case{2} $S$ is a central-star and $B^{**}$ is a dangling star.
  Flip $B^{**}$ if necessary to put its root at $j$. This preserves
  1SR on $B$. We distinguish two cases.

  \case{2.1} Every vertex in $B[i+1,j-|S|]$ is a neighbor of $j$.
  Since $\treeat{i}\neq\treeat{j}$ the blue embedding is completely
  determined. Flip the blue embedding at $[j-|S|,j]$. Embed $s$ onto
  $j$ and the children of $s$ onto $[j-|S|+1,j-1]$. $B[i,j-|S|]$
  consists of an isolated vertex at $i$ (which is not in edge-conflict
  with $r$ by \ref{inv:placement}) and a central-star $B[i+1,j-|S|]$.
  Hence, we can embed $R^-$ recursively onto $[i,j-|S|]$.

  \case{2.2} Some vertex in $B[i+1,j-|S|+1]$ is not a neighbor of $j$.
  We first try the following. Use the red-star embedding to embed $s$ to
  $j$ and the children of $s$ onto the rightmost $\deg_S(s)$
  non-neighbors of $j$ in $[i+1,j]$. \ref{sgg:ec} holds due to
  $\{i,j\}\not\in\EB$. For \ref{sgg:dc} we have to show that there are
  at least $\deg_S(s)$ non-neighbors of $j$ in $[i,j-1]$. This is the
  case because $B^{**}$ already contains $|B^{**}|-2=|S|-2=\deg_S(s)-1$
  non-neighbors of $j$, and the last vertex is provided by the
  assumption of this case. Embed $R^-$ recursively onto $[i,j-|S|]$.

  This works unless there is a conflict, in which case
  $\treeatt{[i,j-|S|]}{i}$ is a central-star. As usual, this
  central-star cannot be in edge-conflict for $r$ and is hence in
  degree-conflict. Since $\treeat{i}=\treeatt{[i,j-|S|]}{i}$ both before
  and after the red-star embedding and since the
  red-star embedding either leaves $\treeat{i}$ untouched or
  replaces \emph{only} its rightmost vertex by a vertex that is isolated
  in $B[i,j-|S|]$, we know that $\treeat{i}$ was a (dangling or
  central-)star before the blue-star embedding. Undo the
  red-star embedding. By the degree-conflict, we have
  $\deg_{R^-}(r)+|\treeat{i}|-1\geq|R^-|$. We proceed analogously to
  Case~1.1 and Case~1.2.

  \case{2.2.1} $|\treeat{i}|+\deg_{R^-}(r)\leq|I|-1$. Recall that the
  center of $B^{**}$ is at $j-|S|+1$. If necessary, flip $\treeat{i}$ to
  put its center at $i$. First blue-star embed $R^-$ from $i$ with
  $\varphi$ as the $\deg_{R^-}(r)$ leftmost vertices following
  $\treeat{i}$; and then embed $S$ onto $[j,j-|S|+1]$ using
  Algorithm~\ref{alg:embed_t1}. The conditions for the blue-star
  embedding are met: \ref{gg:ec} follows from \ref{inv:starconflict} and
  \ref{gg:dc} follows from $\deg_{R^-}(r)+|\treeat{i}|-1\geq|R^-|$ and
  the assumption of Case~2.2.1. \ref{gg:int} and \ref{gg:cs} hold by
  choice of $\varphi$ and since $R^-$ is not a star. The blue-star
  embedding replaces the center of $B^{**}$ at $j-|S|+1$ with an
  isolated vertex, but it does not affect $j$. Consequently, after the
  blue-star embedding $B[j-|S|+1,j]$ consists of $|S|\geq 2$ isolated
  vertices, where $j$ is not in edge-conflict with $s$. Thus, we can
  embed $S$ onto $[j,j-|S|+1]$.

  \case{2.2.2} $|\treeat{i}|+\deg_{R^-}(r)\geq|I|$. By
  \ref{inv:starconflict}, $\treeat{i}$ is not a central-star and must
  hence be a dangling star $B[i,y]$. Since the red-star
  embedding of $S$ used only one vertex of $B[i,j-|S|]$ and since
  $\treeat{i}$ was a central-star after the red-star embedding,
  $\treeat{i}$ must be rooted at $y$ and centered at $i$. Flip
  $\treeat{i}$ to place the center at $y$. Perform the original
  red-star embedding of $S$ from $j$ again. Let us consider the
  interval $B[i,j-|S|]$ that remains for $R^-$. $B[i,j-|S|]$ is an
  independent set. At $i$ we have the original root of $\treeat{i}$,
  which may be in edge-conflict with $r$. The rightmost vertex of
  $B[i,j-|S|]$ is in edge-conflict with $r$. Since
  $|\treeat{j}|\geq|B^{**}|\geq2$ and $\treeat{i}\neq\treeat{j}$ we have
  $\deg_{R^-}(r)\geq|I|-|\treeat{i}|\geq2$. Every subtree of $r$ in
  $R^-$ has size at least $|S|$, and hence we can embed one subtree
  explicitly on a prefix of $[i,j-|S|]$ (which takes care of the vertex
  $i$ which is potentially in edge-conflict) and one subtree explicitly
  on a suffix of $[i,j-|S|]$ (which takes cares of the vertex $j-|S|$
  which is in edge-conflict). The remaining vertices are not
  in edge-conflict, and so we can explicitly complete this partial
  embedding of $R^-$ to a complete embedding of $R^-$.
\end{proof}

\begin{proposition}\label{prop:rec_small_blue_star_equal_ij_used_no_red_star}
  If $B[j-|S|+1,j]$ is a star, $2\leq |S|=|B^{**}|$,
  $\{i,j\}\in\EB$, and $S$ is not a star, then $R$ and $B$
  admit an ordered plane packing onto $[i,j]$.
\end{proposition}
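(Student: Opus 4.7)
The plan is to mirror the structure of Proposition~\ref{prop:rec_small_blue_star_equal_ij_not_used} while exploiting the extra structural rigidity forced by the edge $\{i,j\}\in\EB$. As a first step, I invoke Lemmas~\ref{lem:rec_unary}, \ref{lem:rec_singleton}, \ref{lem:rec_large_red_star}, and \ref{lem:rec_large_blue_star} to reduce to the case $\deg_R(r)\geq 2$, $|S|\geq 2$, $R^-$ is not a star, and $B[i,j-|S|]$ is not a star. The assumption $\{i,j\}\in\EB$ then forces $B$ to be a single tree, and by \ref{inv:rootsonly} only $i$ or $j$ can have edges outside $[i,j]$.

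The key structural observation is that $B^{**}$ must be centered at $j-|S|+1$, not at $j$. Indeed, if $B^{**}$ were centered at $j$, then extending $B^{**}$ by adding vertex $i$ would still give a star centered at $j$, unless $i$ has a blue edge to some vertex in $[j-|S|+1,j-1]$; but such an edge would close a cycle through $i$, $j$, and that vertex, contradicting the acyclicity of $B$. Hence the center is at $j-|S|+1$, and by maximality it has no blue neighbor outside $[j-|S|+1,j]$. Together with 1SR and LSFR and a possible flip of $B$, I may assume the root of $B$ is at $j$, with children $i$ and $j-|S|+1$, the latter carrying the star $B^{**}$.

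With this in hand, the default embedding is: place $r$ at $i$ (safe by \ref{inv:placement}), recursively embed $R^-$ onto $[i,j-|S|]$, and embed $S$ onto $[j-|S|+1,j]$ via a blue-star embedding of $S$ from $\sigma=j-|S|+1$ with $\varphi$ consisting of $\deg_S(s)$ vertices taken from $[i,j-|S|]$. The essential point is that this deliberately avoids placing $s$ (or any child of $r$) at position $j$, so the blue edge $\{i,j\}$ never coincides with a red edge. The blue-star preconditions \ref{gg:ec}--\ref{gg:cs} follow from $\{i,j-|S|+1\}\notin\EB$, from $|B^{**}|=|S|$ together with $\deg_S(s)\le|S|-1$, and from the fact that $B\setminus(B^{**}\cup\varphi)$ remains an interval by choice of $\varphi$. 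After the blue-star embedding, $R^-$ is recursively embedded on the interval remaining after $\varphi$ is consumed, which by Proposition~\ref{p:greedygrab} still satisfies \ref{inv:bluelocal}.

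The main obstacle, exactly as in Proposition~\ref{prop:rec_small_blue_star_equal_ij_not_used}, is a post-rearrangement degree-conflict when $\treeat{i}$ is a star that is too large: after the blue-star embedding consumes the leftmost vertices of $[i,j-|S|]$, the remaining interval for $R^-$ could expose a central-star rooted at $i$ that is in degree-conflict with $R^-$. I would handle this by a pre-processing step that re-embeds $\treeat{i}$ (shifting the children of its root so that $\treeatt{[i,j-|S|]}{i}$ is no longer a central-star after the $\varphi$-truncation), using an argument analogous to the one in Case~1 of Proposition~\ref{prop:rec_small_blue_star_equal_ij_not_used}, and falling back to an explicit embedding in the extremal subcase $|\treeat{i}|+\deg_{R^-}(r)\geq |I|$, where $\treeat{i}$ is forced to be a dangling star. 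In every remaining case the combination of the blue-star embedding for $S$ and the recursive call for $R^-$ yields the desired ordered plane packing.
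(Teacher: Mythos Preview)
Your approach diverges from the paper's and contains a genuine gap.

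The paper does not use a blue-star embedding here at all. Instead it splits on where the root of $B$ sits. If the root is at $i$, it flips $B$ and observes that after the flip $B[j-|S|+1,j]$ is no longer a star (because the old root's largest subtree has size $\ge|S|=|B^{**}|$ and its root lies outside $[i,i+|S|-1]$, so $\treeatt{[j-|S|+1,j]}{j}$ becomes a singleton); the situation is then dispatched by Lemma~\ref{lem:rec_general} or Lemma~\ref{lem:rec_large_blue_star}. If the root is at $j$, the paper notes that every subtree of $j$ has size at most $|S|-1\le(|I|-3)/2$, so $j$ has \emph{at least three} subtrees---not just the two children $i$ and $j-|S|+1$ you assert. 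It then re-embeds $B$ by permuting these subtrees (Cases~2.1--2.3) so that $B[j-|S|+1,j]$ ceases to be a star and the pair $(R^-,S)$ can be handled recursively on the two sub-intervals.

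Your blue-star plan breaks at condition~\ref{gg:cs}. With $\sigma=j-|S|+1$ you have $\tau=\p_B(\sigma)=j\in I$ and $B^*=B[j-|S|+1,j-1]$. Any $\varphi\subseteq[i,j-|S|]$ leaves $B\setminus(B^*\cup\varphi)$ equal to a prefix of $[i,j-|S|]$ together with the isolated point $\{j\}$, which is never an interval. Then \ref{gg:cs} demands $v_1=\tau\pm 1\in\{j-1,j+1\}$, but $j-1\in B^*$ and $j+1\notin I$, so no admissible $\varphi$ exists. Consequently the blue-star embedding cannot absorb the parent $j$ of $\sigma$, and your description ``embed $S$ onto $[j-|S|+1,j]$'' is inconsistent with $\varphi\subseteq[i,j-|S|]$ in any case: the children of $s$ would land in $[i,j-|S|]$, overlapping the interval you reserved for $R^-$. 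Your maximality argument for the center of $B^{**}$ is also off: extending by the vertex $i$ does not contradict maximality because $B^{**}$ is the maximal \emph{interval} $[k,j]$ with $B[k,j]$ a star, and $\{i\}\cup[j-|S|+1,j]$ is not an interval; moreover, when the root of $B$ is at $i$ the center of $B^{**}$ can perfectly well sit at $j$.
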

\begin{proof}
  The presence of edge $\{i,j\}\in \EB$ means that $B$ is a tree, rooted
  at $i$ or $j$. We distinguish two cases based on the root of $B$. By
  Lemma~\ref{lem:rec_large_red_star}, we may assume that $R^-$ is not a star.

  \case{1} $B$ is a tree rooted at $i$. We shall flip $B$, and show that
  $B[j-|S|+1,j]$ is no longer a star after the flip. By LSFR, $B^{**}$
  is a smallest subtree of $i$. The largest subtree of $i$ has size at
  least $|S|=|B^{**}|$, and so its root is outside of $[i,i+|S|-1]$.
  Therefore, $\treeatt{[i,i+|S|-1]}{i}$ is an isolated vertex.
  Consequently, after flipping $B$, $\treeatt{[j-|S|+1,j]}{j}$ is
  an isolated vertex, and $B[j-|S|+1,j]$ cannot be a star. If
  $B[i,j-|S|]$ is a star now, use Lemma~\ref{lem:rec_large_blue_star} to
  find an ordered plane packing. Otherwise, none of $S$, $R^-$,
  $B[i,j-|S|]$ and $B[j-|S|+1,j]$ are stars, and we can use
  Lemma~\ref{lem:rec_general} to find an ordered plane packing.

  \begin{figure}[htbp]
    \centering\hfil%
    \subfloat[]{\includegraphics{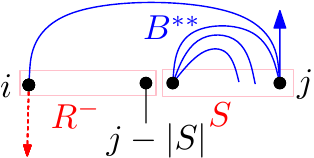}\label{fig:shuffle_subtrees_1}}\hfil%
    \subfloat[]{\includegraphics{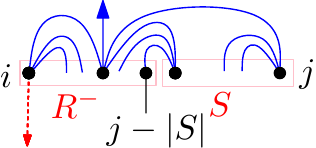}\label{fig:shuffle_subtrees_2}}\hfil%
    \subfloat[]{\includegraphics{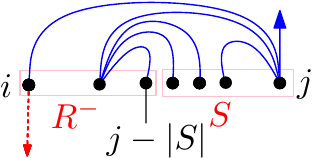}\label{fig:shuffle_subtrees_3}}\hfil%
    \caption{(a) $B^{**}=S$, $\{i,j\}\in E(B)$, and $B$ is rooted at $j$.
      (b) When two subtrees of $j$ are central-stars each with at least 2 vertices.
      (c) When $j$ has a unique maximal subtree, and all other subtrees are singletons or not central-stars.
    }
    \label{fig:shuffle_subtrees}
  \end{figure}

  \case{2} $B$ is a tree rooted at $j$. See
  \figurename~\ref{fig:shuffle_subtrees_1}. Since $B$ is not a star,
  LSFR implies that $B^{**}$ is a dangling star rooted at $j$. That is,
  $B[j-|S|+1,j-1]$ is a central-star, and by LSFR it is a largest
  subtree of $j$. Because $S$ is a smallest subtree of $r$, we have
  $|S|\leq (|I|-1)/2$, and so every subtree of $j$ has size at most
  $|S|-1\leq (|I|-3)/2$. Consequently, $j$ has at least 3 subtrees in
  $B$. We distinguish subcases based on the subtrees of $j$.

  Recall that $j$ has a maximal subtree that is a central-star
  ($B[j-|S|+1,j-1]$). If $j$ has another maximal subtree, then either
  this is a central-star (Case~2.2) or not (Case~2.1). Otherwise,
  $B[j-|S|+1,j-1]$ is the unique maximal subtree of $j$ and either there
  exists another subtree of $j$ that is a central-star on $\ge 2$
  vertices (Case~2.2) or every other subtree of $j$ is a singleton or
  not a central-star (Case~2.3).

  \case{2.1} $j$ has two or more maximal subtrees, but not all of them
  are central-stars. Re-embed $B$ using Algorithm~\ref{alg:embed_t1}
  such that the subtree closest to $j$ is \emph{not} a central-star (we
  only change a tie-breaking rule in Algorithm~\ref{alg:embed_t1}). Then
  $B[j-|S|+1,j]$ is no longer a star, and $B[i,j-|S|]$ does not become a
  star. Use Lemma~\ref{lem:rec_general} to find an ordered plane
  packing.

  \case{2.2} Two or more subtrees of $j$ are central-stars each with at
  least 2 vertices. Let $C_1:=B[j-|S|+1,j-1]$, which is central-star
  subtree of $j$ with at least 2 vertices. Let $C_2$ be another subtree
  of $j$ that is a central-star and has minimal size (possibly 1). We
  re-embed $B$ as follows (see
  \figurename~\ref{fig:shuffle_subtrees_2}). Embed the root of $B$ at
  $j-|C_1|-|C_2|$. Embed $C_1$ onto $[j,j-|C_1|+1]$ and $C_2$ onto
  $[j-|C_1|,j-|C_1|-|C_2|+1]$ each respecting 1SR. Embed all
  remaining subtrees onto $[i,j-|C_1|-|C_2|-1]$ each respecting 1SR.
  Note that $B$ does not obey 1SR because its root has subtrees on both
  sides. However, $B[i,j-|S|]$ and $B[j-|S|+1,j]$ each satisfy both LSFR
  and 1SR. Furthermore, neither $B[i,j-|S|]$ nor $B[j-|S|+1,j]$ is a
  star (since $j$ has at least 3 subtrees); and
  $\treeatt{[j-|S|+1,j]}{j-|S|+1}$ is an isolated vertex.

  Provisionally place $r$ at $i$. We embed $S$ recursively onto
  $[j-|S|+1,j]$. There is no conflict for this embedding since $j-|S|+1$
  is isolated in $B[j-|S|+1,j]$ and not adjacent to $i$. Embed $R^-$
  recursively onto $[i,j-|S|]$. This works because
  $\treeatt{[i,j-|S|]}{i}$ is a singleton or not a central-star. Indeed,
  suppose to the contrary that $\treeatt{[i,j-|S|]}{i}$ is a
  central-star. By construction,
  $\treeatt{[i,j-|S|]}{i}=B[i,j-|C_1|-|C_2|]$ and contains the root of
  $B$. Hence, apart from $C_1$ and $C_2$, all subtrees of the root of
  $B$ are singletons. By the choice of $C_2$, however, $C_2$ is also a
  singleton. Therefore the root of $B$ has only one subtree with at
  least 2 vertices, contradicting our assumption.

  \case{2.3} $j$ has a unique maximal subtree, which is a central-star,
  and every other subtree is either a singleton or not a central-star.
  Recall that $j$ has at least 3 subtrees. Re-embed $B$ such that its
  root is at $j$, an arbitrary smallest subtree is embedded closest to
  $j$, and all other subtrees are embedded according to LSFR (all
  subtrees are embedded recursively by Algorithm~\ref{alg:embed_t1}). In
  particular, $B^{**}$ is now the second subtree of $j$, counting from
  the right. See \figurename~\ref{fig:shuffle_subtrees_3}. As a result,
  $B[j-|S|+1,j]$ is no longer a star, and $B[i,j-|S|]$ does not become a
  star. Note also that $\treeatt{[j-|S|+1,j]}{j-|S|+1}$ becomes an
  isolated vertex (it is a leaf of the dangling star $B^{**}$); and
  $\treeatt{[i,j-|S|]}{i}$ is either an isolated vertex or not a
  central-star.

  Provisionally place $r$ at $i$. Embed $S$ recursively onto
  $[j-|S|+1,j]$. This works because $\treeatt{[j-|S|+1,j]}{j-|S|+1}$ is
  locally isolated and not adjacent to $i$. Embed $R^-$ recursively onto
  $[i,j-|S|]$. The recursive embedding of $R^-$ works because
  $\treeatt{[i,j-|S|]}{i}$ is either an isolated vertex (which is not
  adjacent to the blue vertex on which $s$ was embedded) or not a
  central-star.
\end{proof}

It remains to consider the case where $B[j-|S|+1,j]$ is a star, $2\leq
|S|=|B^{**}|$, $\{i,j\}\in\EB$, and $S$ is a star. We deal with this
case by handling the case where $S$ is a star and $\{i,j\}\in\EB$ in
full generality.

\begin{proposition}\label{prop:rec_small_red_star_ij_used}
  If $S$ is a star and $\{i,j\}\in\EB$, then $R$ and $B$ admit
  an ordered plane packing onto $[i,j]$.
\end{proposition}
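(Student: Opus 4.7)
My plan is to reduce to the essential configuration via the earlier lemmas, normalize $B$ by flipping so that it is rooted at $j$, and then split on whether $B[j-|S|+1,j]$ is a star, in each subcase following the pattern of Proposition~\ref{prop:rec_small_blue_star_equal_ij_used_no_red_star} but with $S$ handled explicitly rather than recursively.

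By Lemma~\ref{lem:rec_singleton}, Lemma~\ref{lem:rec_unary}, Lemma~\ref{lem:rec_large_red_star}, and Lemma~\ref{lem:rec_large_blue_star}, I may assume $|S|\ge 2$, $\deg_R(r)\ge 2$, $R^-$ is not a star, and $B[i,j-|S|]$ is not a star. Since $\{i,j\}\in\EB$ means $B$ is a tree, its root lies at $i$ or $j$ by invariant~\ref{inv:bluelocal}; if the root lies at $i$ I flip $B$ (which preserves $\{i,j\}\in\EB$ and invariant~\ref{inv:bluelocal}) to bring it to $j$. If $B[j-|S|+1,j]$ is not a star, I place $r$ at $i$, embed $R^-$ recursively onto $[i,j-|S|]$, and embed the star $S$ explicitly onto $[j-|S|+1,j]$, using $j-|S|+1$ as the center of $S$ (and $j$ as the leaf root when $S$ is a dangling star, using that $\{j-|S|+1,j\}\notin\EB$ in this subcase). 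If $B[j-|S|+1,j]$ is a star, I mirror Cases~2.1--2.3 of Proposition~\ref{prop:rec_small_blue_star_equal_ij_used_no_red_star}, reordering the subtrees of $j$ in $B$ (using LSFR's tie-breaking freedom) so that either a non-central-star subtree or a pair of central-star subtrees of $j$ ends up at the right; in either rearranged configuration the star structure of $B[j-|S|+1,j]$ is broken and the explicit-plus-recursive scheme of the previous branch applies.

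The main obstacle is the residual subcase analogous to Case~2.3 of Proposition~\ref{prop:rec_small_blue_star_equal_ij_used_no_red_star}, where $j$ has a unique maximal central-star subtree (namely $B[j-|S|+1,j-1]$) and every other subtree of $j$ is either a singleton or not a central-star. There I would re-embed $B$ using Algorithm~\ref{alg:embed_t1} while placing some smallest subtree of $j$ closest to $j$, which both turns $\treeatt{[j-|S|+1,j]}{j-|S|+1}$ into a locally isolated vertex (a leaf of the shifted central-star $B^{**}$) and keeps $\treeatt{[i,j-|S|]}{i}$ as an isolated vertex or as a non-central-star. Then an explicit embedding of $S$ on $[j-|S|+1,j]$, using the locally isolated $j-|S|+1$ for the center and if necessary routing the dangling edge of $S$ as a biarc to a leaf of $B^{**}$, succeeds; meanwhile $R^-$ is embedded recursively on $[i,j-|S|]$ under the invariants, using that $R^-$ is not a star and that $\treeatt{[i,j-|S|]}{i}$ poses no star-conflict after the re-embedding.
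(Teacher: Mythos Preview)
Your branch ``$B[j-|S|+1,j]$ is not a star'' has a real gap: you place the center of $S$ at $j-|S|+1$, but nothing guarantees that $j-|S|+1$ is locally isolated in $B[j-|S|+1,j]$, so the red star edges may collide with blue edges. Take $|I|=7$, $S$ a central-star on three vertices, and $R^-$ a path on four vertices rooted at $r$; all your reductions are satisfied. Let $B$ be rooted at $j=7$ with a unique child at $i=1$, where $1$ has children at $4$ (with subtree the path $4$--$2$--$3$) and at $6$ (with child $5$); this is a valid preliminary blue embedding. Then $B[1,4]$ is the path $1$--$4$--$2$--$3$ (not a star) and $B[5,7]$ consists of the edge $\{5,6\}$ together with the isolated vertex $7$ (also not a star), so your split lands here. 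With $s$ at $5$ or $6$ the red edge $\{5,6\}$ collides with $B$; with $s$ at $7$ and $r$ at $i$ (as you prescribe) the edge $\{r,s\}=\{1,7\}$ is already blue. Your explicit placement of $S$ fails, and the same obstruction propagates into the ``mirror Proposition~\ref{prop:rec_small_blue_star_equal_ij_used_no_red_star}'' branch, whose Cases~2.1--2.3 each finish by \emph{recursively} embedding a nonstar $S$; replacing that recursion by an explicit star embedding is exactly the hard part and is not handled by the sketch you give.

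The paper takes a different route. Because $\{i,j\}\in\EB$ forces $B$ to be a single tree, it \emph{exchanges the roles of $R$ and $B$}, re-embeds the new blue tree with Algorithm~\ref{alg:embed_t1}, and reruns the whole case analysis on the swapped instance. All previously proved results (Lemmata~\ref{lem:rec_general}--\ref{lem:rec_large_red_star} and Propositions~\ref{prop:rec_small_blue_star_larger}--\ref{prop:rec_small_blue_star_equal_ij_used_no_red_star}) then apply. The only configuration that survives is when the smallest subtree of \emph{both} $R$ and $B$ is a star on at least two vertices; for that symmetric situation the paper gives a dedicated simultaneous layout (Figure~\ref{fig:smallred_ij}) together with a short degree count bounding $\deg_{R^-}(r)+|S_B|$ against $|R^-|$. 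Your direct adaptation of Proposition~\ref{prop:rec_small_blue_star_equal_ij_used_no_red_star} would need a genuine replacement for its recursive embedding of $S$, and the explicit placements you propose do not supply one.
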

\begin{proof}
  Since $\{i,j\}\in\EB$, $B$ is a tree, rooted at $i$ or $j$,
  and we can use symmetry by exchanging the roles of $B$ and $R$
  (\figurename~\ref{fig:smallred_ij_1}). Remove the embedding of $B$.
  Embed $R$ using Algorithm~\ref{alg:embed_t1}, placing its root at $j$.
  Rename $R$ to $B$ and $B$ to $R$. Define $S$ to be a smallest subtree
  of $R$. Since $B$ is rooted at $j$ and $B$ is not a star, there is no
  conflict for embedding $R$ onto $[i,j]$.

  \xxx{MK: This is very concise, maybe elaborate on what covers what?}
  Embedding $R$ onto $[i,j]$ is handled by Lemma~\ref{lem:rec_general},
  Lemma~\ref{lem:rec_unary}, Lemma~\ref{lem:rec_singleton},
  Lemma~\ref{lem:rec_large_blue_star},
  Lemma~\ref{lem:rec_large_red_star}, or
  Proposition~\ref{prop:rec_small_blue_star_larger} unless the situation
  after the color exchange is as follows: $\deg_R(r)\ge 2$, $|S|\geq2$,
  $B[i,j-|S|]$ is not a star, $R^-$ is not a star, and (i) $S$ is a star
  with $|S|\geq 2$ or (ii) $B[j-|S|+1,j]$ is a star and the maximal star
  that contains $B[j-|S|+1,j]$ has size exactly $|S|$. If $S$ is not a
  star then (ii) holds and we can use
  Proposition~\ref{prop:rec_small_blue_star_equal_ij_used_no_red_star}
  to find an ordered plane packing.

  Otherwise, we are in Case~(i) and $S$ is a star. This means that the
  smallest subtree of both $r$ and $b$ is a star on at least two
  vertices and both $R$ and $B$ have at least two subtrees each. Denote
  by $S_B$ a smallest subtree of $B$. By symmetry (possibly exchanging
  roles again), we may assume $|S_B|\ge|S|$.

  \begin{figure}[htbp]
    \centering\hfil%
    \subfloat[]{\includegraphics{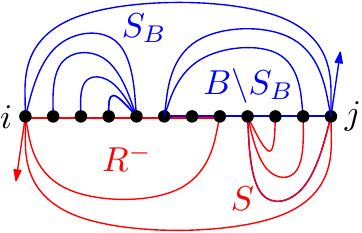}\label{fig:smallred_ij_1}}\hfil%
    \subfloat[]{\includegraphics{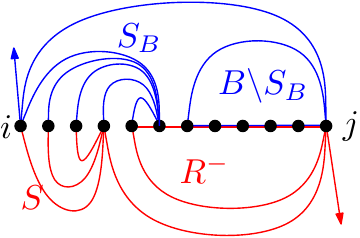}\label{fig:smallred_ij_2}}\hfil%
    \caption{When $R$ and $B$ play symmetric
      roles.\label{fig:smallred_ij}}
  \end{figure}

  We proceed as follows (\figurename~\ref{fig:smallred_ij_2}). Re-embed
  $B$ in the upper halfplane, placing $b$ at $i$, $S_B$ as the closest
  subtree, and the remaining subtrees according to LSFR.

  We first explain how to embed $S$. We will do this in such a way that
  $s$ is embedded on a vertex of $S_B$ at $i+|S|-1$.
  If $S_B$ is a central-star, this re-embedding places its root and
  center at $i+|S_B|$. Since $|S_B|\geq|S|$, now $B[i,i+|S|-1]$ is an
  independent set. Embed $S$ explicitly onto $[i+|S|-1,i]$.
  If $S_B$ is a dangling star, the re-embedding places its root at
  $i+|S_B|$ and its center at $i+1$. If $|S|=2$, then embed $s$ onto
  $i+1$ and its child onto $i$.
  If $|S|\geq3$ and $S$ is a central-star, flip $B[i+1,i+|S_B|]$ to
  put the root of $S_B$ at $i+1$ and the center at $i+|S_B|$ and embed
  $s$ onto $i+|S|-1$ and the children of $s$ onto $[i-|S|-2,i+1]$.
  If $|S|\geq3$ and $S$ is a dangling star, flip $B[i+1,i+|S_B|-1]$ to
  put the center of $S_B$ at $i+|S_B|-1$ and embed $s$ onto $i+|S|-1$,
  its child $s'$ onto $i$, and the children of $s'$ onto
  $[i+1,i+|S|-2]$.

  Next, embed $R^-$ recursively onto $[j,j-|R^-|+1]$. Since $s$ was not
  embedded at $i$, the only obstacle for this recursive embedding is a
  possible conflict, in which case $B^*=\treeatt{[i+|S|,j]}{j}$ is a
  central-star. Since $i+|S|-1$ (which is where we embedded $s$) is
  adjacent only to vertices of $S_B$ and possibly $b$, and since none of
  these vertices are part of $B^*$, the conflict must be a
  degree-conflict. Then $|B^*|\geq3$. As the root $b$ of $B$ is not in
  $[j,j-|R^-|+1]$, we can reorder the subtrees of $B\setminus S_B$
  arbitrarily without having to worry about LSFR on $[j,j-|R^-|+1]$.
  Therefore, we may suppose that \emph{all} subtrees of $b$ are
  central-stars on $\ge 3$ vertices and each of them leads to a
  degree-conflict when taking the role of $B^*=\treeat{j}$ above. Given
  that there are at least two such substars, we may as well choose a
  smallest one, $S_B$ to have its center at $j$. Any other substar can
  take the role intended for $S_B$ in
  \figurename~\ref{fig:smallred_ij_2} originally, its leaves being
  paired up with $S$.

  We claim that then there is no degree-conflict for embedding $R^-$
  onto $[j,j-|R^-|+1]$ recursively. For such a degree-conflict to occur
  we need $\deg_{R^-}(r)+|S_B|-1\ge|R^-|$. So let us argue that this
  does not happen.

  By the choice of $S$ as a minimal size subtree of $r$, we have
  $\deg_R(r)\le(|R|-1)/|S|$. As $S_B$ is a smallest of at least two
  subtrees of $b$, we have $|S_B|\le(|B|-1)/2=(|R|-1)/2$. Together this
  yields
  \begin{align*}
    \deg_{R^-}(r)+|S_B| &= \deg_R(r)-1+|S_B|\\
    &\leq \frac{|R|-1}{|S|}-1+\frac{|R|-1}{2}\\
    &= \frac{|R||S|+2|R|-|S|-2}{2|S|}-1\\
    &= \frac{|R||S|+2|R|-3|S|-2}{2|S|}.
  \end{align*}
  We want to show $\deg_{R^-}(r)+|S_B|\le|R^-|$. So consider the
  expression
  \begin{align*}
    |R^-|-(\deg_{R^-}(r)+|S_B|) &= |R|-|S|-(\deg_{R^-}(r)+|S_B|)\\
    &\geq |R|-|S|-\frac{|R||S|+2|R|-3|S|-2}{2|S|}\\
    &= \frac{|R||S|-2|R|-2|S|^2+3|S|+2}{2|S|}\\
    &= \frac{(|S|-2)(|R|-2|S|-1)}{2|S|},
  \end{align*}
  which is non-negative because $2\le|S|\le(|R|-1)/2$. This proves our
  claim and shows that there is no degree-conflict for embedding $R^-$
  onto $[j,j-|R^-|+1]$ recursively. Therefore at least one of the two
  options provides an ordered plane packing as claimed.
\end{proof}

Lemma~\ref{lem:rec_singleton},
Proposition~\ref{prop:rec_small_blue_star_larger},
Proposition~\ref{prop:rec_small_blue_star_equal_ij_not_used},
Proposition~\ref{prop:rec_small_blue_star_equal_ij_used_no_red_star},
and Proposition~\ref{prop:rec_small_red_star_ij_used} together prove the
following.

\begin{lemma}
  \label{lem:rec_small_blue_star}
  If $B[j-|S|+1,j]$ is a star, then $R$ and $B$ admit an ordered plane
  packing onto $[i,j]$.
\end{lemma}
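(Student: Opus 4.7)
The plan is to dispose of Lemma~\ref{lem:rec_small_blue_star} by a straightforward case distinction on the structural parameters of $S$, the star $B[j-|S|+1,j]$, and the maximal star $B^{**}$ of $B$ containing $B[j-|S|+1,j]$, feeding each case into one of the propositions already established in this section (together with Lemma~\ref{lem:rec_singleton}). So the plan is not to build any new machinery, but to argue that the listed propositions are jointly exhaustive under the hypothesis that $B[j-|S|+1,j]$ is a star.

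First I would handle the base case $|S|=1$, which is immediately covered by Lemma~\ref{lem:rec_singleton}, and from here on assume $|S|\geq 2$. Since $B[j-|S|+1,j]$ is a star by hypothesis, we always have $|B^{**}|\geq|S|\geq 2$, so exactly one of the alternatives $|S|<|B^{**}|$ or $|S|=|B^{**}|$ holds. In the first alternative, Proposition~\ref{prop:rec_small_blue_star_larger} applies verbatim and gives the ordered plane packing.

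In the second alternative $|S|=|B^{**}|$, the relevant dichotomy is whether $\{i,j\}\in\EB$ and, if so, whether $S$ is itself a star. When $\{i,j\}\notin\EB$, Proposition~\ref{prop:rec_small_blue_star_equal_ij_not_used} delivers the packing regardless of whether $S$ is a star (it treats central and dangling sub-cases internally). When $\{i,j\}\in\EB$ and $S$ is not a star, Proposition~\ref{prop:rec_small_blue_star_equal_ij_used_no_red_star} is exactly the required tool. The remaining scenario—$\{i,j\}\in\EB$ together with $S$ being a star—is, crucially, not handled by the propositions listed so far for the ``small blue star'' regime; instead it is absorbed by Proposition~\ref{prop:rec_small_red_star_ij_used}, whose hypothesis asks only that $S$ be a star and $\{i,j\}\in\EB$ (with no assumption on the blue side).

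The only real verification, therefore, is that the four listed propositions together with Lemma~\ref{lem:rec_singleton} cover every configuration consistent with the hypothesis ``$B[j-|S|+1,j]$ is a star''. I expect this book-keeping to be essentially immediate, so I do not foresee a genuine obstacle in the proof of the lemma itself; the real work is concentrated inside the individual propositions, especially the color-swap and the degree inequality $\deg_{R^-}(r)+|S_B|\leq|R^-|$ at the end of Proposition~\ref{prop:rec_small_red_star_ij_used}, which is what allows the troublesome ``two red stars meeting across the blue spine edge'' case to be closed.
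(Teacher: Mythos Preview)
Your proposal is correct and matches the paper's own argument exactly: the paper proves Lemma~\ref{lem:rec_small_blue_star} by simply stating that Lemma~\ref{lem:rec_singleton}, Proposition~\ref{prop:rec_small_blue_star_larger}, Proposition~\ref{prop:rec_small_blue_star_equal_ij_not_used}, Proposition~\ref{prop:rec_small_blue_star_equal_ij_used_no_red_star}, and Proposition~\ref{prop:rec_small_red_star_ij_used} together cover all cases. Your case breakdown is precisely the intended one, including the observation that the final sub-case ($|S|=|B^{**}|$, $\{i,j\}\in\EB$, $S$ a star) is absorbed by the more general Proposition~\ref{prop:rec_small_red_star_ij_used}.
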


\section{Embedding the red tree: a small red star}
\label{subsec:small_red_star}
Next, we handle the case where $S$ is a star. We may assume that
$B[i,j-|S|]$ and $B[j-|S|+1,j]$ are not stars. The graph $R^-$ is also
not a star and $|S|\geq 2$.

\begin{proposition}\label{prop:rec_small_red_star_ij_not_used}
  If $S$ is a star and $\{i,j\}\not\in\EB$, then $R$ and $B$
  admit an ordered plane packing onto $[i,j]$.
\end{proposition}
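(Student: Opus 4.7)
By Lemmas~\ref{lem:rec_unary}, \ref{lem:rec_singleton}, \ref{lem:rec_large_red_star}, \ref{lem:rec_large_blue_star}, and \ref{lem:rec_small_blue_star}, we may assume $\deg_R(r)\ge 2$, $|S|\ge 2$, and that none of $R^-$, $B[i,j-|S|]$, $B[j-|S|+1,j]$ is a star. Thus the configuration differs from Lemma~\ref{lem:rec_general} only in that $S$ is a star. The plan is to follow the overall case analysis of Lemma~\ref{lem:rec_general} as closely as possible, exploiting the simplification that $S$ no longer needs to be embedded recursively.

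The default attempt is to provisionally place $s$ at $\rootof(\treeatt{[j-|S|+1,j]}{j})$, recursively embed $R^-$ onto $[i,j-|S|]$, and then embed $S$ explicitly onto $[j-|S|+1,j]$. Because $B[j-|S|+1,j]$ is not a star, we can always find (or, via Proposition~\ref{prop:leafshuffle} and a local rearrangement of $\treeat{j}$ combined with flipping, produce) a locally isolated vertex in $B[j-|S|+1,j]$ that is not in edge-conflict with $s$ on which to place the center of $S$; if $S$ is a dangling star, the remaining root of $S$ is placed at $j$ or $j-|S|+1$, whichever is accessible. Hence the explicit embedding of $S$ succeeds as long as $R^-$ has been embedded on $[i,j-|S|]$ without conflict. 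This default succeeds unless $[i,j-|S|]$ is in edge-conflict or degree-conflict with $R^-$.

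Edge-conflicts for $R^-$ must, by \ref{inv:starconflict} on $[i,j]$ and $\{i,j\}\notin\EB$, be caused by the provisional placement of $s$, i.e., $\treeatt{[i,j-|S|]}{i}$ is a central-star whose root has a blue edge to the provisional position of $s$. In this case we re-route: we provisionally place $s$ at a different vertex of $[j-|S|+1,j]$ (this is possible since $S$ is a star and the root of $\treeatt{[j-|S|+1,j]}{j}$ is not adjacent to the entire interval because $B[j-|S|+1,j]$ is not a star). If no valid provisional placement of $s$ exists, then a counting argument analogous to the one used in Case~1.1 of Lemma~\ref{lem:rec_general} yields a contradiction with LSFR. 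Degree-conflicts for $R^-$ mean $\treeatt{[i,j-|S|]}{i}=B[i,x]$ is a central-star with $\deg_{R^-}(r)+(x-i)\ge|R^-|$, and we handle them by a blue-star embedding of $R^-$ from $\sigma=i$ with $\varphi=(x+1,\ldots,x+d)$ where $d=\deg_{R^-}(r)$ (using Proposition~\ref{p:greedygrab}, just as in Case~2.1 of Lemma~\ref{lem:rec_general}). The resulting interval that remains for $S$ consists of isolated vertices on the left and a suffix of $B[x+d+1,j]$ on the right, onto which the star $S$ can be embedded explicitly.

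The hard part will be the mixed subcases in which $\treeatt{[i,j-|S|]}{i}$ is a large central-star producing a degree-conflict for $R^-$ while simultaneously $\treeatt{[j-|S|+1,j]}{j}$ (or $\treeatt{[j-|S|+1,j]}{j-|S|+1}$) is restrictive, together with the degenerate case $\treeat{i}\ne B[i,x]$. These mirror Cases~2.2 and 3 of Lemma~\ref{lem:rec_general}: in the former we rule out the configuration through the counting argument $|R|\ge 2|B[i,x]|+1$ combined with the bound on $\deg_{R^-}(r)$ from the minimality of $|S|$; in the latter we again use a blue-star embedding on the right side followed by an explicit embedding of $S$, using that $\{i,j\}\notin\EB$ and that $S$ being a star lets us select the remaining isolated vertex freely. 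In every branch the final step is to explicitly embed $S$ on the vertices left by the blue-star embedding; this is always possible because $S$ has only one degree-$\ge 2$ vertex (its center), and the blue-star embedding together with any necessary leaf-isolation-shuffle guarantees at least one non-conflicting locally isolated vertex in the remaining interval.
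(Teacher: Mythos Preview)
Your proposal is too sketchy to count as a proof, and it contains a genuine gap. The central claim---that after recursively embedding $R^-$ onto $[i,j-|S|]$ you can ``always find (or, via Proposition~\ref{prop:leafshuffle} and a local rearrangement of $\treeat{j}$ combined with flipping, produce) a locally isolated vertex in $B[j-|S|+1,j]$ that is not in edge-conflict with $s$''---is not justified. After the recursive call, $r$ sits at some vertex of $[i,j-|S|]$ that you do not control beyond knowing it is not blue-adjacent to your provisional $s$-position $\alpha$. If $S$ is a central-star you need $s$ itself at a locally isolated vertex; there is no reason $\alpha$ is locally isolated. Any rearrangement you perform on $\treeat{j}$ to manufacture such a vertex may well reach into $[i,j-|S|]$ (nothing prevents $\treeat{j}$ from spanning both sub-intervals), which would corrupt the already completed embedding of $R^-$. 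The ``re-route $s$ to a different vertex'' move in your edge-conflict paragraph has the same problem: changing the provisional position of $s$ after the fact does not retroactively fix the conflict set used inside the recursive call.

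The paper resolves this by embedding in the opposite order: it places $S$ \emph{first} using the red-star embedding (Proposition~\ref{prop:stargreedygrab}), mapping $s$ (or the center $q$ in the dangling case) to $j$ and consuming the rightmost non-neighbours of $j$, and only then embeds $R^-$ recursively on the residual interval. This completely avoids the ``where is the isolated vertex for $S$'' problem, at the cost of a different failure mode, namely \ref{sgg:dc} failing when $\deg_S(s)+\deg_B(j)\ge|I|-1$; that case, together with the degree-conflict subcases for $R^-$ (including the situation where $\treeat{i}$ is a dangling rather than a central-star, so your proposed blue-star embedding from $i$ is not even available), is handled by a fairly long case analysis with several explicit constructions. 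Your outline does not mention the red-star embedding at all and does not address the $\deg_B(j)$-large case; both are essential.
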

\begin{proof}
  We may assume $\deg_R(r)\geq 2$ by Lemma~\ref{lem:rec_unary}. $S$ can
  be a central-star or a dangling star. We handle these cases
  separately. By Lemma~\ref{lem:rec_singleton}, we may assume that
  $|S|\geq 2$. Let $x$ be such that $|R^-|=|[i,x]|$. Flip $\treeat{j}$ if
  necessary to put the root at $j$. We use the following observation
  several times.

  \begin{observation}
    \label{obs:rec_small_red_star_sc}
    Suppose that we embedded $s$ on a vertex of $\treeat{j}$ and that at
    most $|[i,x]|-1$ rightmost vertices of $B[i,x]$ have been replaced
    by locally isolated vertices. Then $[i,x]$ is not in
    edge-conflict for embedding $R^-$ onto $[i,x]$.
  \end{observation}
  \begin{proof}
    Suppose to the contrary that $[i,x]$ is in edge-conflict for
    embedding $R^-$. Let $y\leq x$ such that
    $B[i,y]=\treeatt{[i,x]}{i}$. Then the root of $B[i,y]$ is in
    edge-conflict with $r$. It cannot be due to an edge to $s$ since
    $\treeat{i}\neq\treeat{j}$. Hence, it must have an edge to the
    outside of $[i,j]$. By 1SR and LSFR, $\treeat{i}$ must then also be
    a (possibly larger) central-star whose root is in edge-conflict with
    $r$. This contradicts the peace invariant for embedding $R$ onto $I$
    and thus concludes the proof.
  \end{proof}

  \case{1} $S$ is a central-star. Since $\{i,j\}\not\in\EB$ we have
  $\treeat{i}\neq\treeat{j}$ and hence this does not change the blue
  vertex at $i$. Use the red-star embedding to embed $s$ onto
  $j$ and the children of $s$ onto the rightmost $\deg_S(s)$
  non-neighbors of $j$ in $[i+1,j-1]$. If $B[i,x]$ is a star now, then
  it was also a star before the red-star embedding (which may
  have modified $B[i,x]$), and we can find an ordered plane packing with
  Lemma~\ref{lem:rec_large_blue_star}. Otherwise, recursively embed
  $R^-$ onto $[i,x]$. By the placement invariant and since
  $\{i,j\}\not\in\EB$, the placement invariant for the recursive
  embedding of $R^-$ holds. Hence, the embedding of $R^-$ fails only if
  (1) there is a conflict for embedding $R^-$ onto $[i,x]$. For the
  embedding of $S$, \ref{sgg:ec} holds and so the embedding works unless
  \ref{sgg:dc} fails, i.e. unless (2) $\deg_S(s)+\deg_B(j)\geq |I|-1$.
  We deal with (1)-(2) next.

  \case{1.1} There is a conflict for embedding $R^-$ onto $[i,x]$.  Let
  $y\leq x$ such that $B[i,y]=\treeatt{[i,x]}{i}$. Then $B[i,y]$ is a
  central-star rooted at a vertex $b^*$. By
  Observation~\ref{obs:rec_small_red_star_sc}, the conflict for
  embedding $R^-$ onto $[i,x]$ is a degree-conflict. In other words,
  $\deg_{[i,x]}(b^*)+\deg_{R^-}(r)\geq|R^-|$. Consequently,
  $|B[i,y]|\geq|R^-|-\deg_{R^-}(r)$. Additionally, $\deg_{R^-}(r)\geq 2$
  since $B[i,x]$ is not a star and $|B[i,y]|\geq3$ by
  Lemma~\ref{lem:degcon3}. Revert to the original blue embedding.  See
  \figurename~\ref{fig:small_red_no_ij_cstar_sc_1}. Note that $B[i,y]$
  is still a central-star.  We distinguish two cases.

  \begin{figure}[b]
    \centering\hfil%
    \subfloat[Case~1.1]{\includegraphics{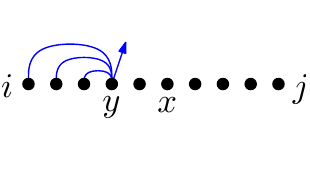}\label{fig:small_red_no_ij_cstar_sc_1}}\hfil%
    \subfloat[Case~1.1.2.1]{\includegraphics{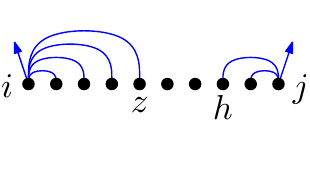}\label{fig:small_red_no_ij_cstar_sc_2}}\hfil%
    \subfloat[Case~1.1.2.1]{\includegraphics{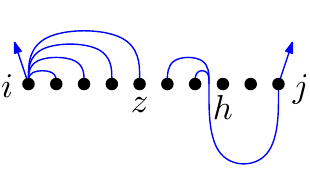}\label{fig:small_red_no_ij_cstar_sc_3}}\hfil%
    \label{fig:small_red_no_ij_1}
    \caption{The case analysis in the proof of
      Proposition~\ref{prop:rec_small_red_star_ij_not_used} (Part~1/3).}
  \end{figure}

  \case{1.1.1} $\treeat{j}$ is not a central-star. Then in particular
  $|\treeat{j}|\geq 3$. Since $\deg_R(r)\geq2$ and
  $|B[i,y]|\geq|R^-|-\deg_{R^-}(r)$ we get by Lemma~\ref{lem:degr} that
  $|B[i,y]|\geq|S|$. If $B[i,y]$ is rooted at $y$ then
  $\treeat{i}=B[i,y]$ by 1SR and we can flip $\treeat{i}$ to put the
  root (and center) at $i$. $B[i,i+|S|-1]$ is now a small blue star.
  Flip $\treeat{j}$ to put its root at the left and embed $R$ onto
  $[j,i]$ with Lemma~\ref{lem:rec_small_blue_star}. This works because
  $j$ is not in edge-conflict with $r$ and $\treeat{j}$ is not a
  central-star.

  \case{1.1.2} $\treeat{j}$ is a central-star. Flip $\treeat{j}$ if
  necessary to put its root (and center) at $j$. If
  $|\treeat{j}|\geq|S|$ then use Lemma~\ref{lem:rec_small_blue_star} to
  find an ordered plane packing. Otherwise $|S|\geq|\treeat{j}|+1$. We
  distinguish two cases.

  \case{1.1.2.1} $\treeat{i}$ is a central-star. Let $z$ such that
  $B[i,z]=\treeat{i}$ and note that $z\geq y$. If necessary, flip
  $B[i,z]$ to put its root at $i$. By the peace invariant, $i$ is not in
  edge-conflict with $r$. Since $i$ is in degree-conflict with $r$ for
  embedding $R^-$ onto $[i,x]$ we have $\deg_B(i)+\deg_{R^-}(r)\geq
  |R^-|$.

  In our first attempt at embedding $R$, we embedded $S$ from $j$ using
  a red-star embedding and tried to embed $R^-$ onto $[i,x]$.  Since
  $|S|\geq2$, the red-star embedding moved all (possibly zero) children
  of $j$ in $\treeat{j}$ to a suffix of $[i,x]$. Since there was a
  degree-conflict for the embedding of $R^-$ onto $[i,x]$, it follows
  that $\deg_{R^-}(r)>|\treeat{j}|-1$. Let $h$ such that
  $B[h,j]=\treeat{j}$. See
  \figurename~\ref{fig:small_red_no_ij_cstar_sc_2}.

  We know $\deg_B(i)+\deg_R(r)\leq |I|-1$ by the peace invariant. It
  follows that $|\treeat{i}|+\deg_{R^-}(r)\leq |I|-1$.  Combining this
  with the degree-conflict at $i$, we obtain
  $|R^-|\leq|B[i,y]|+\deg_{R^-}(r)\leq|\treeat{i}|+\deg_{R^-}(r)\leq|I|-1$.
  Hence, \ref{gg:dc} is satisfied and we can perform a blue-star
  embedding to embed $R^-$ onto $[i,j]$ (which will not embed any vertex
  onto $j$). Before doing so, modify the blue embedding by
  simultaneously shifting $B[h,j-1]$ to $[z+1,z+j-h]$ (redrawing the
  edges to $j$ with biarcs) and $B[z+1,h-1]$ to $[z+j-h+1,j-1]$. See
  \figurename~\ref{fig:small_red_no_ij_cstar_sc_3}. Since
  $\deg_{R^-}(r)>|\treeat{j}|-1$, the blue-star embedding will embed a
  vertex on every child of $\treeat{j}$. Complete the embedding by
  placing $s$ at $j$ and the children of $s$ onto the remainder.

  \case{1.1.2.2} $\treeat{i}$ is not a central-star. Let $w$ such that
  $\treeat{i}=B[i,w]$. Since $B[i,y]$ is a central-star, by 1SR $B[i,y]$
  must be rooted at $i$ and $B[i,w]$ must be rooted at $w$. See
  \figurename~\ref{fig:small_red_no_ij_cstar_sc_4}.

  We claim that $\deg_B(w)=1$. Towards a contradiction, suppose that
  $\deg_B(w)\geq 2$. Recall that $\deg_B(i)+\deg_{R^-}(r)\geq|R^-|$.
  Since $S$ is a smallest subtree of $r$ in $R$, we have
  $\deg_{R^-}(r)\leq (|R^-|-1)/|S|\leq|R^-|/|S|$. Hence, $\deg_B(i)\geq
  |R^-|-\deg_{R^-}(r)\geq(1-1/|S|)|R^-|$. By LSFR, $|\treeat{w}|\geq
  1+2(1+\deg_B(i))= 3+2\deg_B(i)$ and hence
  $|\treeat{w}|\geq3+(2-2/|S|)|R^-|$. Since
  $|R^-|+|S|=|I|\geq|\treeat{w}|$ and $|R^-|\geq|S|$, we obtain
  $|S|\geq3+(2-2/|S|)|R^-|-|R^-|=3+(1-2/|S|)|R^-|\geq3+|S|-2=|S|+1$, a
  contradiction. The claim follows.

  Since $B[i,y]$ is a central-star rooted at $i$, by LSFR
  $\treeat{i}=B[i,w]$ is a star centered at $i$ and rooted at $w$. If
  $w\geq x$ then we can use Lemma~\ref{lem:rec_large_blue_star} to find
  an ordered plane packing. Otherwise, $w\leq x-1$.

  \begin{figure}[thbp]
    \centering%
    \subfloat[Case~1.1.2.2\label{fig:small_red_no_ij_cstar_sc_4}]{\includegraphics{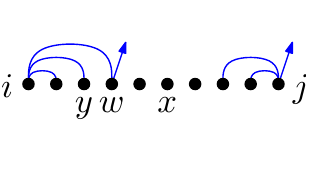}}\hfil%
    \subfloat[Case~1.1.2.2\label{fig:small_red_no_ij_cstar_sc_5}]{\includegraphics{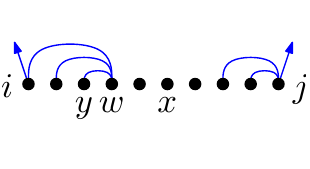}}\hfil%
    \subfloat[Case~1.2]{\includegraphics{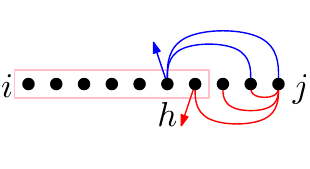}\label{fig:small_red_no_ij_cstar_dc}}\hfil%
    \caption{The case analysis in the proof of
      Proposition~\ref{prop:rec_small_red_star_ij_not_used} (Part~2/3).}
  \end{figure}

  Since there was a conflict for the original embedding, the
  red-star embedding of $S$ from $j$ embeds a child of $s$ onto
  all blue vertices originally at $[w,x]$. Flip $B[i,w]$ to put its root
  at $i$ and center at $w$. See
  \figurename~\ref{fig:small_red_no_ij_cstar_sc_5}. Execute the
  red-star embedding of $S$ from $j$ again. This embeds a child
  of $s$ onto the center of $\treeat{i}$ at $w$ and hence the remaining
  vertices of $\treeat{i}$ form an independent set. Consider the
  now-modified blue embedding at $[i,x]$. The leftmost vertex of
  $B[i,x]$ is the original root of $\treeat{i}$ and may be in
  edge-conflict with $r$. The suffix of $[i,x]$ of size
  $\deg_B(j)\leq|S|-2$ is formed by blue vertices adjacent to $j$ (which
  is where we embedded $s$) that were placed there by the
  red-star embedding of $S$ from $j$. All of these blue vertices
  are in edge-conflict with $r$. However, by the original
  degree-conflict, we know that $\deg_{R^-}(r)\geq 2$ and hence we can
  find an explicit embedding of $R^-$ onto $[i,x]$ that avoids placing
  the root at $i$ or at the suffix of size $|S|-2$. This uses that all
  subtrees of $r$ in $R^-$ have size at least $|S|$.

  \case{1.2} $\deg_S(s)+\deg_B(j)\geq |I|-1$. Then
  $\deg_B(j)\geq|I|-1-\deg_S(s)=|I|-|S|=|R^-|\geq(|I|+1)/2$ and hence
  $\treeat{j}$ has a leaf. Let $h$ such that $B[h,j]=\treeat{j}$. Then
  $|B[h,j]|>|S|$ and so $h\leq x$. If $B[h,j]$ is a star, then we flip
  $B[h,j]$ if necessary to put its center at $j$ and use
  Lemma~\ref{lem:rec_small_blue_star} to find an ordered plane packing.
  Otherwise, $B[h,j]$ is not a star. We claim that then $h<x$. Indeed,
  if $h=x$, then $\deg_B(j)\leq |B[x,j]|-2=|S|+1-2=|S|-1$ and so
  $\deg_S(s)\geq|I|-1-\deg_B(j)\geq|I|-1-|S|+1=|R^-|$, a contradiction.
  The claim follows. Flip $B[h,j]$ to put the root on the left. This
  places a leaf at $j$. Embed $s$ onto $j$ and the children of $s$ onto
  $[j-1,x+1]$. Embed $R^-$ recursively onto $[x,i]$. See
  \figurename~\ref{fig:small_red_no_ij_cstar_dc}. The placement
  invariant holds since $h<x$ and $h$ is the only vertex incident to $j$
  (which is where we embedded $s$).  By LSFR and since $B[h,j]$ is not a
  star, $\treeatt{[i,x]}{x}$ is not a central-star. Hence the peace
  invariant holds and we can complete the packing.

  \case{2} $S$ is a dangling star. Then it is rooted at the child $q$ of
  $s$. Let $Q=\tr(q)$. We will embed $R$ similarly to Case~1. Let $h$
  such that $B[h,j]=\treeat{j}$. We distinguish two cases.

  \case{2.1} Suppose that $B[h,j]$ is not a central-star. Then in
  particular $|B[h,j]|\geq 3$. Let $h'$ be the rightmost neighbor of $j$
  in $[i,j-1]$. If $h'\leq x$, then embed $s$ onto $j-|S|+1$, $q$ onto
  $j$, and the children of $q$ onto $[j-1,j-|S|+2]$. See
  \figurename~\ref{fig:small_red_no_ij_dstar_bj_no_star}. Otherwise,
  embed $s$ onto $h'+1$, $q$ onto $j$, and embed a child of $q$ onto
  every blue vertex of $[h'+2,j-1]$. Use the red-star embedding
  to embed the remaining vertices onto the rightmost
  $\deg_Q(q)-|[h'+2,j-1]|$ non-neighbors of $j$ of $[i+1,h']$. In either
  case, embed $R^-$ recursively onto $[i,x]$. The embedding of $R^-$
  works unless (1) there is a conflict for embedding $R^-$ onto
  $[i,x]$. The embedding of $S$ works unless~\ref{sgg:dc} fails,
  i.e. unless (2) $\deg_Q(q)+\deg_B(j)\geq|I|-2$.

  \begin{figure}[b]
    \centering\hfil%
    \subfloat[Case~2.1]{\includegraphics{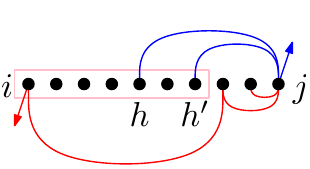}\label{fig:small_red_no_ij_dstar_bj_no_star}}\hfil%
    \subfloat[Case~2.2]{\includegraphics{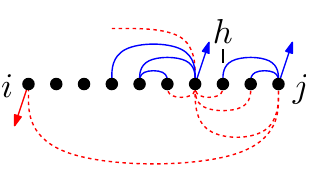}\label{fig:small_red_no_ij_dstar_bj_star_1}}\hfil%
    \subfloat[Case~2.2.2]{\includegraphics{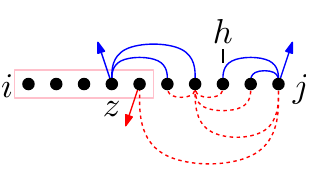}\label{fig:small_red_no_ij_dstar_bj_star_2}}\hfil%
    \subfloat[Case~2.2.2]{\includegraphics{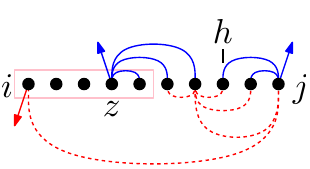}\label{fig:small_red_no_ij_dstar_bj_star_3}}\hfil%
    \label{fig:small_red_no_ij_2}
    \caption{The case analysis in the proof of
      Proposition~\ref{prop:rec_small_red_star_ij_not_used} (Part~3/3).}
  \end{figure}

  \case{2.1.1} There is a conflict for embedding $R^-$ onto $[i,x]$. Let
  $y\leq x$ such that $B[i,y]=\treeatt{[i,x]}{i}$. Then $B[i,y]$ is a
  central-star. By Observation~\ref{obs:rec_small_red_star_sc}, the
  conflict for embedding $R^-$ onto $[i,x]$ is a degree-conflict, and
  hence $|B[i,y]|\geq|R^-|-\deg_{R^-}(r)$. Following the reasoning in
  Case~1.1.1, we see that $|B[i,y]|\geq|S|$ and hence $B[i,i+|S|-1]$ is
  a small blue star after flipping $B[i,y]$ if necessary. Flip $B[h,j]$
  to put its root at $h$ and use Lemma~\ref{lem:rec_small_blue_star} to
  embed $R$ onto $[j,i]$. This works because $j$ is not in edge-conflict
  with $r$ and $\treeat{j}$ is not a central-star.

  \case{2.1.2} $\deg_Q(q)+\deg_B(j)\geq|I|-2$. This case is similar to
  Case~1.2. Since $|S|\leq(|I|-1)/2$ we have
  $\deg_Q(q)=|S|-2\leq(I-5)/2$. Then
  $\deg_B(j)\geq|I|-2-\deg_Q(q)\geq|I|-2-(|I|-5)/2=(|I|+1)/2$. Since
  $B[h,j]$ is not a central-star, we get $h<x$ as in Case~1.2. Let
  $\lambda$ be the number of leaf children of $j$. Then
  $1+\lambda+2((\deg_B(j)-\lambda))\leq|\treeat{j}|\leq|I|-1$. Since
  $\deg_B(j)\geq(|I|+1)/2$ it follows that $1-\lambda+|I|+1\leq|I|-1$
  and hence $\lambda\geq3$. Flip $B[h,j]$ to put its root at $h$.  Since
  $h$ now has $\lambda$ leaf children in $B[h,j]$, in particular $j-1$
  and $j$ are leaves. Embed $s$ onto $j$, $q$ onto $j-1$, and the
  children of $q$ onto $[j-2,x+1]$. Embed $R^-$ recursively onto
  $[x,i]$. Since $B[h,j]$ is not a star by assumption and by LSFR,
  $\treeatt{[i,x]}{x}$ is not a central-star on at least two vertices.
  Hence the peace invariant holds.

  \case{2.2} Suppose that $B[h,j]$ is a central-star. If $h\leq x+1$
  then $B[x+1,j]$ is a star and we can find an ordered plane packing by
  Lemma~\ref{lem:rec_small_blue_star}. Otherwise $h\geq x+2$. Flip
  $\treeat{h-1}$ if necessary to put its root at $h-1$. Embed $s$ onto
  $j$, $q$ onto $h-1$, and a child of $q$ on every vertex in $[h,j-1]$.
  Use the red-star embedding to embed the remaining children of
  $q$ onto the rightmost $\deg_Q(q)-|[h,j-1]|$ non-neighbors of $h-1$ in
  $[i+1,h-2]$. Embed $R^-$ recursively onto $[i,x]$. See
  \figurename~\ref{fig:small_red_no_ij_dstar_bj_star_1} for the
  situation before the cleanup step of the red-star
  embedding. The embedding of $R^-$ works unless (1) there is a conflict
  for embedding $R^-$ onto $[i,x]$. The embedding of $S$ works
  unless~\ref{sgg:dc} fails, i.e. unless (2)
  $\deg_Q(q)+\deg_B(h-1)\geq|I|-2$.

  \case{2.2.1} There is a conflict for embedding $R^-$ onto $[i,x]$. Let
  $y\leq x$ such that $B[i,y]=\treeatt{[i,x]}{i}$. Then $B[i,y]$ is a
  central-star. By Observation~\ref{obs:rec_small_red_star_sc}, the
  conflict is a degree-conflict. Revert to the original blue embedding
  (before the red-star embedding in Case~2.2) and note that
  $B[i,y]$ is still a central-star. We proceed similarly to Case~1.1.2.

  \case{2.2.1.1} $\treeat{i}$ is a central-star. Let $z$ such that
  $B[i,z]=\treeat{i}$ and note that $z\geq y$. If necessary, flip
  $B[i,z]$ to put its root at $i$. By the peace invariant, $i$ is not in
  edge-conflict with $r$. Since $i$ is in degree-conflict with $r$ for
  embedding $R^-$ onto $[i,x]$ we have $\deg_B(i)+\deg_{R^-}(r)\geq
  |R^-|$.

  We blue-star embed $R^-$ starting from $i$ with
  $\varphi=(z+1,\ldots)$. Let us argue that the conditions for the
  blue-star embedding hold. The peace invariant guarantees \ref{gg:ec}
  and $\deg_B(i)+\deg_R(r)\leq |I|-1$. It follows that
  $|\treeat{i}|+\deg_{R^-}(r)\leq |I|-1$, which is the second inequality
  of \ref{gg:dc}. The first inequality of \ref{gg:dc} holds by the
  degree-conflict condition. \ref{gg:int} holds by construction, making
  \ref{gg:cs} trivial.  Hence, the conditions are satisfied and we can
  perform the blue-star embedding as described.

  Since we attain the first inequality in~\ref{gg:dc} strictly, the
  blue-star embedding does not exhaust all vertices in $B[i,z]$. Indeed,
  $\deg_B(i)\geq |R^-|-\deg_{R^-}(r)$, while the blue-star embedding
  embeds only $|R^-|-\deg_{R^-}(r)-1$ vertices on the neighbors of $i$.
  Perform the blue-star embedding of $R^-$ onto $[i,j]$. This leaves an
  interval containing $j$ (since the blue-star-embedding always leaves
  at least one vertex) and at least one locally isolated vertex
  (originating from $B[i+1,z]$). Embed $s$ onto $j$, $q$ onto this
  locally isolated vertex, and the children of $q$ onto the remainder to
  complete the embedding.

  \case{2.2.1.2} $\treeat{i}$ is not a central-star. We proceed
  similarly to Case~1.1.2.2. Let $w$ such that $\treeat{i}=B[i,w]$. The
  exact same argument as in Case~1.1.2.2 shows that $B[i,w]$ is a star
  rooted at $w$ and centered at $i$. If $w\geq x$ then we can use
  Lemma~\ref{lem:rec_large_blue_star} to find an ordered plane packing.
  Otherwise, $w\leq x-1$.

  Since there was a conflict for the original embedding, the red-star
  embedding of (the remainder of) $Q$ from $h-1$ embeds a child of $q$
  onto all blue vertices originally at $[w,x]$. Flip $B[i,w]$ to put its
  root at $i$ and center at $w$. Embed $s$ onto $j$, $q$ onto $h-1$, and
  a child of $q$ onto all vertices in $[h,j-1]$.  Execute the red-star
  embedding of the remainder of $Q$ from $h-1$ onto $[h-2,i+1]$
  again. This embeds a child of $s$ onto the center of $\treeat{i}$ and
  hence the remaining vertices form an independent set. Consider the
  now-modified blue embedding at $[i,x]$.  The leftmost vertex of
  $B[i,x]$ is the original root of $\treeat{i}$ and may be in
  edge-conflict with $r$. We embedded a child of $q$ onto all neighbors
  of $j$ (which is where we embedded $s$), and hence there are no
  further edge-conflicts. Hence, we can embed $R^-$ explicitly onto
  $[x,i]$.

  \case{2.2.2} $\deg_Q(q)+\deg_B(h-1)\geq|I|-2$. Let $z$ such that
  $B[z,h-1]=\treeat{h-1}$. It is possible that $z=i$ and
  $\treeat{i}=\treeat{h-1}$. Analogously to Case~2.1.2 we get
  $\deg_B(h-1)\geq(|I|+1)/2$ and that $h-1$ has at least $3$ leaf
  children. It follows that $z<x$. Recall that $h\geq x+2$. Flip
  $B[z,h-1]$ to put its root at $z$. If $z=i$ and $B[i,x]$ is now a
  star, use Lemma~\ref{lem:rec_large_blue_star} to find an ordered plane
  packing. Otherwise, flipping $B[z,h-1]$ placed a leaf child of $z$ at
  $h-1$. Embed $s$ onto $j$, $q$ onto $h-1$, and the children of $q$
  onto $[j-1,h]$ and $[h-2,x+1]$. This works because $z<x$ and $h\geq
  x+2$.

  We first try to embed $R^-$ recursively onto $[x,i]$. See
  \figurename~\ref{fig:small_red_no_ij_dstar_bj_star_2}. Since $z<x$,
  this works unless $\treeatt{[i,x]}{x}$ is a central-star, which
  implies that $B[z,h-1]$ is a central-star by LSFR. In this scenario we
  already handled the case $z=i$ and so we may assume
  $\treeat{i}\neq\treeat{z}$. Embed $R^-$ recursively onto $[i,x]$. See
  \figurename~\ref{fig:small_red_no_ij_dstar_bj_star_3}. By the
  placement invariant, this works unless there is a conflict for
  embedding $R^-$ onto $[i,x]$.

  So suppose there is a conflict for embedding $R^-$ onto $[i,x]$.
  Since $z<x$ and $\treeat{i}\neq\treeat{z}$, we have
  $\treeatt{[i,x]}{i}=\treeat{i}$ and hence $\treeat{i}$ is a
  central-star. By the peace invariant, the root of $\treeat{i}$ is not
  in edge-conflict with $r$. Flip $\treeat{i}$ if necessary to put its
  root at $i$. Then $i$ is in degree-conflict with $r$ and hence
  $\deg_B(i)+\deg_{R^-}(r)\geq|R^-|$. Adding this inequality to the
  inequality in the assumption (replacing $h-1$ by $z$ due to our
  flipping), we get
  $\deg_B(i)+\deg_{R^-}(r)+\deg_Q(q)+\deg_B(z)\geq|I|-2+|R^-|$. Since
  $\treeat{i}$, $\treeat{z}$, and $\treeat{j}$ are all different we have
  $\deg_B(i)+\deg_B(z)\leq|I|-3$. Hence
  $\deg_{R^-}(r)+\deg_Q(q)\geq|I|-2+|R^-|-|I|+3=|R^-|+1$. Since
  $|S|=\deg_Q(q)+2$ we get $|S|+\deg_{R^-}(r)\geq|R^-|+3$. Since $S$ is
  a smallest subtree of $r$, we have $\deg_{R^-}(r)\leq (|R^-|-1)/|S|$
  and hence $|R^-|\geq|S|\deg_{R^-}(r)$. It follows that
  $|S|+\deg_{R^-}(r)\geq|S|\deg_{R^-}(r)+3$, which has no solution for
  $|S|\geq1$ and $\deg_{R^-}(r)\geq1$. We conclude that there is no
  conflict for embedding $R^-$ onto $[i,x]$, as desired.
\end{proof}

\noindent
Propositions~\ref{prop:rec_small_red_star_ij_not_used} and
\ref{prop:rec_small_red_star_ij_used} together prove the following.
\begin{lemma}
  \label{lem:rec_small_red_star}
  If $S$ is a star, then $R$ and $B$ admit an ordered plane packing
  onto $[i,j]$.
\end{lemma}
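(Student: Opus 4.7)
The plan is to derive the lemma as an immediate case distinction based on whether the edge $\{i,j\}$ lies in the blue forest $B$. These two cases are already fully handled by the two propositions that precede the lemma, so the argument is essentially a bookkeeping step rather than a new construction.

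First, I would observe that since $S$ is a star, exactly one of the hypotheses of Proposition~\ref{prop:rec_small_red_star_ij_not_used} or Proposition~\ref{prop:rec_small_red_star_ij_used} holds, according to whether $\{i,j\}\notin E(B)$ or $\{i,j\}\in E(B)$. In the former case, Proposition~\ref{prop:rec_small_red_star_ij_not_used} directly produces an ordered plane packing of $R$ and $B$ onto $[i,j]$; in the latter, Proposition~\ref{prop:rec_small_red_star_ij_used} does. Note that Proposition~\ref{prop:rec_small_red_star_ij_used} is stated for any $S$ that is a star (in particular, it does not require $\{i,j\}\notin E(B)$ to fail in any stronger sense than being the complementary case).

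The only thing to double-check is that the preconditions we inherit from the outline in Section~\ref{proofstart} (the invariants \ref{inv:starconflict}--\ref{inv:rootsonly}) carry over unchanged into both propositions, which they do since neither proposition modifies the initial data before invoking the case hypothesis. No additional computation is needed, and there is no genuine obstacle in this step; the substantive work was already done inside the two propositions.

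Consequently, the combination of Proposition~\ref{prop:rec_small_red_star_ij_not_used} and Proposition~\ref{prop:rec_small_red_star_ij_used} exhausts all situations in which $S$ is a star, so an ordered plane packing of $R$ and $B$ onto $[i,j]$ always exists, proving Lemma~\ref{lem:rec_small_red_star}.
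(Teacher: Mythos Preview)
Your proposal is correct and matches the paper's own argument exactly: the paper simply states that Propositions~\ref{prop:rec_small_red_star_ij_not_used} and~\ref{prop:rec_small_red_star_ij_used} together prove the lemma, which is precisely the case split on whether $\{i,j\}\in\EB$ that you describe.
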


\noindent
Finally, Lemmata~\ref{lem:rec_general}, \ref{lem:rec_large_blue_star},
\ref{lem:rec_large_red_star}, \ref{lem:rec_small_blue_star}, and
\ref{lem:rec_small_red_star} together prove
Theorem~\ref{thm:main}. \label{proofend}

\bibliographystyle{mh-url}\bibliography{bibliography}

\end{document}